\appto\UrlBreaks{\do\-}
\DeclareMathOperator{\Tr}{Tr}
\theoremstyle{definition}
\newtheorem{theorem}{Theorem}
\newtheorem{lemma}[theorem]{Lemma}
\newtheorem{corollary}[theorem]{Corollary}
\newtheorem{proposition}[theorem]{Proposition}
\newtheorem{definition}[theorem]{Definition}
\renewcommand\onecolumngrid{
\do@columngrid{one}{\@ne}
\def\set@footnotewidth{\onecolumngrid}
\def\footnoterule{\kern-6pt\hrule width 1.5in\kern6pt}
}
\renewcommand\twocolumngrid{
        \def\footnoterule{
        \dimen@\skip\footins\divide\dimen@\thr@@
        \kern-\dimen@\hrule width.5in\kern\dimen@}
        \do@columngrid{mlt}{\tw@}
}
\newcommand{\abs}[1]{\left\lvert{#1}\right\rvert}
 \newcommand{\quoted}[1]{``#1''}
\newcommand{\initial}{{\rm ini}}
\newcommand{\final}{{\rm fin}}
\newcommand{\patch}{\mathrm{Sq}}
\begin{document}

\title{Continuous-Variable Fault-Tolerant Quantum Computation under General Noise}
\author{Takaya Matsuura}
\email{takaya.matsuura@riken.jp}
\affiliation{Centre for Quantum Computation \& Communication Technology, School of Science, RMIT University, Melbourne VIC 3000, Australia} 
\affiliation{RIKEN Center for Quantum Computing (RQC), Hirosawa 2-1, Wako, Saitama 351-0198, Japan}
\author{Nicolas C. Menicucci}
\affiliation{Centre for Quantum Computation \& Communication Technology, School of Science, RMIT University, Melbourne VIC 3000, Australia} 
\author{Hayata Yamasaki}
\email{hayata.yamasaki@gmail.com}
\affiliation{Department of Physics, Graduate School of Science,
The University of Tokyo, 7-3-1 Hongo, Bunkyo-ku, Tokyo 113-0033, Japan}

\begin{abstract}
    The quantum error-correcting code in the continuous-variable (CV) system attracts much attention due to its flexibility and high resistance against specific noise.  However, the theory of fault tolerance in CV systems is premature and lacks the general strategy to translate the noise in CV systems into the noise in logical qubits, leading to severe restrictions on the correctable noise models.  In this paper, we show that the Markovian-type noise in CV systems is translated into the Markovian-type noise in the logical qubits through the Gottesman-Kitaev-Preskill code with an explicit bound on the noise strength.  Combined with the established threshold theorem of the concatenated code against Markovian-type noise, we show that CV quantum computation has a fault-tolerant threshold against general Markovian-type noise, closing the existing crucial gap in CV quantum computation.  We also give a new insight into the fact that careful management of the energy of the state is required to achieve fault tolerance in the CV system.
\end{abstract}

\maketitle
\section*{Introduction}
Continuous-variable (CV) quantum optical systems, which encode quantum information to electromagnetic field quadratures, have distinct advantages for implementing quantum computation due to their affinity to quantum communication and the resulting scalability.
There are established technologies for measuring the quadratures of an optical mode in the field of optical telecommunication.  Furthermore, entangling operations are deterministic in the CV method, allowing the generation of a huge-scale entangled state even with current experimental technologies~\cite{Asavanant2019, Larsen2019}.

Fault tolerance is indispensable for the computation to be reliable, and the quantum error correction (QEC)~\cite{Shor1995} is necessary to achieve fault tolerance.
Various quantum error-correcting codes have been proposed for CV quantum computation~\cite{Chuang1997, Lloyd1998, Braunstein1998, Cochrane1999, Knill2001, Gottesman2001, Niset2008, Ralph2005, Bergmann2016, Michael2016, Grimsmo2020} (see also Ref.~\cite{Albert2018} for a review and comparison).  Among others, the Gottesman-Kitaev-Preskill (GKP) code~\cite{Gottesman2001} has the advantage of easier implementation of the universal gate set and computational-basis measurement~\cite{Gottesman2001} as well as its error correction capability \cite{Albert2018}.  In fact, only in the preparation of the GKP state does one need non-Gaussian optical operations, which are difficult to perform in experiments \cite{Baragiola2019, Yamasaki2020}.  The feasible generation of the (approximate) GKP state has been theoretically proposed in quantum optical system~\cite{Eaton2019, Tzitrin2019, Takase2022, Takase2024}, and a primitive GKP state has been experimentally demonstrated very recently \cite{Konno2024}.

Despite these experimental progresses, however, the theory of fault tolerance in CV systems has not seen full maturity yet. 
In the multi-qubit system, how to achieve fault tolerance is well known and established~\cite{Knill1996, Shor1997, Kitaev1997, Aharonov1997, Knill1998, Knill1998Science, Aharonov1999, Gottesman2009, Aliferis2011, Aliferis2013,yamasaki2022timeefficient}.
On the contrary, fault tolerance of CV quantum computation has been shown only against specific noise models such as Gaussian random displacement noise~\cite{Menicucci2014}.
Many studies~\cite{Fukui2018, Vuillot2019, Noh2019, Larsen2021, Bourassa2021, Tzitrin2021} have been claiming the existence of a fault-tolerant threshold of CV quantum computation with the GKP code following up the first study~\cite{Menicucci2014}, but all of these analyses are against very restrictive noise models such as a Gaussian-random displacement~\cite{Weedbrook2012} and a Gaussian approximation of the GKP code~\cite{Matsuura2020}.
Yet in experiments, indeed there are non-Gaussian-type errors such as the random phase rotation and experimental approximations of the GKP codeword proposed in Refs.~\cite{Eaton2019, Tzitrin2019, Takase2022, Takase2024} and demonstrated in Ref.~\cite{Konno2024}.  
These non-Gaussian errors may not be corrected perfectly by the error correction and may pile up in the existing fault-tolerance analyses, leading to a breakdown in fault tolerance.
Thus, the full fault-tolerance theory with the GKP code against general noise is indispensable for constructing a fault-tolerant optical CV quantum computer and guiding experimental efforts toward it.

In this paper, we close this crucial gap by showing the threshold theorem for optical quantum computation, regardless of the detail of the noise model.  At first glance, most of the matured techniques developed in the fault-tolerance theory of qubit quantum computation \cite{Knill1996, Shor1997, Kitaev1997, Aharonov1997, Knill1998, Knill1998Science, Aharonov1999, Gottesman2009, Aliferis2011, Aliferis2013,yamasaki2022timeefficient} are not carried over to the GKP code since
the ideal GKP codeword is nonnormalizable and thus not an element of the CV Hilbert space.  Approximate codewords are valid quantum states, but their mutual non-orthogonality makes things more complicated.
We show that by using the flexible framework to prove fault tolerance with concatenated codes~\cite{Aliferis2006, Gottesman2009, Aliferis2011, Aliferis2013,yamasaki2022timeefficient}, we can define a fault tolerance condition for the GKP code without relying on the unphysical ideal GKP codeword.
The remaining question is whether (non-Gaussian) noise models on the physical CV system are translated to correctable qubit-level noise models or not via concatenated code.  We show that under mild assumptions on the noise models in the physical CV system, the translated qubit-level noise model is actually a correctable one.  Our assumed noise model covers experimentally relevant ones such as non-Gaussian approximate GKP states, optical loss, and finite resolution of the homodyne detectors.
In this way, the threshold theorem for CV quantum computation is proved.

This paper thus provides a complete fault-tolerant digitization procedure for a quantum continuous variable, closing the aforementioned gap between the existing fault-tolerance theory and the current as well as the future experiment of CV quantum computing.  The obtained threshold theorem will guide experimental efforts on how to improve the systems to meet the fault-tolerance criteria.
This opens up the possibilities of CV fault-tolerant quantum information processing in a noisy real-world environment as well as pushes up our understanding of the CV quantum system.

\section*{Results}

\paragraph*{
Setting
}

A fault-tolerant protocol for quantum computation aims to approximate the output probability distribution of the ideal quantum computer with a quantum computer consisting of noisy devices in the real world.  The approximation error should be smaller than an arbitrary parameter $\epsilon$ in the total variation distance.
Such a fault-tolerant protocol is obtained in multi-qubit systems using concatenated codes~\cite{Gottesman2009, Aliferis2006, Aliferis2011, Aliferis2013,yamasaki2022timeefficient}, but not in the CV systems except for a very restrictive case~\cite{Menicucci2014} as discussed in the introduction. 

In this paper, we prove the threshold theorem for CV quantum computation by regarding the CV system as the physical (level-0) layer of a concatenated code and the qubit defined through the GKP code as the level-1 layer of the concatenation.  
We use the framework of a fault-tolerant protocol for the concatenated codes~\cite{Gottesman2009, Aliferis2006, Aliferis2011, Aliferis2013,yamasaki2022timeefficient} in which each state preparation, gate operation, and measurement in the level-$k$ layer of concatenation are replaced with fault-tolerant \quoted{gadgets} consisting of operations for level-$(k-1)$ layer code, which are then sandwiched by error-correction (EC) gadgets of level-$(k-1)$ layer code.  In the same way, we consider replacing each state preparation, gate operation, and measurement in the level-1 layer of a qubit-concatenated code with the GKP state-preparation gadget, the GKP gate gadget, and the GKP measurement gadget, which is then sandwiched by the GKP EC gadgets (see Fig.~\ref{fig:circuit_encoding}). 
In this way, the errors in the CV system are transformed into qubit-level errors by the GKP code.
Now, the problem is \quoted{What kind of noise model does the qubit in the level-1 layer undergo under natural assumptions on noise in the CV system and is it correctable?}

Reference~\cite{Menicucci2014} approached this question by assuming that the noise is Gaussian-random displacements, which we can handle relatively easily.  Then, that work shows that the qubit-level noise will be a stochastic Pauli noise for this particular case.  Note that even the Gaussian-approximate GKP codeword is not equivalent to the ideal GKP codeword subject to the Gaussian-random displacement noise.
Extending this result immediately faces obstacles, which we describe below.

\begin{figure}[t]
    \centering
    \includegraphics[width=0.95\linewidth]{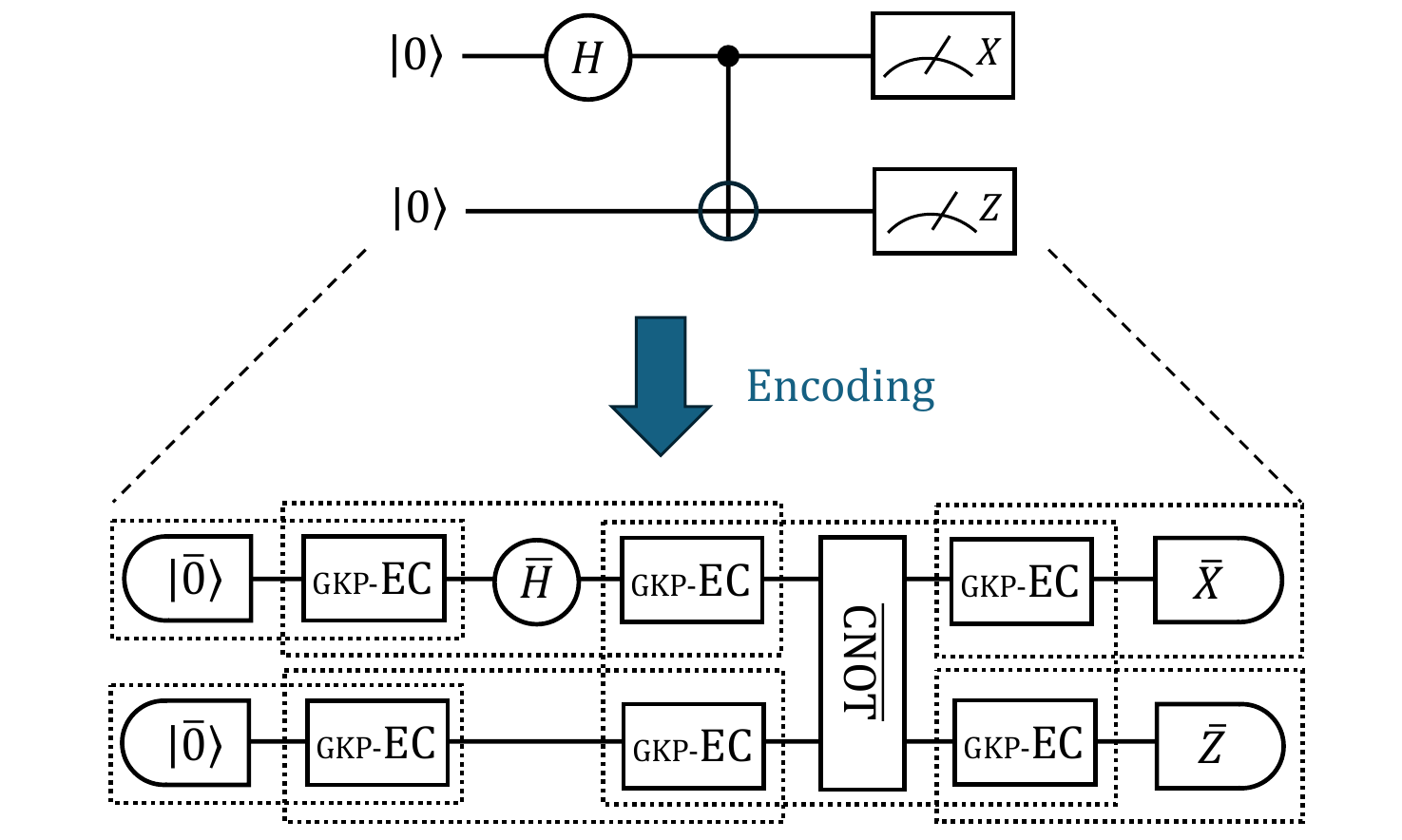}
    \caption{A qubit quantum circuit replaced with the (fault-tolerant) GKP gadgets.  In our setup, the qubit circuit should be the level-1 layer of a concatenated code.  Each dotted-lined box shows an extended rectangle (ExRec) \cite{Gottesman2009}, which is used in the level reduction theorem.}
    \label{fig:circuit_encoding}
\end{figure}

The first obstacle comes from the GKP code itself.
The GKP code we treat here is a stabilizer code with the $2\sqrt{\pi}$-shift in position and momentum quadratures in the phase space being the stabilizer generators \cite{Gottesman2001}.  For the GKP codeword, which is invariant under these shifts, the $\sqrt{\pi}$-shift in position and momentum quadratures act as logical Pauli-X and Z operators.
Despite the conceptual clarity, the GKP code has an intrinsic problem: the ideal GKP codeword is not normalizable and thus is not a physically valid state.  
If one tries to define the ideal GKP state as a limit of the sequence of its approximation~\cite{Gottesman2001, Matsuura2020}, then the energy (i.e., the average photon number) diverges in the limit.
In fact, no Hilbert subspace is invariant under $2\sqrt{\pi}$-shift in position and momentum.  This appears to be a big obstacle since we no longer have the \quoted{code space}, which is always used in the analysis of an error-correcting code and in the theory of fault tolerance.  The physically realizable state is only \quoted{approximately $2\sqrt{\pi}$-shift invariant}.  This is the reason why the GKP code has been regarded as an approximate error-correcting code.

Another obstacle is the lack of distance measures between noise quantum channels that can deal with physically relevant situations while maintaining the property that is necessary for fault-tolerance proof.  
In the finite-dimensional quantum system, the diamond norm is conventionally used as a distance measure, since it appears to be a relevant distance measure for quantum mechanics and satisfies the properties required for fault tolerance proof \cite{Kitaev1997Quantum, Aharonov1999}.
In the infinite-dimensional quantum system, however, this measure is too stringent: many physically relevant families of noise models have singular behavior with this distance measure \cite{Winter2017, Shirokov2018}. For example, the infinitesimal phase rotation is not just far from the identity map but maximally distant from it.  
The reason for this counterintuitive behavior comes from the fact that the quantum states with arbitrarily high susceptibility to the phase rotation have arbitrarily high energy (in this case, photon number).  Since what we can generate in the lab does not have unbounded energy and thus such a state cannot be generated in practice (even with perfect experimental equipment), we need to use an alternative measure to reflect this experimental constraint.
In the following, we explain how we overcome these obstacles and establish a fault tolerance proof for CV quantum computation.

\begin{figure}
    \centering
    \includegraphics[width=0.95\linewidth]{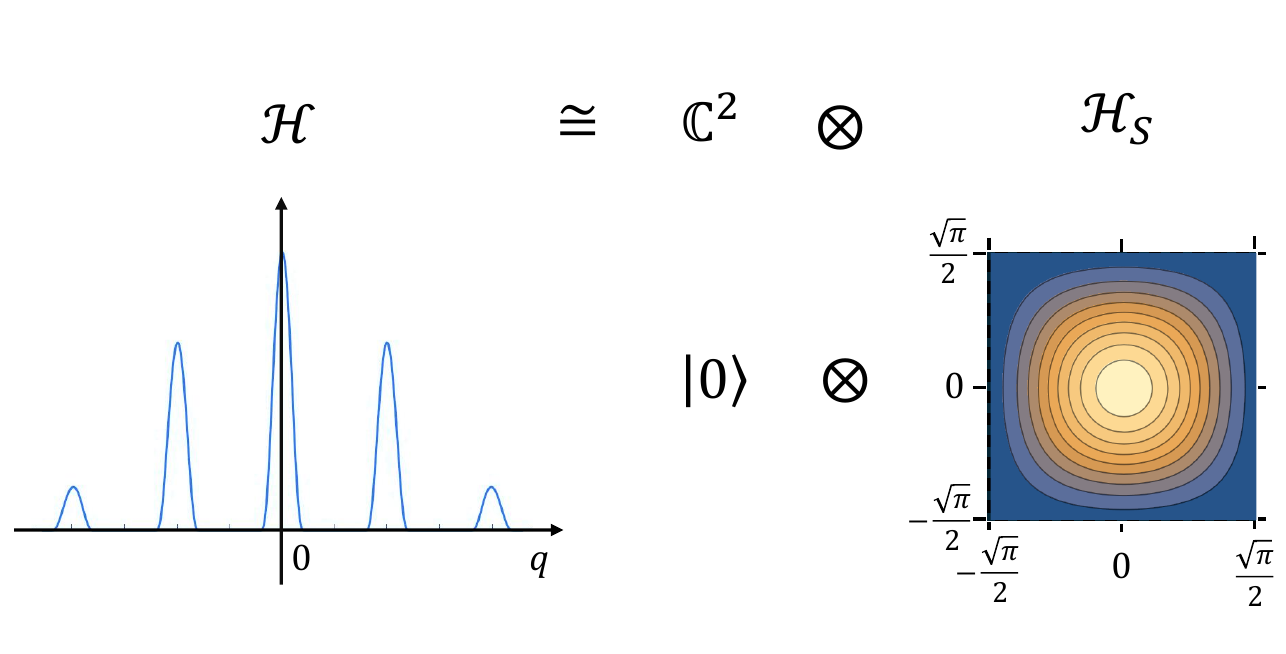}
    \caption{The pictorial representation of the subsystem decomposition studied in \cite{Pantaleoni2020,Pantaleoni2021Hidden,Pantaleoni2021Subsystem,Pantaleoni2023,Mackenzie2022}.  The above is a special case in which the state on the left-hand side can be written as a single tensor-product vector as on the right-hand side.  A general pure state can be written as a linear combination of the tensor-product vectors.}
    \label{fig:ssd}
\end{figure}

\paragraph*{
CV fault-tolerance condition from stabilizer subsystem decomposition
}

The first obstacle we face is the unphysicality of the ideal GKP codeword and the resulting difficulty in defining a fault-tolerance condition.  To circumvent this, we utilize a subsystem decomposition for the GKP code, first proposed in Ref.~\cite{Pantaleoni2020} and elaborated in Refs.~\cite{Pantaleoni2021Hidden, Pantaleoni2021Subsystem, Pantaleoni2023}. 
We will use the stabilizer subsystem decomposition~\cite{Mackenzie2022}, which remedied an undesirable asymmetry between the two quadratures that is present in the initial proposal \cite{Pantaleoni2020}. 
It splits the Hilbert space ${\cal H}$ of the quantum harmonic oscillator into a tensor product of the Hilbert space $\mathbb{C}^2$ of the logical qubit and an infinite-dimensional Hilbert space ${\cal H}_S$ representing the syndromes of stabilizer generators of the GKP code, i.e., ${\cal H}\cong\mathbb{C}^2\otimes {\cal H}_S$ (see Fig.~\ref{fig:ssd}).
In this decomposition, ${\cal H}_S$ is defined as the Hilbert space of square-integrable functions over the Cartesian square of a $\sqrt{\pi}$-sized interval~\cite{Mackenzie2022}.
Conceptually, this decomposition provides a CV generalization of the fact that a stabilizer code of $n$ physical qubits ${\cal H}=(\mathbb{C}^2)^{\otimes n}$ with a single logical qubit is decomposed into a tensor product of the Hilbert space $\mathbb{C}^2$ of the logical qubit and the Hilbert space ${\cal H}_S=(\mathbb{C}^2)^{\otimes n-1}$ of the $(n-1)$ syndrome qubits specified by the $(n-1)$ stabilizer generators, which can be represented as ${\cal H}=\mathbb{C}^2\otimes {\cal H}_S$ in the same way~\cite{Gottesman2009}.

With this decomposition, we introduce an equivalence class of errors with respect to which we define a fault-tolerance condition as in Refs.~\cite{Gottesman2009, yamasaki2022timeefficient} for the qubit case.  For this purpose, we define the stabilizer-subsystem (SSS) $r$-filter and the ideal GKP decoder, which is analogous to the $r$-filter and the ideal decoder defined for the concatenated code in Ref.~\cite{Gottesman2009}.
For the decomposition ${\cal H}=\mathbb{C}^2\otimes {\cal H}_S$, the SSS $r$-filter gives us a tool to discuss how much the wave function of a physical state of the GKP code deviates from the origin of the Cartesian square in the definition of ${\cal H}_S$ (representing no error; see Fig.~\ref{fig:ssd}) and the ideal GKP decoder tells us the state of the logical qubit $\mathbb{C}^2$ at a given location of the circuit.
The SSS $r$-filter depicted in Fig.~\ref{fig:demo_SSS_r-filter} is defined as a projection operator acting as identity on the logical qubit while acting as a projection onto the sub-region $[-r,r)\times[-r,r)$ of the Cartesian square in the definition of $\mathcal{H}_S$ with $0<r<\sqrt{\pi}/2$.
The ideal GKP decoder, on the other hand, is a completely positive trace-preserving (CPTP) map that traces out $\mathcal{H}_S$ to transform the physical CV state of $\mathcal{H}$ into the state of the logical qubit $\mathbb{C}^2$.

\begin{figure}[t]
    \centering
    \includegraphics[width=0.98\linewidth]{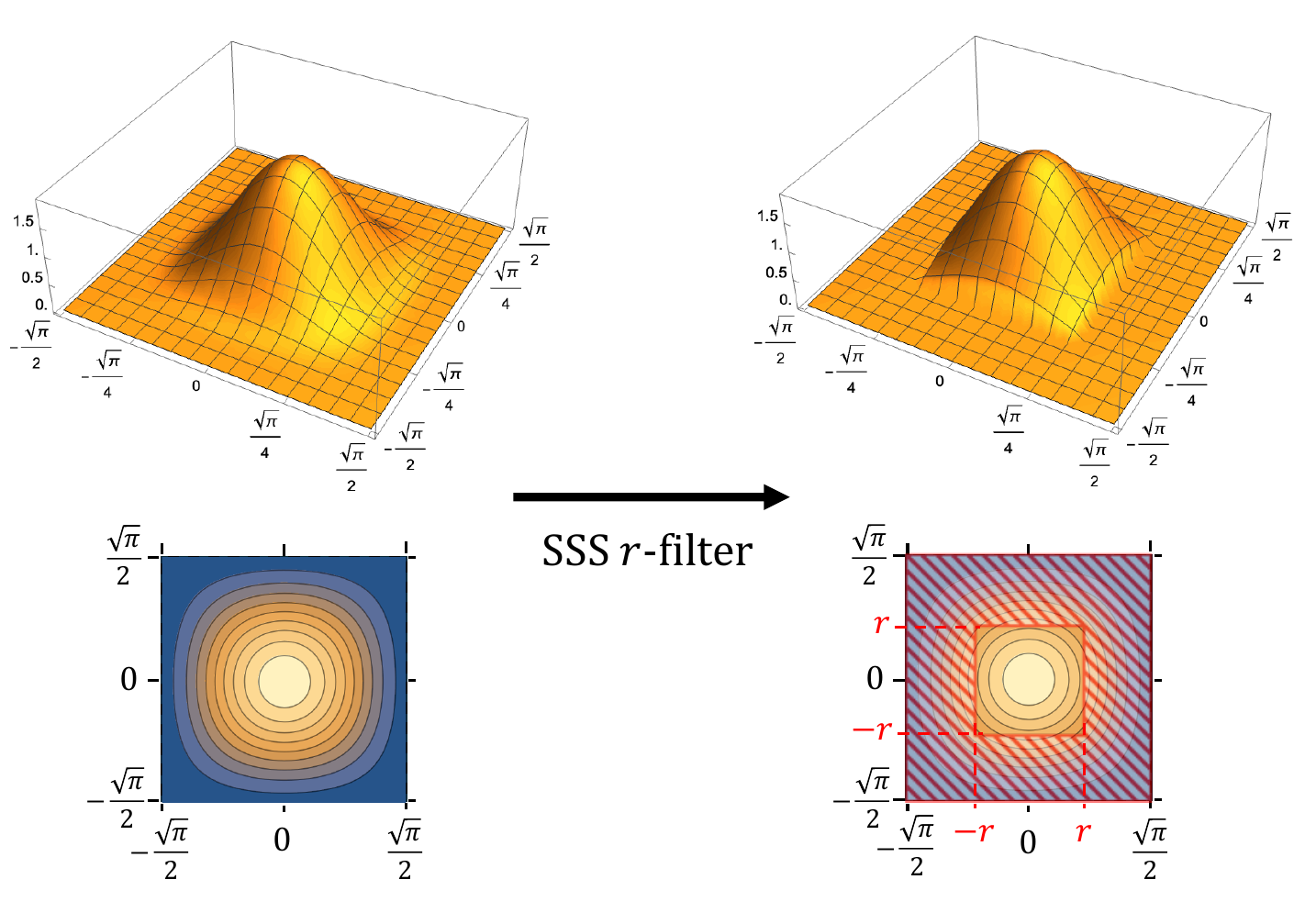}
    \caption{The pictorial representation of ${\cal H}_{S}$ in the 3D plot (top) and the contour plot (bottom) with the action of the SSS $r$-filter on it. By the action of the SSS $r$-filter, the red-shaded region in the right-hand is set to zero.}
    \label{fig:demo_SSS_r-filter}
\end{figure}

Now, we can define an equivalence class of \quoted{noisy} or \quoted{approximate} GKP states, which we call the class of $r$-parameterized GKP states. 
An $r$-parameterized GKP state for a qubit state $\ket{\psi}$ is the product state of $\ket{\psi}$ in the logical qubit and a state that is supported only on the sub-region $[-r,r)\times[-r,r)$ in the Cartesian square in the definition of $\mathcal{H}_S$  Thus, the SSS $r$-filter acts trivially on it, and the ideal GKP decoder decodes it into $\ket{\psi}$.  Unlike the unphysical ideal GKP codeword, an $r$-parameterized GKP state is well-defined for any $r>0$.
Then, a fault-tolerance condition for this CV quantum computation is given such that the state preparation gadget prepares an $r$-parameterized GKP state with sufficiently small $r$, gate and measurement gadgets do not enlarge $r$ up to small constants, and the error correction gadget refreshes the state to an $r$-parameterized GKP state while keeping the logical information.

Noticing that an $r$-parameterized GKP state can be written as a superposition of up-to-$r$ displaced ideal GKP states~\cite{Pantaleoni2020, Mackenzie2022}, it is necessary that the gate and measurement operations do not amplify displacement errors to uncorrectable ones.  This is analogous to the qubit fault-tolerance condition where it is necessary that the gate or measurement operations do not spread local errors.
We show that the conventionally used gate and measurement operations for the GKP code~\cite{Gottesman2001} satisfy the fault-tolerance conditions defined above (see Methods and Sec.~\ref{sec:FT_conditions} in Supplementary Information). 
It can also be shown that the gates followed by a small displacement error or the measurements after a small displacement error satisfy the fault-tolerance conditions.  We can thus obtain a way to describe the fault-tolerance conditions for the CV state preparation, gates, and measurements without referring to the detail of the noise and the shape of the wave function of the approximate GKP codeword.  This is in sharp contrast to the previous work~\cite{Menicucci2014} where the noise models and the shapes of the wave function were explicitly assumed.

Now, we observe that the error correction operation during the computation has to get the state back to an $r$-parameterized GKP state with $r$ sufficiently smaller than $\sqrt{\pi}/2$, i.e., the threshold of correctable displacement parameters for the GKP code.  Otherwise, noise during the computation accumulates and lets the support of the wave function in the Cartesian square of $\mathcal{H}_S$ go beyond $\sqrt{\pi}/2$, leading to a logical error.  To get the state back into an $r$-parameterized GKP state, we need GKP error correction. 
However, we postpone the discussion of GKP error correction since the second obstacle---the need to take the energy constraint into account for obtaining an appropriate distance measure---also affects the requirements of GKP error correction.

\paragraph*{
Energy-constraint condition
}

As explained so far, we have succeeded in abstracting the GKP code in a way that satisfies the requirements for fault tolerance.
Thus, we can prove that the CV fault-tolerant circuit given as in Fig.~\ref{fig:circuit_encoding} perfectly reproduces the outcomes of the original qubit circuit that we aim at simulating by the CV fault-tolerant circuit as long as all the CV circuit components perfectly satisfy the fault-tolerance conditions.  However, this idealized situation will never be realized in experiments.  We need tools to treat broader situations; the circuit components are almost perfect but may have a general class of physical errors.  This is where the distance measure for channels in CV systems is necessary.  

As mentioned earlier, the diamond-norm distance, which is conventionally used in the fault-tolerance proof for the qubit quantum computation~\cite{Kitaev1997Quantum, Aharonov1999}, is too stringent for CV systems. 
In the literature \cite{Holevo2003, Shirokov2008, Winter2016, Winter2017, Shirokov2018, Shirokov2018adaptation, Shirokov2019}, the energy-constrained version of the diamond norm has been considered, which induces a weaker and physically relevant topology of the set of quantum channels.
We also make use of this energy-constrained diamond norm.  For this, the mode-wise energy (i.e., the average photon number) of the state at every time step during the computation needs to be bounded.  Note that this is another reason why we need to avoid the reference of the \quoted{ideal} GKP codeword in the analysis, which has unbounded energy and may have an arbitrarily high susceptibility to phase rotation noise, which is numerically observed in Ref.~\cite{Mackenzie2022}.
Since CV gate operations in general increase the energy, the state during the computation should continually be refreshed to a state with a constant energy upper bound.  

\begin{figure}[t]
    \centering
    \includegraphics[width=0.98\linewidth]{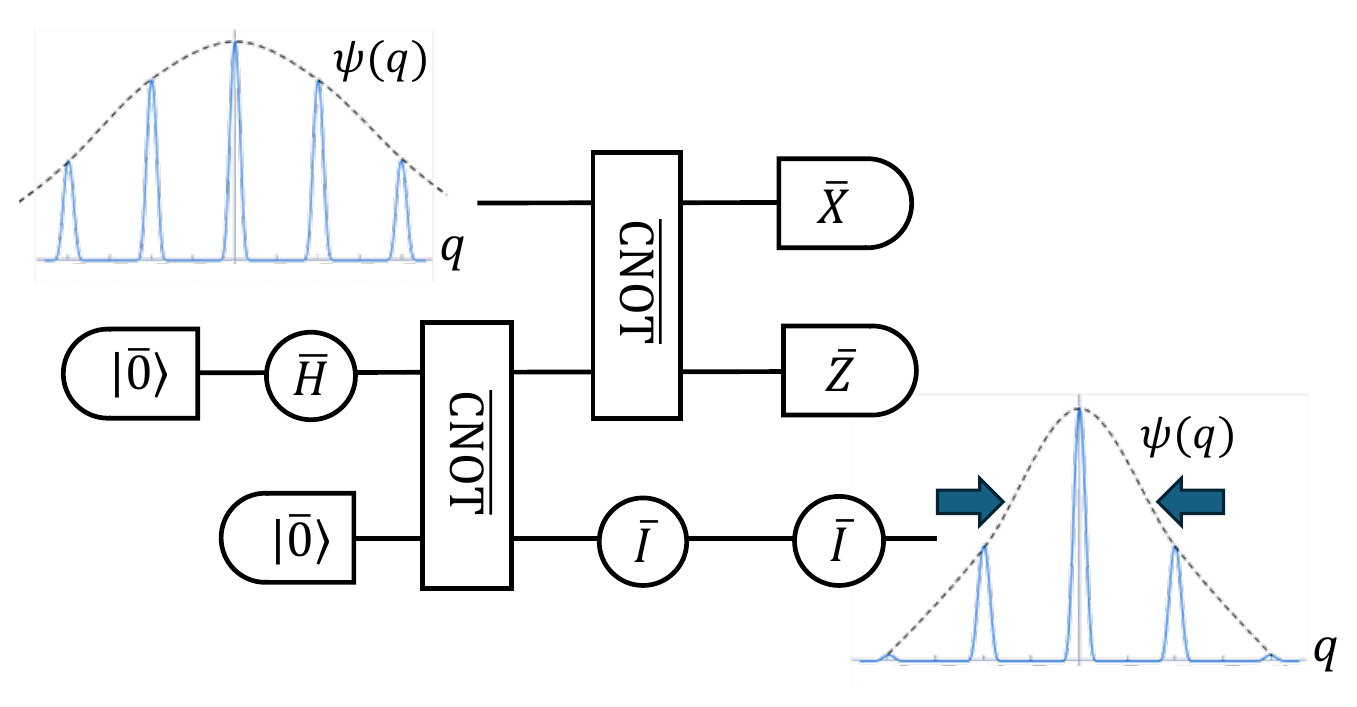}
    \caption{Teleportation-based (Knill-type) GKP error correction \cite{Knill2005} with the input and the output position wave function square.
    In the figure, the feed-forward (GKP) Pauli correction operations are omitted; they can be corrected by the Pauli frame update or by modifying the successive gates and measurements. 
    Since the quantum state in the input is teleported to the ancillary prepared mode, the energy of the state is reset, which is depicted by the change of the variance of the wave function square.  Note that the energy of the state depends also on the variance of the momentum wave function square.
    }
    \label{fig:knill_EC_energy_reset}
\end{figure}

To achieve this, we exploit a GKP EC gadget that satisfies both requirements of performing GKP error correction and resetting the energy.
This is accomplished by the teleportation-based (i.e., Knill-type) error correction~\cite{Knill2005}, as shown in Fig.~\ref{fig:knill_EC_energy_reset}, where the logical information during the computation is teleported into a newly prepared auxiliary GKP codeword with constant energy.
It can be shown that if all the circuit elements in Fig.~\ref{fig:knill_EC_energy_reset} satisfy the fault-tolerance conditions in terms of SSS $r$-filter and the ideal GKP decoder, then this EC gadget also satisfies the fault-tolerance condition as a whole (see Methods and Sec.~\ref{sec:FT_conditions} in Supplementary Information).
Furthermore, as in Fig.~\ref{fig:circuit_encoding}, states during the computation are continually reset to a newly prepared GKP state while their logical information is preserved; this ensures that each prepared state during the computation experiences only a constant number of gates before being measured in a GKP EC gadget.
Then, we can prove that the energy of a state during the computation has an upper bound (see Sec.~\ref{sec:energy_constraint_conditions} in Supplementary Information). 

This enables us to use the energy-constrained diamond-norm distance to assess the effect of noise.
Unlike Knill-type EC, the conventional Steane-type GKP EC \cite{Gottesman2001} may not reset the energy of the error-corrected state.  Then, the energy of the state during the computation may continue increasing and be more fragile against noise, which breaks down fault tolerance.  Note that there may be ways to overcome this problem for Steane-type EC by a more complicated post-processing strategy as implied in Refs.~\cite{Tzitrin2019, Wan2020, Terhal2020}, but it is difficult to evaluate the upper bound on the energy of the state during the computation.

\paragraph*{
CV FTQC under general noise
}

With the energy-constrained diamond norm, we introduce our noise model for CV systems, which covers experimentally relevant noise.  Our noise model is independent in the sense that each mode experiences noise that is independent of that suffered by any other mode in the CV quantum computer, and it is Markovian in the sense that it has no correlation in time.  Thus, we name this noise model an $(E,r,\epsilon)$-independent Markovian noise model, where $E$ is the energy constraint we take for the energy-constrained diamond norm, $r$ is a maximum amount of additional displacements to the ideal quantum operation, and the noise strength $\epsilon$ is the distance from this up-to-$r$-displaced ideal quantum operation measured in the $E$-constrained diamond norm.  For state preparation, for example, $\epsilon$ is equal to the trace distance from an $r$-parameterized GKP state.
This noise model not only covers non-Gaussian approximate GKP state preparation but also covers mode-wise optical loss, random phase rotation, and finite resolution of a homodyne detector.
(See Methods for the formal definition.)  This is in stark contrast to the previous work~\cite{Menicucci2014} where the noise model is restricted to Gaussian-random displacement.

Under this noise model, we aim to prove the threshold theorem for CV fault-tolerant quantum circuits.
Our strategy to prove fault tolerance is based on the level reduction \cite{Gottesman2009, Aliferis2011, Aliferis2013,yamasaki2022timeefficient}.  For this, we split the circuit into extended rectangles (ExRecs) in which each operation we want to implement is replaced with a fault-tolerant gadget and the error corrections are inserted between these fault-tolerant gadgets.  The ExRec is surrounded by the dotted line in Fig.~\ref{fig:circuit_encoding}.
Each ExRec determines the behavior of the logical qubit-level circuit; i.e., if the ExRec is \quoted{bad}, the corresponding logical operation is regarded as erroneous.  The detail of the level reduction is analyzed in Methods and Sec.~\ref{sec:threshold_theorem} in Supplementary Information.  Combined with the threshold theorem for qubit quantum computation in Ref.~\cite{Aliferis2011, Aliferis2013}, we reach the following theorem.  (See Methods for the sketch of the proof.)

\begin{theorem}[(Informal statement of Thm.~\ref{theo:level_reduction} and Corol.~\ref{cor:threshold_theorem}) Threshold theorem for CV quantum computation]\label{theo:threshold_main_text}
    We can construct a CV protocol using the GKP code in a way that for any target value $p>0$, there exist nonzero thresholds $\epsilon_{\rm th}$ and $r_{\rm th}$ and a finite energy bound $E_{\rm th}$ satisfying the following: even if each location of the physical CV circuit suffers from a general class of (not necessarily Gaussian) independent Markovian noise that deviates from its ideal action by (1)~a distance parameter $\epsilon<\epsilon_{\rm th}$, in terms of the energy-constraint diamond norm with energy $E_{\rm th}$, and (2)~a displacement parameter $r<r_{\rm th}$, as in the noise model described above, it is guaranteed that the logical circuit of the GKP qubits undergoes local Markovian noise with noise strength below $p$. That is, by concatenating this CV protocol with a fault-tolerant protocol for qubits with a threshold $p$ under local Markovian noise, we can achieve fault-tolerant quantum computation using CV systems at the physical level.
\end{theorem}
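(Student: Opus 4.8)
The plan is to realize the noisy CV circuit as the level-$0$ physical layer of a concatenated code whose level-$1$ layer is the GKP qubit, and to carry out a level-reduction argument measured in the energy-constrained diamond norm. First I would cut the circuit into extended rectangles (ExRecs), each a fault-tolerant gadget flanked by Knill-type GKP EC gadgets as in Fig.~\ref{fig:circuit_encoding}. The aim of level reduction is to show that, after application of the ideal GKP decoder, each ExRec acts on the logical $\mathbb{C}^2$ as the intended ideal logical operation whenever it is \quoted{good}, and as an arbitrary faulty logical operation only with small probability when it is \quoted{bad}; the effective qubit noise strength is then the probability weight carried by the bad ExRecs.

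The first ingredient is the exact-simulation statement in the noiseless good case. Working in the SSS decomposition $\mathcal{H}\cong\mathbb{C}^2\otimes\mathcal{H}_S$, I would invoke the fault-tolerance conditions already established for the GKP gadgets: the preparation gadget outputs an $r$-parameterized GKP state, the gate and measurement gadgets do not enlarge the support in the Cartesian square of $\mathcal{H}_S$ beyond a small constant, and the Knill EC gadget refreshes any $r$-parameterized input with $r<\sqrt{\pi}/2$ back to a small-$r$ state while preserving its logical content and resetting its energy. Composing these, I would show that decoding the output of a good ExRec yields the same logical state as applying the target logical gate to the decoded input, because all accumulated displacements stay within the correctable window $[-r,r)\times[-r,r)$ and are removed by the subsequent EC.

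The central ingredient is to convert the physical deviation $\epsilon$ into a bound on the probability of a bad ExRec. Here the energy-constrained diamond norm replaces the ordinary diamond norm, and it is finite precisely because Knill EC resets the energy of every mode, so that each state entering each location carries energy at most $E_{\rm th}$; I would first establish this uniform bound by noting that every prepared state passes through only a constant number of energy-increasing gates before being teleported into a fresh codeword. With a uniform energy bound in hand, the $E_{\rm th}$-constrained deviations of the individual locations chain by the triangle inequality and monotonicity to bound the deviation of a whole ExRec from its ideal up-to-$r$-displaced action by a constant multiple of $\epsilon$ plus a term that vanishes as $r\to0$. Read through the ideal GKP decoder, this deviation upper-bounds the probability that the decoded logical operation differs from the intended one, giving a per-location effective rate $p_{\rm eff}\le C_1\epsilon+C_2 f(r)$; choosing $\epsilon_{\rm th}$ and $r_{\rm th}$ so that $p_{\rm eff}<p$, together with the finite $E_{\rm th}$ just established, completes the translation. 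Independence across modes and time-locality of the CV noise then guarantee that the induced qubit noise is itself independent and Markovian, so the concatenated qubit threshold theorem of Refs.~\cite{Aliferis2011, Aliferis2013} applies and delivers full fault tolerance.

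The hard part will be making the level-reduction accounting rigorous in the energy-constrained diamond norm, which lacks several convenient algebraic properties of the full diamond norm. In particular, chaining the local $\epsilon$-deviations into a single ExRec estimate is legitimate only if every intermediate input state genuinely respects the energy constraint $E_{\rm th}$, so the delicate step is establishing this uniform bound and verifying that Knill EC truly caps the energy rather than merely resetting it once: a single location whose output energy overshoots $E_{\rm th}$ would invalidate all downstream constrained-norm estimates and, through the energy-dependence of phase-rotation susceptibility noted above, could produce an unbounded logical error. The remaining tasks---confirming that the good-ExRec simulation survives the up-to-$r$ displacements and that the decoder maps a constrained-norm deviation to a genuine stochastic qubit noise---are comparatively routine once the energy bound is secured.
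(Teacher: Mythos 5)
Your overall architecture (SSS decomposition, ExRecs with Knill EC, energy reset, energy-constrained diamond norm, reduction to the qubit threshold theorem) matches the paper's, but there are two concrete gaps. First, your final step asserts that \quoted{independence across modes and time-locality of the CV noise then guarantee that the induced qubit noise is itself independent and Markovian.} This is false, and the theorem statement itself already tells you so: consecutive GKP ExRecs \emph{share} EC gadgets, so a single physical fault in a trailing/leading EC contributes to two logical locations, and the induced qubit noise is only \emph{local} Markovian (fault paths with faults in a set $R$ of locations have norm bounded by $\epsilon_{\rm qubit}^{|R|}$, with correlations allowed). The paper handles this by moving the GKP $*$-decoder to the front of the circuit, absorbing the leading EC of a faulty ExRec into that ExRec (truncating its predecessor), and then invoking the qubit threshold theorem for local Markovian noise of Aliferis et al.\ rather than the independent-noise version. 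Without this bookkeeping your per-location \quoted{probability} accounting double-counts or miscounts shared ECs.

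Second, the step you yourself flag as hard --- chaining the per-location $\epsilon$-deviations into an ExRec-level bound --- cannot be closed by \quoted{the triangle inequality and monotonicity} alone. A fault ${\cal F}_j=\tilde{\cal O}_j-{\cal O}_j$ is Hermitian-preserving but not completely positive, so the energy-constrained diamond norm is \emph{not} submultiplicative under composition with it: the positive and negative parts of $\Psi(\hat\rho)-\tilde\Psi(\hat\rho)$ need not satisfy any energy constraint even when $\hat\rho$ does. The paper's essential new technical ingredient is the bound $\|\Phi\circ(\Psi-\tilde\Psi)\|_{\diamond}^{E}\leq 10\epsilon\,\|\Phi\|_{\diamond}^{E'/\epsilon^{2}}$, proved via a finite-dimensional (photon-number-truncated) approximation of energy-constrained purifications; note the unavoidable enlargement of the energy constraint to $E'/\epsilon^{2}$, which is why the noise model must be stated at energy $E^{\ell}_{\max}(\epsilon)=g_{\rm sup}\bigl(g_{\rm sup}^{\ell-1}(E_{\rm prep})/\epsilon^{2}\bigr)$ rather than at a fixed $E_{\rm th}$ independent of $\epsilon$. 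Your proposal treats $E_{\rm th}$ as a constant and the chaining as routine, which leaves the central estimate unproved. A further, more minor, divergence: the paper does not assign a residual error term $C_2 f(r)$ to the displacement parameter; it fixes $s_{\rm p},s_{\rm g},s_{\rm m}$ deterministically below the threshold given by $2s_{\rm p}+2s_{\rm g}+\max\{s_{\rm g}+s_{\rm m},2s_{\rm g}\}=s_{\rm e}$, $s_{\rm p}+s_{\rm e}<\sqrt{\pi}/2$, etc., so that \emph{all} ExRecs are good and the sub-threshold displacements are corrected exactly, with only $\epsilon$ contributing to $\epsilon_{\rm qubit}=10\epsilon L_{\max}$.
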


We remark that in the above theorem, the noise model that the logical circuit of the GKP qubits undergoes is no longer independent due to the correlation made by the error correction procedures, but the occurrence of correlated noise is as rare as that for the independent noise model.  This noise model, called local Markovian noise, is studied in detail for qubit systems in Refs.~\cite{Aliferis2011, Aliferis2013} in which the threshold theorem is established.  Thus, we can use this result to obtain the final statement of Theorem~\ref{theo:threshold_main_text}.

\section*{Discussion}
In this work, we rigorously showed the existence of a fault-tolerance threshold for quantum computation with the GKP quantum error correcting code under the general class of independent Markovian noise models, generalizing the result from Ref.~\cite{Menicucci2014}, which considered only Gaussian random displacement noise.
Our noise model covers experimentally relevant noise, such as non-Gaussian approximation of the GKP states, optical loss, and finite resolution of homodyne detection.
This can be proven using the stabilizer subsystem decomposition of the GKP code developed in Ref.~\cite{Mackenzie2022}.
The crucial difference from fault tolerance for qubit quantum computation is that the fault tolerance condition for the CV case requires (1)~that displacements not be amplified by gate operations and (2)~that the energy of a state during the computation remains bounded. In contrast with the qubit case, without the energy condition (2), a CV state may become more and more susceptible to small noise and may eventually be uncorrectable. To avoid the accumulation of energy, as well as to correct CV-level errors, we use Knill-type error correction \cite{Knill2005} to ensure these conditions are satisfied. In practice, Steane-type GKP error correction~\cite{Gottesman2001} will likely also work as long as some efforts are made to ensure bounded energy in the state~\cite{Tzitrin2019}. Performing this analysis is left to future work.

The implications obtained by the fault-tolerance framework and the threshold theorem developed here for CV quantum computation are many.  
First, the threshold theorem given here makes it clear what should be experimentally achieved in CV systems to build a CV quantum computer.  As mentioned above, the noise model we treat in this paper is broad and contains a broad class of experimentally relevant noise.  By estimating the noise strength $\epsilon$ of the actual experimental noise, one can judge that the noise level is low enough to implement a CV fault-tolerant quantum computer.
Second, there may be a crucial difference between fault-tolerant quantum information processing with DV versus CV systems.  In the CV case, the energy of the state during the information processing should be carefully managed; otherwise, the state may be more and more fragile against noise.  For the same reason, Knill-type EC and Steane-type EC may not be equivalent anymore in the CV system without careful adjustments. The former is ensured to reset the energy of the data mode, while the latter is not.
These observations open up avenues for future studies on the possibilities and limitations of CV quantum computation and information processing.

\begin{acknowledgments}
  T.~M.~acknowledges helpful discussions with Yui Kuramochi.  This work was supported by JST, CREST Gant Number JPMJCR23I3, Japan.  T.~M.~was supported by JSPS Overseas Research Fellowships.
  H.~Y.~was supported by JST PRESTO Grant Number JPMJPR201A, JPMJPR23FC, JSPS KAKENHI Grant Number JP23K19970, and MEXT Quantum Leap Flagship Program (MEXT QLEAP) JPMXS0118069605, JPMXS0120351339\@.
  This work was supported by the Australian Research Council (ARC) Centre of Excellence for Quantum Computation and Communication Technology (Project No.~CE170100012). N.~C.~M.~was supported by an ARC Future Fellowship (Project No.~FT230100571).
\end{acknowledgments}

\section*{Author contributions}

T.~M.~contributed to the conception, analysis, and interpretation of the work, and mainly wrote the manuscript.  H.~Y.~contributed to the conception, analysis, and interpretation of the work, and supervised the manuscript writing.  N.~C.~M.~contributed to the interpretation of the work and supervised the manuscript writing.

\section*{Competing interests}

The authors declare no competing interests.

\section*{Additional information}

Supplementary Information is available for this paper.
Correspondence and requests for materials should be addressed to Takaya Matsuura.

\section*{Methods}

\paragraph*{
The GKP code and the stabilizer subsystem decomposition
}

A harmonic oscillator is a prominent example of a continuous-variable (CV) system.  A single-mode harmonic oscillator can be characterized by a pair of noncommutative quadrature operators $\hat{q}$ and $\hat{p}$ that satisfy $[\hat{q},\hat{p}]=i$, acting on (a dense linear subspace of) a separable Hilbert space~${\cal H}$~\cite{Holevo2011}.  The energy level of a quantum harmonic oscillator is quantized and labeled by a non-negative number called the photon number.  The photon-number operator $\hat{n}$ is defined as $\hat{n}\coloneqq (\hat{q}^2+\hat{p}^2-1)/2$.

We consider the GKP code for encoding a qubit into a harmonic oscillator.  
The ideal GKP-encoded state (the GKP codeword) $\ket{\overline{\psi}}$ is formally defined as an infinite superposition of position eigenstates, i.e.,~$\ket{\overline{\psi}}=\alpha\ket{\overline{0}}+\beta\ket{\overline{1}}$ with $\ket{\overline{j}}\propto \sum_{s\in\mathbb{Z}}\ket{\sqrt{\pi}(2s+j)}_q$ for $j=0,1$, where $\ket{t}_q$ satisfies $\hat{q}\ket{t}_q=t\ket{t}_q$ for the operator $\hat{q}$.
The GKP code can be regarded as a stabilizer code with the stabilizer generated by $\exp(2\sqrt{\pi}i\hat{q})$ and $\exp(2\sqrt{\pi}i\hat{p})$.
Logical Clifford unitary gates on this code can be realized by Gaussian unitaries, which are generated by second-order polynomials of $\hat{q}$ and $\hat{p}$.
The correspondence between the GKP logical operations and the CV physical operations is given by \cite{Gottesman2001}
\begin{align}
 &\text{Pauli-X:}&\overline{X} &\longrightarrow \exp(-\sqrt{\pi}i\hat{p}), \label{eq:pauli-X_physical}\\
 &\text{Pauli-Z:}&\overline{Z} &\longrightarrow \exp(\sqrt{\pi}i\hat{q}), \\
 &\text{Hadamard:}&\overline{H} &\longrightarrow \hat{F}\coloneqq 
 \exp(\pi i  \hat{n}/2), \\
 &\text{CNOT:}&\overline{\text{CNOT}} &\longrightarrow \exp(-i\hat{q}_1\hat{p}_2),\label{eq:CNOT_physical} \\
 &\text{Wait: }&\overline{I}&\longrightarrow \hat{I}.
\end{align}
Other logical gates can be implemented through gate teleportation as long as one can prepare GKP magic states~\cite{Zhou2000, Baragiola2019, Yamasaki2020}.  More precisely, the logical phase gate $\overline{S}$ can be implemented through the gate teleportation with the state $\ket{\overline{Y}}\coloneqq(\ket{\overline{0}}+i\ket{\overline{1}})/\sqrt{2}$, and the logical $T$ gate $\overline{T}$ can be with the state $\ket{\frac{\pi}{8}}\coloneqq (\ket{\overline{0}}+e^{\pi i/4}\ket{\overline{1}})/\sqrt{2}$.  Note that the CV shearing operation $\exp(i\hat{q}^2/2)$ is conventionally used to implement the logical phase gate $\overline{S}$ in the GKP code \cite{Gottesman2001}, but here we do not use this for simplicity of our analysis.
The logical Pauli-X or -Z measurement in this code can be implemented by a homodyne detection in $\hat{p}$ or $\hat{q}$ quadrature, respectively, followed by the binning of the CV measurement outcome modulo $\sqrt{\pi}$.  If the outcome is binned to the even (respectively odd) multiple of $\sqrt{\pi}$, it is regarded as logical 0 (respectively 1).  This completes the description of how to perform a universal quantum computation with the GKP code.

The expression of $\ket{\overline{\psi}}$ above is only formal and does not represent a physically realizable state.  For the code to be physically realizable, we need to approximate the state in some way~\cite{Gottesman2001}, which complicates the theoretical treatment of the GKP code \cite{Matsuura2020} and makes it difficult to analyze fault tolerance.  It is also a problem of how or in what sense the state needs to be approximated, which we will resolve in this paper.
The way of approximating the GKP codewords has been discussed in the literature \cite{Gottesman2001, Menicucci2014, Matsuura2020}.  The recent advance in this direction is the technique called the subsystem decomposition \cite{Pantaleoni2020, Pantaleoni2021Hidden, Pantaleoni2021Subsystem, Pantaleoni2023, Mackenzie2022}, which decomposes the Hilbert space of the CV system into the tensor product of the logical qubit and an infinite-dimensional Hilbert space. 

The subsystem decomposition is similar to decomposing the physical Hilbert space of the stabilizer code into the tensor product of those representing the logical qubits and the syndrome qubits.
In Steane's 7-qubit code~\cite{Steane1996Multiple}, for example, the Hilbert space of the seven qubits can be represented as a tensor product of those of a logical qubit and six syndrome qubits on which the logical Pauli operators and the stabilizer operators, respectively, act nontrivially~\cite{Gottesman2009}.  A codeword can be represented, up to this isomorphism, as a tensor product of any state of the logical qubit and a particular basis state of the syndrome qubits, say $\ket{0}^{\otimes 6}$ (i.e., the origin of the name \quoted{syndrome qubits}, representing the case of no error).
A Pauli error on the physical qubits of the codeword flips some of the six $\ket{0}$s to $\ket{1}$s.  The error correction procedure for the 7-qubit code thus corresponds to detecting $\ket{1}$s in the syndrome qubits and correcting the overall state to $\ket{0}^{\otimes 6}$ of the syndrome qubits without changing the state of the logical qubit. 
One may also regard the logical qubit and the syndrome qubits as \quoted{subsystems} in terms of this tensor-product decomposition as in Refs.~\cite{Knill1997,Knill2000}, where the \quoted{subsystems} in this context do not necessarily mean directly accessible ones such as physical qubits but are specifically defined from this tensor-product decomposition. 
In this sense, we may collectively call the syndrome qubits a \textit{syndrome subsystem} for this tensor-product decomposition.
All stabilizer codes can, in principle, be decomposed in this way, and so can the GKP code, where the Hilbert space of the syndrome qubits in the case of qubit stabilizer codes is replaced with an infinite-dimensional space of the syndrome subsystem for the GKP code.
There are many different ways to define such a decomposition for the GKP code (called subsystem decomposition in the context of GKP codes~\cite{Pantaleoni2020,Pantaleoni2021Hidden, Pantaleoni2021Subsystem, Pantaleoni2023}), and each corresponds to a different decoding strategy.

Among several ways of the subsystem decomposition developed so far for the GKP code, we utilize the stabilizer subsystem decomposition developed in Ref.~\cite{Mackenzie2022} because it corresponds to the typical binning decoder for GKP error correction~\cite{Gottesman2001}.
In the stabilizer subsystem decomposition, the Hilbert space ${\cal H}$ of a quantum harmonic oscillator is decomposed as
\begin{align}
\label{eq:SSS_Hilbert_space}
{\cal H}\cong \mathbb{C}^2\otimes {\cal H}_{S}, 
\end{align}
where $\mathbb{C}^2$ denotes the logical qubit and ${\cal H}_{S}$ denotes the Hilbert space of square-integrable functions over the square area
\begin{align}
\label{eq:patch}
     \patch\coloneqq [-\sqrt{\pi}/2,\sqrt{\pi}/2)\times [-\sqrt{\pi}/2,\sqrt{\pi}/2),
\end{align}
as shown in Fig.~\ref{fig:ssd}. Thus, any state $\ket{\phi}\in{\cal H}$ can be represented as a linear combination of vectors of the form $\ket{\mu}\otimes \int_{(z_1,z_2)\in \patch}f(z_1,z_2)\ket{z_1,z_2}$, with $\ket{\mu}\in\mathbb{C}^2$ and $f\in {\cal H}_{S}$.  We call this Hilbert space $\mathbb{C}^2\otimes {\cal H}_{S}$ the \emph{stabilizer-subsystem (SSS) Hilbert space}, and we call the former subsystem the logical qubit and the latter subsystem the syndrome subsystem in terms of this decomposition.  (Note that the latter is originally called the stabilizer subsystem~\cite{Mackenzie2022}.)  The GKP stabilizer operator $\exp(2\sqrt{\pi}i\hat{q})$ and $\exp(-2\sqrt{\pi}i\hat{p})$ acting on ${\cal H}$ can be represented on the SSS Hilbert space as $\hat{I}\otimes\exp(2\sqrt{\pi}i\hat{z}_1)$ and $\hat{I}\otimes\exp(-2\sqrt{\pi}i\hat{z}_2)$, respectively, where $\hat{z}_i$ for $i=1,2$ is an operator satisfying $\hat{z}_i\ket{z_1,z_2}=z_i\ket{z_1,z_2}$.  The GKP logical Pauli-X operator $\overline{X}$ and the Z operator $\overline{Z}$ can be represented as $\hat{\sigma}_X\otimes \exp(-\sqrt{\pi}i\hat{z}_2)$ and $\hat{\sigma}_Z\otimes \exp(\sqrt{\pi}i\hat{z}_1)$, respectively, where $\hat{\sigma}_X$ and $\hat{\sigma}_Z$ denote the Pauli operators on a qubit.
The ideal GKP state has delta distribution in both $z_1$ and $z_2$ and is thus an ill-defined state.  However, it is clear from the above decomposition that the vector in the form $\ket{0}\otimes \int_{(z_1,z_2)\in \patch}f(z_1,z_2)\ket{z_1,z_2}$ has the same logical qubit state as the ideal GKP $\ket{\overline{0}}$ regardless of the form of the function $f$.

This decomposition can be derived from the fact that $\{\hat{q}\}_{\sqrt{\pi}}$ and $\{\hat{p}\}_{\sqrt{\pi}}$ commute with each other, while $\hat{q}$ and $\hat{p}$ do not \cite{Pantaleoni2023}, where $\{a\}_{\sqrt{\pi}}\coloneqq a - \lfloor a \rceil_{\sqrt{\pi}}$ and $\lfloor a \rceil_{\sqrt{\pi}}$ denote the closest integer multiple of $\sqrt{\pi}$ for the real number $a$.  Furthermore, each $\exp(\sqrt{\pi}i \lfloor \hat{q} \rceil_{\sqrt{\pi}})$ and $\exp(-\sqrt{\pi}i \lfloor \hat{p} \rceil_{\sqrt{\pi}})$ commute with both $\{\hat{q}\}_{\sqrt{\pi}}$ and $\{\hat{p}\}_{\sqrt{\pi}}$, while they two anti-commute with each other. In fact, $\exp(\sqrt{\pi}i \lfloor \hat{q} \rceil_{\sqrt{\pi}})$ and $\exp(-\sqrt{\pi}i \lfloor \hat{p} \rceil_{\sqrt{\pi}})$ acting on ${\cal H}$ correspond respectively to $\hat{\sigma}_Z\otimes\hat{I}$  and $\hat{\sigma}_X\otimes\hat{I}$ operators acting nontrivially on $\mathbb{C}^2$ of the SSS Hilbert space, while $\{\hat{q}\}_{\sqrt{\pi}}$ and $\{\hat{p}\}_{\sqrt{\pi}}$ acting on ${\cal H}$ correspond to $\hat{I}\otimes\hat{z}_1$ and $\hat{I}\otimes\hat{z}_2$ operators acting nontrivially on ${\cal H}_{S}$.
In Sec.~\ref{sec:stabilizer_ssd} in the Supplementary Information, we review the stabilizer subsystem decomposition in more detail.

\paragraph*{
Fault-tolerant quantum computation and how to achieve it
}
The goal of fault-tolerant quantum computation (FTQC) is to construct a circuit $C'$ that simulates the $W$-qubit $D$-depth original quantum circuit $C$ within any given target error $\epsilon>0$; i.e., the circuit $C'$ outputs a $W$-bit string sampled from a probability distribution close to the output probability distribution of the original circuit $C$ within error in the total variation distance at most $\epsilon$ \cite{yamasaki2022timeefficient}. 
Here, the depth means the number of time steps at which the preparation, gate, measurement, or wait operation is performed on each qubit.  We call a physical operation performed on a given physical system at a given time step a \emph{location}~$C_i$ in the quantum circuit $C$, with $i\in{\cal I}$, where the index set ${\cal I}$ is chronologically ordered for simplicity of presentation (among the locations at the same time step in $C$, the order is arbitrary).  Note that $\abs{{\cal I}}\leq WD$ holds by definition.

In qubit quantum computation, the achievability of fault tolerance has been shown~\cite{Knill1996, Shor1997, Kitaev1997, Aharonov1997, Knill1998, Knill1998Science, Aharonov1999, Gottesman2009, Aliferis2011, Aliferis2013,yamasaki2022timeefficient}.
This is made possible if the error on each circuit component is small enough and the physical circuit undergoes a reasonable noise model.
For example, let ${\cal O}_i$ be an ideal map we want to implement at location $C_i$ in the circuit $C$ and let $\tilde{\cal O}_i$ be its noisy version that we can implement in the experiment.  If the noise is independent (meaning the noise has no space-like correlation) and Markovian (meaning the noise has no time-like correlation), then the noisy output distribution $p^{\rm noisy}$ is given by 
\begin{align}
    p^{\rm noisy} &= \tilde{\cal O}_{|{\cal I}|} \circ \cdots \circ \tilde{\cal O}_{1}  \\
    &= ({\cal O}_{|{\cal I}|} + {\cal F}_{|{\cal I}|}) \circ \cdots \circ ({\cal O}_{1}+{\cal F}_{1}),
\end{align}
where the map ${\cal F}_i\coloneqq \tilde{\cal O}_i-{\cal O}_i$ is regarded as a fault.  (Here, we assume that the set $\{C_i\}_{i\in{\cal I}}$ is chronologically ordered.)  When the second expression above is expanded, each term except $O_{|{\cal I}|}\circ\cdots\circ O_{1}$ is called a \emph{fault path}.  The noise strength $\delta_i$ is then defined as the (diamond) norm of the fault ${\cal F}_i$ in the qubit case.  If each $\delta_i$ is bounded from above by a constant $\delta$, then the total variation distance between the noisy output distribution $p^{\rm noisy}$ and the ideal output distribution $p^{\rm ideal}$ is bounded from above by $\epsilon=(e-1)WD\delta$, where $e$ is Euler's number (see Sec.~\ref{sec:threshold_theorem} in Supplementary Information).  To make the distance arbitrarily small, we need to make $\delta$ arbitrarily small, and for this, the circuit component should be encoded by a quantum error-correcting code.
But again, the achievable noise strength is dominated by the noise strengths to realize this quantum error-correcting code.  As a result, one may need to use a code that can arbitrarily suppress the error by expanding the size of the code or by nesting the code over and over to achieve the target noise strength $\delta$.  The latter choice is called the concatenated code \cite{Gottesman2009, Aliferis2011, Aliferis2013,yamasaki2022timeefficient}, which we use in this paper.

When using a concatenated code, one needs to pay attention to how the noise model is translated from the physical level to the logical level. This is because the logical error may have correlations caused by the structure of the quantum error-correcting code and the fault-tolerant quantum circuit.
The translation of the noise properties from the physical level to the logical level is studied in detail in Refs.~\cite{Gottesman2009, Aliferis2011, Aliferis2013,yamasaki2022timeefficient}.
In the case of independent Markovian noise at the physical level, the translated noise at the logical level is no longer independent Markovian noise.  However, it is shown in Refs.~\cite{Aliferis2011, Aliferis2013} that, at the logical level, the sum of all the fault paths with faults in $R$ specific locations is bounded from above by $\epsilon_{\rm qubit}^{|R|}$ for a constant upper bound $\epsilon_{\rm qubit}$ on the noise strength at any location, where $|R|$ denotes the cardinality of the set $R$.  This noise model is called a \emph{local Markovian noise model}.  What is important is that if we further concatenate the qubit QEC code, then the local Markovian noise model in the lower-level qubit circuit is translated to local Markovian noise in the higher-level qubit circuit.  Thus, there exists a threshold noise strength $\epsilon^\star_{\rm qubit}$ below which we can arbitrarily suppress the error in the computational result.  This whole process of reinterpreting the fault-tolerant quantum circuit at a given level with noise following a certain model as that at a higher level with noise following a translated noise model is called \emph{level reduction}~\cite{Gottesman2009}.

To establish this level reduction for the qubit concatenated code, Refs.~\cite{Gottesman2009, Aliferis2011, Aliferis2013,yamasaki2022timeefficient} introduced the concepts of the $r$-filter and the ideal decoder.  The $r$-filter in these works is a projection operator onto a subspace ${\cal H}_r$ spanned by the vectors that are up to $r$ (Pauli) errors away from the codewords.  Thus, for an error-correcting code that can correct up to $t$ Pauli errors (i.e.,~with code distance $2t+1$), the $r$-filter defines an increasing sequence $({\cal H}_r)_{r=0,\ldots,t}$ of Hilbert subspaces with ${\cal H}_{r_1}\subseteq {\cal H}_{r_2}$ for $r_1\leq r_2$, where ${\cal H}_0$ is the code space and ${\cal H}_t$ is the total Hilbert space of physical qubits that make up the code.
The ideal decoder is a virtual quantum channel that performs a perfect decoding operation (including error correction) on the input state; that is, an error-correcting channel from physical qubits to logical qubits.
The $r$-filter gives us a way to describe how far away the states are from the code space, while the ideal decoder tells us the state of the logical qubit at each time step.  Using the $r$-filter and the ideal decoder, Ref.~\cite{Gottesman2009} established fault-tolerant conditions for gadgets. For a QEC code that can correct up to $t$ Pauli errors, the gadget that has at most $r$ faulty locations should output the state that is invariant under the action of $r$-filter and is decoded to the correct logical state via the ideal decoder if $r \leq t$.  Then, the threshold theorem can be proved by relying only on these fault-tolerance conditions and the locality of the noise, not relying on the detailed properties of the noise.
In this sense, the $r$-filter and the ideal decoder introduce an equivalence class of noise against which the fault-tolerant protocol works.  

Following the idea of Ref.~\cite{Gottesman2009}, we establish a new level-reduction technique for CV FTQC\@.  Sincethe GKP code is fundamentally different from qubit QEC codes, we need to introduce alternatives for the $r$-filter and the ideal decoder, which we name the SSS $r$-filter and the ideal GKP decoder.
As the name suggests, we use the stabilizer subsystem decomposition explained in the previous section.
For the SSS Hilbert space in Eq.~\eqref{eq:SSS_Hilbert_space}, the SSS $r$-filter is defined as the projection operator that acts trivially on the logical qubit $\mathbb{C}^2$ and projects onto the subregion $[-r,r)\times[-r,r)$ of the square area $\patch$ of the syndrome subsystem in Eq.~\eqref{eq:patch} (see Fig.~\ref{fig:demo_SSS_r-filter}).  The ideal GKP decoder is defined as the map that traces out $\mathcal{H}_S$ while leaving $\mathbb{C}^2$~\cite{Mackenzie2022}.
Thus, the SSS $r$-filter restricts the support of the wave function around the origin of the square area $\patch$ of the syndrome subsystem, in contrast to the $r$-filter of the qubit concatenated code restricting the weight of Pauli errors.
Again, the SSS $r$-filter defines an increasing net $({\cal H}_r)_{r\in(0,\sqrt{\pi}/2]}$ of Hilbert subspaces ${\cal H}_r$ under inclusion.
As $r$ gets smaller, all the states in the subspace ${\cal H}_r$ become better approximations of the ideal GKP states.  There is, however, a mathematical subtlety for this net; the projective limit of this net seen as a projective system (i.e., $\bigcap_{r}{\cal H}_r$) is not the \quoted{ideal GKP code space} in stark contrast to the qubit case.  (In fact, $\bigcap_{r}{\cal H}_r=\{0\}$.)

The underlying principle of our definition of the SSS $r$-filter is that a large displacement error occurs less frequently than a small displacement error even with the faulty operations, which is intended in the construction of the GKP code~\cite{Gottesman2001}.
See Sec.~\ref{sec:FT_conditions} in Supplementary Information for the rigorous definitions.
The SSS $r$-filter and the ideal GKP decoder are depicted as follows.
\begin{equation*}
\begin{picture}(230,40)
\thicklines

\put(0,20){\line(1,0){10}}
\put(10,10){\framebox(5,20){}}
\put(15,20){\line(1,0){10}}
\put(18,25){\makebox(0,0)[bl]{$\scriptstyle r$}}
\put(35,12){\makebox(50,16)[l]{\text{SSS $r$-filter}}}

\put(110,20){\line(1,0){10}}
\put(120,10){\line(0,1){20}}
\put(120,30){\line(1,-1){10}}
\put(130,20){\line(-1,-1){10}}
\thinlines
\put(130,20){\line(1,0){10}}
\put(150,12){\makebox(60,16)[l]{\text{Ideal GKP decoder}}}

\end{picture}
\end{equation*}
We will introduce the equivalence relation between noise through the SSS $r$-filter and the ideal GKP decoder as Ref.~\cite{Gottesman2009} does for the qubit case.
Let us define an $r$-parameterized GKP state $\hat{\rho}_{\psi}^s$ for a qubit state $\ket{\psi}$ as any state that is invariant under the action of SSS $r$-filter and outputs $\ket{\psi}\!\bra{\psi}$ under the action of the ideal GKP decoder.  This state is a product state in the SSS Hilbert space: it is $\ket{\psi}$ in the logical qubit and has support only in the region $[-r,r)\times[-r,r)$ in the square area $\patch$ of the syndrome subsystem in Eq.~\eqref{eq:patch}.  Unlike an ideal GKP state, an $s$-parameterized GKP state can be a physically realizable quantum state.

Let us further define the $s$-parameterized noise channel ${\cal N}^s$ as a noise channel whose Kraus operators are linear combinations of the elements of $\{\exp[i(u\hat{q} - v\hat{p})]:|u|,|v| < s\}$. 
Next, we define fault-tolerant GKP preparation, gate, measurement, and EC gadgets as follows.  The $s$-parameterized GKP preparation gadget ($s$-preparation in short) for a logical state $\ket{\overline{\psi}}$ is the preparation of $s$-parameterized GKP state.
The $s$-parameterized GKP gate gadget ($s$-gate in short) for a logical unitary $\overline{U}$ is the ideal CV unitary $\hat{U}$ listed in Eqs.~\eqref{eq:pauli-X_physical}--\eqref{eq:CNOT_physical} followed by the $s$-parameterized noise channel ${\cal N}^s$.  The $s$-parameterized GKP measurement gadget ($s$-measurement in short) is the $s$-parameterized noise channel ${\cal N}^s$ followed by the ideal homodyne detection and the binning of the outcome (see the main text).  The $s$-parameterized GKP EC gadget is a Knill-type GKP EC circuit with all the circuit components replaced with the $s_{\rm p}$-preparations, $s_{\rm g}$-gates, and $s_{\rm m}$-measurements, where the parameter $s$ is a function of $s_{\rm p}$, $s_{\rm g}$, and $s_{\rm m}$ (given below).
Diagrammatically, they are written as follows:
\setlength{\unitlength}{0.75pt}
\begin{align}
    \begin{picture}(60,25)
    \thicklines
        \put(30,5){\oval(50,20)[l]}
        \put(30,-5){\line(0,1){20}}
        \put(12,5){\makebox(0,0)[l]{$\ket{\overline{\psi}}$}}
        \put(30,5){\line(1,0){20}}
        \put(32,10){\makebox(0,0)[bl]{$\scriptstyle s$}}
    \end{picture}
    &\longrightarrow
    \begin{picture}(60,25)
    \thicklines
        \put(10,5){\makebox(0,0)[l]{$\hat{\rho}_{\psi}^s$}}
        \put(27,5){\line(1,0){25}}
    \end{picture},
    \label{eq:s-preparation_method}\\
    \begin{picture}(80,25)
    \thicklines
        \put(10,5){\line(1,0){20}}
        \put(40,5){\circle{20}}
        \put(40,5){\makebox(0,0){$\overline{U}$}}
        \put(50,5){\line(1,0){20}}
        \put(51,10){\makebox(0,0)[bl]{$\scriptstyle s$}}
    \end{picture}
    &\longrightarrow
    \begin{picture}(110,25)
    \thicklines
        \put(10,5){\line(1,0){20}}
        \put(30,-5){\framebox(20, 20){$\hat{U}$}}
        \put(50,5){\line(1,0){10}}
        \put(60,-5){\framebox(20, 20){${\cal N}^s$}}
        \put(80,5){\line(1,0){20}}
    \end{picture},
    \label{eq:s-gate_method} \\
    \begin{picture}(55,25)
    \thicklines
        \put(0,5){\line(1,0){20}}
        \put(20,-5){\line(0,1){20}}
        \put(20,5){\oval(50,20)[r]}
        \put(21,5){\makebox(20,0){$\overline{Z}$}}
        \put(46,10){\makebox(0,0)[bl]{$\scriptstyle s$}}
    \end{picture}
    &\longrightarrow
    \begin{picture}(105,25)
    \thicklines
        \put(10,5){\line(1,0){20}}
        \put(30,-5){\framebox(20,20){${\cal N}^s$}}
        \put(50,5){\line(1,0){10}}
        \put(60,-5){\line(0,1){20}}
        \put(60,5){\oval(70,20)[r]}
        \put(62,5){\makebox(27,0){$\hat{q}=t$}}
        \put(95,6){\line(1,0){10}}
        \put(95,4){\line(1,0){10}}
    \end{picture}
    \left\lfloor \tfrac{t}{\sqrt{\pi}}\right\rceil \text{\small mod 2},
    \label{eq:s-measurement_method} \\
    \rule{0pt}{40pt}
    \begin{picture}(75,25)
    \thicklines
        \put(0,5){\line(1,0){20}}
        \put(20,-5){\framebox(30,20){EC}}
        \put(50,5){\line(1,0){20}}
        \put(52,10){\makebox(0,0)[bl]{$\scriptstyle s'$}}
    \end{picture}
    &=
    \!\!\begin{picture}(190,40)
    \thicklines
        \put(30,5){\oval(40,20)[l]}
        \put(30,-5){\line(0,1){20}}
        \put(15,5){\makebox(0,0)[l]{$\ket{\overline{0}}$}}
        \put(30,5){\line(1,0){20}}
        \put(32,10){\makebox(0,0)[bl]{$\scriptstyle s_0$}}
        \put(90,30){\line(1,0){60}}
        \put(70,-20){\oval(40,20)[l]}
        \put(70,-30){\line(0,1){20}}
        \put(55,-20){\makebox(0,0)[l]{$\ket{\overline{0}}$}}
        \put(72,-15){\makebox(0,0)[bl]{$\scriptstyle s_0$}}
        \put(30,5){\line(1,0){20}}
        \put(70,-20){\line(1,0){40}}
        \put(60,5){\circle{20}}
        \put(60,5){\makebox(0,0){$\overline{H}$}}
        \put(71,10){\makebox(0,0)[bl]{$\scriptstyle s_H$}}
        \put(70,5){\line(1,0){80}}
        \put(90,-25){\line(0,1){30}}
        \put(90,5){\circle*{5}}
        \put(93,7){\makebox(0,0)[bl]{$\scriptstyle s_{\oplus}$}}
        \put(90,-20){\circle{10}}
        \put(120,-20){\circle{20}}
        \put(120,-20){\makebox(0,0){$\overline{I}$}}
        \put(131,-15){\makebox(0,0)[bl]{$\scriptstyle s_I$}}
        \put(130,-20){\line(1,0){20}}
        \put(123,32){\makebox(0,0)[bl]{$\scriptstyle s_{\oplus}$}}
        \put(120,30){\circle*{5}}
        \put(120,5){\circle{10}}
        \put(120,0){\line(0,1){30}}
        \put(150,20){\line(0,1){20}}
        \put(150,30){\oval(40,20)[r]}
        \put(158,30){\makebox(0,0){$\overline{X}$}}
        \put(171,35){\makebox(0,0)[bl]{$\scriptstyle s_{X}$}}
        \put(150,-5){\line(0,1){20}}
        \put(150,5){\oval(40,20)[r]}
        \put(158,5){\makebox(0,0){$\overline{Z}$}}
        \put(171,10){\makebox(0,0)[bl]{$\scriptstyle s_{Z}$}}
        \put(160,-20){\circle{20}}
        \put(160,-20){\makebox(0,0){$\overline{I}$}}
        \put(171,-15){\makebox(0,0)[bl]{$\scriptstyle s_I$}}
        \put(170,-20){\line(1,0){20}}
    \end{picture},\label{eq:s-EC_method}\\
    &\nonumber 
\end{align}
\setlength{\unitlength}{1pt} \\
where the parameter $s'$ for the EC gadget is given by 
\begin{align}
    s' \coloneqq 2s_0 + s_H + s_{\oplus} + \max\{s_{\oplus} + \max\{s_X,s_Z\}, 2s_I\}. \label{eq:s_in_EC_method}
\end{align}
Recall that we are not explicitly including the Pauli corrections since they result simply in a change of the Pauli frame that can be absorbed into future operations. 
Also note that the runtime of classical computation in the EC gadget (e.g., for binning of the measurement outcomes) can also be taken into account by running it during the wait operation in the EC gadget.

Now, with these definitions of fault-tolerant preparation, gate, measurement, and EC gadgets, we obtain the fault-tolerance (FT) conditions for the SSS $r$-filter and the ideal GKP decoder.
The fault-tolerant conditions for the $s$-preparation are that the state stays invariant under the $s$-filter and decodes to the ideal logical qubit state under the action of the ideal decoder if $s < \sqrt{\pi}/2$.  The conditions for the $s$-gate are that the output of the gate is invariant under $(r+s)$-filter with an $r$-filtered input state and it decodes to the ideal logical qubit gate under the action of the ideal decoder if $r+s < \sqrt{\pi}/2$.  The condition for the $s$-measurement is that, with $r$-filtered input state, it decodes to the ideal logical qubit measurement under the action of the ideal decoder if $r+s<\sqrt{\pi}/2$.  The conditions for the $s$-EC are that the output state is invariant under the action of $s$-filter and, with the $r$-filtered input, it decodes to the qubit identity map if $r+s<\sqrt{\pi}/2$.

It is important to stress that the satisfaction of the FT conditions for the gate gadget owes to the careful choice of the gate set in Eqs.~\eqref{eq:pauli-X_physical}--\eqref{eq:CNOT_physical}.  Here, the essential property that these gates have in common is that they do not enlarge displacement errors of the input (parameterized by $r$) up to the small additional constant $s$.  They play the same role for displacement errors as transversal gates play for local errors.  In this perspective, non-Gaussian gate operations such as $\exp(i\gamma\hat{q}^3)$ suggested in the original paper \cite{Gottesman2001} to implement the GKP non-Clifford operation will make a displacement error destructively larger and cannot satisfy these FT conditions.
Precise statements of the FT conditions and the proofs are shown in Sec.~\ref{sec:FT_conditions} in the Supplementary Information.

The FT-GKP circuit $C'$ for the qubit circuit $C$ is constructed by replacing all locations $\{C_i\}_{i\in{\cal I}}$ in $C$ with respective FT GKP gadgets, and the FT GKP EC gadgets are inserted between these gadgets.  To discuss the logical-level behavior of a given fault-tolerant circuit $C'$, Ref.~\cite{Gottesman2009} introduced the concept of an extended rectangle (ExRec), which we here generalize to the CV case.  An ExRec is a part of the CV physical circuit composed of a gadget to implement logical operations of GKP qubits and the adjacent GKP EC gadgets, as illustrated in Fig.~\ref{fig:circuit_encoding} in the main text. 
The preparation ExRec consists of a preparation gadget followed by an EC gadget; the gate ExRec consists of leading EC gadgets, a gate gadget, and trailing EC gadgets; and the measurement ExRec consists of an EC gadget followed by a measurement gadget.  
Although the meaning of the FT conditions in our setup introduced above is different from that of Ref.~\cite{Gottesman2009}, the equivalence relation of errors induced by these FT conditions is exactly the same form as that in Ref.~\cite{Gottesman2009}.  Thus, the correctness condition for the GKP ExRec, i.e., the condition that its qubit-level circuit works ideally, is the same definition as that in Ref.~\cite{Gottesman2009}.  Furthermore, we define that the GKP gate ExRec is \emph{good} if it consists of a leading $s_1$-EC, an $s_2$-gate, and a trailing $s_3$-EC such that $s_1+s_2+s_3<\sqrt{\pi}/2$.  We define the good GKP preparation ExRec and measurement ExRec in the same manner.  Then, with the same proof as that in Ref.~\cite{Gottesman2009}, we can show that the good GKP ExRec is correct.
Diagrammatically, it states that 
\begin{equation}
    \begin{split}
        &\begin{picture}(180,20)
    \thicklines
    \put(10,5){\line(1,0){15}}
    \put(25,-5){\framebox(30,20){\text{EC}}}
    \put(55,5){\line(1,0){15}}
    \put(57,10){\makebox(0,0)[bl]{$\scriptstyle s_1$}}
    \put(80,5){\circle{20}}
    \put(80,5){\makebox(0,0){$\overline{U}$}}
    \put(90,5){\line(1,0){15}}
    \put(91,10){\makebox(0,0)[bl]{$\scriptstyle s_2$}}
    \put(105,-5){\framebox(30,20){\text{EC}}}
    \put(135,5){\line(1,0){15}}
    \put(137,10){\makebox(0,0)[bl]{$\scriptstyle s_3$}}
    \put(150,-5){\line(0,1){20}}
    \put(150,15){\line(1,-1){10}}
    \put(160,5){\line(-1,-1){10}}
    \thinlines
    \put(160,5){\line(1,0){10}}
    \end{picture}\\
    \rule{0pt}{25pt}
    &=
    \begin{picture}(120, 20)
        \thicklines
    \put(10,5){\line(1,0){15}}
    \put(25,-5){\framebox(30,20){\text{EC}}}
    \put(55,5){\line(1,0){15}}
    \put(57,10){\makebox(0,0)[bl]{$\scriptstyle s_1$}}
    \put(70,-5){\line(0,1){20}}
    \put(70,15){\line(1,-1){10}}
    \put(80,5){\line(-1,-1){10}}
    \thinlines
    \put(80,5){\line(1,0){10}}
    \put(100,5){\circle{20}}
    \put(100,5){\makebox(0,0){$U$}}
    \put(110,5){\line(1,0){15}}
    \end{picture}
    \end{split}
    \label{eq:gate_correct_method}
\end{equation}
    \\
holds for the good GKP gate ExRec.  Similar diagrammatic identities hold for the good GKP preparation ExRec and measurement ExRec as well.

Unlike good GKP ExRecs, the logical qubit-level behavior of a bad GKP ExRec is not easily characterized.  The reason is that the resulting logical error may depend on the CV state of the preceding ExRec.  Thus, we cannot determine the erroneous behavior of the bad GKP ExRec solely with the given ExRec, but we need to see a larger context in the CV QC\@.

For this analysis, we introduce a new operation called the GKP $*$-decoder. This has a rather simple interpretation in the SSS representation: it is simply a unitary transformation mapping a state of a full mode to the equivalent state of the SSS Hilbert space $\mathbb{C}^2\otimes\mathcal{H}_S$ in Eq.~\eqref{eq:SSS_Hilbert_space}. This acts as a decoder because, in the SSS representation, the logical qubit is already what would result from applying ideal GKP error correction. The GKP $*$-decoder faithfully reproduces all information in the original state because it keeps the syndrome subsystem ${\cal H}_S$ (instead of tracing it out as is done in the ideal GKP decoder).
Diagrammatically, the GKP $*$-decoder is denoted as follows.
\begin{equation*}
\begin{picture}(100,30)
\thicklines
\put(0,20){\line(1,0){10}}
\put(10,10){\line(0,1){20}}
\put(10,30){\line(1,-1){10}}
\put(20,20){\line(-1,-1){10}}
\thinlines
\put(20,20){\line(1,0){10}}
\put(15,5){\line(0,1){10}}
\put(15,5){\line(1,0){15}}
\put(45,10){\makebox(55,16)[l]{\text{GKP $*$-decoder}}}
\end{picture}
\end{equation*}

\noindent 
For a good GKP-gate ExRec, the GKP $*$-decoder makes the top thin wire into the action of the ideal unitary $U$ while the bottom thin wire into the action of a channel.  Thus, the overall dynamics is decoupled between the logical qubit and the syndrome subsystem.
For a bad GKP-gate ExRec, we thus have the following.
\begin{equation}
\begin{split}
    &\begin{picture}(180,20)
    \thicklines
    \put(10,5){\line(1,0){15}}
    \put(25,-5){\framebox(30,20){\text{EC}}}
    \put(55,5){\line(1,0){15}}
    \put(57,10){\makebox(0,0)[bl]{$\scriptstyle s_l$}}
    \put(80,5){\circle{20}}
    \put(80,5){\makebox(0,0){$\overline{U}$}}
    \put(90,5){\line(1,0){15}}
    \put(91,10){\makebox(0,0)[bl]{$\scriptstyle s$}}
    \put(105,-5){\framebox(30,20){\text{EC}}}
    \put(135,5){\line(1,0){15}}
    \put(137,10){\makebox(0,0)[bl]{$\scriptstyle s_t$}}
    \put(150,-5){\line(0,1){20}}
    \put(150,15){\line(1,-1){10}}
    \put(160,5){\line(-1,-1){10}}
    \thinlines
    \put(155,-10){\line(0,1){10}}
    \put(155,-10){\line(1,0){15}}
    \put(160,5){\line(1,0){10}}
    \end{picture}\\
    \rule{0pt}{25pt} &=
    \begin{picture}(100, 20)
        \thicklines
    \put(10,5){\line(1,0){15}}
    \put(25,-5){\line(0,1){20}}
    \put(25,15){\line(1,-1){10}}
    \put(35,5){\line(-1,-1){10}}
    \thinlines
    \put(30,-10){\line(0,1){10}}
    \put(30,-10){\line(1,0){15}}
    \put(35,5){\line(1,0){10}}
    \put(55,5){\oval(20,20)[t]}
    \put(55,-10){\oval(20,20)[b]}
    \put(45,-10){\line(0,1){15}}
    \put(65,-10){\line(0,1){15}}
    \put(55,-2.5){\makebox(0,0){${\cal U}'$}}
    \put(65,5){\line(1,0){10}}
    \put(65,-10){\line(1,0){10}}
    \end{picture}
    \\
    &
    \end{split}
    \label{eq:bad_gate_exrec_method}
\end{equation}
\\
Unlike Eq.~\eqref{eq:gate_correct_method} for the good GKP ExRec, we move the GKP $*$-decoder in Eq.~\eqref{eq:bad_gate_exrec_method} entirely to the left to remove the leading GKP ExRec for a bad GKP ExRec to avoid the noise correlation at the logical level.  For details, see Sec.~\ref{sec:threshold_theorem} in Supplementary Information.
Thus, the preceding GKP ExRec loses the trailing EC gadget.  The GKP ExRec that loses the trailing EC gadget is called a truncated GKP ExRec.  We also define the correctness and goodness/badness of the truncated GKP ExRec similarly.

Now, we can obtain a qubit circuit induced by the GKP code.

\begin{definition}[Qubit circuit induced by the GKP code]\label{def:noisy_reduction_method}
    Consider an FT-GKP circuit $C'$ on CV systems for implementing the original circuit $C$ on qubits. Then, a circuit $\tilde{C}$ on qubits and environmental systems is defined by moving the GKP $*$-decoder from the end to the start of the circuit $C'$ and regarding the syndrome subsystems as the environmental systems.
\end{definition}

Recall that the GKP $*$-decoder transforms a good GKP ExRec for a location $C_i$ of $C$ to $C_i$ itself in $\tilde{C}$ while independently changing the environmental system.  Thus,
if all the GKP ExRecs are good, they are correct, and all the locations $C_i$ of $C$ are included in $\tilde{C}$ as it is; in this case, ignoring the environmental systems, we have $\tilde{C}=C$ and obtain the desired outcomes with the same probability.

However, the case where all the GKP ExRecs are good may be rare in the presence of noise.  To achieve the fault-tolerant quantum computation, we need to take into account how the noise affects each location $C_i$ and makes some of GKP ExRecs bad; even in this case, we need to prove that the fault-tolerant protocol can still correct the errors occurring in bad GKP ExRecs by concatenating the GKP code and a qubit code.
For this proof, we need to introduce an appropriate distance measure between CV quantum channels, as will be presented in the following.

\paragraph*{
Energy constraint conditions
}

In the presence of noise on CV systems, it may still be too much burden to require that the actual experiment perfectly realizes $s$-preparation, $s$-gate, and $s$-measurement for sufficiently small $s$.  Thus, here we consider noise models that cover general types of noise in CV systems and study how well the above conditions can be satisfied under the existence of such noise.  Since we are mainly focusing on the optical system, and since the optical system does not repeatedly interact with the same environmental mode in typical setups, the noise can be well modeled as Markovian; i.e., the noise map can be described as a quantum channel.  Furthermore, we assume that the modes used for CV quantum computation are well isolated so that they experience independent noise channels. 
These assumptions on the noise models are standard in the theoretical analysis of fault tolerance, and generalization to more general cases such as non-Markovian noise may be possible by a modification of our analysis, as can be seen in the qubit case~\cite{Aliferis2011, Aliferis2013}.

An immediate obstacle we face when we consider noisy physical operations on CV systems is the singular behavior of the diamond norm.  The diamond norm $\|\cdot\|_{\diamond}$ has been conventionally used in the context of fault-tolerance theory since it has several properties necessary to prove fault tolerance~\cite{Aharonov1997, Aharonov1999}.  However, it is too stringent for the noise models in CV systems as can be seen in the following example.  Consider the phase rotation channel ${\cal R}[\theta](\hat{\rho})=e^{i\theta\hat{n}}\hat{\rho}e^{-i\theta\hat{n}}$.  Then, one can show that $\|{\cal R}[\theta] - \mathrm{Id}\|_{\diamond} = 2$ holds for any $\theta\in(0,2\pi)$, where $\mathrm{Id}$ denotes the identity channel~\cite{Winter2017} since a coherent state with the amplitude $\alpha$ distinguishes these two channels better and better as $|\alpha|\to\infty$.  This means that an infinitesimally small phase rotation is maximally distant from the identity map in terms of the diamond norm even though such a small phase rotation is ubiquitous in experiments in practice.  This also implies that we do not have any fault-tolerant threshold against phase rotation noise, however small it is, as long as it is measured by the diamond norm.

To avoid this limitation, a physically meaningful distance measure has been considered, which is named the \emph{energy-constrained diamond norm}~\cite{Holevo2003, Shirokov2008, Winter2016, Winter2017, Shirokov2018, Shirokov2018adaptation, Shirokov2019}.  
The idea of the energy-constrained diamond norm is not to consider the set of all states (which may contain unphysical states) but rather to consider the subset of states that have bounded energy.  In the case of the quantum harmonic oscillator, this amounts to considering the set of states for which each one has an average photon number at most $E>0$; then, the $E$-energy-constrained diamond norm takes maximization over this energy-constrained subset of states instead of maximization over the set of all states, denoted by $\|\Phi\|_\diamond^E$ for a Hermitian-preserving map $\Phi$.  With this restriction of the set of states, we can avoid the singular behavior mentioned above~\cite{Winter2017, Shirokov2018}.

We want to utilize this distance measure, but we need to check that the energy-constrained diamond norm actually satisfies the properties required for the fault-tolerance proof and that the energy of a local state during the computation has a constant upper bound.
Whether the energy-constrained diamond norm satisfies the necessary properties for the fault-tolerance proof is nontrivial since---unlike the conventional diamond norm---a product of the energy-constrained diamond norms of Hermitian-preserving maps $\Phi$ and $\Psi$ is not an upper bound of the energy-constrained diamond norm of their composite map $\Phi\circ\Psi$.
This is because the operator $\Psi(\hat{\rho})$ may not be a positive operator for a general Hermitian-preserving map $\Psi$, and the \quoted{energy constraint} for non-positive operators does not have valid meanings.
Nevertheless, we show that the following facts hold for any $E, E'>0$ and the set $\mathfrak{S}_{E}({\cal H}_Q)$ of energy-constrained states on the mode $Q$. 
\begin{itemize}
    \item For any Hermetian-preserving linear map $\Phi_Q$ and for any $\hat{\rho}_{QR}$ with $\hat{\rho}_Q\in\mathfrak{S}_E({\cal H}_Q)$, we have $\|\Phi_Q\otimes\mathrm{Id}_R(\hat{\rho}_{QR})\|_1 \leq \|\Phi_Q\|_{\diamond}^E$.  When ${\cal H}_R\cong{\cal H}_Q$, there exists a state $\hat{\rho}_{QR}$ that achieves the equality.
    \item For any Hermetian-preserving linear maps $\Phi_{Q}$ and $\Psi_{Q'}$, we have $\|\Phi_{Q}\otimes \Psi_{Q'}\|_{\diamond}^{E,E'}\geq \|\Phi_{Q}\|_{\diamond}^{E}\|\Psi_{Q'}\|_{\diamond}^{E'}$.  The equality holds when at least one of $\Phi_{Q}$ and $\Psi_{Q'}$ is completely positive.
    \item For any CP map $\Phi:\mathfrak{S}_{E}({\cal H})\rightarrow \mathfrak{S}_{E'}({\cal H})$ and for any Hermetian-preserving linear map $\Psi$, we have $\|\Psi\circ\Phi\|_{\diamond}^E \leq \|\Psi\|_{\diamond}^{E'}\|\Phi\|_{\diamond}^E$.
    \item Let two density operators $\hat{\rho}_{QR}$ and $\hat{\sigma}_{QR}$ satisfy $\hat{\rho}_Q,\hat{\sigma}_Q\in\mathfrak{S}_E({\cal H}_Q)$ and $\|\hat{\rho}_{QR}-\hat{\sigma}_{QR}\|_1\leq 2\epsilon$ with $0<\epsilon<1$.  Then, for any Hermetian-preserving linear map $\Phi_Q$, we have $\|\Phi_Q\otimes\mathrm{Id}_R(\hat{\rho}_{QR}-\hat{\sigma}_{QR})\|_1\leq 10\epsilon\|\Phi_Q\|_{\diamond}^{E/\epsilon^2}$.
    \item Let $\Phi$ be a Hermetian-preserving linear map and $\Psi,\tilde{\Psi}:\mathfrak{S}_E({\cal H})\rightarrow \mathfrak{S}_{E'}({\cal H})$ be two CPTP maps such that $\|\Psi-\tilde{\Psi}\|_{\diamond}^E\leq 2\epsilon$.  Then, we have $\|\Phi\circ(\Psi-\tilde{\Psi})\|_{\diamond}^{E} \leq 10\epsilon\|\Phi\|_{\diamond}^{E'/\epsilon^2}$.
\end{itemize}
The last two statements are our new results, which we prove in Sec.~\ref{sec:energy_constrained_diamond_norm} in the Supplementary Information.  Although properties listed here appear to be substantially weaker than those satisfied by the conventional diamond norm, we later see that these are sufficient to prove fault tolerance.

For the problem of the existence of a constant upper bound on the energy of the local state during the computation, we impose that all the preparation gadgets generate a state with a constant energy bound $E_{\rm prep}$ and all the gate gadgets increase the energy at most finite amounts as a positive, monotonically increasing, locally bounded function $g_{\rm sup}(E)$ of the energy $E$ of the input state.  See Defs.~\ref{def:energy_constrained_prep} and \ref{def:energy_constrained_gate} in Sec.~\ref{sec:energy_constraint_conditions} in Supplementary Information for details.  
We assume that all the component gadgets in our fault-tolerant circuit satisfy the energy-constraint conditions in the sense written above, which is in principle testable in a real experiment.
Then, as explained in the main text, using quantum teleportation, Knill-type error correction continually resets the energy of the state. Due to the requirement for the CV fault-tolerant protocol presented in the main text, there must exist a constant $\ell$ such that any state prepared during the computation undergoes at most $\ell$ gates before being measured in the GKP EC gadget or at the final measurement of the CV circuit.  Thus, the reduced state on each mode during the computation has a constant energy upper bound $g_{\rm sup}^{\ell}(E_{\rm prep})$, where $g_{\rm sup}^{m}$ for $m\in\{1,\ldots,\ell\}$ is defined as 
\begin{equation}
    g_{\rm sup}^{m} \coloneqq \underbrace{g_{\rm sup}\circ g_{\rm sup} \circ \cdots\circ g_{\rm sup}}_{m}.
\end{equation}
See Sec.~\ref{sec:energy_constraint_conditions} in Supplementary Information for more details.

Having shown the upper bound on the local energy of the state during the computation, we define the noise model we will be working with: the $(s,\epsilon)$-independent Markovian noise model and $(E,s,\epsilon)$-independent Markovian noise model.
\begin{definition}[(Informal version of Def.~\ref{def:s_eps_Markovian_prep}--\ref{def:E_s_eps_Markovian_meas}) $(s,\epsilon)$-independent and $(E,s,\epsilon)$-independent Markovian noise model]\label{def:E_s_eps_Markovian_mothod}
    Let $E_{\rm prep}$ be a positive constant and $g_{\rm sup}$ be a fixed, positive, monotonically increasing, locally bounded function. Consider a collection ${\cal C}$ of physical systems that comprise a CV quantum computer, and regard other physical systems as environments.
    A noisy physical state preparation for the GKP logical state is said to obey the $(s,\epsilon)$-independent Markovian noise model if, independent of other physical systems in ${\cal C}$, it prepares a noisy state with the energy bound $E_{\rm prep}$
    that is $\epsilon$-close to an $s$-parameterized GKP state with the energy bound $E_{\rm prep}$ in the trace distance.
    Furthermore, a noisy physical gate operation is said to obey the $(E,s,\epsilon)$-independent Markovian noise model if, independent of other physical systems in ${\cal C}$, it implements the CPTP map satisfying the $g_{\rm sup}$-energy constraint that is $\epsilon$-close to the target unitary followed by the $s$-parameterized noise channel satisfying the $g_{\rm sup}$-energy constraint in the $E$-energy-constrained diamond norm.
    Finally, a noisy measurement is said to obey the $(E,s,\epsilon)$-independent Markovian noise model if, independent of other physical systems in ${\cal C}$, it implements a CPTP map that is $\epsilon$-close to the $s$-parameterized noise channel followed by the target measurement in the $E$-energy-constrained diamond norm.
\end{definition}
The formal statement is given in Defs.~\ref{def:s_eps_Markovian_prep}--\ref{def:E_s_eps_Markovian_meas} in Sec.~\ref{sec:ftqc_noise_model} of the Supplementary Information.
The above noise models cover physically relevant noise channels that we conventionally consider in CV quantum systems.  Explicit examples are also given in Sec.~\ref{sec:ftqc_noise_model} in Supplementary information.

Now, we come to the point of stating our level-reduction theorem.  The level-reduction theorem describes how the physical noise model translates to the logical ones through the fault-tolerant circuit and protocol.  Let us choose the parameters $s_{\rm p}$, $s_{\rm g}$, $s_{\rm m}$, and $s_{\rm e}$ sufficiently small so that the GKP ExRecs consisting of $s_{\rm p}$-preparation, $s_{\rm g}$-gate, $s_{\rm m}$-measurement, and $s_{\rm e}$-EC are always good and thus correct.
Define $E_{\max}^{\ell}(\epsilon)$ as
\begin{equation}
    \begin{split}
        E_{\max}^{\ell}(\epsilon) = g_{\rm sup}\left(\frac{g_{\rm sup}^{\ell-1}(E_{\rm prep})}{\epsilon^2}\right),
    \end{split}
\end{equation}
where $\ell$ is again the maximum number of gates that a prepared state during the computation undergoes.  Then, we have the following.
\begin{theorem}[(Informal statement of Thm.~\ref{theo:level_reduction}) Level reduction]\label{theo:level_reduction_method}
    Consider an FT-GKP circuit $C'$ on CV systems for implementing the original circuit $C$ on qubits.
    Suppose that all the preparation gadgets in $C'$ satisfy the $E_{\rm prep}$-energy constraint and all the gate gadgets in $C'$ satisfy the $g_{\rm sup}$-energy constraint.  Suppose further that the physical CV circuit $C'$ experiences the $(s_{\rm p},\epsilon)$-independent Markovian noise for state preparations, the $(E_{\max}^{\ell}(\epsilon),s_{\rm g},\epsilon)$-independent Markovian noise for gates, and the $(E_{\max}^{\ell}(\epsilon),s_{\rm m},\epsilon)$-independent Markovian noise for measurements, where $0<\epsilon<1$. 
    Then, the logical-qubit circuit $\tilde{C}$ implied from the FT-GKP circuit $C'$ by the procedure of Def.~\ref{def:noisy_reduction_method} undergoes a local Markovian noise with noise strength $\epsilon_{\rm qubit}$ at each location satisfying
    \begin{equation}
        \epsilon_\mathrm{qubit} = {\cal O}(\epsilon L_{\max}),
    \end{equation}
    where $L_{\max}$ denotes the maximum number of the GKP preparation, gate, and measurement gadgets in any truncated GKP ExRec of $C'$.
\end{theorem}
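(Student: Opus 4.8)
The plan is to follow the level-reduction strategy of Ref.~\cite{Gottesman2009}, adapted to the CV setting with the energy-constrained diamond norm in place of the ordinary diamond norm. First I would pass from the noisy FT-GKP circuit $C'$ to the induced qubit circuit $\tilde{C}$ exactly as in Def.~\ref{def:noisy_reduction_method}, commuting the GKP $*$-decoder from the end to the start and reinterpreting each syndrome subsystem $\mathcal{H}_S$ as environment. For each physical CV location $i$ I would write the noisy operation as $\tilde{\mathcal{O}}_i = \mathcal{O}_i^{(s)} + \mathcal{F}_i$, where $\mathcal{O}_i^{(s)}$ is the ideal $s$-gadget ($s_{\rm p}$-preparation, $s_{\rm g}$-gate, or $s_{\rm m}$-measurement) guaranteed by the noise model, and $\mathcal{F}_i \coloneqq \tilde{\mathcal{O}}_i - \mathcal{O}_i^{(s)}$ is a Hermitian-preserving fault whose relevant norm (energy-constrained diamond norm $\|\mathcal{F}_i\|_\diamond^{E}$ for gates and measurements, trace distance for preparations) is at most $2\epsilon$, evaluated at the energy $E = E_{\max}^{\ell}(\epsilon)$. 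Expanding the product over all locations yields a sum over physical fault paths, each labeled by the subset of locations carrying $\mathcal{F}$ rather than $\mathcal{O}^{(s)}$.

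Second, I would use the correctness of good ExRecs (Eq.~\eqref{eq:gate_correct_method} and its preparation/measurement analogues) to decide when a logical location of $\tilde{C}$ is faulty. Having fixed $s_{\rm p},s_{\rm g},s_{\rm m},s_{\rm e}$ so that every ExRec built entirely from ideal $s$-gadgets satisfies $s_1+s_2+s_3<\sqrt{\pi}/2$, any ExRec containing no physical fault is good, hence correct, and under the GKP $*$-decoder it reproduces the ideal logical operation $C_i$ while acting only on the environment. Thus a logical fault can occur at a given logical location only if the corresponding ExRec contains at least one physical fault. For a bad ExRec I would eliminate its dependence on the preceding block by sliding the GKP $*$-decoder entirely to the left as in Eq.~\eqref{eq:bad_gate_exrec_method}, passing to \emph{truncated} ExRecs; these are mutually disjoint, so each bad (truncated) ExRec produces a logical fault localized to its own location, which is what the local Markovian structure requires.

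Third, I would bound the sum of logical fault paths supported on a fixed set $R$ of logical locations. By the previous step this equals a sum over physical fault paths having at least one fault inside each of the $|R|$ corresponding truncated ExRecs. Since each such ExRec consists of at most $L_{\max}$ gadgets, and hence $\mathcal{O}(L_{\max})$ physical locations, summing the per-path weight $(2\epsilon)^{|F|}$ over all nonempty fault patterns $F$ within a single ExRec gives $(1+2\epsilon)^{\mathcal{O}(L_{\max})} - 1 = \mathcal{O}(L_{\max}\epsilon)$. The spatial and temporal independence of the physical faults then factorizes the $|R|$ disjoint ExRecs, producing the bound $\epsilon_{\rm qubit}^{|R|}$ with $\epsilon_{\rm qubit} = \mathcal{O}(\epsilon L_{\max})$, which is precisely the local Markovian condition of Refs.~\cite{Aliferis2011, Aliferis2013}.

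The main obstacle is the bookkeeping with the energy-constrained diamond norm in step three, since---unlike the ordinary diamond norm---it is not submultiplicative for general Hermitian-preserving maps, so I cannot simply multiply the $2\epsilon$ bounds along a fault path. The resolution is to apply the composition and difference inequalities listed in the Methods: composing the CP ideal gadgets is free by $\|\Psi\circ\Phi\|_\diamond^E \le \|\Psi\|_\diamond^{E'}\|\Phi\|_\diamond^E$ for CP $\Phi$, whereas each insertion of a non-CP fault must be handled by $\|\Phi\circ(\Psi-\tilde{\Psi})\|_\diamond^E \le 10\epsilon\,\|\Phi\|_\diamond^{E'/\epsilon^2}$, which inflates the energy at which the downstream tail is evaluated by a factor $1/\epsilon^2$. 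Because the teleportation-based EC resets the energy after at most $\ell$ gates, the bare evaluation energy never exceeds $g_{\rm sup}^{\ell}(E_{\rm prep})$, and accounting for the single dominant fault insertion gives exactly the enlarged energy $E_{\max}^{\ell}(\epsilon)=g_{\rm sup}\!\left(g_{\rm sup}^{\ell-1}(E_{\rm prep})/\epsilon^2\right)$ at which the noise model must hold. Verifying that this inflation does not compound uncontrollably across a fault path---that the finitely many insertions per truncated ExRec, together with the energy resets, keep the evaluation energy finite while leaving the leading $\mathcal{O}(\epsilon L_{\max})$ estimate intact---is the delicate part that makes the theorem go through.
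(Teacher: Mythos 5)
Your skeleton matches the paper's: decompose each noisy physical operation as an ideal $s$-gadget plus a fault, use good-implies-correct to localize logical faults to truncated GKP ExRecs, and bound the sum of fault paths supported on a set $R$ of logical locations by a product over the $|R|$ disjoint truncated ExRecs. You also correctly identify the crux---the energy-constrained diamond norm is not submultiplicative over non-CP fault insertions, and each application of the difference inequality inflates the evaluation energy by $1/\epsilon^2$. However, you name this obstacle and then defer it (``the delicate part that makes the theorem go through''), and your proposed per-ExRec count, summing $(2\epsilon)^{|F|}$ over all nonempty fault patterns $F$ to get $(1+2\epsilon)^{{\cal O}(L_{\max})}-1$, is exactly the step that fails: it requires multiplying energy-constrained norms of several Hermitian-preserving (non-CP) faults within one ExRec, and iterating the $10\epsilon\,\|\Phi\|_{\diamond}^{E/\epsilon^2}$ lemma once per fault would compound the energy argument as $E/\epsilon^{2k}$ for $k$ faults, which is not controlled by the hypotheses (the noise model is only assumed at energy $E_{\max}^{\ell}(\epsilon)=g_{\rm sup}(g_{\rm sup}^{\ell-1}(E_{\rm prep})/\epsilon^2)$, a single inflation).

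The missing idea is the first-fault telescoping decomposition, Eq.~\eqref{eq:quantity_to_bound}: the sum over all nonempty fault patterns inside one truncated ExRec, ${\cal O}^{\rm noisy}_{j_{\rm fin}:j_{\rm ini}}-{\cal O}^{\rm ideal}_{j_{\rm fin}:j_{\rm ini}}$, is regrouped by the location $j$ of the \emph{first} fault into $\sum_j {\cal O}^{\rm noisy}_{j_{\rm fin}:j+1}\circ{\cal F}_j\circ{\cal O}^{\rm ideal}_{j-1:j_{\rm ini}}$. Each of the at most $L_{\max}$ terms contains exactly one non-CP map ${\cal F}_j$; the prefix is ideal CPTP, and the suffix ${\cal O}^{\rm noisy}_{j_{\rm fin}:j+1}$ is a genuine CPTP map (it silently resums all later fault patterns), so its unconstrained diamond norm is $1$ and the chain $\|{\cal O}^{\rm noisy}_{j_{\rm fin}:j+1}\circ{\cal F}_j\circ{\cal O}^{\rm ideal}_{j-1:j_{\rm ini}}\|_{\diamond}^{g_{\rm sup}^{\ell-1}(E_{\rm prep})/\epsilon^2}\leq 2\epsilon$ closes with a single use of the submultiplicativity properties. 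Corollary~\ref{cor:composition} is then invoked exactly once per truncated ExRec to attach the downstream tail, giving the factor $10\epsilon L_{\max}$, and the teleportation-based energy reset (Props.~\ref{prop:energy_reset} and~\ref{prop:maximum_energy_ExRec}) guarantees that the composite up to each ExRec boundary lands back in $\mathfrak{S}_{g_{\rm sup}^{\ell-1}(E_{\rm prep})}$, so the induction over the elements of $R$ restarts from the same energy and the $1/\epsilon^2$ inflation never compounds. Without this regrouping your argument does not close; with it, the rest of your outline goes through as written.
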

The proof is given in Sec.~\ref{sec:threshold_theorem} in the Supplementary Information, with the explicit coefficient for the ${\cal O}(\epsilon L_{\max})$ term.  Thus, we can immediately obtain a threshold noise strength $\epsilon_{\rm th}$ in the CV system, given the threshold noise strength $\epsilon^\star_{\rm qubit}$ against the local Markovian noise model for the qubit QEC code that is concatenated with the GKP code. 
Since a qubit concatenated code has been known to have a fault-tolerance threshold $\epsilon^\star_{\rm  qubit}$ against a local Markovian noise model \cite{Aliferis2013}, we reach Theorem~\ref{theo:threshold_main_text} in the main text (see Supplementary Information for details).

\section*{Data availability}

No data is used in this study.

\section*{Code availability}

The code used in this work is available from the corresponding author upon reasonable request.

\clearpage
\appendix
\onecolumngrid

\renewcommand{\thetheorem}{S\arabic{theorem}}
\renewcommand{\theproposition}{S\arabic{proposition}}
\renewcommand{\thelemma}{S\arabic{theorem}}
\renewcommand{\thecorollary}{S\arabic{theorem}}
\renewcommand{\thedefinition}{S\arabic{definition}}
\setcounter{equation}{0}
\setcounter{theorem}{0}

\section*{Supplementary Information}

Supplementary Information of \quoted{Continuous-Variable Fault-Tolerant Quantum Computation under General Noise} is organized as follows.  Appendix~\ref{sec:setting} summarizes preliminaries and the physical setups of continuous-variable (CV) quantum computation.
In Sec.~\ref{sec:stabilizer_ssd}, we review the stabilizer subsystem decomposition developed in Ref.~\cite{Mackenzie2022}.  Explicit representations of some CV quantum operations in this decomposition are also given.  Sec.~\ref{sec:physical_operations} defines the physical operations to implement quantum circuits with CV systems.  In Sec.~\ref{sec:energy_constrained_diamond_norm}, we define the energy-constrained diamond norm, which will be used to measure how close the noisy map implemented by the physical circuit is to the ideal map.  We also list its properties that are used in the fault-tolerance proof.
In Sec.~\ref{sec:ftqc_noise_model}, we define the noise model on the continuous-variable system against which we prove the threshold theorem in the later section.  Appendix~\ref{sec:general_FT} is the main part of constructing the fault-tolerant gadget and proving the threshold theorem.  We construct fault-tolerant Gottesman-Kitaev-Preskill (GKP) gadgets in Sec.~\ref{sec:gadget_construction} and prove their fault-tolerant conditions in Sec.~\ref{sec:FT_conditions}.  We further impose the energy-constraint condition for the fault-tolerant (FT) gadget for the later analysis and check its validity in Sec.~\ref{sec:energy_constraint_conditions}.  Finally in Sec.~\ref{sec:threshold_theorem}, we prove our main result, the threshold theorem for the CV circuit, in full detail.

\section{Setting}{\label{sec:setting}}

\subsection{GKP code and stabilizer subsystem decomposition}

\label{sec:stabilizer_ssd}

A single-mode harmonic oscillator can be characterized by a pair of noncommutative quadrature operators $\hat{q} = \tfrac 1 {\sqrt 2}(\hat{a} + \hat{a}^\dag)$ and $\hat{p} = \tfrac {-i} {\sqrt 2}(\hat{a} - \hat{a}^\dag)$, satisfying $[\hat{q},\hat{p}]=i$ (with $\hbar = 1$), acting on a separable Hilbert space ${\cal H}$ \cite{Holevo2011}.
We consider the square-lattice GKP code for encoding a qubit into a harmonic oscillator.  
The ideal square-lattice GKP-encoded state (the GKP codeword) $\ket{\overline{\psi}}$ is formally defined as an infinite superposition of position eigenstates, i.e., $\ket{\overline{\psi}}=\alpha\ket{\overline{0}}+\beta\ket{\overline{1}}$ with 
\begin{equation}
\ket{\overline{j}}\coloneqq \sum_{s\in\mathbb{Z}}\ket{\sqrt{\pi}(2s+j)}_q \label{eq:ideal_gkp_codeword}
\end{equation}
for $j=0,1$, where $\ket{x}_q$ satisfies $\hat{q}\ket{x}_q=x\ket{x}_q$ for the operator $\hat{q}$.

The GKP code can be regarded as a stabilizer code.  To explain this, let $\hat{V}(v_1, v_2)$ be defined as
\begin{equation}
    \hat{V}(v_1, v_2) \coloneqq \exp[-i (v_1 \hat{p} - v_2 \hat{q})],
\end{equation}
which leads to
\begin{equation}
    \hat{V}(u_1,u_2)\hat{V}(v_1,v_2)=e^{i(v_1u_2-u_1v_2)/2}\hat{V}(u_1+v_1,u_2+v_2).
\end{equation}
Then, one can check that the state given in Eq.~\eqref{eq:ideal_gkp_codeword} is stabilized by the two stabilizer generators $\hat{S}_1$ and $\hat{S}_2$ given by
\begin{align}
    \hat{S}_1 = \hat{V}(2c, 0), \label{eq:stabilizer_X}\\
    \hat{S}_2 = \hat{V}(0, 2c), \label{eq:stabilizer_Z}
\end{align}
where we write for simplicity
\begin{equation}
\label{eq:C}
    c \coloneqq \sqrt{\pi}.
\end{equation}
The ideal GKP codeword $\ket{\overline{j}}$ is stabilized by elements of the stabilizer obtained from these stabilizer generators~\cite{Gottesman2001}. 

Most of the logical unitary gates on the GKP code can be realized by the Gaussian unitaries, which are generated by first- or second-order polynomials of $\hat{q}$ and $\hat{p}$.
In fact, the logical Pauli-X and -Z operators $\overline{X}$ and $\overline{Z}$ for this GKP code are given by
\begin{align}
    \overline{X} \coloneqq \hat{V}(c, 0), \\
    \overline{Z} \coloneqq \hat{V}(0, c).
\end{align}
The logical Hadamard gate $\overline{H}$ can be implemented by the Fourier gate $\hat{F}$, which is defined as
\begin{align}
    \hat{F}\coloneqq \hat{R}(\pi/2),  \label{eq:fourier_gate}
\end{align}
where 
\begin{align}
    \hat{R}(\theta) & \coloneqq \exp[i \theta \hat{n}], \\
    \hat{n}&\coloneqq (\hat{q}^2+\hat{p}^2-1)/2.
\end{align}
The logical CNOT gate $\overline{\text{CNOT}}$ on the system $2$ controlled by $1$ can be implemented by the SUM gate, defined as
\begin{equation}
    \text{SUM}\coloneqq \exp(-i\hat{q}_1\hat{p}_2), \label{eq:sum_gate}
\end{equation}
where $\hat{q}_i$ and $\hat{p}_i$ denote the quadrature operators in $i^{\text{th}}$ system \cite{Gottesman2001}.  
Other logical gates can be implemented through gate teleportation as long as one can prepare their magic states \cite{Zhou2000}.  More precisely, the logical phase gate $\overline{S}$ can be implemented through the gate teleportation with the state $\ket{\overline{Y}}\coloneqq(\ket{\overline{0}}+i\ket{\overline{1}})/\sqrt{2}$, and the logical $T$ gate $\overline{T}$ can be with the state $\ket{\frac{\pi}{8}}\coloneqq (\ket{\overline{0}}+e^{\pi i/4}\ket{\overline{1}})/\sqrt{2}$.  Note that the logical phase gate $\overline{S}$ can also be implemented by the Gaussian unitary $\exp(i\hat{q}^2/2)$, which will not be used in this paper to simplify the analysis.
Measurements in the logical Pauli-$X$ or $Z$ bases can be implemented, respectively, by homodyne detection in $\hat{p}$ or $\hat{q}$ quadrature followed by classical post-processing.
The positive-operator-valued measure (POVM) of the $q$-homodyne detection is given by $\ket{x}\!\bra{x}_q dx$.
The classical post-processing is performed by the binning modulo $\sqrt{\pi}$ (the so-called GKP binning), and if binned to the even (respectively,~odd) multiple of $\sqrt{\pi}$, the outcome is regarded as logical 0 (respectively,~1).  These are a brief summary of a set of operations for implementing universal quantum computation with the GKP code.

There is a suitable representation for analyzing the GKP state and the GKP code, which we call the Zak representation \cite{Zak1967, Zak1968, Zak1972}.
This is explicitly pointed out in Ref.~\cite{Pantaleoni2020} and later used in several papers \cite{Pantaleoni2021Subsystem, Pantaleoni2021Hidden, Pantaleoni2023}, while it has already appeared in an early study \cite{Glancy2006}.
With this particular representation of Hilbert space, Ref.~\cite{Pantaleoni2020} developed the subsystem decomposition technique, which introduces a tensor-product structure between the GKP logical subsystem and the rest (called a gauge mode in Ref.~\cite{Pantaleoni2020}).
More recently, an alternative for this picture has been developed in Ref.~\cite{Mackenzie2022}.
Unlike the subsystem decomposition in Ref.~\cite{Pantaleoni2020}, the subsystem decomposition in Ref.~\cite{Mackenzie2022} has symmetry in position and momentum and has a direct connection to the ideal GKP decoding. 
For this reason, we will use the formalism of Ref.~\cite{Mackenzie2022}. For completeness and ease of reference, we now review some of the results therein.
Although the results in Ref.~\cite{Mackenzie2022} apply to any logical dimensions and any number of modes, here we restrict our attention to the single-mode, square-lattice GKP qubit.

The Zak basis for the square-lattice GKP qubit is defined as \cite{Mackenzie2022}
\begin{align}
    \ket{z_1, z_2}_Z &\coloneqq \frac{e^{iz_1 z_2/2}}{\sqrt{c}}\sum_{s\in \mathbb{Z}} e^{2c i z_2 s} \ket{z_1 + 2c s}_{q}
\\
&
=
    \hat{V}(z_1, z_2) \ket{0, 0}_Z
    , \label{eq:Zak_in_position}
\end{align}
for $z_1\in [-\frac{c}{2}, \frac{3c}{2})$ and $z_2\in [-\frac{c}{2}, \frac{c}{2})$.
The expression above defines the unitary transformation between the Zak and position representations.  
(Mathematically, the Hilbert space for the former representation is $L^2\bigl([-\frac{c}{2}, \frac{3c}{2})\times [-\frac{c}{2}, \frac{c}{2})\bigr)$.) 
As discussed in Ref.~\cite{Pantaleoni2023}, we can enlarge the domain of $(z_1, z_2)$ to $\mathbb{R}\times \mathbb{R}$ with the following quasi-periodic condition:
\begin{align}
    \ket{z_1+2c,z_2}_Z &= e^{-ci z_2} \ket{z_1, z_2}_Z, \label{eq:quasi_period_1}\\
    \ket{z_1, z_2 + c}_Z &= e^{ci z_1/2}\ket{z_1, z_2}_Z.\label{eq:quasi_period_2}
\end{align}
Let $(\bar{z}_1, \bar{z}_2) \in \mathbb{R}\times \mathbb{R}$.  Let $\lfloor\bar{z}\rceil_{a}$ be rounding $\bar{z}$ to integer multiple of $a$ and $\{\bar{z}\}_{a}$ be its residual, i.e.,
\begin{align}
    \lfloor\bar{z}\rceil_{a}&\coloneqq a\,\left\lfloor \frac{\bar{z}}{a} + \frac{1}{2}\right\rfloor, \label{eq:rounding}\\
    \{\bar{z}\}_{a}&\coloneqq \bar{z} - \lfloor\bar{z}\rceil_{a},
\end{align}
where $\lfloor{}\cdot{}\rfloor$ is the floor function.
Then from Eqs.~\eqref{eq:quasi_period_1} and \eqref{eq:quasi_period_2}, we should interpret $\ket{\bar{z}_1, \bar{z}_2}_Z$ for $(\bar{z}_1, \bar{z}_2)\in \mathbb{R}\times\mathbb{R}$ as
\begin{equation}
    \ket{\bar{z}_1, \bar{z}_2}_Z = e^{\frac{i}{2}(\lfloor\bar{z}_1\rceil_{2c}\lfloor\bar{z}_2\rceil_{c}-\lfloor\bar{z}_1\rceil_{2c}\{\bar{z}_2\}_{c}+\{\bar{z}_1\}_{2c}\lfloor\bar{z}_2\rceil_{c})}\ket{\{\bar{z}_1\}_{2c}, \{\bar{z}_2\}_{c}}_Z.\label{eq:reinterpretation}
\end{equation}
Note that $\exp[\frac{i}{2}\lfloor\bar{z}_1\rceil_{2c}\lfloor\bar{z}_2\rceil_{c}]=\exp[-\frac{i}{2}\lfloor\bar{z}_1\rceil_{2c}\lfloor\bar{z}_2\rceil_{c}]$. 
The Zak basis satisfies the following orthogonal property:
\begin{equation}
    \braket{z_1, z_2|z'_1, z'_2}_Z = \delta^{(2c)}(z_1 - z'_1)\,\delta^{(c)}(z_2 - z'_2), \label{eq:orthogonality}
\end{equation}
where $\delta^{(a)}(x)$ denotes the delta distribution on the interval $(-\frac{a}{2},\frac{a}{2})$; i.e., whenever the indices $z_1-z'_1$ or $z_2-z'_2$ are outside the respective intervals, then it should be pushed back to the intervals by using the quasi-periodicity Eqs.~\eqref{eq:quasi_period_1} and \eqref{eq:quasi_period_2}.
The Zak basis also satisfies the following completeness relation:
\begin{equation}
    \int_{-\frac{c}{2}}^{\frac{3c}{2}}dz_1\int_{-\frac{c}{2}}^{\frac{c}{2}}dz_2 \ket{z_1,z_2}\!\bra{z_1,z_2}_Z = \hat{I}, \label{eq:completeness}
\end{equation}
where $\hat{I}$ denotes the identity operator on the Hilbert space.

The ideal GKP state given in Eq.~\eqref{eq:ideal_gkp_codeword} is now given in the Zak basis as $\ket{\overline{0}}\propto\ket{0,0}_Z$ and $\ket{\overline{1}} \propto \ket{c, 0}_Z$.  
This defines a qubit in an oscillator.
However, these are not states in the usual sense; i.e., these are not elements of the Hilbert space.
Instead, we can define a qubit in the following way.  
We write $z_1 \in [-\frac{c}{2},\frac{3c}{2})$ as $z_1=\lfloor z_1 \rceil_c + \{z_1\}_c$ and regard $\lfloor z_1 \rceil_c/c$ as a bit.  Then, we define
\begin{align}
    \ket{0;z_1, z_2}_{LS} &\coloneqq \ket{z_1, z_2}_{Z} = \hat{V}(z_1,z_2)\ket{\overline{0}}, \label{eq:shifted_zero}\\
    \ket{1;z_1, z_2}_{LS} &\coloneqq e^{ciz_2/2}\ket{z_1+c, z_2}_{Z} = \hat{V}(z_1,z_2)\ket{\overline{1}}, \label{eq:shifted_one}
\end{align}
for $z_1, z_2 \in[-\frac{c}{2},\frac{c}{2}) $.  In this way, we can decompose the Hilbert space ${\cal H}$ into the tensor product of a qubit Hilbert space $\mathbb{C}^2$ and an infinite-dimensional Hilbert space ${\cal H}_{S}$, i.e., $\mathcal{H}=\mathbb{C}^2\otimes \mathcal{H}_S$, with
\begin{equation}
    \ket{\mu}_{L}\otimes\ket{z_1, z_2}_S = \ket{\mu;z_1, z_2}_{LS}. \label{eq:stabilizer_ssd}
\end{equation}
This decomposition is named the \emph{stabilizer subsystem (SSS) decomposition} in Ref.~\cite{Mackenzie2022}.  This decomposition is analogous to decomposing the Hilbert space of physical qubits for a multi-qubit stabilizer code into the tensor product of the logical qubit and the syndrome subsystem, as discussed in Methods.  Thus, we call the system $L$ with the Hilbert space $\mathbb{C}^2$ in the decomposition the logical qubit and the system $S$ with the Hilbert space ${\cal H}_S$ the syndrome subsystem. Note that this is a sub\emph{system} decomposition of a single mode~\cite{Pantaleoni2020,Mackenzie2022}, with the logical qubit being in a tensor product with the syndrome subsystem. This differs from typical approaches to error correction, which focus on (`logical' and `error') sub\emph{spaces} of a multi-qubit Hilbert space. We emphasize our approach by the use of the term `subsystem' to identify the Hilbert spaces involved.

In the following, we will use the notation on both sides of Eq.~\eqref{eq:stabilizer_ssd} interchangeably.
The quasi-periodic conditions~\eqref{eq:quasi_period_1} and \eqref{eq:quasi_period_2} lead us to obtain
\begin{align}
    \ket{\mu;z_1+c, z_2}_{LS} &= e^{-ciz_2/2}\ket{\mu\oplus 1;z_1, z_2}_{LS}, \label{eq:quasi_period_zak_position}\\
    \ket{\mu;z_1, z_2+c}_{LS} &= e^{ciz_1/2} (-1)^{\mu} \ket{\mu;z_1, z_2}_{LS}, \label{eq:quasi_period_zak_momentum}
\end{align}
where $\oplus$ denotes the summation modulo 2.
The orthogonality relation~\eqref{eq:orthogonality} and the completeness relation~\eqref{eq:completeness} now become
\begin{align}
    \braket{\mu;z_1,z_2|\nu;z^\prime_1,z^\prime_2}_{LS} &= \delta_{\mu \nu}\, \delta^{(c)}(z_1-z^\prime_1)\,\delta^{(c)}(z_2-z^\prime_2),\label{eq:orthogonality_SSS}\\
    \sum_{\mu=0,1}\int_{-\frac{c}{2}}^{\frac{c}{2}}dz_1\int_{-\frac{c}{2}}^{\frac{c}{2}} dz_2 \ket{\mu;z_1,z_2} \! \bra{\mu;z_1,z_2}_{LS} &= \hat{I}.\label{eq:completeness_SSS}
\end{align}
In the following, we sometimes use a notation $\ket{\psi;z_1, z_2}$ with a two-dimensional complex vector state $\ket{\psi}=\alpha\ket{0}+\beta\ket{1}$ to denote 
\begin{equation}
    \ket{\psi;z_1, z_2}_{LS} \coloneqq \alpha\ket{0;z_1, z_2}_{LS} + \beta\ket{1; z_1, z_2}_{LS} = \ket{\psi}_L\otimes \ket{z_1,z_2}_S = \hat{V}(z_1, z_2) \ket{\overline{\psi}},
\end{equation}
where $\ket{\overline{\psi}}$ denotes an ideal GKP state.

This representation developed in Ref.~\cite{Mackenzie2022} has favorable properties.
In particular, important Gaussian operations are concisely represented in this representation.
Let $\hat{z}_1$ and $\hat{z}_2$ be operators satisfying 
\begin{align}
    \hat{z}_1 \ket{z_1, z_2}_{S} &= z_1 \ket{z_1, z_2}_{S}, \\*
    \hat{z}_2 \ket{z_1, z_2}_{S} &= z_2 \ket{z_1, z_2}_{S},
\end{align}
which implies $[\hat{z}_1, \hat{z}_2]=0$.  Formally, $\hat{z}_1$ and $\hat{z}_2$ correspond to $\{\hat{q}\}_{c}$ and $\{\hat{p}\}_{c}$, respectively, in the ordinary Hilbert space. 
Then, the stabilizer generators $\hat{S}_1$ and $\hat{S}_2$ defined in Eqs.~\eqref{eq:stabilizer_X} and \eqref{eq:stabilizer_Z} are represented in the SSS decomposition as
\begin{align}
    \hat{S}_1 &= \hat{I}_L\otimes e^{-2ci\hat{z}_2}_S, \\*
    \hat{S}_2 &= \hat{I}_L\otimes e^{2ci\hat{z}_1}_S.
\end{align}
Similarly, for the GKP logical operators $\overline{X}$ and $\overline{Z}$, we have
\begin{align}
    \overline{X} &= \hat{X}_L \otimes e^{-ci\hat{z}_2}_S, \label{eq:logical_X} \\*
    \overline{Z} &= \hat{Z}_L \otimes e^{ci\hat{z}_1}_S, \label{eq:logical_Z}
\end{align}
where $\hat{X}$ and $\hat{Z}$ are qubit Pauli operators. For Fourier gate $\hat{F}$ defined in Eq.~\eqref{eq:fourier_gate}, we have \cite{Mackenzie2022}
\begin{equation}
    \hat{F}=\hat{H}_L\otimes \hat{F}'_S,  \label{eq:fourier_subsystem}
\end{equation}
where $\hat{H}$ is a qubit Hadamard gate and $\hat{F}'$ is defined to satisfy
\begin{equation}
    \hat{F}'\ket{z_1, z_2}_S = \ket{-z_2, z_1}_S. \label{eq:rotation_SSS}
\end{equation}

General Gaussian unitaries nontrivially couple the logical qubit and the syndrome subsystem.  
For example, the unitary $\exp(i\hat{q}^2/2)$ that works as a phase gate $\overline{S}$ for the GKP-encoded state is represented as
\begin{align}
    \exp(i\hat{q}^2/2)\ket{\mu;z_1,z_2}_{LS} &= i^{\mu}\ket{\mu;z_1,z_1+z_2}_{LS}\\ & = \begin{cases} (-i)^{\mu}e^{-ciz_1/2}\ket{\mu;z_1,z_1+z_2+c}_{LS}& z_1 + z_2 < -c/2, \\
    i^{\mu}\ket{\mu;z_1,z_1+z_2}_{LS} & -c/2 \leq z_1+z_2 < c/2, \\
    (-i)^{\mu}e^{ciz_1/2}\ket{\mu;z_1,z_1+z_2-c}_{LS}& c/2 \leq z_1 + z_2, 
    \end{cases}  \label{eq:GKP_phase}
\end{align}
where we used Eq.~\eqref{eq:quasi_period_2}.
Another example is the SUM gate defined in Eq.~\eqref{eq:sum_gate}, which is represented as
\begin{equation}
    \exp(-i\hat{q}_j\hat{p}_k)\ket{\mu;z_1, z_2}_{LS,j}\ket{\nu;z'_1, z'_2}_{LS,k}= \ket{\mu;z_1,z_2-z'_2}_{LS,j}\ket{\mu\oplus\nu;z'_1+z_1,z'_2}_{LS,k},
    \label{eq:CX}
\end{equation}
where we forego writing down each case explicitly this time and leave interested readers to derive them using Eqs.~\eqref{eq:quasi_period_1} and \eqref{eq:quasi_period_2} (see also Ref.~\cite{Mackenzie2022}).
In this way, general Gaussian unitaries do not act as a tensor product of operators on the logical qubit and the syndrome subsystem but rather entangle the two subsystems.

Let us finally observe how the POVM $\ket{x}\!\bra{x}_q dx$ of the homodyne detection can be represented in the Zak basis and the stabilizer subsystem decomposition.  First, we observe that
\begin{align}
    \ket{x}_q &= \int_{-\frac{c}{2}}^{\frac{3c}{2}}dz_1\int_{-\frac{c}{2}}^{\frac{c}{2}} dz_2 \ket{z_1,z_2}\!\bra{z_1,z_2}_Z \ket{x}_q \\
    &= \int_{-\frac{c}{2}}^{\frac{3c}{2}}dz_1\int_{-\frac{c}{2}}^{\frac{c}{2}} dz_2 \ket{z_1,z_2}_Z  \frac{e^{-iz_1 z_2/2}}{\sqrt{c}} \sum_{s\in\mathbb{Z}} e^{-2ciz_2 s}\delta(x-z_1-2cs) \\
    &= \int_{-\frac{c}{2}}^{\frac{c}{2}} dz_2\, \frac{e^{-iz_2(x+2c\bar{s}(x))/2}}{\sqrt{c}} \ket{x-2c\bar{s}(x),z_2}_Z ,\label{eq:position_to_Zak}
\end{align}
where $\bar{s}(x)$ is a unique integer that satisfies $-\frac{c}{2}\leq x - 2c\bar{s}(x) < \frac{3c}{2}$.  Thus, the homodyne POVM can be written in the Zak basis as
\begin{equation}
    \ket{x}\!\bra{x}_q dx = \iint_{-\frac{c}{2}}^{\frac{c}{2}} \frac{e^{-i(z_2-z'_2)(x+2c\bar{s}(x))/2}}{c} \ket{x-2c\bar{s}(x),z_2}\!\bra{x-2c\bar{s}(x),z'_2}_Z dz_2 dz'_2 dx.
\end{equation}
For example, a homodyne POVM element for the interval $\bigl[2cm + \Delta_1,2cm + \Delta_2\bigr)$ with an integer $m$ and $-\frac{c}{2} \leq \Delta_1 < \Delta_2\leq\frac{3c}{2}$ is given by
\begin{equation}
    \int_{2cm+\Delta_1}^{2cm + \Delta_2} \ket{x}\!\bra{x}_q dx = \int_{\Delta_1}^{\Delta_2} dz_1 \iint_{-\frac{c}{2}}^{\frac{c}{2}} dz_2 dz'_2 \frac{e^{-i(z_2-z'_2)(z_1 + 4c m)/2}}{c} \ket{z_1,z_2}\!\bra{z_1, z'_2}_Z,
\end{equation}
where $z_1 = x - 2cm$ and we used $\bar{s}(x)=m$.
If we further sum up the above over all the integers $m$, we have
\begin{align}
    \sum_{m=-\infty}^{\infty} \int_{2cm+\Delta_1}^{2cm+\Delta_2} \ket{x}\!\bra{x}_q dx &= \int_{\Delta_1}^{\Delta_2} dz_1 \iint_{-\frac{c}{2}}^{\frac{c}{2}}  dz_2dz'_2 \frac{e^{-i(z_2-z'_2)z_1/2}}{c}\sum_{n=-\infty}^{\infty} \delta\bigl((z_2-z'_2)/c-n\bigr) \ket{z_1,z_2}\!\bra{z_1,z'_2}_Z \\
    &= \int_{\Delta_1}^{\Delta_2}dz_1 \int_{-\frac{c}{2}}^{\frac{c}{2}} dz_2 \ket{z_1,z_2}\!\bra{z_1,z_2}_Z
    \label{eq:GKP_Z_meas}
\end{align}
where we used the Poisson summation formula $\sum_{m=-\infty}^{\infty}e^{-2\pi i x m}=\sum_{n=-\infty}^{\infty}\delta(x-n)$ in the first equality and $-c<z_2-z'_2<c$ in the second equality.
When $\Delta_1=-\frac{c}{2}$ and $\Delta_2=\frac{c}{2}$, the measurement operator in the left-hand side of Eq.~\eqref{eq:GKP_Z_meas} corresponds to the event in which homodyne detection is performed, the outcome is binned to an integer multiple of $\sqrt{\pi}$, and the resulting integer is even.  This is nothing but the logical $\ket{\overline{0}}$ event of the $Z$ measurement on the GKP code.  In fact, Eq.~\eqref{eq:GKP_Z_meas} can be rewritten in this case as
\begin{equation}
    \sum_{m=-\infty}^{\infty} \int_{2cm-\frac{c}{2}}^{2cm+\frac{c}{2}} \ket{x}\!\bra{x}_q dx = \ket{\overline{0}}\!\bra{\overline{0}}_L \otimes \iint_{-\frac{c}{2}}^{\frac{c}{2}}dz_1 dz_2\ket{z_1,z_2}\!\bra{z_1,z_2}_S = \ket{\overline{0}}\!\bra{\overline{0}}_L\otimes \hat{I}_S.
\end{equation}
The same applies to the case $\Delta_1=\frac{c}{2}$ and $\Delta_2=\frac{3c}{2}$, which corresponds to the GKP logical $\ket{\overline{1}}$ measurement.

\subsection{Physical operations to implement the continuous-variable quantum computation with the GKP code}

\label{sec:physical_operations}

A fault-tolerant protocol aims to simulate the original circuit $C$ on qubits (representing quantum computation) within any given target error $\epsilon>0$, i.e., to output a bit string sampled from a probability distribution close to the output probability distribution of the original circuit within error in the total variation distance at most $\epsilon$~\cite{Gottesman2009,yamasaki2022timeefficient}.
In this paper, we consider a CV fault-tolerant protocol achieving this simulation using a fault-tolerant circuit $C'$ on CV systems with noise occurring according to a noise model specified later in Sec.~\ref{sec:ftqc_noise_model}.
We say that we can achieve fault-tolerant quantum computation if, for any $\epsilon>0$ and any original circuit $C$, we can construct a fault-tolerant circuit $C'$ achieving this simulation under the noise model.

Here, we assume that the original circuit $C$ is composed of the computational-basis qubit-state preparations, single- and two-qubit gate operations including waits, and the computational-basis measurements, where the (universal) set of gates are the Pauli $X$ and $Z$, the Hadamard $H$, the phase $S$, the Controlled-NOT CNOT, and the $T$ gates that are explicitly given as follows:
\begin{equation}
    X=\begin{pmatrix}
        0 & 1 \\ 1 & 0
    \end{pmatrix}, \quad Z=\begin{pmatrix}
        1 & 0 \\ 0 & -1 
    \end{pmatrix}, \quad H = \frac{1}{\sqrt{2}}\begin{pmatrix}
        1 & 1 \\ 1 & -1
    \end{pmatrix}, \quad S= \begin{pmatrix}
        1 & 0 \\ 0 & i
    \end{pmatrix}, \quad \text{CNOT} = \begin{pmatrix}
        1 & 0 & 0 & 0 \\ 0 & 1 & 0 & 0 \\ 0 & 0 & 0 & 1 \\ 0 & 0 & 1 & 0
    \end{pmatrix}, \quad T=\begin{pmatrix}
        1 & 0 \\ 0 & e^{\frac{\pi i}{4}}
    \end{pmatrix}.
    \label{eq:universal_gate_set}
\end{equation}
Note that Pauli $X$ and $Z$ gates are auxiliary since they can be generated by the combination of $H$ and $S$ gates, and the $S$ gate as well due to $S=T^2$, but we add them to the list for convenience.
The wait operation is to perform the identity gate
\begin{align}
    I=\begin{pmatrix}
        1 & 0 \\ 0 & 1
    \end{pmatrix}.
\end{align}
We call such circuit constituents (state preparations, gates, waits, and measurements) as locations.

To simulate the original (qubit) circuit with the CV system, 
we use CV state preparations, gates, and measurements described in the previous section.  However, we need to avoid the use of the unphysical state such as the ideal GKP codeword in Eq.~\eqref{eq:ideal_gkp_codeword}.  Based on the subsystem decomposition explained in the previous section, we will introduce a family of physically realizable \quoted{approximate} GKP states as follows.
\begin{definition}[$s$-parameterized GKP state]\label{def:s-parameterized_state}
    For any qubit state $\ket{\psi}=\alpha\ket{0}+\beta\ket{1}$, we define an $s$-parameterized GKP state $\hat{\rho}^s_{\psi}$ as
\begin{align}
    \hat{\rho}^s_{\psi} &\coloneqq \sum_{i}p_i \ket{\rho_i^s[\psi]}\!\bra{\rho_i^s[\psi]}, \label{eq:s-parameterized_mixed}\\
    \ket{\rho_i^s[\psi]} &\coloneqq \int_{-\infty}^{\infty}dz_1\int_{-\infty}^{\infty}dz_2\; f_i^s(z_1,z_2) \hat{V}(z_1,z_2)\ket{\overline{\psi}},\label{eq:s-parameterized_pure}
\end{align}
where $p_i$ ($i=1,2,\ldots$) denotes the probabilities summed up to one, $\ket{\overline{\psi}}=\alpha\ket{\overline{0}}+\beta\ket{\overline{1}}$ denotes the ideal GKP state, and $f_i^s(x,y)$ denotes a function of $(x,y)$ that has support only in $[-s,s)\times [-s,s)$ and satisfies $\braket{\rho_i^s[\psi]|\rho_i^s[\psi]}=1$. Notice that the notation~$\hat{\rho}^s_{\psi}$ leaves implicit the mixing probabilities~$\{ p_i \}$ and the $s$-supported functions~$\{ f_i^s \}$.
\end{definition}
\noindent From this definition and Eqs.~\eqref{eq:shifted_zero} and \eqref{eq:shifted_one}, a state $\ket{\rho_i^s[\psi]}$ has the form
\begin{equation}
    \ket{\rho_i^s[\psi]} = \ket{\psi}_L\otimes \int_{-\frac{c}{2}}^{\frac{c}{2}}dz_1\int_{-\frac{c}{2}}^{\frac{c}{2}}dz_2\; f_i^s(z_1,z_2) \ket{z_1,z_2}_S
\end{equation}
when $s<c/2$.  Thus, it can be regarded as an \quoted{approximate GKP codeword}.  In contrast to the typical definition of approximate GKP states~\cite{Gottesman2001, Matsuura2020}, the state $\hat{\rho}_{\psi}^s$ has \emph{exactly} the same logical content as the ideal one $\ket{\overline{\psi}}$ as long as $s<c/2$.

Having defined the class of physically realizable approximate GKP states that we will consider, we can now list the physical operations to implement our CV quantum circuit:
\begin{itemize}
    \item preparation of a state from a finite set $\mathcal{S}$ including $\hat{\rho}^s_0$, $\hat{\rho}^s_Y$, and $\hat{\rho}^s_{\frac{\pi}{8}}$ with $s$ chosen to be sufficiently smaller than $c/2$, which are defined through Eq.~\eqref{eq:s-parameterized_mixed} with $\ket{Y}=(\ket{0}+i\ket{1})/\sqrt{2}$ and $\ket{\frac{\pi}{8}}=(\ket{0}+e^{\pi i/4}\ket{1})/\sqrt{2}$,
    \item displacement $\hat{V}(v_1, v_2)$ by an amount drawn from a finite set $V$ of possible pairs $(v_1,v_2)$ of values,
    including $(c,0)$ and $(0,c)$,
    \item phase rotation $\hat{R}(\theta)$ by an angle drawn from a finite set $\Theta$ for possible values of $\theta$, including $\pi/2$,
    \item SUM gate $\exp(-i\hat{q}_1\hat{p}_2)$,
    \item $q$- and $p$-homodyne detection,
    \item waiting (or using delay lines).
\end{itemize}
Later, it will be clear that to avoid a logical error, the parameter $s$ in the prepared state should be smaller than $c/(2N_{\text{2-mode}})$ with $N_{\text{2-mode}}$ denoting the maximum number of two-mode gates that a prepared state goes through before the measurement.  Note that if we adopt Knill-type error correction \cite{Knill2005}, which we will do, $N_{\text{2-mode}}$ is a constant number.

\subsection{Energy-constrained diamond norm} \label{sec:energy_constrained_diamond_norm}
In this section, we introduce a distance measure between CV quantum channels that we use in the next section.  The distance measure is necessary to define the noise strength for a given noise model; i.e., small noise strength implies closeness to the identity channel.
However, there are subtleties in the distance measure for channels in infinite-dimensional systems, as discussed in the main text with an example of the phase-rotation channel  ${\cal R}[\theta]$.
Our physical intuition that ${\cal R}[\theta]$ is close to the identity map when $\theta>0$ is very close to $0$ comes from the fact that quantum states we usually generate and treat in the lab are not very susceptible to small phase rotation, whereas the diamond-norm distance tells us it is far from the identity map since a coherent state with the amplitude $\alpha$ distinguishes these two channels better and better as $|\alpha|\to\infty$.
This raises the following two issues to discuss.  On one hand, if the quantum state during a quantum computation is \quoted{reasonably good}, then the noise such as an infinitesimal rotation should be well approximated by the identity map.  On the other hand, for this to be true, we need to take care that the quantum state during a computation is ensured to be kept \quoted{reasonably good}.  We postpone the discussion of the second issue later and focus on the first issue here.

In this paper, quantum states with the energy constraint are regarded as the \quoted{reasonably good} quantum states, which seems reasonable in terms of the actual experiment.  More precisely, we regard the following set of quantum states as those realizable in the experiment and thus in the CV quantum circuit.
\begin{definition}[The set of energy-constrained quantum states]
    Given an energy bound $E>0$ on a system (or equivalently, a mode) $Q$,  the set $\mathfrak{S}_{E}({\cal H}_Q)$ of energy-constrained density operators on the system $Q$ is defined as
    \begin{equation}
        \mathfrak{S}_{E}({\cal H}_Q) = \{\hat{\rho} \in {\cal B}({\cal H}_Q): \hat{\rho}\geq 0, \Tr (\hat{\rho})=1, \Tr (\hat{n}\hat{\rho})\leq E\}.
    \end{equation}
    The set of all density operators without energy constraint is simply denoted by $\mathfrak{S}({\cal H}_Q)$.
\end{definition}
\noindent In the definition above, as well as in the following discussions in this section, we identify the number operator~$\hat{n}$ as the Hamiltonian of the system, with the optical system in mind, but the same argument holds for any Hamiltonian operator $\hat{H}$ that satisfies the Gibbs hypothesis, i.e., $\Tr [\exp(-\beta\hat{H})]<\infty$ for all $\beta>0$~\cite{Winter2017, Shirokov2019}.  Note that the number operator $\hat{n}$ clearly satisfies this condition.
A particularly nice property of the set $\mathfrak{S}_{E}({\cal H}_Q)$ for finite $E$ is that it is compact with respect to the trace norm~\cite{Holevo2003}, while $\mathfrak{S}({\cal H}_Q)$ is not.
For this set of energy-constrained quantum states, we can define the energy-constrained diamond norm considered in Refs.~\cite{Holevo2003, Shirokov2008, Winter2016, Winter2017, Shirokov2018, Shirokov2018adaptation, Shirokov2019}.
\begin{definition}[Energy-constrained diamond norm \cite{Shirokov2018}]
    Let $\Phi$ be a Hermetian-preserving linear map (not necessarily completely positive or even positive) acting on operators on ${\cal H}_Q$. For $E>0$, the energy-constrained diamond norm $\|\Phi\|_{\diamond}^E$ of $\Phi$ is defined as
    \begin{equation}
        \|\Phi\|_{\diamond}^E\coloneqq \sup_{\hat{\rho}_{RQ}\in\mathfrak{S}({\cal H}_{RQ}):\hat{\rho}_Q\in\mathfrak{S}_{E}({\cal H}_Q)}\|\mathrm{Id}_{R}\otimes \Phi(\hat{\rho}_{RQ})\|_1, \label{eq:def_energy_constrained_diamond}
    \end{equation}
    where $\mathrm{Id}_R$ denotes the identity map of operators on ${\cal H}_R\cong{\cal H}_Q$, $\hat{\rho}_Q\coloneqq \Tr _R[\hat{\rho}_{RQ}]$, and $\|\hat{T}\|_1\coloneqq \Tr \sqrt{\hat{T}^{\dagger}\hat{T}}$.  We simply let $\|\Phi\|_{\diamond}$ denote the diamond norm without energy constraint, i.e.,
    \begin{equation}
        \|\Phi\|_{\diamond} \coloneqq \sup_{\hat{\rho}_{RQ}\in\mathfrak{S}({\cal H}_{RQ})}\|\mathrm{Id}_{R}\otimes \Phi(\hat{\rho}_{RQ})\|_1=\sup_{\hat{T}_{RQ}\in {\cal B}({\cal H}_{RQ}):\|\hat{T}_{RQ}\|_1\leq 1} \|\mathrm{Id}_R\otimes \Phi (\hat{T}_{RQ})\|_1. \label{eq:def_conventional_diamond}
    \end{equation}
    For a composite system $Q_1Q_2$ with the energy constraint $E_1$ on $Q_1$ and $E_2$ on $Q_2$, we define $\|\Phi_{Q_1Q_2}\|^{E_1,E_2}_{\diamond}$ as 
    \begin{equation}
        \|\Phi_{Q_1Q_2}\|^{E_1,E_2}_{\diamond}\coloneqq \sup_{\hat{\rho}_{RQ_1Q_2}\in\mathfrak{S}({\cal H}_{RQ_1Q_2}):\hat{\rho}_{Q_1}\in\mathfrak{S}_{E_1}({\cal H}_{Q_1}), \hat{\rho}_{Q_2}\in\mathfrak{S}_{E_2}({\cal H}_{Q_2})}\|\mathrm{Id}_{R}\otimes \Phi_{Q_1Q_2}(\hat{\rho}_{RQ_1Q_2})\|_1,
    \end{equation}
    where ${\cal H}_R\cong {\cal H}_{Q_1Q_2} $.
\end{definition}
\noindent This norm defines a distance between quantum channels called the energy-constrained diamond-norm distance.
Many of the favorable properties of the diamond norm in the finite-dimensional case are recovered in the energy-constrained diamond norm \cite{Winter2017, Shirokov2018}.  We cite some of the properties that will be used later.

\begin{itemize}
    \item (Monotonicity under energy increase) For $0<E_1\leq E_2$, the energy-constrained diamond norm satisfies 
    \begin{equation}\|\Phi\|_{\diamond}^{E_1}\leq\|\Phi\|_{\diamond}^{E_2} . \label{eq:monotonicity}
    \end{equation}   
    In particular, $\|\Phi\|_{\diamond}^{E}\leq\|\Phi\|_{\diamond}$ holds for any $E>0$.
    \item (Submultiplicativity under composition) Let $\Phi$ be a Hermetian-preserving linear map and $\Psi:\mathfrak{S}_{E_1}({\cal H}_Q)\to \mathfrak{S}_{E_2}({\cal H}_Q)$ be a completely positive map.  Then, we have
    \begin{equation}
    \|\Phi\circ\Psi\|_{\diamond}^{E_1}\leq \|\Phi\|_{\diamond}^{E_2}\|\Psi\|^{E_1}_{\diamond}. \label{eq:submultiplicativity_1}
    \end{equation}
    Furthermore, for another Hermitian-preserving linear map $\Psi':\mathfrak{S}_{E_1}({\cal H}_Q)\to \mathcal{B}({\cal H}_{Q})$, we have 
    \begin{equation}
        \|\Phi\circ\Psi'\|_{\diamond}^{E_1} \leq \|\Phi\|_{\diamond}\|\Psi'\|_{\diamond}^{E_1}. \label{eq:submultiplicativity_2}
    \end{equation}
    \item (Supermultiplicativity under tensor product) For the Hermitian-preserving linear maps $\Phi_{Q_1}$ and $\Phi_{Q_2}$ of operators on ${\cal H}_{Q_1}$ and ${\cal H}_{Q_1}$, respectively, we have \begin{equation}\|\Phi_{Q_1}\otimes\Phi_{Q_2}\|_{\diamond}^{E_1, E_2} \geq \|\Phi_{Q_1}\|^{E_1}_{\diamond}\|\Phi_{Q_2}\|^{E_2}_{\diamond}.
    \end{equation}
    The equality holds when at least one of $\Phi_{Q_1}$ and $\Phi_{Q_2}$ is completely positive.
    This property can straightforwardly be derived from the facts that $\|\hat{A}\otimes\hat{B}\|_1=\|\hat{A}\|_1\|\hat{B}\|_1$ holds and that there exists a state $\hat{\rho}_{RQ_1Q_2}$ on the composite system $RQ_1Q_2$ such that $\hat{\rho}_{Q_1Q_2}$ is an entangled state and $\hat{\rho}_{Q_i}\in\mathfrak{S}_{E_i}({\cal H}_{Q_i})$ for $i=1,2$.
    \item (Achievability) From the compactness of $\mathfrak{S}_E({\cal H}_{Q})$, there always exists $\hat{\rho}_{QR}$ with $\hat{\rho}_Q\in\mathfrak{S}_E({\cal H}_{Q})$ such that $\|\Phi_Q\|_{\diamond}^E=\|\mathrm{Id}_R \otimes \Phi_Q (\hat{\rho}_{RQ})\|_1$.
\end{itemize}
One shortcoming of this norm is that one needs extra care when one applies it to composite maps, which can already be seen in the preconditions above.
It is particularly because $(\hat{\rho}_A-\hat{\sigma}_A)_\pm\notin \mathfrak{S}_E({\cal H}_A)$ even though $\hat{\rho}_A,\hat{\sigma}_A\in\mathfrak{S}_E({\cal H}_A)$, where $(\hat{X})_\pm \coloneqq (\hat{X} \pm |\hat{X}|)/\Tr(\hat{X} \pm |\hat{X}|)$ is the (normalized) density operator corresponding to the positive part (and the negative part, respectively) of $\hat{X}$.
However, by appropriately enlarging the energy-constrained set, we have a bound on the energy-constrained diamond norm of composite maps with respective norms, as we show in the following proposition by generalizing the arguments used for Lemma 1 and Theorem 1 in Ref.~\cite{Shirokov2020}.

\begin{proposition}[Norm of a map acting on the difference of energy-constrained density operators] \label{prop:difference_energy_constrained}
    Given $E>0$, let $\hat{\rho}_{AR}$ and $\hat{\sigma}_{AR}$ be density operators with $\hat{\rho}_A,\hat{\sigma}_{A}\in\mathfrak{S}_E({\cal H}_A)$ and $\|\hat{\rho}_{AR}-\hat{\sigma}_{AR}\|_1 \leq 2\epsilon$ with $0<\epsilon<1$.  Then, for any Hermetian-preserving linear map $\Phi_A$ on the system $A$, we have
    \begin{equation}
        \|\Phi_A\otimes \mathrm{Id}_R(\hat{\rho}_{AR} - \hat{\sigma}_{AR})\|_1 \leq 10\epsilon\|\Phi_A\|_{\diamond}^{E/\epsilon^2}.
    \end{equation}
\end{proposition}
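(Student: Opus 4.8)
The plan is to reduce the statement to the defining property of the energy-constrained diamond norm, namely that $\|\Phi_A\otimes\mathrm{Id}_R(\hat{\rho}_{AR})\|_1\le\|\Phi_A\|_{\diamond}^{E'}$ whenever $\hat{\rho}_A\in\mathfrak{S}_{E'}({\cal H}_A)$ (the case of a reference $R\not\cong A$ reducing to $R\cong A$ by purification, as usual). To use it I must write the traceless Hermitian operator $\hat{\delta}\coloneqq\hat{\rho}_{AR}-\hat{\sigma}_{AR}$ as a difference of two genuine density operators whose $A$-marginals have controlled energy. First I would take the Jordan decomposition $\hat{\delta}=\hat{\delta}_+-\hat{\delta}_-$ into its orthogonal parts $\hat{\delta}_\pm=\tfrac12(|\hat{\delta}|\pm\hat{\delta})\ge0$, and set $p\coloneqq\Tr\hat{\delta}_+=\Tr\hat{\delta}_-=\tfrac12\|\hat{\delta}\|_1\le\epsilon$; then $\hat{\rho}'\coloneqq\hat{\delta}_+/p$ and $\hat{\sigma}'\coloneqq\hat{\delta}_-/p$ are density operators on $AR$ with $\hat{\delta}=p(\hat{\rho}'-\hat{\sigma}')$. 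The difficulty flagged just before the proposition is precisely that $\hat{\rho}'$ and $\hat{\sigma}'$ need not lie in $\mathfrak{S}_E$: the positive and negative parts of a difference of energy-constrained states can have much larger energy, so the basic property cannot be applied to them directly.

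The key step is therefore to control these marginal energies. Writing $\min(\hat{\rho}_{AR},\hat{\sigma}_{AR})\coloneqq\tfrac12(\hat{\rho}_{AR}+\hat{\sigma}_{AR}-|\hat{\delta}|)$, one checks from $\hat{\delta}_\pm=\tfrac12(|\hat{\delta}|\pm\hat{\delta})$ that $\min(\hat{\rho}_{AR},\hat{\sigma}_{AR})=\hat{\rho}_{AR}-\hat{\delta}_+=\hat{\sigma}_{AR}-\hat{\delta}_-\ge0$ with trace $1-p$. This yields the two convex decompositions $\hat{\rho}_{AR}=(1-p)\hat{\omega}_{AR}+p\,\hat{\rho}'$ and $\hat{\sigma}_{AR}=(1-p)\hat{\omega}_{AR}+p\,\hat{\sigma}'$ with the common density operator $\hat{\omega}_{AR}\coloneqq\min(\hat{\rho}_{AR},\hat{\sigma}_{AR})/(1-p)$. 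Tracing out $R$ and using $\hat{n}\ge0$ together with $\Tr[\hat{n}\hat{\rho}_A]\le E$ gives $p\,\Tr[\hat{n}\hat{\rho}'_A]\le\Tr[\hat{n}\hat{\rho}_A]\le E$, i.e.\ $\Tr[\hat{n}\hat{\rho}'_A]\le E/p$, and symmetrically $\Tr[\hat{n}\hat{\sigma}'_A]\le E/p$. This bound diverges as $p\to0$, so the second maneuver is a dilution (vacuum-padding) that exploits $p\le\epsilon$: fixing any state $\hat{\xi}$ on $AR$ with vacuum marginal $\hat{\xi}_A=\ket{0}\!\bra{0}$ and setting
\begin{align*}
\tilde{\rho}&\coloneqq\tfrac{p}{\epsilon}\hat{\rho}'+\bigl(1-\tfrac{p}{\epsilon}\bigr)\hat{\xi}, &
\tilde{\sigma}&\coloneqq\tfrac{p}{\epsilon}\hat{\sigma}'+\bigl(1-\tfrac{p}{\epsilon}\bigr)\hat{\xi},
\end{align*}
produces two density operators on $AR$ (legitimate since $p/\epsilon\le1$) obeying $\epsilon(\tilde{\rho}-\tilde{\sigma})=p(\hat{\rho}'-\hat{\sigma}')=\hat{\delta}$, with $A$-marginal energy $\Tr[\hat{n}\tilde{\rho}_A]\le\tfrac{p}{\epsilon}\cdot\tfrac{E}{p}=E/\epsilon$ and likewise for $\tilde{\sigma}$.

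Finally I would apply the basic property to $\tilde{\rho}$ and $\tilde{\sigma}$, whose $A$-marginals now lie in $\mathfrak{S}_{E/\epsilon}({\cal H}_A)$:
\begin{align*}
\|\Phi_A\otimes\mathrm{Id}_R(\hat{\delta})\|_1
&=\epsilon\,\|\Phi_A\otimes\mathrm{Id}_R(\tilde{\rho}-\tilde{\sigma})\|_1\\
&\le\epsilon\bigl(\|\Phi_A\otimes\mathrm{Id}_R(\tilde{\rho})\|_1+\|\Phi_A\otimes\mathrm{Id}_R(\tilde{\sigma})\|_1\bigr)\\
&\le2\epsilon\,\|\Phi_A\|_{\diamond}^{E/\epsilon}
\le2\epsilon\,\|\Phi_A\|_{\diamond}^{E/\epsilon^2}
\le10\epsilon\,\|\Phi_A\|_{\diamond}^{E/\epsilon^2},
\end{align*}
where the penultimate inequality is monotonicity under $E/\epsilon\le E/\epsilon^2$ (valid since $0<\epsilon<1$) and the last is trivial; in fact this route already proves the stronger bound $2\epsilon\|\Phi_A\|_{\diamond}^{E/\epsilon}$. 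The main obstacle is exactly the energy control of the previous paragraph: the Jordan parts are not energy-constrained, and the two tricks—extracting the common part $\hat{\omega}$ to obtain the $E/p$ bound, then diluting with the vacuum down to the fixed weight $\epsilon$ so as to trade the dangerous factor $1/p$ for the benign $1/\epsilon$—are what render the energy-constrained norm usable. (Alternatively, the stated constants $10$ and $E/\epsilon^2$ emerge naturally from the $\hat{n}$-truncation/gentle-measurement route of Ref.~\cite{Shirokov2020}, taking the photon-number cutoff $N=E/\epsilon^2$ so that the truncation error $\sim\sqrt{E/N}$ matches $\epsilon$.) It remains only to verify that $\min(\hat{\rho}_{AR},\hat{\sigma}_{AR})\ge0$ and that all trace-norm manipulations are legitimate for trace-class operators on the infinite-dimensional ${\cal H}$.
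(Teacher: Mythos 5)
There is a genuine gap, and it sits exactly where you deferred verification: the operator $\hat{\omega}\propto\tfrac12(\hat{\rho}_{AR}+\hat{\sigma}_{AR}-|\hat{\delta}|)=\hat{\rho}_{AR}-\hat{\delta}_+$ is \emph{not} positive semidefinite in general, so the common-part decomposition $\hat{\rho}_{AR}=(1-p)\hat{\omega}_{AR}+p\,\hat{\rho}'$ with $\hat{\omega}_{AR}$ a genuine state does not exist. Already for two pure qubit states $\hat{\rho}=\ket{0}\!\bra{0}$ and $\hat{\sigma}=\ket{+}\!\bra{+}$ one finds $|\hat{\delta}|=\tfrac{1}{\sqrt{2}}\hat{I}$, and $\tfrac12(\hat{\rho}+\hat{\sigma})-\tfrac{1}{2\sqrt{2}}\hat{I}$ has a negative eigenvalue; extracting a ``common part'' of weight $1-\tfrac12\|\hat{\rho}-\hat{\sigma}\|_1$ is a classical (commuting) phenomenon that fails for noncommuting states. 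Consequently the energy bound $\Tr[\hat{n}\hat{\rho}'_A]\le E/p$, which you obtain by dropping the term $(1-p)\Tr[\hat{n}\hat{\omega}_A]$ on the implicit assumption that it is nonnegative, also fails. Concretely, take $\hat{\rho}=\ket{0}\!\bra{0}$ and $\hat{\sigma}=\ket{\psi}\!\bra{\psi}$ with $\ket{\psi}=\sqrt{1-\eta}\ket{0}+\sqrt{\eta}\ket{N}$: both states lie in $\mathfrak{S}_{E}$ with $E=\eta N$ and $p=\sqrt{\eta}$, yet the normalized positive Jordan part $\hat{\rho}'=\hat{\delta}_+/p$ has energy $N(1-\sqrt{\eta})/2\approx N/2$, far exceeding $E/p=\sqrt{\eta}\,N$ for small $\eta$. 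The same example explains why the proposition carries the exponent $E/\epsilon^2$ rather than the $E/\epsilon$ your route would deliver (here $E/\epsilon^2=N$ does cover the Jordan parts while $E/\epsilon=\sqrt{\eta}N$ does not), so your claimed ``stronger bound'' $2\epsilon\|\Phi_A\|_{\diamond}^{E/\epsilon}$ should be regarded as unsupported, not merely unproven by a different method.

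The paper's proof avoids taking the Jordan decomposition of $\hat{\rho}_{AR}-\hat{\sigma}_{AR}$ altogether. It purifies each state, truncates the purification's Fock support on $A$ at photon number $\lfloor E/t^2\rfloor$ (Lemma~\ref{lem:difference_energy_constrained}), and uses the fact that $\ket{\phi}\!\bra{\phi}-\ket{\tilde{\phi}}\!\bra{\tilde{\phi}}$ is a rank-two operator whose eigenvectors $\ket{\gamma_\pm}$ are explicit linear combinations of $\ket{\phi}$ and $\ket{\tilde{\phi}}$, so their energies can be bounded by $E/t^2$ by direct computation; the residual difference of the truncated states is then handled by optimizing over operators supported on the finite-energy subspace. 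If you want to rescue a decomposition-based argument, you must replace the common-part step by such a truncation (or a gentle-measurement projection onto the low-energy subspace) --- which is exactly the route you mention only parenthetically at the end.
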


To prove Proposition~\ref{prop:difference_energy_constrained}, we use the following auxiliary lemma, which is basically a reformulation of what is proved in Lemma 1 in Ref.~\cite{Shirokov2020}.
We here provide the proof of this auxiliary lemma for completeness of our analysis.

\begin{lemma}[Finite-dimensional approximation of energy-constrained density operators: Footnote~5 in the proof of Lemma~1 in Ref.~\cite{Shirokov2020}, Lemma~3 of Ref.~\cite{Shirokov2019}] \label{lem:difference_energy_constrained}
    Given $E>0$, let $\hat{\rho}_{AR}$ be a density operator with $\hat{\rho}_A\in\mathfrak{S}_E({\cal H}_A)$.  Then, for any purification $\ket{\phi}_{ARS}$ of $\hat{\rho}_{AR}$, there exist a pure state $\ket{\tilde{\phi}}_{ARS}$ and an error parameter $t\in(0,1)$ such that
    \begin{align}
    &\mathrm{supp}(\tilde{\phi}_A)\subset\mathrm{span}\{\ket{0}_A,\ldots,\ket{\lfloor E/t^2\rfloor-1}_A\},\\
    &\frac{1}{2}\|\ket{\phi}\!\bra{\phi}-\ket{\tilde{\phi}}\!\bra{\tilde{\phi}}\|_1 \leq t,\\
    \label{eq:orthogonal_decomp_lemma}
    &\ket{\phi}\!\bra{\phi}-\ket{\tilde{\phi}}\!\bra{\tilde{\phi}} = t\ket{\gamma_+}\!\bra{\gamma_+} - t\ket{\gamma_-}\!\bra{\gamma_-},\\
    & \braket{\gamma_{\pm}|\gamma_{\pm}}=1, \quad \braket{\gamma_+|\gamma_-}=0, \\
    \label{eq:energy_bound_gamma_lemma}
    &\Tr[\hat{n}_A(\Tr_{RS}[\ket{\gamma_\pm}\!\bra{\gamma_\pm}_{ARS}])]\leq E/t^2,
    \end{align}
    where $\tilde{\phi}_A\coloneqq\Tr_{RS}[\ket{\tilde{\phi}}\!\bra{\tilde{\phi}}_{ARS}]$, $\mathrm{supp}(\tilde{\phi}_A)$ is the support of $\tilde{\phi}_A$, $\{\ket{n}_A\}_{n=0}^{\infty}$ is the Fock basis of $\mathcal{H}_A$, and $\hat{n}_A=\sum_{n=0}^\infty n \ket{n}\bra{n}_A$ is the number operator (i.e. the Hamiltonian of the system $A$).
\end{lemma}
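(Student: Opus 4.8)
The plan is to prove the lemma by a Fock-space truncation of the $A$-system combined with an elementary Markov-type inequality. Let $\hat{P}_N$ denote the projector onto $\mathrm{span}\{\ket{0}_A,\ldots,\ket{N-1}_A\}$ (tensored with the identity on $RS$). I would choose a truncation level $N$ for which the tail weight $\Tr[(\hat{I}-\hat{P}_N)\hat{\rho}_A]$ lies strictly in $(0,1)$ --- such an $N$ always exists unless $\hat{\rho}_A$ is degenerate, a case that can be dispatched directly --- and then \emph{define} the error parameter by $t\coloneqq\sqrt{\Tr[(\hat{I}-\hat{P}_N)\hat{\rho}_A]}$ rather than fixing it in advance. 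The candidate state is $\ket{\tilde{\phi}}\coloneqq(\hat{P}_N\otimes\hat{I}_{RS})\ket{\phi}/\lVert(\hat{P}_N\otimes\hat{I}_{RS})\ket{\phi}\rVert$, whose $A$-marginal $\tilde{\phi}_A$ is by construction supported on the first $N$ Fock levels.

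The support condition then reduces to showing $N\leq E/t^2$. This follows from a Markov inequality: $N t^2 = N\Tr[(\hat{I}-\hat{P}_N)\hat{\rho}_A]=N\sum_{n\geq N}\braket{n|\hat{\rho}_A|n}\leq \sum_{n\geq N} n\braket{n|\hat{\rho}_A|n}\leq\Tr[\hat{n}_A\hat{\rho}_A]\leq E$. Since $N$ is an integer at most $E/t^2$, we get $N\leq\lfloor E/t^2\rfloor$ and hence the claimed support. For the trace-distance bound I would use the pure-state identity $\tfrac12\lVert\ket{\phi}\!\bra{\phi}-\ket{\tilde{\phi}}\!\bra{\tilde{\phi}}\rVert_1=\sqrt{1-\lvert\braket{\phi|\tilde{\phi}}\rvert^2}$; since $\lvert\braket{\phi|\tilde{\phi}}\rvert^2=\lVert\hat{P}_N\ket{\phi}\rVert^2=1-t^2$, the distance equals $t$ exactly. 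The orthogonal decomposition with eigenvalues $\pm t$ is the standard spectral fact that the difference of two rank-one projectors onto unit vectors of overlap $F$ has eigenvalues $\pm\sqrt{1-F^2}$; taking $\ket{\gamma_\pm}$ to be the normalized eigenvectors (which span the same two-dimensional space) gives the required $\braket{\gamma_+|\gamma_-}=0$ and the decomposition.

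The main work, and the step I expect to be the crux, is the energy bound on the $A$-marginals of $\ket{\gamma_\pm}$. Writing $\ket{\phi}=\sqrt{1-t^2}\,\ket{\tilde{\phi}}+t\,\ket{\phi^\perp}$ with $\ket{\phi^\perp}\coloneqq(\hat{I}-\hat{P}_N)\ket{\phi}/t$, both $\ket{\gamma_+}$ and $\ket{\gamma_-}$ lie in $\mathrm{span}\{\ket{\tilde{\phi}},\ket{\phi^\perp}\}$, say $\ket{\gamma_\pm}=\alpha_\pm\ket{\tilde{\phi}}+\beta_\pm\ket{\phi^\perp}$ with $\lvert\alpha_\pm\rvert^2+\lvert\beta_\pm\rvert^2=1$. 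The decisive observation is that $\ket{\tilde{\phi}}$ and $\ket{\phi^\perp}$ have $A$-Fock support in the complementary ranges $n<N$ and $n\geq N$, so --- since $\hat{n}_A$ is diagonal in the Fock basis --- the cross term $\bra{\tilde{\phi}}\hat{n}_A\ket{\phi^\perp}$ vanishes and the marginal energy collapses to the convex combination $\lvert\alpha_\pm\rvert^2\bra{\tilde{\phi}}\hat{n}_A\ket{\tilde{\phi}}+\lvert\beta_\pm\rvert^2\bra{\phi^\perp}\hat{n}_A\ket{\phi^\perp}$. Here $\bra{\tilde{\phi}}\hat{n}_A\ket{\tilde{\phi}}\leq N-1\leq E/t^2$, while $\bra{\phi^\perp}\hat{n}_A\ket{\phi^\perp}=t^{-2}\sum_{n\geq N}n\braket{n|\hat{\rho}_A|n}\leq t^{-2}\Tr[\hat{n}_A\hat{\rho}_A]\leq E/t^2$, so the convex combination is bounded by $E/t^2$ as required. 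The only genuinely delicate point is thus the vanishing of the cross term, which is exactly what makes the bound come out to the clean value $E/t^2$; everything else is the Markov estimate and elementary two-dimensional linear algebra.
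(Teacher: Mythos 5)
Your proof is correct, and its skeleton coincides with the paper's: both truncate $\ket{\phi}$ onto the lowest $N$ Fock levels of $A$, renormalize to obtain $\ket{\tilde{\phi}}$, set $t^2$ equal to the discarded tail weight, derive the support condition from the Markov-type estimate $Nt^2\leq\Tr[\hat{n}_A\hat{\rho}_A]\leq E$, and read off the trace distance and the rank-two spectral decomposition from the standard pure-state identities. Where you genuinely diverge is the energy bound on $\ket{\gamma_\pm}$, and your route is cleaner. The paper expands $\ket{\gamma_\pm}=p_\pm\ket{\phi}+q_\pm\ket{\tilde{\phi}}$ in the \emph{non-orthogonal} pair $\{\ket{\phi},\ket{\tilde{\phi}}\}$ and must carry the cross term $\Re\bigl(\bra{\phi}\hat{n}_A\otimes\hat{I}_{RS}\ket{\tilde{\phi}}\bigr)$ through the explicit $p_\pm,q_\pm$ algebra and the $T$/$U$ bookkeeping before the bound $E/\delta_d$ emerges. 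You instead expand in the orthogonal pair $\{\ket{\tilde{\phi}},\ket{\phi^\perp}\}$, note that $\hat{n}_A$ commutes with the Fock-truncation projector so the cross term vanishes identically, and the energy of $\ket{\gamma_\pm}$ collapses to a convex combination of $\bra{\tilde{\phi}}\hat{n}_A\ket{\tilde{\phi}}\leq N-1$ and $\bra{\phi^\perp}\hat{n}_A\ket{\phi^\perp}\leq E/t^2$; both routes land on the same constant $E/t^2$, but yours makes transparent \emph{why} it comes out clean. One caveat you share with the paper: when $\hat{\rho}_A$ is exactly a Fock state the tail weight jumps from $1$ to $0$, no $t\in(0,1)$ is produced, and the difference of projectors vanishes, so the orthogonal decomposition with $t>0$ cannot hold literally; you at least flag this degenerate case while the paper's proof (where it appears as $\delta_d=0$) passes over it silently, so this is not a gap relative to the paper's own standard.
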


\begin{proof}
    Let $\ket{\phi}_{ARS}$ be a purification of $\hat{\rho}_{AR}$.  We have, without loss of generality,
    \begin{equation}
        \ket{\phi}_{ARS} = \sum_{j,k=0}^{\infty} \sqrt{p_{jk}}e^{i\theta_{jk}} \ket{j}_A\ket{\alpha_k}_{RS},
    \end{equation}
    where $\ket{j}_A$ denotes the Fock state, $\{\ket{\alpha_k}\}_{k=0}^{\infty}$ is an orthonormal system in ${\cal H}_{R}\otimes{\cal H}_{S}$, and $\{p_{jk}\}_{j,k=0}^{\infty}$ are probabilities satisfying $\sum_{j,k=0}^{\infty}p_{jk}=1$.  
    Let $d > E$ be any integer and define $\delta_d$ as $\delta_d\coloneqq \sum_{j\geq d}\tilde{p}_j$, where $\tilde{p}_j\coloneqq \sum_{k=0}^{\infty}p_{jk}$.  We further define $\ket{\tilde{\phi}}_{ARS}$ as
    \begin{equation}
        \ket{\tilde{\phi}}_{ARS} \coloneqq (1-\delta_d)^{-1/2}\sum_{j=0}^{d-1}\sum_{k=0}^{\infty} \sqrt{p_{jk}}e^{i\theta_{jk}}\ket{j}_A\ket{\alpha_k}_{RS}.
    \end{equation}
    From the energy constraint, we have
    \begin{equation}
        \delta_d d \leq \sum_{j\geq d}\sum_{k=0}^{\infty} p_{jk} \bra{j}_A\bra{\alpha_k}_{RS}\hat{n}_A\otimes\hat{I}_{RS}\ket{j}_A\ket{\alpha_k}_{RS} \leq E,
    \end{equation}
    and thus we have $\delta_d \leq E/d <1$.  Since
    $\braket{\phi|\tilde{\phi}}=(1-\delta_d)^{1/2}$, we have
\begin{align}
    \|\ket{\phi}\!\bra{\phi}-\ket{\tilde{\phi}}\!\bra{\tilde{\phi}}\|_1 = 2\sqrt{1-|\braket{\phi|\tilde{\phi}}|^2} = 2\sqrt{\delta_d}, \label{eq:delta_d-approx}
\end{align}
where we used the well-known relation between trace distance and fidelity for pure states in the first equality.
    From the diagonalization of the $2\times 2$ matrix $\ket{\phi}\!\bra{\phi}_{ARS}-\ket{\tilde{\phi}}\!\bra{\tilde{\phi}}_{ARS}$, we have
    \begin{equation}
        \ket{\phi}\!\bra{\phi}_{ARS}-\ket{\tilde{\phi}}\!\bra{\tilde{\phi}}_{ARS} = \sqrt{\delta_d}\ket{\gamma_+}\!\bra{\gamma_+}_{ARS} - \sqrt{\delta_d}\ket{\gamma_-}\!\bra{\gamma_-}_{ARS}, \label{eq:orthogonal_decomp}
    \end{equation}
    where $\ket{\gamma_{\pm}}$ are normalized eigenvectors of positive and negative parts, respectively, with $\braket{\gamma_+|\gamma_-}=0$.  Written explicitly, these eigenvectors are
    \begin{equation}
        \ket{\gamma_{\pm}}_{ARS} = p_{\pm}\ket{\phi}_{ARS} + q_{\pm}\ket{\tilde{\phi}}_{ARS},
    \end{equation}
    with $p_{\pm}=\sqrt{(1\pm\sqrt{\delta_d})/(2\delta_d)}$ and $q_{\pm}=-\sqrt{(1\mp\sqrt{\delta_d})/(2\delta_d)}$.
    
    Now, we would like to bound the average energy of $\ket{\gamma_{\pm}}_{ARS}$ in the system $A$.
    We have
    \begin{align}
        \bra{\gamma_{\pm}} \hat{n}_A\otimes \hat{I}_{RS} \ket{\gamma_{\pm}} &= p_{\pm}^2 \bra{\phi}\hat{n}_A\otimes \hat{I}_{RS}\ket{\phi} + q_{\pm}^2 \bra{\tilde{\phi}}\hat{n}_A\otimes \hat{I}_{RS}\ket{\tilde{\phi}} + 2p_{\pm}q_{\pm} \Re{(\bra{\phi}\hat{n}_A\otimes \hat{I}_{RS}\ket{\tilde{\phi}})}\\
        &= p_{\pm}^2 \sum_{j=0}^{\infty}\tilde{p}_j j + q_{\pm}^2 (1-\delta_d)^{-1}\sum_{k=0}^{d-1} \tilde{p}_k k + 2p_{\pm}q_{\pm}(1-\delta_d)^{-1/2} \sum_{l=0}^{d-1}\tilde{p}_l l \\
        & = [p_{\pm}^2 + 2p_{\pm}q_{\pm}(1-\delta_d)^{-1/2}+q_{\pm}^2(1-\delta_d)^{-1}]T + p_{\pm}^2U, \label{eq:terms_T_U}
    \end{align}
    where $T\coloneqq \sum_{j=0}^{d-1}\tilde{p}_j j $ and $U\coloneqq \sum_{k\geq d} \tilde{p}_k k$.  
    Using the explicit form of $p_{\pm}$ and $q_{\pm}$, we have
    \begin{align}
        \eqref{eq:terms_T_U} &= [p_{\pm} + q_{\pm}(1-\delta_d)^{-1/2}]^2\, T + p_{\pm}^2 U \\
        & = \frac{1}{2\delta_d}\left[\left(\sqrt{1\pm\sqrt{\delta_d}} - \frac{1}{\sqrt{1\pm\sqrt{\delta_d}}}\right)^2 T + (1\pm\sqrt{\delta_d})U \right] \\
        &= \frac{1}{2\delta_d}\left[\frac{\delta_d}{1\pm\sqrt{\delta_d}}T + (1\pm\sqrt{\delta_d})U\right],
    \end{align}
    Applying the inequalities $T=\sum_{k=0}^{\infty}\tilde{p}_k\sum_{j=0}^{d-1}\tilde{p}_j j \leq \sum_{j=0}^{d-1}\sum_{k=0}^{\infty}\tilde{p}_j\tilde{p}_k k = (1-\delta_d)E$ and $U\leq E$ to the above, we have
    \begin{align}
        \bra{\gamma_{\pm}} \hat{n}_A\otimes \hat{I}_{RS} \ket{\gamma_{\pm}}
    &\leq
        \frac{1}{2\delta_d} [\delta_d(1\mp\sqrt{\delta_d})E + (1\pm\sqrt{\delta_d})E]
    \\
    &\leq
        \frac{1}{2\delta_d} [(1\mp\sqrt{\delta_d})E + (1\pm\sqrt{\delta_d})E] 
    = \frac{E}{\delta_d}, \label{eq:energy_bound_gamma}
    \end{align}
    where we used $\delta_d \leq 1$.
    Let us replace $\delta_d\mapsto t^2$ with $t\in(0,1)$.  Then, from Eq.~\eqref{eq:energy_bound_gamma} and $t^2 d\leq E$, we have
    \begin{align}
    \Tr _{RS}[\ket{\gamma_{\pm}}\!\bra{\gamma_{\pm}}_{ARS}] &\in
    \mathfrak{S}_{E/t^2}({\cal H}_A),
    \label{eq:conclusion_1}
    \\
    \ket{\tilde{\phi}}_{ARS}
    &\in
    (\hat{P}^{\lfloor E/t^2 \rfloor}_A\otimes \hat{I}_{RS}){\cal H}_{ARS},
    \label{eq:conclusion_2}
    \intertext{where} \hat{P}^d
    &\coloneqq
    \sum_{n=0}^{d-1}\ket{n}\!\bra{n} \qquad \text{for integer $d$}.  
    \end{align}
    Therefore, from Eqs.~\eqref{eq:delta_d-approx},~\eqref{eq:orthogonal_decomp},~\eqref{eq:conclusion_1}, and~\eqref{eq:conclusion_2}, we obtain the conclusion.
\end{proof}

Using Lemma~\ref{lem:difference_energy_constrained}, we prove Proposition~\ref{prop:difference_energy_constrained} as follows.

\begin{proof}[Proof of Proposition~\ref{prop:difference_energy_constrained}]
    Let $\ket{\phi}_{ARS}$ be a purification of $\hat{\rho}_{AR}$.  
    Using Lemma~\ref{lem:difference_energy_constrained}, we obtain a $t$-approximation $\ket{\tilde{\phi}}_{ARS}$ of $\ket{\phi}_{ARS}$.
    Then, for any Hermitian-preserving linear map $\Phi_A$, we have
    \begin{align}
        \|\Phi_A\otimes \mathrm{Id}_{RS}(\ket{\phi}\!\bra{\phi}_{ARS} - \ket{\tilde{\phi}}\!\bra{\tilde{\phi}}_{ARS})\|_1 &= t\|\Phi_A\otimes \mathrm{Id}_{RS}(\ket{\gamma_+}\!\bra{\gamma_+}_{ARS} - \ket{\gamma_-}\!\bra{\gamma_-}_{ARS})\|_1 \\
        & \leq t \|\Phi_A\otimes \mathrm{Id}_{RS}(\ket{\gamma_+}\!\bra{\gamma_+}_{ARS})\|_1 + t\|\Phi_A\otimes \mathrm{Id}_{RS}(\ket{\gamma_-}\!\bra{\gamma_-}_{ARS})\|_1 \\
        & \leq 2t \|\Phi_A \|_{\diamond}^{E/t^2}, \label{eq:renormalized_energy}
    \end{align}
    where we used Eq.~\eqref{eq:orthogonal_decomp_lemma} of  Lemma~\ref{lem:difference_energy_constrained} in the first equality, the triangle inequality in the first inequality, and Eq.~\eqref{eq:energy_bound_gamma_lemma} of  Lemma~\ref{lem:difference_energy_constrained} with Eq.~\eqref{eq:def_energy_constrained_diamond} in the last inequality.
    
    An analogous derivation can be carried out for the purification $\ket{\psi}_{ARS}$ of $\hat{\sigma}_{AR}$.
    With a $t$-approximation $\ket{\tilde{\psi}}_{ARS}$ of $\ket{\psi}_{ARS}$,
    we obtain from  Lemma~\ref{lem:difference_energy_constrained}:
    \begin{align}
        \|\Phi_A\otimes \mathrm{Id}_{RS}(\ket{\psi}\!\bra{\psi}_{ARS} - \ket{\tilde{\psi}}\!\bra{\tilde{\psi}}_{ARS})\|_1 \leq 2t \|\Phi_A \|_{\diamond}^{E/t^2},
        \label{eq:renormalized_energy_2}
    \end{align}
    as in Eq.~\eqref{eq:renormalized_energy}.

    In the following, we will analyze an upper bound of $\|\Phi_A\otimes \mathrm{Id}_R(\hat{\rho}_{AR} - \hat{\sigma}_{AR})\|_1$.
    For this analysis, we define $\hat{\rho}'_{AR}\coloneqq \Tr_S[\ket{\tilde{\phi}}\!\bra{\tilde{\phi}}_{ARS}]$ and $\hat{\sigma}'_{AR}\coloneqq \Tr_S[\ket{\tilde{\psi}}\!\bra{\tilde{\psi}}_{ARS}]$.
    Importantly, the operator $\Tr_R[\hat{\rho}'_{AR}-\hat{\sigma}'_{AR}]$ has a finite support in $\mathrm{span}\{\ket{0}_A,\ldots,\ket{\lfloor E/t^2\rfloor-1}_A\}$ on $A$, which makes our analysis possible; to analyze bounds for this operator, we introduce a set of (Hermitian and unit-norm) operators on $\mathcal{H}_A\otimes\mathcal{H}_R$ with the finite support on $A$, i.e.,
    \begin{align}
        \mathcal{T}_{E/t^2}\coloneqq\left\{\hat{X}_{AR} \in \mathcal{B}\left((\hat{P}^{\lfloor E/t^2\rfloor}_A\otimes \hat{I}_{R}){\cal H}_{AR}\right) : \hat{X}_{AR}=\hat{X}_{AR}^{\dagger}, \|\hat{X}_{AR}\|_1 = 1\right\},
    \end{align}
    where $\mathcal{B}\left((\hat{P}^{\lfloor E/t^2\rfloor}_A\otimes \hat{I}_{R}){\cal H}_{AR}\right)$ is the set of bounded operators on $(\hat{P}^{\lfloor E/t^2\rfloor}_A\otimes \hat{I}_{R}){\cal H}_{AR}$.
    Then, we have the following (with further explanation given below):
    \begin{align}
        &\mathrel{\phantom{=}}\|\Phi_A\otimes \mathrm{Id}_R(\hat{\rho}_{AR} - \hat{\sigma}_{AR})\|_1\\
        &\leq \|\Phi_A\otimes \mathrm{Id}_{R}(\hat{\rho}_{AR} - \hat{\rho}'_{AR})\|_1 + \|\Phi_A\otimes \mathrm{Id}_{R}(\hat{\sigma}_{AR} - \hat{\sigma}'_{AR})\|_1 +\|\Phi_A\otimes \mathrm{Id}_{R}(\hat{\rho}'_{AR} - \hat{\sigma}'_{AR})\|_1
        \\ 
        \begin{split}
        &\leq\|\Phi_A\otimes \mathrm{Id}_{RS}(\ket{\phi}\!\bra{\phi} - \ket{\tilde{\phi}}\!\bra{\tilde{\phi}})\|_1 + \|\Phi_A\otimes \mathrm{Id}_{R}(\ket{\psi}\!\bra{\psi} - \ket{\tilde{\psi}}\!\bra{\tilde{\psi}})\|_1
        +\sup_{\hat{X}_{AR} \in \mathcal{T}_{E/t^2}}\|\Phi_{A}\otimes \mathrm{Id}_{R}(\|\hat{\rho}'_{AR} - \hat{\sigma}'_{AR}\|_1\hat{X}_{AR})\|_1 
         \end{split}
        \\
        &\leq 4t\|\Phi_A\|_{\diamond}^{E/t^2}+ \sup_{\hat{X}_{AR} \in \mathcal{T}_{E/t^2}}\|\Phi_{A}\otimes \mathrm{Id}_{R}(\hat{X}_{AR})\|_1 \|\hat{\rho}'_{AR} - \hat{\sigma}'_{AR}\|_1, \label{eq:diamond_norm_like}
    \end{align}
    where the triangle inequality is used in the first inequality, the monotonicity of the trace distance under tracing out and the fact $\hat{\rho}'_{AR},\hat{\sigma}'_{AR}\in\mathcal{T}_{E/t^2}$ are used in the second inequality, and Eqs.~\eqref{eq:renormalized_energy} and~\eqref{eq:renormalized_energy_2} are used in the last inequality.
    Since $\Phi_A$ is a Hermetian-preserving map, due to the convexity of the norm, we can replace maximization over $\hat{X}_{AR}$ in Eq.~\eqref{eq:diamond_norm_like} with that over rank-1 projections $\{\ket{\tau}\!\bra{\tau}_{AR}\}$, where $\ket{\tau}_{AR}\in (\hat{P}^{\lfloor E/t^2\rfloor}_A\otimes \hat{I}_{R}){\cal H}_{AR}$.
    In other words, the optimization over the Hermitian operators $\hat{X}_{AR} \in \mathcal{T}_{E/t^2}$ (not necessarily quantum states) can be replaced with that over the (pure) quantum states $\ket{\tau}_{AR}\in (\hat{P}^{\lfloor E/t^2\rfloor}_A\otimes \hat{I}_{R}){\cal H}_{AR}$ with the finite support on $A$, for which the energy of $A$ is well-defined (unlike $\hat{X}_{AR}$) and bounded by
    \begin{align}
        \bra{\tau}\hat{n}_A\otimes\hat{I}_{R}\ket{\tau} \leq E/t^2.
        \label{eq:energy_bound_tau}
    \end{align}
    Therefore, we obtain
    \begin{align}
        \sup_{\hat{X}_{AR} \in \mathcal{T}_{E/t^2}}\|\Phi_{A}\otimes \mathrm{Id}_{R}(\hat{X}_{AR})\|_1 
        &\leq \sup_{\ket{\tau}\in (\hat{P}^{\lfloor E/t^2\rfloor}_A \otimes \hat{I}_{R}){\cal H}_{AR}:\braket{\tau|\tau}=1}\|\Phi_{A}\otimes \mathrm{Id}_{R}(\ket{\tau}\!\bra{\tau}_{AR})\|_1 \\
        &\leq \|\Phi_A\|_{\diamond}^{E/t^2},\label{eq:finite-dim_diamond}
    \end{align}
    where the first inequality is this replacement, and the second inequality follows from the energy bound~\eqref{eq:energy_bound_tau} and the definition~\eqref{eq:def_energy_constrained_diamond} of the energy-constrained diamond norm.
    We also have
    \begin{equation}
        \|\hat{\rho}'_{AR}- \hat{\sigma}'_{AR}\|_1 \leq \|\hat{\rho}_{AR} - \hat{\sigma}_{AR}\|_1 + \|\hat{\rho}_{AR} - \hat{\rho}'_{AR}\|_1 + \|\hat{\sigma}_{AR}- \hat{\sigma}'_{AR}\|_1\leq 2\epsilon + 4t, \label{eq:distance_approx_states}
    \end{equation}
    where the first term of the right-most expression comes from the assumption of this proposition, and the second term comes from Eq.~\eqref{eq:delta_d-approx} and the monotonicity of the trace distance under tracing out applied to both $\|\hat{\rho}_{AR} - \hat{\rho}'_{AR}\|_1$ and $\|\hat{\sigma}_{AR}- \hat{\sigma}'_{AR}\|_1$.
    From Eqs.~\eqref{eq:diamond_norm_like}, \eqref{eq:finite-dim_diamond}, and \eqref{eq:distance_approx_states}, we prove the statement of the proposition by substituting $t=\epsilon$, so that the overall coefficient is $4t + 2 \epsilon + 4t \bigr\rvert_{t\to\epsilon} = 10 \epsilon$.
\end{proof}
As an immediate corollary of the above theorem, we have the following generalization of the submultiplicativity property cited above.
\begin{corollary} \label{cor:composition}
    Let $\Phi_A$ be a Hermetian-preserving map and $\Psi_A,\tilde{\Psi}_A:\mathfrak{S}_{E_1}({\cal H}_A)\to \mathfrak{S}_{E_2}({\cal H}_A)$ be two CPTP maps such that $\|\Psi_A - \tilde{\Psi}_A\|_{\diamond}^{E_1}\leq 2\epsilon$.  Then, we have 
    \begin{equation}
        \|\Phi_A \circ (\Psi_A - \tilde{\Psi}_A)\|_{\diamond}^{E_1} \leq 10\epsilon\|\Phi_A\|_{\diamond}^{E_2/\epsilon^2}.
    \end{equation}
\end{corollary}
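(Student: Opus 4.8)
The plan is to reduce the statement directly to Proposition~\ref{prop:difference_energy_constrained} by first letting the two channels act on an arbitrary feasible input and then viewing $\Phi_A$ as a Hermitian-preserving map applied to a pair of nearby \emph{energy-constrained} density operators. Concretely, I would start from the definition of the energy-constrained diamond norm,
\[
\|\Phi_A \circ (\Psi_A - \tilde{\Psi}_A)\|_{\diamond}^{E_1} = \sup_{\hat{\rho}_{RA}:\,\hat{\rho}_A\in\mathfrak{S}_{E_1}({\cal H}_A)} \bigl\|\mathrm{Id}_R\otimes\bigl(\Phi_A\circ(\Psi_A-\tilde{\Psi}_A)\bigr)(\hat{\rho}_{RA})\bigr\|_1,
\]
with ${\cal H}_R\cong{\cal H}_A$, and fix an arbitrary $\hat{\rho}_{RA}$ satisfying the constraint $\hat{\rho}_A\in\mathfrak{S}_{E_1}({\cal H}_A)$.

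Next I would push the two channels through and name the resulting states $\hat{\rho}'_{RA}\coloneqq \mathrm{Id}_R\otimes\Psi_A(\hat{\rho}_{RA})$ and $\hat{\sigma}'_{RA}\coloneqq \mathrm{Id}_R\otimes\tilde{\Psi}_A(\hat{\rho}_{RA})$. These are genuine density operators because $\Psi_A$ and $\tilde{\Psi}_A$ are CPTP, and, crucially, their $A$-marginals satisfy $\hat{\rho}'_A=\Psi_A(\hat{\rho}_A)\in\mathfrak{S}_{E_2}({\cal H}_A)$ and $\hat{\sigma}'_A=\tilde{\Psi}_A(\hat{\rho}_A)\in\mathfrak{S}_{E_2}({\cal H}_A)$, precisely because $\hat{\rho}_A\in\mathfrak{S}_{E_1}({\cal H}_A)$ is fed into maps of type $\mathfrak{S}_{E_1}\to\mathfrak{S}_{E_2}$. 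The hypothesis $\|\Psi_A-\tilde{\Psi}_A\|_{\diamond}^{E_1}\leq 2\epsilon$ together with the definition of the energy-constrained diamond norm gives immediately $\|\hat{\rho}'_{RA}-\hat{\sigma}'_{RA}\|_1\leq 2\epsilon$.

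At this point I would invoke Proposition~\ref{prop:difference_energy_constrained} with energy bound $E_2$, reference $R$, and the pair $(\hat{\rho}'_{RA},\hat{\sigma}'_{RA})$, which yields
\[
\|\Phi_A\otimes\mathrm{Id}_R(\hat{\rho}'_{RA}-\hat{\sigma}'_{RA})\|_1 \leq 10\epsilon\,\|\Phi_A\|_{\diamond}^{E_2/\epsilon^2}.
\]
Since the left-hand side equals $\|\mathrm{Id}_R\otimes(\Phi_A\circ(\Psi_A-\tilde{\Psi}_A))(\hat{\rho}_{RA})\|_1$ by linearity and the commutativity of the tensor factors, and since the right-hand bound is independent of the chosen $\hat{\rho}_{RA}$, taking the supremum over all feasible $\hat{\rho}_{RA}$ completes the argument.

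The only genuinely delicate point, and hence where I expect the care to concentrate, is the bookkeeping of which energy constraint applies at which stage: Proposition~\ref{prop:difference_energy_constrained} must be used at level $E_2$ (the \emph{output} energy of the channels) rather than $E_1$, and this is legitimate exactly because the post-channel marginals $\hat{\rho}'_A,\hat{\sigma}'_A$ land in $\mathfrak{S}_{E_2}({\cal H}_A)$. One should also note that the reference system appearing inside $\|\Phi_A\|_{\diamond}^{E_2/\epsilon^2}$ is handled internally by the Proposition (via the purifying system used in its proof) and need not coincide with the outer $R\cong{\cal H}_A$, so no additional dilation of $R$ is required. Everything else is the direct estimate above with no further approximation.
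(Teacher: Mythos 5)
Your proof is correct and takes essentially the same route as the paper: both reduce the claim to Proposition~\ref{prop:difference_energy_constrained} by feeding a feasible input through $\Psi_A$ and $\tilde{\Psi}_A$ and applying the proposition at the output energy level $E_2$ to the resulting pair of nearby energy-constrained states. The only cosmetic difference is that the paper invokes the achievability of the supremum (via compactness of $\mathfrak{S}_{E_1}$) to work with a single optimizing state $\hat{\tau}_{AR}$, whereas you bound an arbitrary feasible input and take the supremum at the end, which is equally valid.
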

\begin{proof}
    From the achievability mentioned above, we have a state $\hat{\tau}_{AR}$ with $\hat{\tau}_A\in\mathfrak{S}_{E_1}({\cal H}_A)$ such that $\|\Phi_A \circ (\Psi_A - \tilde{\Psi}_A)\|_{\diamond}^{E_1}=\|[\Phi_A \circ (\Psi_A - \tilde{\Psi}_A)]\otimes \mathrm{Id}_R(\hat{\tau}_{AR})\|_1$.  Then, we can apply Proposition~\ref{prop:difference_energy_constrained} by substituting $\hat{\rho}_{AR}=\Psi_A\otimes \mathrm{Id}_R(\hat{\tau}_{AR})$ and $\hat{\sigma}_{AR}=\tilde{\Psi}_A\otimes \mathrm{Id}_R(\hat{\tau}_{AR})$ and using $\|\hat{\rho}_{AR} - \hat{\sigma}_{AR}\|_1 = \|(\Psi_A - \tilde{\Psi}_A)\otimes \mathrm{Id}_R(\hat{\tau}_{AR})\|_1 \leq \|\Psi_A - \tilde{\Psi}_A\|_{\diamond}^{E_1} \leq 2\epsilon$.
\end{proof}

\subsection{Noise model} \label{sec:ftqc_noise_model}
In the previous sections, we described the physical operations that implement a CV quantum circuit and introduced a distance measure between noise channels to evaluate the noise strength.  In this section, we consider a noise model for these physical operations, which then leads to the noise model for each location of the CV quantum computation.
The noise model in this paper is an independent Markovian noise model, meaning that the noise at a given location can be described as a quantum channel that is independent of any other location in the quantum circuit.  This noise model is general as long as the computational modes do not strongly interact with each other (so that there is no correlated noise in space) and do not repeatedly interact with the same environmental mode (so that there is no correlated noise in time).  These assumptions may be reasonable in propagating optical systems where the interaction between the modes is weak and they move fast.

Before moving on to the formal definition of the noise model and its noise strength, we define the following channel.
\begin{definition}[$s$-parameterized noise channel]\label{def:def_N_s}
    An $s$-parameterized noise channel ${\cal N}^s$ is defined as a noise channel whose Kraus operators are linear combinations of elements in $\{\hat{V}(z_1,z_2):(z_1,z_2)\in [-s,s)\times [-s,s)\}$ for a single-mode gate, i.e.,
\begin{equation}
    {\cal N}^s(\hat{\rho}) = \int_{\mathbb{R}^4} d^2z\, d^2z'\; {\cal E}_s(z_1,z_2,z'_1,z'_2)\hat{V}(z_1,z_2) \hat{\rho} \hat{V}(z'_1,z'_2)^{\dagger},
    \label{eq:def_N_s}
\end{equation}
where the function ${\cal E}_s(z_1,z_2,z'_1,z'_2)$ is supported only on the Cartesian product $\prod_{i=1}^4[-s,s)$.
Likewise, for a two-mode gate, the Kraus operators are linear combinations of elements in $\{\hat{V}(z_1,z_2)\otimes \hat{V}(z'_1,z'_2):(z_1,z_2)\in [-s,s)\times [-s,s), (z'_1,z'_2)\in [-s,s)\times [-s,s)\}$. 
\end{definition}
Now, we define the noise models against which we prove fault tolerance in this paper.

\begin{definition}[$(s,\epsilon)$-independent Markovian noise model for preparation]\label{def:s_eps_Markovian_prep}
    Given a constant  $E_{\rm prep}>0$, consider a collection ${\cal C}'$ of physical systems that comprise a CV quantum computer, and regard other physical systems as environments.
    A noisy physical state preparation for the GKP logical state $\ket{\overline{\psi}}\!\bra{\overline{\psi}}$ is said to obey the $(s,\epsilon)$-independent Markovian noise model if it prepares a noisy state $\hat{\rho}_{\psi}^{\rm noisy}\in\mathfrak{S}_{E_{\rm prep}}({\cal H})$, 
    independent of other physical systems in ${\cal C}'$, such that there exists a state $\hat{\rho}_{\psi}^s$ defined in Def.~\ref{def:s-parameterized_state} that satisfies $\hat{\rho}_{\psi}^s\in\mathfrak{S}_{E_{\rm prep}}({\cal H})$ and
    \begin{equation}
        \frac{1}{2}\|\hat{\rho}_{\psi}^{\rm noisy} - \hat{\rho}_{\psi}^s\|_1\leq \epsilon.
        \label{eq:noise_model_for_preparation}
    \end{equation}
    Thus, as a state preparation channel ${\cal O}[\hat{\rho}]:1\mapsto\hat{\rho}$, it should satisfy
    \begin{equation}
        \frac{1}{2}\|{\cal O}[\hat{\rho}_{\psi}^{\rm noisy}] - {\cal O}[\hat{\rho}_{\psi}^s]\|_\diamond\leq \epsilon.
    \end{equation}

\end{definition}
\begin{definition}[$(E,s,\epsilon)$-independent Markovian noise model for gate] \label{def:E_s_eps_Markovian_gate}
    Given a fixed, positive, monotonically increasing, locally bounded function $g_{\rm sup}$, consider a collection ${\cal C}'$ of physical systems that comprise a CV quantum computer, and regard other physical systems as environments.
    A noisy physical gate operation for a target physical gate ${\cal U}(\hat{\rho}) = \hat{U}\hat{\rho}\hat{U}^{\dagger}$ is said to obey the $(E,s,\epsilon)$-independent Markovian noise model if it implements the CPTP map ${\cal U}^{\rm noisy}:\mathfrak{S}_{E}({\cal H}_Q)\to\mathfrak{S}_{g_{\rm sup}(E)}({\cal H}_Q)$, independent of other physical systems in ${\cal C}'$, such that there exists a CPTP map ${\cal N}^s$ defined in Def.~\ref{def:def_N_s} that satisfies ${\cal N}^s\circ {\cal U}(\hat{\rho})\in\mathfrak{S}_{g_{\rm sup}(E)}({\cal H}_Q)$ for any $\hat{\rho}\in\mathfrak{S}_{E}({\cal H}_Q)$ and
    \begin{equation}
        \frac{1}{2}\left\|{\cal U}^{\rm noisy} - {\cal N}^s \circ {\cal U} \right\|_{\diamond}^{E} \leq \epsilon. \label{eq:noise_model_for_gate}
    \end{equation} 
\end{definition}
\begin{definition}[$(E,s,\epsilon)$-independent Markovian noise model for measurement] \label{def:E_s_eps_Markovian_meas}
    Consider a collection ${\cal C}'$ of physical systems that comprise a CV quantum computer, and regard other physical systems as environments.
    A noisy measurement for a target measurement channel ${\cal M}:\mathfrak{S}_{E}({\cal H}_Q)\to {\cal P}$, where ${\cal P}$ denotes the set of probability distribution over the binary outcome $\{0,1\}$, is said to obey the $(E,s,\epsilon)$-independent Markovian noise model if it implements a CPTP map ${\cal M}^{\rm noisy}$, independent of other physical systems in ${\cal C}'$, such that there exists a CPTP map ${\cal N}^s$ defined in Def.~\ref{def:def_N_s} that satisfies
    \begin{equation}
        \frac{1}{2}\|{\cal M}^{\rm noisy} - {\cal M} \circ {\cal N}^s\|_{\diamond}^{E} \leq \epsilon. \label{eq:noise_model_for_measurement}
    \end{equation}
\end{definition}
The meaning of a positive constant $E_{\rm prep}$ and a fixed, positive, monotonically increasing, locally bounded function $g_{\rm sup}$ in the FT-GKP circuit will be clarified in the next section.
One may consider that the above definitions of the noise strength through Eqs.~\eqref{eq:noise_model_for_preparation}, \eqref{eq:noise_model_for_gate}, and \eqref{eq:noise_model_for_measurement} are not conventional.  In fact, if we assume that our target state, gate, and measurement are $\hat{\rho}^s_{\psi}$, ${\cal N}^s\circ{\cal U}$, and ${\cal M}\circ{\cal N}^s$ rather than $\ket{\overline{\psi}}\!\bra{\overline{\psi}}$, ${\cal U}$, and ${\cal M}$, respectively, then the definition reduces to a more conventional Markovian-type noise model with the noise strength measured by the energy-constrained diamond-norm distance.  However, for later use, we add a parameter $s$ in the definition here.  The reader can always reproduce a more conventional picture by assuming that $\hat{\rho}^s_{\psi}$, ${\cal N}^s\circ{\cal U}$, and ${\cal M}\circ{\cal N}^s$ are the target state, gate, and measurement.

In the following, we give some explicit examples of the $(s,\epsilon)$-independent and $(E,s,\epsilon)$-independent Markovian noise models that frequently appear in quantum optical experiments and in theoretical analyses of optical quantum computation.  
For state preparation, we analyze how the conventional approximate GKP codes studied in Ref.~\cite{Matsuura2020} can be interpreted as $(s,\epsilon)$-independent Markovian noise. 
We consider the following standard form of the approximate GKP codeword studied in Ref.~\cite{Matsuura2020}, which is symmetric under the Fourier transform and parameterized by only one parameter $\sigma^2$:
\begin{equation}
    \ket{\overline{0}_{\sigma^2}^{\rm app}}\coloneqq  \frac{1}{\sqrt{c \sigma^2 N_{\sigma^2}}}\sum_{m\in\mathbb{Z}} e^{-4c^2\sigma^2m^2}\int dt\, e^{-\frac{t^2}{4\sigma^2}}\ket{t+2cm\sqrt{1-4\sigma^4}}_q, \label{eq:conventional_approx}
\end{equation}
where the normalization factor $N_{\sigma^2}$ is explicitly given in Ref.~\cite{Matsuura2020}.  
From Eq.~\eqref{eq:position_to_Zak}, we obtain the Zak-basis representation of $\ket{\overline{0}_{\sigma^2}^{\rm app}}$ state as
\begin{equation}
     \ket{\overline{0}_{\sigma^2}^{\rm app}} = \frac{e^{iz_1z_2/2}}{\sqrt{c^2\sigma^2 N_{\sigma^2}}} \Theta\! \begin{bmatrix}(z_1/2c \ \;0)^{\top} \\ (0\ -z_2/c)^{\top}\end{bmatrix}\!(\bm{0}, i\Omega) \ket{z_1,z_2}_Z,
\end{equation}
where $\Theta\begin{bmatrix} \vec{a} \\ \vec{b} \end{bmatrix}(\vec{z},\bm{\tau})\coloneqq \sum_{\vec{s}\in\mathbb{Z}^n}\exp[\pi i (\vec{s}+\vec{a})^{\top}\bm{\tau}(\vec{s}+\vec{a}) + 2\pi i(\vec{z}+\vec{b})^{\top}(\vec{s}+\vec{a})]$ denotes the Riemann theta function \cite{Matsuura2020}, and the ${2\times 2}$ matrix $\Omega$ is given by
\begin{equation}
    \Omega \coloneqq \frac{1}{\sigma^2}\begin{pmatrix}
        1 & -\sqrt{1-4\sigma^4} \\ 
        -\sqrt{1-4\sigma^4} & 1
    \end{pmatrix}.
\end{equation}
This state has a support all over the region $(z_1,z_2)\in[-\frac{c}{2},\frac{c}{2})\times[-\frac{c}{2},\frac{3c}{2})$, and thus has non-zero trace distance from the $s$-parameterized GKP state in Def.~\ref{def:s-parameterized_state} for any $s<c/2$.  Thus, the preparation of this type of approximate GKP state $\ket{\overline{0}_{\sigma^2}^{\rm app}}$ can be regarded as an example of $(s,\epsilon)$-independent Markovian noise for preparation.
One may expect that $\hat{\Pi}_s\ket{\overline{0}_{\sigma^2}^{\rm app}}$ (properly normalized) would be a valid $s$-parameterized GKP state and thus $(1-\bra{\overline{0}_{\sigma^2}^{\rm app}}\hat{\Pi}_s\ket{\overline{0}_{\sigma^2}^{\rm app}})^{1/2}$ would be the noise strength $\epsilon$ of the $(s,\epsilon)$-independent Markovian noise.  However, the state $\hat{\Pi}_s\ket{\overline{0}_{\sigma^2}^{\rm app}}$ (after normalization) unfortunately has infinite energy.  This is because a state whose wave function has discontinuity has an infinite energy in general. We thus need to consider a \quoted{smoothed version} of $\hat{\Pi}_s\ket{\overline{0}_{\sigma^2}^{\rm app}}$ that has a finite energy $E_{\rm prep}$, but such a finite-energy state may have larger trace distance from $\ket{\overline{0}_{\sigma^2}^{\rm app}}$ than that between $\hat{\Pi}_s\ket{\overline{0}_{\sigma^2}^{\rm app}}$ and $\ket{\overline{0}_{\sigma^2}^{\rm app}}$.  In general, there should be a trade-off relation between the trace distance and the energy $E_{\rm prep}$ of the $s$-parameterized GKP state that approximates $\ket{\overline{0}_{\sigma^2}^{\rm app}}$.  
Deriving this explicit trade-off between the noise strength $\epsilon$ and the energy $E_{\rm prep}$ is left to future work.

For the CV gate operations we listed in Sec.~\ref{sec:physical_operations}, the bound on the energy-constrained diamond-norm distance between an ideal CV gate and its conventional experimental approximation has been studied in detail \cite{Sharma2020}.  The results in Ref.~\cite{Sharma2020} thus have the direct interpretations as $(E,0,\epsilon)$-independent Markovian noise.  Importantly, when the identity gate (simply waiting or employing a delay line) is subject to independent photon loss on each mode (which is a dominant noise source in an optical CV system), the energy-constrained diamond-norm distance between the identity operation  $\mathrm{Id}$ and its noisy version, i.e., the loss channel ${\cal L}^{\eta}$ with the transmissivity (transmission probability) $\eta$ is given for an integer $E$ by~\cite{Nair2018, Sharma2020}
\begin{equation}
    \frac{1}{2}\|\mathrm{Id} - {\cal L}^{\eta}\|^E_{\diamond} \leq \sqrt{1-\eta^{E}},
\end{equation}
and thus it can be interpreted as $(E,0,\sqrt{1-\eta^E})$-independent Markovian noise.  The nonideality caused by the conventional experimental approximations of displacement and SUM gates can also be reduced to the distance between the identity and the loss channel given above~\cite{Sharma2020}.  
Another ubiquitous noise in an optical CV system is a random phase rotation.
The energy-constrained diamond-norm distance between a random phase-rotation channel and an identity channel can be computed as \cite{Winter2017}
\begin{equation}
    \frac{1}{2}\Bigl\| \mathrm{Id} - \int_{-\pi}^{\pi} d\theta\, f(\theta) {\cal R}[\theta] \Bigr\|_{\diamond}^{E} \leq \int_{-\pi}^{\pi} d\theta\, f(\theta) \sqrt[3]{4|\theta|E},
\end{equation}
where $f(\theta)$ is a probability distribution over $[-\pi,\pi)$, typically well peaked at $\theta=0$.
We have only considered the case $s=0$ up here, but there may in general be a trade-off between $s$ and $\epsilon$.  Deriving an explicit trade-off between $s$ and $\epsilon$ for the given channel may not be straightforward in general.

For measurement, experimental homodyne detectors generally have a finite resolution~$b$; i.e., the measurement outcome is discretized with a bin size $b$.  It also has a finite range~$\Gamma$; i.e., the value of $\hat{q}$ larger (resp.~smaller) than $\Gamma$ (resp.~$-\Gamma$) is rounded to $\Gamma$ (resp.~$-\Gamma$).  The binned homodyne detection for $\hat{q}$ with the bin size $b$ can be interpreted as the ideal homodyne detection followed by the binning with the size $b$, but then, this is equivalent to $\{\hat{q}\}_b$-dependent displacement before the ideal homodyne detection.  Since it is up-to-$b$ displacement noise in the $\hat{q}$ quadrature, it can be regarded as the $(E,b,0)$-independent Markovian noise for any $E$.  On the other hand, the effect of the finite range $\Gamma$ should depend on the energy $E$ of the input state since the state with large $E$ has more probability to be observed in $|\hat{q}|\geq \Gamma$.  Using the fact that $\braket{\hat{q}^2}_{\hat{\rho}}\leq 2\braket{\hat{n}}_{\hat{\rho}}+1\leq 2E+1$ for any $\hat{\rho}\in \mathfrak{S}_E({\cal H})$, we have 
\begin{equation}
    \mathrm{Pr}\bigl[|\hat{q}|> \Gamma \mid \hat{\rho}\in\mathfrak{S}_E({\cal H})\bigr] = \mathrm{Pr}\bigl[\hat{q}^2> \Gamma^2 \mid \hat{\rho}\in\mathfrak{S}_E({\cal H})\bigr] \leq \frac{\braket{\hat{q}^2}_{\hat{\rho}}}{\Gamma^2} \leq \frac{2E+1}{\Gamma^2},
\end{equation}
where we used Markov's inequality in the first inequality.  Thus, for the experimental approximation ${\cal M}_{\rm hom}^{\rm noisy}[\Gamma]$ with the range $\Gamma$ of the ideal homodyne detector ${\cal M}^{\rm ideal}_{\rm hom}$ satisfies
\begin{equation}
    \frac{1}{2}\|{\cal M}^{\rm ideal}_{\rm hom} - {\cal M}_{\rm hom}^{\rm noisy}[\Gamma]\|^{E}_{\diamond} \leq \frac{2E+1}{\Gamma^2}, \label{eq:finite_range}
\end{equation}
and can thus be regarded as belonging to an $(E,0,(2E+1)/\Gamma^2)$-independent Markovian noise model for measurement.  Since Eq.~\eqref{eq:finite_range} still holds even if we perform post-processing after both ${\cal M}^{\rm ideal}_{\rm hom}$ and ${\cal M}_{\rm hom}^{\rm noisy}[\Gamma]$, the experimentally realizable homodyne detector with resolution $b$ and a range $\Gamma$ can be regarded as $(E,b,(2E+1)/\Gamma^2)$-independent Markovian noise.

\section{General fault tolerance for the GKP code}\label{sec:general_FT}

\subsection{Construction of GKP gadgets and a fault-tolerant GKP circuit} \label{sec:gadget_construction}

In a fault-tolerant circuit, each location $C_i$ of the original circuit $C$ is replaced with the \quoted{gadget} to implement such operations in a fault-tolerant manner \cite{Gottesman2009}.
In a concatenated code, for example, each location of the circuit $C$ in the $n^{\text{th}}$ level of concatenation consists of a circuit with and $(n-1)^{\text{th}}$-level code and fault-tolerant operations with it.  Since the GKP code encodes a qubit in a CV system, each location in the original qubit circuit $C$ is, in this case, replaced with the physical CV operations listed in Sec.~\ref{sec:physical_operations}.
Explicitly, we define the fault-tolerant GKP gadget for each qubit circuit element as follows.
\begin{align}
    \text{Preparation:}\hspace{3cm}& \nonumber\\
    \begin{picture}(60,20)
    \thicklines
        \put(30,5){\oval(40,20)[l]}
        \put(30,-5){\line(0,1){20}}
        \put(15,5){\makebox(0,0)[l]{$\ket{\overline{0}}$}}
        \put(30,5){\line(1,0){20}}
    \end{picture}
    &\longrightarrow
    \begin{picture}(60,20)
    \thicklines
        \put(10,5){\makebox(0,0)[l]{$\hat{\rho}^s_0$}}
        \put(27,5){\line(1,0){20}}
    \end{picture} \label{eq:computational-basis_preparation_gadget}\\
    \begin{picture}(60,25)
    \thicklines
        \put(30,5){\oval(40,20)[l]}
        \put(30,-5){\line(0,1){20}}
        \put(13,5){\makebox(0,0)[l]{$\ket{\overline{Y}}$}}
        \put(30,5){\line(1,0){20}}
    \end{picture}
    &\longrightarrow
    \begin{picture}(60,25)
    \thicklines
        \put(10,5){\makebox(0,0)[l]{$\hat{\rho}^s_Y$}}
        \put(27,5){\line(1,0){20}}
    \end{picture} \\
    \begin{picture}(60,25)
    \thicklines
        \put(30,5){\oval(40,20)[l]}
        \put(30,-5){\line(0,1){20}}
        \put(13,5){\makebox(0,0)[l]{$\ket{\overline{\frac{\pi}{8}}}$}}
        \put(30,5){\line(1,0){20}}
    \end{picture}
    &\longrightarrow
    \begin{picture}(60,25)
    \thicklines
        \put(10,5){\makebox(0,0)[l]{$\hat{\rho}^s_{\frac{\pi}{8}}$}}
        \put(27,5){\line(1,0){20}}
    \end{picture} 
    \\
    &\nonumber\\
    \text{Gate:}\hspace{3cm}& \nonumber\\
    \begin{picture}(80,20)
    \thicklines
        \put(10,5){\line(1,0){20}}
        \put(40,5){\circle{20}}
        \put(40,5){\makebox(0,0){$\overline{Z}$}}
        \put(50,5){\line(1,0){20}}
    \end{picture}
    &\longrightarrow
    \begin{picture}(80,20)
    \thicklines
        \put(10,5){\line(1,0){20}}
        \put(30,-5){\framebox(30, 20){$\hat{V}(0,c)$}}
        \put(60,5){\line(1,0){20}}
    \end{picture}
    \\
    \begin{picture}(80,25)
    \thicklines
        \put(10,5){\line(1,0){20}}
        \put(40,5){\circle{20}}
        \put(40,5){\makebox(0,0){$\overline{X}$}}
        \put(50,5){\line(1,0){20}}
    \end{picture}
    &\longrightarrow
    \begin{picture}(80,25)
    \thicklines
        \put(10,5){\line(1,0){20}}
        \put(30,-5){\framebox(30, 20){$\hat{V}(c,0)$}}
        \put(60,5){\line(1,0){20}}
    \end{picture}
    \\
    \begin{picture}(80,25)
    \thicklines
        \put(10,5){\line(1,0){20}}
        \put(40,5){\circle{20}}
        \put(40,5){\makebox(0,0){$\overline{H}$}}
        \put(50,5){\line(1,0){20}}
    \end{picture}
    &\longrightarrow
    \begin{picture}(80,25)
    \thicklines
        \put(10,5){\line(1,0){20}}
        \put(30,-5){\framebox(20, 20){$\hat{F}$}}
        \put(50,5){\line(1,0){20}}
    \end{picture} \\
    \begin{picture}(70,30)
    \thicklines
        \put(10,15){\line(1,0){50}}
        \put(10,-5){\line(1,0){50}}
        \put(35,15){\circle*{5}}
        \put(35,-10){\line(0,1){25}}
        \put(35,-5){\circle{10}}
    \end{picture}
    &\longrightarrow
    \begin{picture}(70,30)
    \thicklines
        \put(10,15){\line(1,0){64}}
        \put(10,-5){\line(1,0){20}}
        \put(42,15){\circle*{5}}
        \put(42,1){\line(0,1){14}}
        \put(30,-11){\framebox(24,12){SUM}}
        \put(54,-5){\line(1,0){20}}
    \end{picture}
    \\
    \rule{0pt}{33pt}
    \begin{picture}(80,25)
    \thicklines
        \put(10,5){\line(1,0){20}}
        \put(40,5){\circle{20}}
        \put(40,5){\makebox(0,0){$\overline{I}$}}
        \put(50,5){\line(1,0){20}}
    \end{picture}
    &\longrightarrow
    \begin{picture}(80,25)
    \thicklines
        \put(10,5){\line(1,0){20}}
        \put(30,-5){\framebox(20, 20){$\hat{I}$}}
        \put(50,5){\line(1,0){20}}
    \end{picture} \\
    &\nonumber\\
    \text{Measurement:}\hspace{3cm}&\nonumber\\
    \begin{picture}(60,20)
    \thicklines
        \put(10,5){\line(1,0){20}}
        \put(30,-5){\line(0,1){20}}
        \put(30,5){\oval(40,20)[r]}
        \put(37,5){\makebox(0,0){$\overline{Z}$}}
    \end{picture}
    &\longrightarrow
    \begin{picture}(80,20)
    \thicklines
        \put(10,5){\line(1,0){20}}
        \put(30,-5){\line(0,1){20}}
        \put(30,5){\oval(70,20)[r]}
        \put(35,5){\makebox(20,0){$\hat{q}=s$}}
        \put(65,6){\line(1,0){10}}
        \put(65,4){\line(1,0){10}}
    \end{picture}
    \lfloor s/c \rceil \text{ mod }2
    \\
    \begin{picture}(60,25)
    \thicklines
        \put(10,5){\line(1,0){20}}
        \put(30,-5){\line(0,1){20}}
        \put(30,5){\oval(40,20)[r]}
        \put(37,5){\makebox(0,0){$\overline{X}$}}
    \end{picture}
    &\longrightarrow
    \begin{picture}(80,25)
    \thicklines
        \put(10,5){\line(1,0){20}}
        \put(30,-5){\line(0,1){20}}
        \put(30,5){\oval(70,20)[r]}
        \put(35,5){\makebox(20,0){$\hat{p}=t$}}
        \put(65,6){\line(1,0){10}}
        \put(65,4){\line(1,0){10}}
    \end{picture}
    \lfloor t/c \rceil \text{ mod }2\\
    &\nonumber
\end{align}
In the above, $\lfloor a \rceil$ denotes the rounding of $a$ defined in Eq.~\eqref{eq:rounding}.  In the following, we may omit writing the wait operations (identity gates) explicitly in the circuits unless necessary.
With these primitive gadgets, we can construct a universal gate set in Eq.~\eqref{eq:universal_gate_set}.
The logical phase gate $\overline{S}$ can be realized by the following catalytic circuit \cite{Takagi2017}.
\begin{align}
    &\begin{picture}(80,30)
        \thicklines
        \put(10,5){\line(1,0){20}}
        \put(40,5){\circle{20}}
        \put(40,5){\makebox(0,0){$\overline{S}$}}
        \put(50,5){\line(1,0){20}}
    \end{picture}
    =
    \begin{picture}(200,30)
    \thicklines
        \put(30,-5){\oval(40,20)[l]}
        \put(30,-15){\line(0,1){20}}
        \put(13,-5){\makebox(0,0)[l]{$\ket{\overline{Y}}$}}
        \put(30,15){\line(1,0){120}}
        \put(30,-5){\line(1,0){50}}
        \put(55,15){\circle*{5}}
        \put(55,-10){\line(0,1){25}}
        \put(55,-5){\circle{10}}
        \put(90,-5){\circle{20}}
        \put(90,-5){\makebox(0,0){$\overline{H}$}}
        \put(100,-5){\line(1,0){50}}
        \put(125,15){\circle*{5}}
        \put(125,-10){\line(0,1){25}}
        \put(125,-5){\circle{10}}
        \put(160,-5){\circle{20}}
        \put(160,-5){\makebox(0,0){$\overline{H}$}}
        \put(170,-5){\line(1,0){20}}
    \end{picture}\\
    &\nonumber
\end{align}
In the right-hand side above, the phase gate applies to the first qubit.  The last $\overline{H}$ gate on the second wire resets the state to $\ket{\overline{Y}}$, which can replace a $\ket{\overline{Y}}$-state preparation for later $\overline{S}$ gates.
Conventionally, the phase gate for the GKP code is realized by the shear $\exp(i\hat{q}^2/2)$, but to simplify the fault-tolerance condition stated in this section, we assume the $\ket{\overline{Y}}$-state preparation and the above catalytic circuit instead.  As mentioned later, we can use the conventional gate $\exp(i\hat{q}^2/2)$ instead of $\ket{\overline{Y}}$ preparation at the cost of a slight modification of the fault-tolerance condition.
Since all the Clifford gates are generated by the phase gate, the Hadamard gate $\overline{H}$, and the CNOT gate, we conclude that the above gadgets are sufficient to realize the Clifford operations.  Furthermore, when we combine the $\ket{\overline{\frac{\pi}{8}}}$ state with the Clifford gates, we can realize the $\overline{T}$ gate through the gate teleportation as follows \cite{Zhou2000}.
\begin{align}
    &\begin{picture}(80,30)
        \thicklines
        \put(10,5){\line(1,0){20}}
        \put(40,5){\circle{20}}
        \put(40,5){\makebox(0,0){$\overline{T}$}}
        \put(50,5){\line(1,0){20}}
    \end{picture}
    =
    \begin{picture}(200,30)
    \thicklines
        \put(30,-5){\oval(40,20)[l]}
        \put(30,-15){\line(0,1){20}}
        \put(13,-5){\makebox(0,0)[l]{$\ket{\overline{\frac{\pi}{8}}}$}}
        \put(30,15){\line(1,0){75}}
        \put(30,-5){\line(1,0){50}}
        \put(55,15){\circle*{5}}
        \put(55,-10){\line(0,1){25}}
        \put(55,-5){\circle{10}}
        \put(80,-5){\oval(40,20)[r]}
        \put(80,-15){\line(0,1){20}}
        \put(87,-5){\makebox(0,0){$\overline{Z}$}}
        \put(100,-4){\line(1,0){15}}
        \put(100,-6){\line(1,0){15}}
        \put(115,-5){\circle*{5}}
        \put(114,-5){\line(0,1){10}}
        \put(116,-5){\line(0,1){10}}
        \put(115,15){\circle{20}}
        \put(115,15){\makebox(0,0){$\overline{S}$}}
        \put(125,15){\line(1,0){20}}
    \end{picture}\label{eq:t-gate_gadget}\\
    &\nonumber
\end{align}
The gadgets in Eqs.~\eqref{eq:computational-basis_preparation_gadget}--\eqref{eq:t-gate_gadget} implement the universal gate set in Eq.~\eqref{eq:universal_gate_set}, as well as the computational-basis state preparation and measurement.

For the theory of fault tolerance, we also need to construct a GKP error-correction (EC) gadget.  There are several options for this, but here we use the Knill-type EC gadget~\cite{Knill2005, Walshe2020} for several reasons, which will be made clear later.
\begin{align}
    \begin{picture}(90,20)
    \thicklines
        \put(10,5){\line(1,0){20}}
        \put(30,-5){\framebox(30,20){EC}}
        \put(60,5){\line(1,0){20}}
    \end{picture}
    &=
    \begin{picture}(180,40)
    \thicklines
        \put(30,5){\oval(40,20)[l]}
        \put(30,-5){\line(0,1){20}}
        \put(15,5){\makebox(0,0)[l]{$\ket{\overline{0}}$}}
        \put(30,5){\line(1,0){20}}
        \put(90,30){\line(1,0){50}}
        \put(70,-20){\oval(40,20)[l]}
        \put(70,-30){\line(0,1){20}}
        \put(55,-20){\makebox(0,0)[l]{$\ket{\overline{0}}$}}
        \put(30,5){\line(1,0){20}}
        \put(70,-20){\line(1,0){35}}
        \put(60,5){\circle{20}}
        \put(60,5){\makebox(0,0){$\overline{H}$}}
        \put(70,5){\line(1,0){70}}
        \put(90,-25){\line(0,1){30}}
        \put(90,5){\circle*{5}}
        \put(90,-20){\circle{10}}
        \put(115,30){\circle*{5}}
        \put(115,5){\circle{10}}
        \put(115,0){\line(0,1){30}}
        \put(115,-20){\circle{20}}
        \put(115,-20){\makebox(0,0){$\overline{I}$}}
        \put(125,-20){\line(1,0){15}}
        \put(140,20){\line(0,1){20}}
        \put(140,30){\oval(40,20)[r]}
        \put(148,30){\makebox(0,0){$\overline{X}$}}
        \put(140,-5){\line(0,1){20}}
        \put(140,5){\oval(40,20)[r]}
        \put(148,5){\makebox(0,0){$\overline{Z}$}}
        \put(150,-20){\circle{20}}
        \put(150,-20){\makebox(0,0){$\overline{I}$}}
        \put(160,-20){\line(1,0){15}}
    \end{picture}\label{eq:knill_EC}\\
    &\nonumber \\
    &\nonumber
\end{align}
In the above, we omit the correction operations.  The correction operation can instead be taken into account by updating the Pauli frame~\cite{Chamberland2018} or by changing the successive gate or measurement.

Using these GKP gadgets, we make a precise definition of our fault-tolerant protocol.  Our protocol is constructed based on the protocol for concatenated codes in Ref.~\cite{Gottesman2009}, but we only aim at describing how the qubit-level circuit can be constructed by a fault-tolerant CV circuit and how the noise property on this qubit-level circuit is determined by the underlying CV circuit.  Thus, the qubit-level circuit may also need to be encoded by a concatenated code as in Ref.~\cite{Gottesman2009}.
Putting this in mind, we define the fault-tolerant-GKP (FT-GKP) circuit as follows.

\begin{definition}[FT-GKP circuit]
    Let $C$ be a circuit on qubits. Let $\{C_i\}_{i\in{\cal I}}$ be a set of locations of $C$, where each $C_i$ is a preparation location, a measurement location, a gate location, or a wait location.  We assume that the circuit $C$ can be divided so that every qubit is involved in exactly one location at each time step.  Two locations are considered \emph{consecutive} if they occur at adjacent time steps and they share a qubit.  Given $C$, the \emph{fault-tolerant-GKP (FT-GKP) protocol} provides a CV circuit $C'$ constructed by replacing each location $C_i$ with a GKP gadget $C'_i$ for $C_i$ and adding GKP EC gadgets between any pair of consecutive locations.  The circuit $C'$ is referred to as a \emph{FT-GKP circuit} for $C$.
\end{definition}

As stated previously, the above definition is not a whole protocol for fault tolerance; the qubit circuit $C$ should be a qubit version of an FT implementation of an original circuit $C_{\rm orig}$ representing the computation in the sense of Ref.~\cite{Gottesman2009}.  If the concatenation level of the qubit circuit $C$ is enough to correct logical errors of the FT-GKP circuit, the whole protocol will be fault-tolerant.  
Our analysis in the following will translate the noise model in the CV system into a noise model at the qubit level so that we can use the established fault-tolerant protocols at the qubit level to achieve fault tolerance.

\subsection{Fault-tolerance conditions for GKP gadgets}\label{sec:FT_conditions}

In Ref.~\cite{Gottesman2009}, they consider the equivalence class of faulty gadgets in which at most $s$ locations inside the gadget are faulty.  Then, the fault-tolerance condition requires that the gadget with $s$ faulty locations causes at most weight-$s$ errors, i.e., errors act on $s$ qubits that comprise the code.  If only this condition is satisfied, the whole protocol can be shown to be fault tolerant for any errors that occur according to the noise model.  Although these definitions have clear meaning in qubit concatenated error-correcting codes, they do not in the GKP code.  After all, the CV system should be in the bottom layer of concatenation in our protocol, so the faults at the physical level are directly caused by the physical noise of the CV systems rather than qubits.  We need to find an alternative equivalence class of faults or errors and the corresponding fault-tolerance criteria in the CV case.

The GKP code is introduced to correct displacement errors in the phase space up to $c/2$ with $c$ in Eq.~\eqref{eq:C}.  Thus, the natural parameterization of errors should be the \quoted{amount} of displacement.
To make this statement rigorous, we introduce the following family of projection operators, which is an appropriate modification of the filter operator in Ref.~\cite{Gottesman2009} for the GKP code.

\begin{definition}[The stabilizer-subsystem (SSS) $r$-filter and the ideal GKP decoder]
Let $0<r\leq c/2$.
The SSS $r$-filter $\hat{\Pi}_{r}$ is defined as
\begin{equation}
    \hat{\Pi}_{r}\coloneqq \sum_{\mu=0,1}\int_{|z_1|< r}dz_1 \int_{|z_2|< r}dz_2 \, \ket{\mu;z_1,z_2}\!\bra{\mu;z_1, z_2}_{LS}=\hat{I}_{L}\otimes \int_{|z_1|< r}dz_1 \int_{|z_2|< r}dz_2 \, \ket{z_1,z_2}\!\bra{z_1, z_2}_{S}.
    \label{eq:SSS_r-filter}
\end{equation}
The ideal GKP decoder ${\cal D}_{\rm GKP}$ is a map from a CV system to a qubit, defined as
\begin{equation}
    {\cal D}_{\rm GKP}(\hat{\rho})\coloneqq \sum_{\mu,\nu\in\{0,1\}}\int_{|z_1|< \frac{c}{2}}dz_1 \int_{|z_2|< \frac{c}{2}}dz_2 \, \ket{\mu}\!\bra{\mu;z_1, z_2}_{LS} \hat{\rho} \ket{\nu;z_1, z_2}_{LS}\!\bra{\nu}  = \Tr_{S}(\hat{\rho}).\label{eq:ideal_decoder}
\end{equation}
\end{definition}

The SSS $r$-filter is a tool to know how much the state is effectively displaced away from the \quoted{ideal GKP code space}, and the ideal GKP decoder is a tool to know the reduced state on the logical qubit of the SSS decomposition.
One can check from Eqs.~\eqref{eq:orthogonality_SSS} and \eqref{eq:completeness_SSS} that the SSS $r$-filter is a projection operator and satisfies $\hat{\Pi}_{\frac{c}{2}}=\hat{I}$.  
One should be careful, however, that $\lim_{r\to 0}\hat{\Pi}_r$ does not converge to a projection onto the \quoted{ideal GKP code space} but the zero operator with respect to the strong operator topology.
The SSS $r$-filter and the ideal GKP decoder will be represented graphically as follows.
\begin{equation*}
\begin{picture}(300,40)
\thicklines

\put(0,20){\line(1,0){10}}
\put(10,10){\framebox(5,20){}}
\put(15,20){\line(1,0){10}}
\put(18,25){\makebox(0,0)[bl]{$\scriptstyle r$}}
\put(40,12){\makebox(20,16)[l]{\text{SSS $r$-filter}}}

\put(140,20){\line(1,0){10}}
\put(150,10){\line(0,1){20}}
\put(150,30){\line(1,-1){10}}
\put(160,20){\line(-1,-1){10}}
\thinlines
\put(160,20){\line(1,0){10}}
\put(185,12){\makebox(55,16)[l]{\text{Ideal GKP decoder}}}

\end{picture}
\end{equation*}

Now, we define a faulty version of the gadgets introduced in the previous section, which are again generalizations of Ref.~\cite{Gottesman2009} for our scenario.  We start with the definition of the $s$-parameterized GKP preparation.  
This definition is the same as the one defined in the previous section, but we explicitly show the dependence on the parameter $s$ in Def.~\ref{def:s-parameterized_state}.  Alternately, one can consider the $s$-preparation as the ideal $s_1$-parameterized GKP state preparation followed by the noise map ${\cal N}^{s_2}$ defined in Def.~\ref{def:def_N_s}, but this is again an $(s_1+s_2)$-parameterized GKP state preparation as can be seen from Eq.~\eqref{eq:s-parameterized_mixed}.
\begin{definition}[$s$-preparation]\label{def:s-preparation}
The $s$-parameterized GKP preparation ($s$-preparation for short) gadget is defined as follows.
\begin{equation}
    \text{$s$-preparation:}\qquad
    \begin{picture}(60,20)
    \thicklines
        \put(30,5){\oval(50,20)[l]}
        \put(30,-5){\line(0,1){20}}
        \put(10,5){\makebox(0,0)[l]{$\ket{\overline{\psi}}$}}
        \put(30,5){\line(1,0){20}}
        \put(32,10){\makebox(0,0)[bl]{$\scriptstyle s$}}
    \end{picture}
    \longrightarrow
    \begin{picture}(60,20)
    \thicklines
        \put(10,5){\makebox(0,0)[l]{$\hat{\rho}_{\psi}^s$}}
        \put(27,5){\line(1,0){20}}
    \end{picture}
    \label{eq:s-preparation}
\end{equation}
In the above, $\psi$ denotes either $0$, $Y$, or $\frac{\pi}{8}$, and $\hat{\rho}_{\psi}^s$ is as defined in Def.~\ref{def:s-parameterized_state}.  
\end{definition}

For the $s$-parameterized GKP gate gadget, the parameter $s$ denotes the degree of displacement noise added during these operations.  
When the noise is Markovian, the right-hand side in the following definition includes general noise models as $s\to\infty$, since any CPTP map ${\cal N}_U$ that is a noisy version of the gate $\hat{U}$ can be decomposed as $({\cal N}_U\circ{\cal U}^{-1})\circ{\cal U}$, where ${\cal U}(\hat{\rho})=\hat{U}\hat{\rho} \hat{U}^{-1}$, and the channel ${\cal N}_U\circ{\cal U}^{-1}$ can be expanded by displacement operators.

\begin{definition}[$s$-gate]\label{def:s-gate}
Let us define an $s$-parameterized GKP gate gadget ($s$-gate for short) as follows.
\begin{equation}
    \text{$s$-gate:} \qquad \begin{picture}(80,20)
    \thicklines
        \put(10,5){\line(1,0){20}}
        \put(40,5){\circle{20}}
        \put(40,5){\makebox(0,0){$\overline{U}$}}
        \put(50,5){\line(1,0){20}}
        \put(51,10){\makebox(0,0)[bl]{$\scriptstyle s$}}
    \end{picture}
    \longrightarrow
    \begin{picture}(110,20)
    \thicklines
        \put(10,5){\line(1,0){20}}
        \put(30,-5){\framebox(20, 20){$\hat{U}$}}
        \put(50,5){\line(1,0){10}}
        \put(60,-5){\framebox(20, 20){${\cal N}^s$}}
        \put(80,5){\line(1,0){20}}
    \end{picture}
    \label{eq:s-gate}
\end{equation}
In the above, $\hat{U}$ is either $\hat{V}(0,c)$, $\hat{V}(c,0)$, $\hat{F}$, $\exp(-i\hat{q}_1\hat{p}_2)$, or $\hat{I}$, and ${\cal N}^s$ is defined in Def.~\ref{def:def_N_s}. 
\end{definition}

In the same way, we define the $s$-parameterized measurement gadget below.
As opposed to the $s$-gate case, the right-hand side of the definition below may not be a general noise model for the measurement even when the noise is Markovian and $s\rightarrow \infty$ since general POVM may not necessarily be represented as a noise channel followed by the ideal homodyne detection and the GKP binning.  More general POVM can instead be approximated by the POVM of the form in Eq.~\eqref{eq:s-measurement} in the energy-constrained diamond-norm distance as considered in Eq.~\eqref{eq:noise_model_for_measurement}.
\begin{definition}[$s$-measurement]\label{def:s-measurement}
The $s$-parameterized GKP measurement gadget ($s$-measurement for short) is defined as follows.
\begin{equation}
    \text{$s$-measurement:} \qquad
    \begin{picture}(70,20)
    \thicklines
        \put(10,5){\line(1,0){20}}
        \put(30,-5){\line(0,1){20}}
        \put(30,5){\oval(58,20)[r]}
        \put(33,5){\makebox(20,0){$\overline{Z}/\overline{X}$}}
        \put(60,10){\makebox(0,0)[bl]{$\scriptstyle s$}}
    \end{picture}
    \longrightarrow
    \begin{picture}(115,20)
    \thicklines
        \put(10,5){\line(1,0){20}}
        \put(30,-5){\framebox(20,20){${\cal N}^s$}}
        \put(50,5){\line(1,0){10}}
        \put(60,-5){\line(0,1){20}}
        \put(60,5){\oval(80,20)[r]}
        \put(65,5){\makebox(27,0){$\hat{q}/\hat{p}=t$}}
        \put(100,6){\line(1,0){10}}
        \put(100,4){\line(1,0){10}}
    \end{picture}
    \lfloor t/c \rceil \text{ mod }2
    \label{eq:s-measurement}
\end{equation}
\end{definition}

Finally, we need to introduce a parameterized version of the EC gadget.
The $s$-EC gadget should work as resetting the degree of displacement.  This resetting of the wave function in GKP error correction is first pointed out in Ref.~\cite{Menicucci2014} for Gaussian random displacement noise, but here we make the statement more rigorous and applicable even beyond Gaussian noise using the tools we developed so far.

\begin{definition}[$s$-EC] \label{def:s-EC}
We define an $s$-parameterized EC gadget as follows, replacing all the circuit elements in Eq.~\eqref{eq:knill_EC} with $s$-preparations, $s$-gates, and $s$-measurements:
\begin{align}
    \text{$s$-EC:} \qquad 
    \begin{picture}(90,20)
    \thicklines
        \put(10,5){\line(1,0){20}}
        \put(30,-5){\framebox(30,20){EC}}
        \put(60,5){\line(1,0){20}}
        \put(62,10){\makebox(0,0)[bl]{$\scriptstyle s$}}
    \end{picture}
    &=
    \begin{picture}(190,40)
    \thicklines
        \put(30,5){\oval(40,20)[l]}
        \put(30,-5){\line(0,1){20}}
        \put(15,5){\makebox(0,0)[l]{$\ket{\overline{0}}$}}
        \put(30,5){\line(1,0){20}}
        \put(32,10){\makebox(0,0)[bl]{$\scriptstyle s_0$}}
        \put(90,30){\line(1,0){50}}
        \put(70,-20){\oval(40,20)[l]}
        \put(70,-30){\line(0,1){20}}
        \put(55,-20){\makebox(0,0)[l]{$\ket{\overline{0}}$}}
        \put(72,-15){\makebox(0,0)[bl]{$\scriptstyle s_0$}}
        \put(30,5){\line(1,0){20}}
        \put(70,-20){\line(1,0){110}}
        \put(60,5){\circle{20}}
        \put(60,5){\makebox(0,0){$\overline{H}$}}
        \put(71,10){\makebox(0,0)[bl]{$\scriptstyle s_H$}}
        \put(70,5){\line(1,0){70}}
        \put(90,-25){\line(0,1){30}}
        \put(90,5){\circle*{5}}
        \put(93,7){\makebox(0,0)[bl]{$\scriptstyle s_{\oplus}$}}
        \put(90,-20){\circle{10}}
        \put(118,32){\makebox(0,0)[bl]{$\scriptstyle s_{\oplus}$}}
        \put(115,30){\circle*{5}}
        \put(115,5){\circle{10}}
        \put(115,0){\line(0,1){30}}
        \put(118,-18){\makebox(0,0)[bl]{$\scriptstyle s_{I}$}}
        \put(140,20){\line(0,1){20}}
        \put(140,30){\oval(40,20)[r]}
        \put(148,30){\makebox(0,0){$\overline{X}$}}
        \put(161,35){\makebox(0,0)[bl]{$\scriptstyle s_{X}$}}
        \put(140,-5){\line(0,1){20}}
        \put(140,5){\oval(40,20)[r]}
        \put(148,5){\makebox(0,0){$\overline{Z}$}}
        \put(161,10){\makebox(0,0)[bl]{$\scriptstyle s_{Z}$}}
        \put(153,-18){\makebox(0,0)[bl]{$\scriptstyle s_{I}$}}
    \end{picture},\label{eq:s-EC}\\
    &\nonumber \\
    &\nonumber
\end{align}
where the parameter $s$ is given as a function of $s_0$, $s_H$, $s_{\oplus}$, $s_X$, and $s_Z$ as
\begin{align}
    s \coloneqq 2s_0 + s_H + s_{\oplus} + \max\{ s_{\oplus} + \max\{s_X,s_Z\}, 2s_{I}\}. \label{eq:s_in_EC}
\end{align}
\end{definition}

Having defined parameterized gadgets, we can state a fault-tolerance (FT) condition for these gadgets, which will be used later.
The following proposition shows that the $s$-measurement works as if the incoming mode is decoded by an ideal GKP decoder followed by the logical qubit measurement as long as the total displacement $r+s$ is below $\frac{c}{2}$.  As the proof suggests, it is the homodyne detection that allows the $s$-measurement gadget to satisfy the FT condition.  However, other implementations of the measurement gadget are not prohibited in principle as long as the above condition is satisfied.
\begin{proposition}[FT condition for $s$-measurement] \label{prop:s-meas}
    The GKP $s$-measurement gadget defined in Def.~\ref{def:s-measurement} satisfies the following:
    \begin{trivlist}
        \item {Meas:} 
    $
    \begin{picture}(70,15)
    \thicklines
    
    \put(10,5){\line(1,0){10}}
    \put(20,-5){\framebox(5,20){}}
    \put(25,5){\line(1,0){15}}
    \put(27,10){\makebox(0,0)[bl]{$\scriptstyle r$}}
    \put(40,5){\oval(40,20)[r]}
    \put(40,-5){\line(0,1){20}}
    \put(61,10){\makebox(0,0)[bl]{$\scriptstyle s$}}
    \end{picture}
    =
    \begin{picture}(75,15)
    \thicklines
    \put(10,5){\line(1,0){10}}
    \put(20,-5){\framebox(5,20){}}
    \put(25,5){\line(1,0){15}}
    \put(27,10){\makebox(0,0)[bl]{$\scriptstyle r$}}
    \put(40,-5){\line(0,1){20}}
    \put(40,15){\line(1,-1){10}}
    \put(50,5){\line(-1,-1){10}}
    \thinlines
    \put(50,5){\line(1,0){10}}
    \put(60,5){\oval(20,20)[r]}
    \put(60,-5){\line(0,1){20}}
    \end{picture}$
    when $r+s < \frac{c}{2}$,
    \vspace{10pt}
    where the thin right half circle denotes a qubit measurement.
    \end{trivlist}
\end{proposition}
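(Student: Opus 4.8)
The plan is to pass to the stabilizer-subsystem (SSS) picture, where the binned homodyne measurement has an exact tensor-product form and the leading noise channel ${\cal N}^s$ can be shown to act trivially on the logical qubit once the input has been $r$-filtered. First I would reduce to the logical-$\overline{Z}$ case (homodyne in $\hat{q}$); the logical-$\overline{X}$ case follows verbatim after conjugating by the Fourier gate and using its factorized form $\hat{F}=\hat{H}_L\otimes\hat{F}'_S$ in Eq.~\eqref{eq:fourier_subsystem}. The starting point is the computation culminating in Eq.~\eqref{eq:GKP_Z_meas}: the identity following it shows that the POVM element for ``homodyne outcome binned to an even multiple of $\sqrt{\pi}$'' equals $\ket{\overline{0}}\!\bra{\overline{0}}_L\otimes\hat{I}_S$, and the odd-bin element equals $\ket{\overline{1}}\!\bra{\overline{1}}_L\otimes\hat{I}_S$. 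Thus, in the absence of ${\cal N}^s$, the measurement gadget literally reads out the logical qubit and discards the syndrome subsystem, which is exactly ${\cal D}_{\rm GKP}=\Tr_S$ (Eq.~\eqref{eq:ideal_decoder}) followed by a computational-basis qubit measurement. The entire content of the proposition is therefore that the noise ${\cal N}^s$ does not corrupt this readout on $r$-filtered inputs.

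The key step is a factorization lemma for displacements on the filtered subspace. Combining the Weyl composition law for $\hat{V}$ with $\ket{\mu;z_1,z_2}_{LS}=\hat{V}(z_1,z_2)\ket{\overline{\mu}}$ from Eqs.~\eqref{eq:shifted_zero}--\eqref{eq:shifted_one}, any displacement with $|z_1'|,|z_2'|<s$ satisfies
\begin{equation*}
\hat{V}(z_1',z_2')\ket{\mu;z_1,z_2}_{LS}=e^{i(z_1 z_2'-z_1'z_2)/2}\ket{\mu;z_1+z_1',z_2+z_2'}_{LS}
\end{equation*}
whenever $|z_1|,|z_2|<r$. The crucial observation is that, because $r+s<c/2$, the shifted coordinates $z_i+z_i'$ remain inside the fundamental cell $[-c/2,c/2)$, so no quasi-periodic wrap-around occurs and the logical label $\mu$ is untouched. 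This is precisely where the hypothesis $r+s<\tfrac{c}{2}$ enters, and I expect it to be the main obstacle to state cleanly: a displacement by $c$ is a logical Pauli (cf.\ the factorized forms of $\overline{X}$ and $\overline{Z}$), so one must verify that an accumulated displacement of size at most $r+s$ never reaches a neighbouring cell and masquerades as a logical flip or a $\mu$-dependent phase. Granting this, on the range of $\hat{\Pi}_r$ (Eq.~\eqref{eq:SSS_r-filter}) every such $\hat{V}(z_1',z_2')$ equals $\hat{I}_L\otimes\hat{T}(z_1',z_2')_S$, where $\hat{T}(z_1',z_2')_S$ merely relabels the syndrome coordinate with the above phase.

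Finally I would assemble the pieces. Since the operators appearing in ${\cal N}^s$ are built from displacements supported on $[-s,s)\times[-s,s)$ (Def.~\ref{def:def_N_s}), the factorization above shows that, restricted to $r$-filtered inputs $\hat{\rho}_r=\hat{\Pi}_r\hat{\rho}\hat{\Pi}_r$, the channel acts as ${\cal N}^s(\hat{\rho}_r)=(\hat{I}_L\otimes{\cal T}_S)(\hat{\rho}_r)$ for a map ${\cal T}_S$ on the syndrome subsystem; trace preservation of ${\cal N}^s$ forces ${\cal T}_S$ to be trace preserving as well. Hence the probability of logical outcome $\mu\in\{0,1\}$ on the left-hand side is
\begin{equation*}
\Tr\!\left[\bigl(\ket{\mu}\!\bra{\mu}_L\otimes\hat{I}_S\bigr)(\hat{I}_L\otimes{\cal T}_S)(\hat{\rho}_r)\right]=\Tr_L\!\left[\ket{\mu}\!\bra{\mu}_L\,\Tr_S[\hat{\rho}_r]\right],
\end{equation*}
where $\Tr_S\circ(\hat{I}_L\otimes{\cal T}_S)=\Tr_S$ was used. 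Since $\Tr_S={\cal D}_{\rm GKP}$, the right-hand side is exactly the probability of obtaining $\mu$ by first decoding $\hat{\rho}_r$ and then measuring the logical qubit in the computational basis, which is the right-hand side of the proposition. As the two sides agree outcome-by-outcome for every $r$-filtered input, the diagrammatic identity follows, completing the $\overline{Z}$ case and, by the Fourier reduction, the $\overline{X}$ case.
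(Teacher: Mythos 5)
Your proof is correct and follows essentially the same route as the paper's: both express the binned homodyne POVM as $\ket{\mu}\!\bra{\mu}_L\otimes\hat{I}_S$ in the SSS decomposition, observe that sub-$s$ displacements acting on the range of $\hat{\Pi}_r$ shift only the syndrome coordinates without wrap-around when $r+s<\frac{c}{2}$ (so the logical label is untouched), and invoke trace preservation of ${\cal N}^s$ to reduce the outcome probability to $\bra{\mu}\Tr_S[\hat{\Pi}_r\hat{\rho}\hat{\Pi}_r]\ket{\mu}$. The only cosmetic differences are that you package the displacement action as a factorization lemma instead of carrying out the explicit Kraus-integral computation, and that you treat the $\overline{X}$ case by Fourier conjugation where the paper simply repeats the same argument for $\hat{p}$-homodyne detection.
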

\begin{proof}
    We consider the case of $\overline{Z}$ measurement while the same argument holds for $\overline{X}$ measurement.  From Eq.~\eqref{eq:GKP_Z_meas}, the POVM elements of the ideal GKP $\overline{Z}$ measurement can be given by
    \begin{equation}
        \left\{\ket{0}\!\bra{0}_L\otimes\hat{I}_S, \ket{1}\!\bra{1}_L\otimes \hat{I}_S\right\}=
        \left\{\int_{-\frac{c}{2}}^{\frac{c}{2}}dz_1\int_{-\frac{c}{2}}^{\frac{c}{2}}dz_2\ket{0;z_1,z_2}\!\bra{0;z_1,z_2}_{LS},\ \int_{-\frac{c}{2}}^{\frac{c}{2}}dz_1\int_{-\frac{c}{2}}^{\frac{c}{2}}dz_2\ket{1;z_1,z_2}\!\bra{1;z_1,z_2}_Z\right\}.
    \end{equation}
    Therefore, for any density operator $\hat{\rho}$ and for any noise channel ${\cal N}^s$ whose Kraus operators are linear combinations of elements in $\{\hat{V}(z_1,z_2):(z_1,z_2)\in [-s,s)\times [-s,s)\}$, we have, for $\mu\in\{0,1\}$,
    \begin{align}
        &\Tr \left[(\ket{\mu}\!\bra{\mu}_L\otimes\hat{I}_S)\mathcal{N}^s(\hat{\Pi}_r\hat{\rho}\hat{\Pi}_r)\right] \nonumber\\
        &=\Tr \left[(\ket{\mu}\!\bra{\mu}_L\otimes\hat{I}_S)\;\mathcal{N}^s\!\left(\sum_{\nu=0,1}\iint_{|z_1|,|z_2|<r} dz_1 dz_2 \ket{\nu;z_1,z_2}\!\bra{\nu;z_1,z_2}\hat{\rho}\sum_{\nu'=0,1}\iint_{|z'_1|,|z'_2|<r} dz'_1 dz'_2 \ket{\nu';z'_1,z'_2}\!\bra{\nu';z'_1,z'_2}\right)\right] \\
        \begin{split}
        &= \Tr \biggl[(\ket{\mu}\!\bra{\mu}_L\otimes\hat{I}_S)\;\int_{\mathbb{R}^4}d^2w d^2w'\;{\cal E}_s(w_1,w_2,w'_1,w'_2) \\
        &\qquad \qquad \sum_{\nu,\nu'=0,1}\iiiint_{|z_1|,|z_2|,|z'_1|,|z'_2|<r} dz_1 dz_2 dz'_1 dz'_2\; \hat{V}(w_1,w_2)\ket{\nu;z_1,z_2}\!\bra{\nu;z_1,z_2}\hat{\rho}  \ket{\nu';z'_1,z'_2}\!\bra{\nu';z'_1,z'_2}\hat{V}(w'_1,w'_2)^{\dagger}\biggr] 
        \end{split}\\
        \begin{split}
        &= \Tr \biggl[(\ket{\mu}\!\bra{\mu}_L\otimes\hat{I}_S)\;\int_{\mathbb{R}^4}d^2w d^2w'\;{\cal E}_s(w_1,w_2,w'_1,w'_2)e^{i(w_2z_1-w_1z_2)/2-i(w'_2z'_1-w'_1z'_2)/2} \\
        &\qquad \qquad \sum_{\nu,\nu'=0,1}\iiiint_{|z_1|,|z_2|,|z'_1|,|z'_2|<r} dz_1 dz_2 dz'_1 dz'_2\; \ket{\nu;z_1+w_1,z_2+w_2}\!\bra{\nu;z_1,z_2}\hat{\rho}  \ket{\nu';z'_1,z'_2}\!\bra{\nu';z'_1+w'_1,z'_2+w'_2}\biggr] ,
        \end{split} \label{eq:intermediate}
        \end{align}
        where we used Eq.~\eqref{eq:def_N_s} in the second equality.  Recall that ${\cal E}_s(z'_1,z'_2,z''_1,z''_2)$ has support only in $\prod_{i=1}^4[-s,s)$, and thus the right-hand side of \eqref{eq:intermediate} has nonzero value only for $|z_i+w_i| \leq r+s < \frac{c}{2}$ and $|z'_i+w'_i| \leq r+s < \frac{c}{2}$.  Therefore, we have
        \begin{align}
        \begin{split}
        \eqref{eq:intermediate} &= \Tr \biggl[\int_{\mathbb{R}^4}d^2w d^2w'\;{\cal E}_s(w_1,w_2,w'_1,w'_2)e^{i(w_2z_1-w_1z_2)/2-i(w'_2z'_1-w'_1z'_2)/2} \\
        &\qquad \qquad \iiiint_{|z_1|,|z_2|,|z'_1|,|z'_2|<r} dz_1 dz_2 dz'_1 dz'_2\; \ket{\mu;z_1+w_1,z_2+w_2}\!\bra{\mu;z_1,z_2}\hat{\rho}  \ket{\mu;z'_1,z'_2}\!\bra{\mu;z'_1+w'_1,z'_2+w'_2}\biggr] 
        \end{split} \\
        \begin{split}
            &= \Tr \biggl[\int_{\mathbb{R}^4}d^2w d^2w'\;{\cal E}_s(w_1,w_2,w'_1,w'_2)\\
            &\qquad \qquad \iiiint_{|z_1|,|z_2|,|z'_1|,|z'_2|<r} dz_1 dz_2 dz'_1 dz'_2\; \hat{V}(w_1,w_2)\ket{\mu;z_1,z_2}\!\bra{\mu;z_1,z_2}\hat{\rho}  \ket{\mu;z'_1,z'_2}\!\bra{\mu;z'_1,z'_2} \hat{V}(w'_1,w'_2)^{\dagger}\biggr] 
        \end{split} \\
        &= \Tr \biggl[{\cal N}^s\left(\iiiint_{|z_1|,|z_2|,|z'_1|,|z'_2|<r} dz_1 dz_2 dz'_1 dz'_2\; \ket{\mu;z_1,z_2}\!\bra{\mu;z_1,z_2}\hat{\rho}  \ket{\mu;z'_1,z'_2}\!\bra{\mu;z'_1,z'_2} \right)\biggr] \\
        &= \Tr[\bra{\mu}_L\hat{\Pi}_r\hat{\rho}\hat{\Pi}_r \ket{\mu}_L] \\
        &= \bra{\mu}_L\Tr_S[\hat{\Pi}_r\hat{\rho}\hat{\Pi}_r]\ket{\mu}_L,
    \end{align}
    where we used the fact that ${\cal N}^s$ is trace-preserving in the second last equality.
    This proves the statement.
\end{proof}
Next, we state the FT condition for the preparation gadget.

\begin{proposition}[FT condition for $s$-preparation] \label{prop:s-preparation}
    The GKP $s$-preparation gadget defined in Def.~\ref{def:s-preparation} satisfies the following conditions.
    \begin{trivlist}
        \item {Prep A:} 
        $
        \begin{picture}(55,20)
    \thicklines
    \put(30,5){\oval(40,20)[l]}
    \put(30,-5){\line(0,1){20}}
    \put(32,10){\makebox(0,0)[bl]{$\scriptstyle s$}}
    \put(30,5){\line(1,0){20}}
    \end{picture}
        =
        \begin{picture}(75,20)
    \thicklines
    \put(30,5){\oval(40,20)[l]}
    \put(30,-5){\line(0,1){20}}
    \put(32,10){\makebox(0,0)[bl]{$\scriptstyle s$}}
    \put(30,5){\line(1,0){15}}
    \put(45,-5){\framebox(5,20){}}
    \put(50,5){\line(1,0){20}}
    \put(52,10){\makebox(0,0)[bl]{$\scriptstyle s$}}
    \end{picture}$
    when $s< \frac{c}{2}$.
    \item {Prep B:} 
    \begin{picture}(70,25)
    \thicklines
    \put(30,5){\oval(40,20)[l]}
    \put(30,-5){\line(0,1){20}}
    \put(32,10){\makebox(0,0)[bl]{$\scriptstyle s$}}
    \put(30,5){\line(1,0){15}}
    \put(45,-5){\line(0,1){20}}
    \put(45,15){\line(1,-1){10}}
    \put(55,5){\line(-1,-1){10}}
    \thinlines
    \put(55,5){\line(1,0){10}}
    \end{picture}
    = 
    \begin{picture}(45,20)
    \thinlines
    \put(20,5){\oval(20,20)[l]}
    \put(20,-5){\line(0,1){20}}
    \put(20,5){\line(1,0){15}}
    \end{picture}
    \vspace{10pt} when $s< \frac{c}{2}$, where the thin left half circle denotes a qubit state preparation.
    \end{trivlist}
\end{proposition}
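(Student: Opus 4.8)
The plan is to exploit the product structure that an $s$-parameterized GKP state acquires in the stabilizer-subsystem decomposition once $s<c/2$. As recorded in the displayed equation following Def.~\ref{def:s-parameterized_state}, for $s<c/2$ each pure component of $\hat{\rho}^s_{\psi}$ factorizes as
\begin{equation}
    \ket{\rho_i^s[\psi]} = \ket{\psi}_L \otimes \ket{g_i}_S, \qquad \ket{g_i}_S \coloneqq \int_{-\frac{c}{2}}^{\frac{c}{2}}\!dz_1\int_{-\frac{c}{2}}^{\frac{c}{2}}\!dz_2\; f_i^s(z_1,z_2)\,\ket{z_1,z_2}_S,
\end{equation}
where $f_i^s$ is supported on $[-s,s)\times[-s,s)$ and $\braket{g_i|g_i}_S = \braket{\rho_i^s[\psi]|\rho_i^s[\psi]} = 1$. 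This follows from $\hat{V}(z_1,z_2)\ket{\overline{\psi}} = \ket{\psi}_L\otimes\ket{z_1,z_2}_S$ (Eqs.~\eqref{eq:shifted_zero}--\eqref{eq:shifted_one}), which holds precisely because $|z_1|,|z_2|<s<c/2$ keeps the displacement inside the fundamental domain $[-c/2,c/2)^2$ of the syndrome subsystem, so no quasi-periodic wrap-around---which would carry the logical flip of Eqs.~\eqref{eq:quasi_period_zak_position}--\eqref{eq:quasi_period_zak_momentum}---is triggered. Both Prep~A and Prep~B then reduce to elementary manipulations of this product form.

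For Prep~B I would apply the ideal GKP decoder ${\cal D}_{\rm GKP}=\Tr_S$ of Eq.~\eqref{eq:ideal_decoder} componentwise: $\Tr_S(\ket{\psi}\!\bra{\psi}_L\otimes\ket{g_i}\!\bra{g_i}_S) = \braket{g_i|g_i}_S\,\ket{\psi}\!\bra{\psi}_L = \ket{\psi}\!\bra{\psi}_L$, using the normalization above. Summing against the mixing probabilities and using $\sum_i p_i = 1$ gives ${\cal D}_{\rm GKP}(\hat{\rho}^s_{\psi})=\ket{\psi}\!\bra{\psi}$, which is exactly the right-hand side, i.e.\ a bare qubit-state preparation of $\ket{\psi}$.

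For Prep~A I would write the SSS $s$-filter of Eq.~\eqref{eq:SSS_r-filter} as $\hat{\Pi}_s = \hat{I}_L\otimes\hat{P}_s$ with $\hat{P}_s\coloneqq\int_{|z_1|<s}dz_1\int_{|z_2|<s}dz_2\,\ket{z_1,z_2}\!\bra{z_1,z_2}_S$, and invoke the Zak-basis orthogonality of Eq.~\eqref{eq:orthogonality_SSS}. Since $\ket{g_i}_S$ is supported on $[-s,s)\times[-s,s)$, the projector $\hat{P}_s$ acts as the identity on it: the only discrepancy between the support $[-s,s)$ and the projection region $(-s,s)$ is the single point $z=-s$, a set of measure zero that does not affect the $L^2$ integrals. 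Hence $\hat{\Pi}_s\ket{\rho_i^s[\psi]}=\ket{\rho_i^s[\psi]}$ for every $i$, so that $\hat{\Pi}_s\hat{\rho}^s_{\psi}\hat{\Pi}_s=\hat{\rho}^s_{\psi}$; equivalently, appending the $s$-filter leaves the $s$-preparation unchanged, which is the assertion of Prep~A.

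The steps here are essentially bookkeeping, so there is no genuine analytic obstacle; the only points requiring care are making explicit that $s<c/2$ is exactly the condition guaranteeing the product form---equivalently, that displacements of magnitude below $c/2$ never wrap across the quasi-period and flip the logical bit---and noting that the half-open-versus-open boundary mismatch at $z=-s$ is immaterial because it is measure-zero. If one instead defined the $s$-preparation as an ideal $s_1$-preparation followed by ${\cal N}^{s_2}$ (as remarked in the discussion of Def.~\ref{def:s-preparation}), the same argument goes through verbatim, since by Eq.~\eqref{eq:s-parameterized_mixed} the result is an $(s_1+s_2)$-parameterized state whose combined displacement content still stays below $c/2$, exactly as in the support analysis used in the proof of Prop.~\ref{prop:s-meas}.
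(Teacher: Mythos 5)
Your proposal is correct and follows essentially the same route as the paper's proof: for $s<c/2$ the $s$-parameterized state factorizes in the SSS decomposition as $\ket{\psi}_L$ tensored with a syndrome-subsystem vector supported in $[-s,s)\times[-s,s)$, whence the SSS $s$-filter acts as the identity (Prep A) and the ideal GKP decoder $\Tr_S$ returns $\ket{\psi}\!\bra{\psi}$ (Prep B). The extra remarks on normalization and the measure-zero boundary are fine but not needed beyond what the paper already records.
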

\begin{proof}
    It suffices to show that the mentioned property holds for any $i$ in the mixture of Eq.~\eqref{eq:s-parameterized_mixed}.
    When $s< \frac{c}{2}$, the $s$-parameterized state defined in Eq.~\eqref{eq:s-parameterized_pure} can be rewritten as
    \begin{equation}
        \ket{\rho_i^s[\psi]}=\ket{\psi}_L\otimes \int_{-\frac{c}{2}}^{\frac{c}{2}}dz_1\int_{-\frac{c}{2}}^{\frac{c}{2}}dz_2\;f^i_s(z_1, z_2)\ket{z_1,z_2}_S.
    \end{equation}
    From this and the definition of the SSS $r$-filter in Eq.~\eqref{eq:SSS_r-filter}, we have $\hat{\Pi}_s\ket{\rho_i^s[\psi]}=\ket{\rho_i^s[\psi]}$ when $s< \frac{c}{2}$, which proves Prep A.  Prep B also follows from the same argument.
\end{proof}

We then define an FT condition for the $s$-gate, which is done similarly to the previous gadgets.
\begin{proposition}[FT condition for $s$-gate] \label{prop:s-gate}
    The GKP $s$-gate gadget defined in Def.~\ref{def:s-gate} satisfies the following conditions:
    \begin{trivlist}
        \item {Gate A:} 
        $
        \begin{picture}(95,20)
    \thicklines
    \put(10,5){\line(1,0){10}}
    \put(20,-5){\framebox(5,20){}}
    \put(25,5){\line(1,0){13}}
    \put(27,10){\makebox(0,0)[bl]{$\scriptstyle r_i$}}
    \put(50,5){\circle{24}}
    \put(50,5){\makebox(0,0){$\overline{U}$}}
    \put(62,5){\line(1,0){25}}
    \put(62,10){\makebox(0,0)[bl]{$\scriptstyle s$}}
    \end{picture}
        =
        \begin{picture}(130,20)
    \thicklines
    \put(10,5){\line(1,0){10}}
    \put(20,-5){\framebox(5,20){}}
    \put(25,5){\line(1,0){13}}
    \put(27,10){\makebox(0,0)[bl]{$\scriptstyle r_i$}}
    \put(50,5){\circle{24}}
    \put(50,5){\makebox(0,0){$\overline{U}$}}
    \put(62,5){\line(1,0){15}}
    \put(62,10){\makebox(0,0)[bl]{$\scriptstyle s$}}
    \put(77,-5){\framebox(5,20){}}
    \put(82,5){\line(1,0){33}}
    \put(84,10){\makebox(0,0)[bl]{$\scriptstyle \sum_i r_i + s$}}
    \end{picture}$
    when $\sum_i r_i + s < \frac{c}{2}$.
    \item{Gate B:}
    $\begin{picture}(110,20)
    \thicklines
    \put(10,5){\line(1,0){10}}
    \put(20,-5){\framebox(5,20){}}
    \put(25,5){\line(1,0){13}}
    \put(27,10){\makebox(0,0)[bl]{$\scriptstyle r_i$}}
    \put(50,5){\circle{24}}
    \put(50,5){\makebox(0,0){$\overline{U}$}}
    \put(62,5){\line(1,0){15}}
    \put(62,10){\makebox(0,0)[bl]{$\scriptstyle s$}}
    \put(77,-5){\line(0,1){20}}
    \put(77,15){\line(1,-1){10}}
    \put(87,5){\line(-1,-1){10}}
    \thinlines
    \put(87,5){\line(1,0){10}}
    \end{picture}
    = 
    \begin{picture}(110,20)
    \thicklines
    \put(10,5){\line(1,0){10}}
    \put(20,-5){\framebox(5,20){}}
    \put(25,5){\line(1,0){15}}
    \put(27,10){\makebox(0,0)[bl]{$\scriptstyle r_i$}}
    \put(40,-5){\line(0,1){20}}
    \put(40,15){\line(1,-1){10}}
    \put(50,5){\line(-1,-1){10}}
    \thinlines
    \put(50,5){\line(1,0){15}}
    \put(75,5){\circle{20}}
    \put(75,5){\makebox(0,0){$U$}}
    \put(85,5){\line(1,0){15}}
    \end{picture}
    $ \vspace{10pt} when $\sum_i r_i + s < \frac{c}{2}$, where $i$ runs over all the modes that this gate gadget acts on, and the thin circle denotes a qubit unitary.
    \end{trivlist}
\end{proposition}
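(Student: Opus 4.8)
The plan is to verify the two conditions gate-by-gate using the explicit stabilizer-subsystem representations of the allowed unitaries, following the template of the measurement proof (Proposition~\ref{prop:s-meas}). The essential mechanism in every case is that each gadget unitary transforms the syndrome support in a way that at worst \emph{adds} the input displacement radii, and never amplifies a single error multiplicatively; so as long as the total $\sum_i r_i + s$ stays below the correction threshold $c/2$, no wraparound of the fundamental domain $[-c/2,c/2)$ occurs and the logical qubit is acted on by exactly the intended logical gate.

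First I would dispose of the single-mode gates $\hat{U}\in\{\hat{V}(0,c),\hat{V}(c,0),\hat{F},\hat{I}\}$. For these, Eqs.~\eqref{eq:logical_X}, \eqref{eq:logical_Z}, \eqref{eq:fourier_subsystem}, and \eqref{eq:rotation_SSS} give a clean tensor-product form $\hat{U}=\hat{W}_L\otimes \hat{U}'_S$, where $\hat{W}_L$ is the corresponding qubit Pauli/Hadamard/identity and $\hat{U}'_S$ is support-preserving on the syndrome subsystem (a pure phase $e^{\mp ci\hat{z}_{1,2}}$ for the Paulis, the coordinate swap $\ket{z_1,z_2}_S\mapsto\ket{-z_2,z_1}_S$ for $\hat{F}$, the identity for $\hat{I}$). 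Hence an $r$-filtered input, supported on $|z_1|,|z_2|<r$, is mapped to a state still supported on $|z_1|,|z_2|<r$ in the syndrome subsystem, with the logical qubit transformed by $\hat{W}_L$. Next I would treat the SUM gate, the only gadget that entangles the two subsystems: using Eq.~\eqref{eq:CX}, an input with mode-$j$ support $|z_1|,|z_2|<r_j$ and mode-$k$ support $|z'_1|,|z'_2|<r_k$ is sent to mode-$j$ coordinates $(z_1,z_2-z'_2)$ and mode-$k$ coordinates $(z'_1+z_1,z'_2)$. Since $|z_2-z'_2|,|z'_1+z_1|<r_j+r_k=\sum_i r_i$, each output mode is supported within $|z|<\sum_i r_i$ in both quadratures, while the logical qubits undergo $\mu\mapsto\mu$, $\nu\mapsto\mu\oplus\nu$, i.e.\ exactly logical CNOT.

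In both cases I would then append the noise channel ${\cal N}^s$. Its Kraus operators being linear combinations of displacements $\hat{V}(w_1,w_2)$ with $|w_1|,|w_2|<s$, each such displacement shifts the syndrome coordinates $z_i\mapsto z_i+w_i$; because the pre-noise support obeys $|z_i|<\sum_i r_i$ and $\sum_i r_i+s<c/2$, every shifted coordinate satisfies $|z_i+w_i|<\sum_i r_i+s<c/2$ and therefore stays inside the fundamental domain. Consequently, exactly as in the computation in the proof of Proposition~\ref{prop:s-meas} (including the off-diagonal coherences $\ket{\cdot}\!\bra{\cdot}$), each displacement acts as $\hat{I}_L\otimes(\text{shift on }S)$ with no logical Pauli flip, and the spurious quasi-periodicity phases of Eqs.~\eqref{eq:quasi_period_zak_position}--\eqref{eq:quasi_period_zak_momentum} are never triggered. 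This immediately gives Gate A: the final state is supported on $|z_i|<\sum_i r_i+s$ in every mode, hence invariant under the $(\sum_i r_i+s)$-filter. Tracing out the syndrome subsystem then yields Gate B, since the logical qubit has been acted on by precisely $\hat{W}_L$ (or CNOT) and the syndrome-only noise disappears under ${\cal D}_{\rm GKP}=\Tr_S$.

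The main obstacle I anticipate is the bookkeeping for the SUM gate together with the off-diagonal density-matrix terms: one must confirm that error spreading between the two modes is purely additive (bounded by $\sum_i r_i$ on each output mode) and that the quasi-periodic phase factors are never activated when $\sum_i r_i+s<c/2$, so that the logical action remains exactly CNOT rather than CNOT dressed by a stray Pauli. The single-mode gates and the final noise step are routine once the threshold condition $\sum_i r_i+s<c/2$ is in place.
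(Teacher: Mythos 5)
Your proposal is correct and follows essentially the same route as the paper's proof: use the stabilizer-subsystem representations of $\hat{V}(c,0)$, $\hat{V}(0,c)$, $\hat{F}$, $\hat{I}$, and the SUM gate to show the syndrome support grows at most additively to $\sum_i r_i$, then observe that the up-to-$s$ displacements in ${\cal N}^s$ cannot trigger the quasi-periodic wraparound when $\sum_i r_i + s < \frac{c}{2}$, yielding Gate A, with Gate B following by tracing out the decoupled syndrome subsystem. The only cosmetic difference is that the paper phrases the same argument as operator inequalities on the filter projectors (e.g.\ $\hat{U}\hat{\Pi}_{r_j}\otimes\hat{\Pi}_{r_k}\hat{U}^{\dagger}\leq\hat{\Pi}_{r_j+r_k}\otimes\hat{\Pi}_{r_j+r_k}$ together with $\hat{P}\hat{Q}=\hat{Q}$ for $\hat{P}\geq\hat{Q}$), whereas you track wave-function supports directly.
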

\begin{proof}
    We only need to check this condition for our gate set $\overline{Z}$, $\overline{X}$, $\overline{H}$, and CNOT.
    From Eqs.~\eqref{eq:logical_X}--\eqref{eq:rotation_SSS}, we have that when $r<\frac{c}{2}$,
    \begin{align}
        \hat{V}(0,c) \hat{\Pi}_r \hat{V}(0,c)^{\dagger} &= \hat{\Pi}_r,\\
        \hat{V}(c,0) \hat{\Pi}_r \hat{V}(c,0)^{\dagger} &= \hat{\Pi}_r, \\
        \hat{F} \hat{\Pi}_r \hat{F}^{\dagger} &= \hat{\Pi}_r,
    \end{align}
    where we used that the SSS $r$-filter is invariant under sign changes of $z_1$ and $z_2$ inside the integral.  For the SUM gate, we have from Eq.~\eqref{eq:CX} that when $r_j+r_k<\frac{c}{2}$,
    \begin{align}
        \begin{split}
        \exp(-i\hat{q}_j\hat{p}_k)(\hat{\Pi}_{r_j}\otimes\hat{\Pi}_{r_k})\exp(i\hat{q}_j\hat{p}_k) &= \int_{|z_1|<r_j} dz_1 \int_{|z_2|<r_j}dz_2 \int_{|z'_1|<r_k}dz'_1 \int_{|z'_2|<r_k}dz'_2\, (\hat{I}_L\otimes \ket{z_1,z_2-z'_2}\!\bra{z_1,z_2-z'_2}_{S})_j  \\
        &\hspace{7cm} \otimes (\hat{I}_L\otimes \ket{z'_1+z_1,z'_2}\!\bra{z'_1+z_1,z'_2}_{S})_k 
        \end{split}\\
        \begin{split}
        &\leq \int_{|z_1|<r_j} dz_1 \int_{|z_2|<r_j+r_k}dz_2 \int_{|z'_1|<r_k+r_j}dz'_1 \int_{|z'_2|<r_k}dz'_2\, (\hat{I}_L\otimes \ket{z_1,z_2}\!\bra{z_1,z_2}_{S})_j \\
        &\hspace{7cm} \otimes (\hat{I}_L\otimes \ket{z'_1,z'_2}\!\bra{z'_1,z'_2}_{S})_k
        \end{split} \\
        &\leq \hat{\Pi}_{r_j+r_k}\otimes\hat{\Pi}_{r_k+r_j}.
    \end{align}
    Therefore, when $r_j+r_k<\frac{c}{2}$, we have
    \begin{align}
        \exp(-i\hat{q}_j\hat{p}_k)(\hat{\Pi}_{r_j}\otimes\hat{\Pi}_{r_k})&= \exp(-i\hat{q}_j\hat{p}_k)(\hat{\Pi}_{r_j}\otimes\hat{\Pi}_{r_k})\exp(i\hat{q}_j\hat{p}_k) \exp(-i\hat{q}_j\hat{p}_k)(\hat{\Pi}_{r_j}\otimes\hat{\Pi}_{r_k})\\& = \hat{\Pi}_{r_j+r_k}\otimes\hat{\Pi}_{r_k+r_j} \exp(-i\hat{q}_j\hat{p}_k)(\hat{\Pi}_{r_j}\otimes\hat{\Pi}_{r_k}),
    \end{align}
    where we used the fact that $\hat{P}\hat{Q}=\hat{Q}$ when $\hat{P}\geq \hat{Q}$ for two projection operators $\hat{P}$ and $\hat{Q}$.
    With these and the definition of the $s$-gate in Eq.~\eqref{eq:s-gate}, the condition Gate A holds for all gate gadgets as long as $\sum_{j}r_j+s<\frac{c}{2}$.  The condition Gate B also holds from the above and Eqs.~\eqref{eq:logical_X}, \eqref{eq:logical_Z}, \eqref{eq:fourier_subsystem}, and \eqref{eq:CX}.
\end{proof}

It is the choice of the physical CV gates implementing these gadgets that allows them to satisfy the FT conditions.  In other words, other choices of the CV physical gates to implement the gate gadgets are allowed as long as the constructed gate gadgets satisfy the above FT conditions.
Here, we can explain why we do not use the conventional CV shear gate $\exp(i\hat{q}^2/2)$ to implement the GKP logical phase gate: the GKP phase gate composed of the CV shear gate will satisfy neither Gate A nor Gate B.  To include $\exp(i\hat{q}^2/2)$ as a fault-tolerant gadget, we need to modify the FT conditions so that $\exp(i\hat{q}^2/2)$ can satisfy them.  
Concretely, we should modify conditions such that the last SSS $r$-filter $\hat{\Pi}_{r+s}$ in the right-hand side of Gate A is replaced with $\hat{\Pi}_{2r+s}$ for a single-mode gate, and both Gate A and Gate B equations only hold when $2r+s<\frac{c}{2}$ for a single-mode gate.  To avoid the complication, however, we stick to our gate set introduced in Sec.~\ref{sec:physical_operations} hereafter.

We also remark that a non-Gaussian unitary gate does not satisfy the above criteria in general.  This is because non-Gaussian unitary transforms a displacement to a Gaussian unitary, which then requires infinitely many displacement operators to approximate it with a linear combination.  In terms of the stabilizer subsystem decomposition, this means that even a \quoted{wave function} on the size-$\sqrt{\pi}$ Cartesian square of the syndrome subsystem that is supported in a small neighborhood of the origin might be spread across the whole region in the Cartesian square by the action of non-Gaussian unitary, which may not be correctable.  Perhaps, there may exist a class of non-Gaussian gates whose effects are limited enough to preserve fault tolerance, but we leave this to future work.
To avoid these complications, we implement a GKP non-Clifford gate via teleportation of a magic state \cite{Gottesman2001, Baragiola2019, Yamasaki2020} in our analysis.  The FT condition for a non-Clifford gate is therefore reduced to the FT condition for a magic state preparation explained previously.

Finally, the FT condition for the $s$-EC gadget is given in the following.
\begin{proposition}[FT condition for $s$-EC]\label{prop:s-EC}
    The GKP $s$-EC gadget satisfies the following conditions:
    \begin{trivlist}
        \item {EC A:} 
        $
        \begin{picture}(80,20)
    \thicklines
    \put(10,5){\line(1,0){15}}
    \put(25,-5){\framebox(30,20){\text{EC}}}
    \put(55,5){\line(1,0){15}}
    \put(57,10){\makebox(0,0)[bl]{$\scriptstyle s$}}
    \end{picture}
        =
        \begin{picture}(100,20)
    \thicklines
    \put(10,5){\line(1,0){15}}
    \put(25,-5){\framebox(30,20){\text{EC}}}
    \put(55,5){\line(1,0){15}}
    \put(57,10){\makebox(0,0)[bl]{$\scriptstyle s$}}
    \put(70,-5){\framebox(5,20){}}
    \put(75,5){\line(1,0){15}}
    \put(77,10){\makebox(0,0)[bl]{$\scriptstyle s$}}
    \end{picture}$
    when $s< \frac{c}{2}$.
    \item{EC B:}
    $\begin{picture}(115,20)
    \thicklines
    \put(10,5){\line(1,0){10}}
    \put(20,-5){\framebox(5,20){}}
    \put(25,5){\line(1,0){15}}
    \put(27,10){\makebox(0,0)[bl]{$\scriptstyle r$}}
    \put(40,-5){\framebox(30,20){\text{EC}}}
    \put(70,5){\line(1,0){15}}
    \put(72,10){\makebox(0,0)[bl]{$\scriptstyle s$}}
    \put(85,-5){\line(0,1){20}}
    \put(85,15){\line(1,-1){10}}
    \put(95,5){\line(-1,-1){10}}
    \thinlines
    \put(95,5){\line(1,0){10}}
    \end{picture}
    = 
    \begin{picture}(70,20)
    \thicklines
    \put(10,5){\line(1,0){10}}
    \put(20,-5){\framebox(5,20){}}
    \put(25,5){\line(1,0){15}}
    \put(27,10){\makebox(0,0)[bl]{$\scriptstyle r$}}
    \put(40,-5){\line(0,1){20}}
    \put(40,15){\line(1,-1){10}}
    \put(50,5){\line(-1,-1){10}}
    \thinlines
    \put(50,5){\line(1,0){10}}
    \end{picture}
    $ when $r + s< \frac{c}{2}$.
    \vspace{10pt}
    \end{trivlist}
\end{proposition}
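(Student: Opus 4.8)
The plan is to reduce both claims to the FT conditions already established for the primitive gadgets, since by Def.~\ref{def:s-EC} the $s$-EC of Eq.~\eqref{eq:s-EC} is just a fixed circuit built from two $s_0$-preparations, one $s_H$-gate, two $s_\oplus$-SUM gates, two $s_I$-wait gates, and the $s_X$- and $s_Z$-measurements. I would therefore compose Props.~\ref{prop:s-preparation}, \ref{prop:s-gate} and \ref{prop:s-meas} along this circuit. Throughout, the omitted feed-forward Pauli corrections may be ignored: the GKP logical Paulis are the exact displacements $\hat{V}(c,0)$ and $\hat{V}(0,c)$, which (as in the proof of Prop.~\ref{prop:s-gate}) commute with every SSS $r$-filter and are decoded to logical Paulis, so they are absorbed into the Pauli frame without enlarging any filter or altering the decoded logical map.

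For \textbf{EC A} I would track the SSS $r$-filter along the wire carrying the freshly prepared output ancilla (the lower $\ket{\overline{0}}$ in Eq.~\eqref{eq:s-EC}). By Prep~A it starts $\hat{\Pi}_{s_0}$-invariant; the Bell-pair SUM ($s_\oplus$) has as its two inputs this ancilla and the upper ancilla, the latter being $\hat{\Pi}_{s_0+s_H}$-invariant after its own preparation and Hadamard (Prep~A followed by Gate~A), so Gate~A makes the output ancilla $\hat{\Pi}_{2s_0+s_H+s_\oplus}$-invariant. Crucially, no later two-mode gate couples this wire to the input data mode, so its filter is independent of the (possibly large) input displacement and is frozen apart from the two trailing $\overline{I}$ ($s_I$) gates, which by Gate~A add at most $2s_I$; this is the precise sense in which Knill-type EC refreshes the state. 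Hence the output is $\hat{\Pi}_{2s_0+s_H+s_\oplus+2s_I}$-invariant, and since the filters are nested and $2s_0+s_H+s_\oplus+2s_I\le s$ with $s$ as in Eq.~\eqref{eq:s_in_EC}, it is a fortiori $\hat{\Pi}_s$-invariant, proving EC~A whenever $s<c/2$.

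For \textbf{EC B} I would propagate an $r$-filtered input through the teleportation using Gate~B, Meas and Prep~B: the two ancillas decode to logical $\ket{0}$ (Prep~B), the two SUM gates decode to logical CNOTs on the syndrome-traced qubits (Gate~B), and each homodyne-plus-binning measurement decodes to the ideal logical $\overline{X}$ or $\overline{Z}$ measurement (Meas, via Eq.~\eqref{eq:GKP_Z_meas}). The gadget thus collapses to the standard logical Bell-measurement teleportation circuit on the decoded qubits, whose net action—after the Pauli-frame correction—is the logical identity; composing with the ideal GKP decoder of Eq.~\eqref{eq:ideal_decoder} then yields the qubit identity map, which is EC~B. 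The decisive bookkeeping is to check that the hypothesis $r+s<c/2$ is exactly strong enough to license every application of Meas and Gate~B: tracking the worst-case syndrome spread through the explicit SUM action of Eq.~\eqref{eq:CX}, the data mode and the upper ancilla reach their measurements carrying accumulated displacement at most $r+2s_0+s_H+2s_\oplus+\max\{s_X,s_Z\}=r+s$, which stays inside the correctable region precisely because the $\max$ in Eq.~\eqref{eq:s_in_EC} dominates the $s_\oplus+\max\{s_X,s_Z\}$ branch.

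The main obstacle I expect is this displacement accounting: one must verify, using Eq.~\eqref{eq:CX}, that the two competing accumulation paths—to the output through the $\overline{I}$ gates ($2s_I$) and to the Bell measurements through the second SUM and the measurement noise ($s_\oplus+\max\{s_X,s_Z\}$)—are simultaneously captured by the single parameter $s$ of Eq.~\eqref{eq:s_in_EC}, and that the teleportation identity genuinely holds at the SSS level once the outcome-dependent Pauli frame is included. The remaining steps are routine compositions of the already-proven gadget conditions.
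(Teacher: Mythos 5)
Your proposal is correct and follows essentially the same route as the paper's proof: both propagate the SSS filters through the Knill circuit via Prep~A/B, Gate~A/B, and Meas, isolate the output ancilla's filter budget $2s_0+s_H+s_\oplus+2s_I$ from the data-mode branch $r+2s_0+s_H+2s_\oplus+\max\{s_X,s_Z\}$, observe that both are covered by the $\max$ in Eq.~\eqref{eq:s_in_EC}, and collapse EC~B to qubit teleportation with the outcome-dependent Pauli absorbed into the frame. The displacement bookkeeping you flag as the main obstacle is exactly what the paper's chain of diagrammatic equalities carries out, with the same numbers.
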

\begin{proof}
    Due to our construction of the $s$-EC gadget in Eq.~\eqref{eq:s-EC} and Proposition~\ref{prop:s-meas}--\ref{prop:s-gate}, we have the following chain of equalities when $2s_0+s_H + s_{\oplus} + 2s_I < \frac{c}{2}$.
    \begin{align}
        &\begin{picture}(260,40)
    \thicklines
        \put(90,30){\line(1,0){50}}
        \put(30,5){\oval(40,20)[l]}
        \put(30,-5){\line(0,1){20}}
        \put(15,5){\makebox(0,0)[l]{$\ket{\overline{0}}$}}
        \put(32,10){\makebox(0,0)[bl]{$\scriptstyle s_0$}}
        \put(70,-20){\oval(40,20)[l]}
        \put(70,-30){\line(0,1){20}}
        \put(55,-20){\makebox(0,0)[l]{$\ket{\overline{0}}$}}
        \put(72,-15){\makebox(0,0)[bl]{$\scriptstyle s_0$}}
        \put(30,5){\line(1,0){20}}
        \put(70,-20){\line(1,0){110}}
        \put(60,5){\circle{20}}
        \put(60,5){\makebox(0,0){$\overline{H}$}}
        \put(71,10){\makebox(0,0)[bl]{$\scriptstyle s_H$}}
        \put(70,5){\line(1,0){70}}
        \put(90,-25){\line(0,1){30}}
        \put(90,5){\circle*{5}}
        \put(93,7){\makebox(0,0)[bl]{$\scriptstyle s_{\oplus}$}}
        \put(117,-18){\makebox(0,0)[bl]{$\scriptstyle s_{I}$}}
        \put(90,-20){\circle{10}}
        \put(118,32){\makebox(0,0)[bl]{$\scriptstyle s_{\oplus}$}}
        \put(115,30){\circle*{5}}
        \put(115,5){\circle{10}}
        \put(115,0){\line(0,1){30}}
        \put(140,20){\line(0,1){20}}
        \put(140,30){\oval(40,20)[r]}
        \put(148,30){\makebox(0,0){$\overline{X}$}}
        \put(161,35){\makebox(0,0)[bl]{$\scriptstyle s_{X}$}}
        \put(140,-5){\line(0,1){20}}
        \put(140,5){\oval(40,20)[r]}
        \put(148,5){\makebox(0,0){$\overline{Z}$}}
        \put(161,10){\makebox(0,0)[bl]{$\scriptstyle s_{Z}$}}
        \put(152,-18){\makebox(0,0)[bl]{$\scriptstyle s_{I}$}}
    \end{picture}\\
    & =
    \begin{picture}(260,80)
    \thicklines
        \put(105,30){\line(1,0){50}}
        \put(30,5){\oval(40,20)[l]}
        \put(30,-5){\line(0,1){20}}
        \put(15,5){\makebox(0,0)[l]{$\ket{\overline{0}}$}}
        \put(32,10){\makebox(0,0)[bl]{$\scriptstyle s_0$}}
        \put(70,-20){\oval(40,20)[l]}
        \put(70,-30){\line(0,1){20}}
        \put(55,-20){\makebox(0,0)[l]{$\ket{\overline{0}}$}}
        \put(72,-15){\makebox(0,0)[bl]{$\scriptstyle s_0$}}
        \put(30,5){\line(1,0){15}}
        \put(45,-5){\framebox(5,20)}
        \put(50,5){\line(1,0){15}}
        \put(52,10){\makebox(0,0)[bl]{$\scriptstyle s_0$}}
        \put(70,-20){\line(1,0){15}}
        \put(85,-30){\framebox(5,20)}
        \put(92,-15){\makebox(0,0)[bl]{$\scriptstyle s_0$}}
        \put(90,-20){\line(1,0){105}}
        \put(75,5){\circle{20}}
        \put(75,5){\makebox(0,0){$\overline{H}$}}
        \put(86,10){\makebox(0,0)[bl]{$\scriptstyle s_H$}}
        \put(85,5){\line(1,0){70}}
        \put(105,-25){\line(0,1){30}}
        \put(105,5){\circle*{5}}
        \put(108,7){\makebox(0,0)[bl]{$\scriptstyle s_{\oplus}$}}
        \put(105,-20){\circle{10}}
        \put(133,32){\makebox(0,0)[bl]{$\scriptstyle s_{\oplus}$}}
        \put(132,-18){\makebox(0,0)[bl]{$\scriptstyle s_{I}$}}
        \put(130,30){\circle*{5}}
        \put(130,5){\circle{10}}
        \put(130,0){\line(0,1){30}}
        \put(155,20){\line(0,1){20}}
        \put(155,30){\oval(40,20)[r]}
        \put(163,30){\makebox(0,0){$\overline{X}$}}
        \put(176,35){\makebox(0,0)[bl]{$\scriptstyle s_{X}$}}
        \put(155,-5){\line(0,1){20}}
        \put(155,5){\oval(40,20)[r]}
        \put(163,5){\makebox(0,0){$\overline{Z}$}}
        \put(176,10){\makebox(0,0)[bl]{$\scriptstyle s_{Z}$}}
        \put(167,-18){\makebox(0,0)[bl]{$\scriptstyle s_{I}$}}
    \end{picture}\\
    & =
    \begin{picture}(260,80)
    \thicklines
        \put(135,30){\line(1,0){50}}
        \put(30,5){\oval(40,20)[l]}
        \put(30,-5){\line(0,1){20}}
        \put(15,5){\makebox(0,0)[l]{$\ket{\overline{0}}$}}
        \put(32,10){\makebox(0,0)[bl]{$\scriptstyle s_0$}}
        \put(85,-20){\oval(40,20)[l]}
        \put(85,-30){\line(0,1){20}}
        \put(70,-20){\makebox(0,0)[l]{$\ket{\overline{0}}$}}
        \put(87,-15){\makebox(0,0)[bl]{$\scriptstyle s_0$}}
        \put(30,5){\line(1,0){15}}
        \put(85,-20){\line(1,0){15}}
        \put(45,-5){\framebox(5,20)}
        \put(50,5){\line(1,0){15}}
        \put(52,10){\makebox(0,0)[bl]{$\scriptstyle s_0$}}
        \put(100,-30){\framebox(5,20)}
        \put(107,-15){\makebox(0,0)[bl]{$\scriptstyle s_0$}}
        \put(105,-20){\line(1,0){120}}
        \put(75,5){\circle{20}}
        \put(75,5){\makebox(0,0){$\overline{H}$}}
        \put(86,10){\makebox(0,0)[bl]{$\scriptstyle s_H$}}
        \put(85,5){\line(1,0){15}}
        \put(100,-5){\framebox(5,20)}
        \put(107,10){\makebox(0,0)[bl]{$\scriptstyle s_0+s_H$}}
        \put(105,5){\line(1,0){80}}
        \put(135,-25){\line(0,1){30}}
        \put(135,5){\circle*{5}}
        \put(138,7){\makebox(0,0)[bl]{$\scriptstyle s_{\oplus}$}}
        \put(135,-20){\circle{10}}
        \put(163,32){\makebox(0,0)[bl]{$\scriptstyle s_{\oplus}$}}
        \put(160,30){\circle*{5}}
        \put(162,-18){\makebox(0,0)[bl]{$\scriptstyle s_{I}$}}
        \put(160,5){\circle{10}}
        \put(160,0){\line(0,1){30}}
        \put(185,20){\line(0,1){20}}
        \put(185,30){\oval(40,20)[r]}
        \put(193,30){\makebox(0,0){$\overline{X}$}}
        \put(206,35){\makebox(0,0)[bl]{$\scriptstyle s_{X}$}}
        \put(185,-5){\line(0,1){20}}
        \put(185,5){\oval(40,20)[r]}
        \put(193,5){\makebox(0,0){$\overline{Z}$}}
        \put(206,10){\makebox(0,0)[bl]{$\scriptstyle s_{Z}$}}
        \put(197,-18){\makebox(0,0)[bl]{$\scriptstyle s_{I}$}}
    \end{picture}\\
        & =
    \begin{picture}(270,80)
    \thicklines
        \put(185,30){\line(1,0){50}}
        \put(30,5){\oval(40,20)[l]}
        \put(30,-5){\line(0,1){20}}
        \put(15,5){\makebox(0,0)[l]{$\ket{\overline{0}}$}}
        \put(32,10){\makebox(0,0)[bl]{$\scriptstyle s_0$}}
        \put(85,-20){\oval(40,20)[l]}
        \put(85,-30){\line(0,1){20}}
        \put(70,-20){\makebox(0,0)[l]{$\ket{\overline{0}}$}}
        \put(87,-15){\makebox(0,0)[bl]{$\scriptstyle s_0$}}
        \put(30,5){\line(1,0){15}}
        \put(85,-20){\line(1,0){15}}
        \put(30,5){\line(1,0){15}}
        \put(85,-20){\line(1,0){15}}
        \put(45,-5){\framebox(5,20)}
        \put(50,5){\line(1,0){15}}
        \put(52,10){\makebox(0,0)[bl]{$\scriptstyle s_0$}}
        \put(100,-30){\framebox(5,20)}
        \put(107,-15){\makebox(0,0)[bl]{$\scriptstyle s_0$}}
        \put(105,-20){\line(1,0){50}}
        \put(75,5){\circle{20}}
        \put(75,5){\makebox(0,0){$\overline{H}$}}
        \put(86,10){\makebox(0,0)[bl]{$\scriptstyle s_H$}}
        \put(85,5){\line(1,0){15}}
        \put(100,-5){\framebox(5,20)}
        \put(107,10){\makebox(0,0)[bl]{$\scriptstyle s_0+s_H$}}
        \put(105,5){\line(1,0){50}}
        \put(135,-25){\line(0,1){30}}
        \put(135,5){\circle*{5}}
        \put(138,7){\makebox(0,0)[bl]{$\scriptstyle s_{\oplus}$}}
        \put(135,-20){\circle{10}}
        \put(155,-30){\framebox(5,20)}
        \put(162,-15){\makebox(0,0)[bl]{$\scriptstyle 2s_0 + s_H + s_{\oplus}$}}
        \put(155,-5){\framebox(5,20)}
        \put(160,5){\line(1,0){75}}
        \put(160,-20){\line(1,0){110}}
        \put(162,10){\makebox(0,0)[bl]{$\scriptstyle 2s_0 + s_H + s_{\oplus}$}}
        \put(213,32){\makebox(0,0)[bl]{$\scriptstyle s_{\oplus}$}}
        \put(210,30){\circle*{5}}
        \put(212,-18){\makebox(0,0)[bl]{$\scriptstyle s_{I}$}}
        \put(210,5){\circle{10}}
        \put(210,0){\line(0,1){30}}
        \put(235,20){\line(0,1){20}}
        \put(235,30){\oval(40,20)[r]}
        \put(243,30){\makebox(0,0){$\overline{X}$}}
        \put(256,35){\makebox(0,0)[bl]{$\scriptstyle s_{X}$}}
        \put(235,-5){\line(0,1){20}}
        \put(235,5){\oval(40,20)[r]}
        \put(243,5){\makebox(0,0){$\overline{Z}$}}
        \put(256,10){\makebox(0,0)[bl]{$\scriptstyle s_{Z}$}}
        \put(247,-18){\makebox(0,0)[bl]{$\scriptstyle s_{I}$}}
    \end{picture}\\
    &=
    \begin{picture}(270,80)
    \thicklines
        \put(185,30){\line(1,0){50}}
        \put(30,5){\oval(40,20)[l]}
        \put(30,-5){\line(0,1){20}}
        \put(15,5){\makebox(0,0)[l]{$\ket{\overline{0}}$}}
        \put(32,10){\makebox(0,0)[bl]{$\scriptstyle s_0$}}
        \put(85,-20){\oval(40,20)[l]}
        \put(85,-30){\line(0,1){20}}
        \put(70,-20){\makebox(0,0)[l]{$\ket{\overline{0}}$}}
        \put(87,-15){\makebox(0,0)[bl]{$\scriptstyle s_0$}}
        \put(30,5){\line(1,0){15}}
        \put(85,-20){\line(1,0){15}}
        \put(30,5){\line(1,0){15}}
        \put(85,-20){\line(1,0){15}}
        \put(45,-5){\framebox(5,20)}
        \put(50,5){\line(1,0){15}}
        \put(52,10){\makebox(0,0)[bl]{$\scriptstyle s_0$}}
        \put(100,-30){\framebox(5,20)}
        \put(107,-15){\makebox(0,0)[bl]{$\scriptstyle s_0$}}
        \put(105,-20){\line(1,0){50}}
        \put(75,5){\circle{20}}
        \put(75,5){\makebox(0,0){$\overline{H}$}}
        \put(86,10){\makebox(0,0)[bl]{$\scriptstyle s_H$}}
        \put(85,5){\line(1,0){15}}
        \put(100,-5){\framebox(5,20)}
        \put(107,10){\makebox(0,0)[bl]{$\scriptstyle s_0+s_H$}}
        \put(105,5){\line(1,0){50}}
        \put(135,-25){\line(0,1){30}}
        \put(135,5){\circle*{5}}
        \put(138,7){\makebox(0,0)[bl]{$\scriptstyle s_{\oplus}$}}
        \put(135,-20){\circle{10}}
        \put(155,-30){\framebox(5,20)}
        \put(162,-15){\makebox(0,0)[bl]{$\scriptstyle 2s_0 + s_H + s_{\oplus}$}}
        \put(155,-5){\framebox(5,20)}
        \put(160,5){\line(1,0){75}}
        \put(160,-20){\line(1,0){75}}
        \put(162,10){\makebox(0,0)[bl]{$\scriptstyle 2s_0 + s_H + s_{\oplus}$}}
        \put(213,32){\makebox(0,0)[bl]{$\scriptstyle s_{\oplus}$}}
        \put(210,30){\circle*{5}}
        \put(212,-18){\makebox(0,0)[bl]{$\scriptstyle s_{I}$}}
        \put(210,5){\circle{10}}
        \put(210,0){\line(0,1){30}}
        \put(235,20){\line(0,1){20}}
        \put(235,30){\oval(40,20)[r]}
        \put(243,30){\makebox(0,0){$\overline{X}$}}
        \put(256,35){\makebox(0,0)[bl]{$\scriptstyle s_{X}$}}
        \put(235,-5){\line(0,1){20}}
        \put(235,5){\oval(40,20)[r]}
        \put(243,5){\makebox(0,0){$\overline{Z}$}}
        \put(256,10){\makebox(0,0)[bl]{$\scriptstyle s_{Z}$}}
        \put(235,-30){\framebox(5,20)}
        \put(240,-20){\line(1,0){80}}
        \put(242,-15){\makebox(0,0)[bl]{$\scriptstyle 2s_0 + s_H + s_{\oplus}+s_I$}}
        \put(302,-18){\makebox(0,0)[bl]{$\scriptstyle s_I$}}
    \end{picture}\\
    &=
    \begin{picture}(360,80)
    \thicklines
        \put(185,30){\line(1,0){50}}
        \put(30,5){\oval(40,20)[l]}
        \put(30,-5){\line(0,1){20}}
        \put(15,5){\makebox(0,0)[l]{$\ket{\overline{0}}$}}
        \put(32,10){\makebox(0,0)[bl]{$\scriptstyle s_0$}}
        \put(85,-20){\oval(40,20)[l]}
        \put(85,-30){\line(0,1){20}}
        \put(70,-20){\makebox(0,0)[l]{$\ket{\overline{0}}$}}
        \put(87,-15){\makebox(0,0)[bl]{$\scriptstyle s_0$}}
        \put(30,5){\line(1,0){15}}
        \put(85,-20){\line(1,0){15}}
        \put(30,5){\line(1,0){15}}
        \put(85,-20){\line(1,0){15}}
        \put(45,-5){\framebox(5,20)}
        \put(50,5){\line(1,0){15}}
        \put(52,10){\makebox(0,0)[bl]{$\scriptstyle s_0$}}
        \put(100,-30){\framebox(5,20)}
        \put(107,-15){\makebox(0,0)[bl]{$\scriptstyle s_0$}}
        \put(105,-20){\line(1,0){50}}
        \put(75,5){\circle{20}}
        \put(75,5){\makebox(0,0){$\overline{H}$}}
        \put(86,10){\makebox(0,0)[bl]{$\scriptstyle s_H$}}
        \put(85,5){\line(1,0){15}}
        \put(100,-5){\framebox(5,20)}
        \put(107,10){\makebox(0,0)[bl]{$\scriptstyle s_0+s_H$}}
        \put(105,5){\line(1,0){50}}
        \put(135,-25){\line(0,1){30}}
        \put(135,5){\circle*{5}}
        \put(138,7){\makebox(0,0)[bl]{$\scriptstyle s_{\oplus}$}}
        \put(135,-20){\circle{10}}
        \put(155,-30){\framebox(5,20)}
        \put(162,-15){\makebox(0,0)[bl]{$\scriptstyle 2s_0 + s_H + s_{\oplus}$}}
        \put(155,-5){\framebox(5,20)}
        \put(160,5){\line(1,0){75}}
        \put(160,-20){\line(1,0){75}}
        \put(162,10){\makebox(0,0)[bl]{$\scriptstyle 2s_0 + s_H + s_{\oplus}$}}
        \put(213,32){\makebox(0,0)[bl]{$\scriptstyle s_{\oplus}$}}
        \put(210,30){\circle*{5}}
        \put(212,-18){\makebox(0,0)[bl]{$\scriptstyle s_{I}$}}
        \put(210,5){\circle{10}}
        \put(210,0){\line(0,1){30}}
        \put(235,20){\line(0,1){20}}
        \put(235,30){\oval(40,20)[r]}
        \put(243,30){\makebox(0,0){$\overline{X}$}}
        \put(256,35){\makebox(0,0)[bl]{$\scriptstyle s_{X}$}}
        \put(235,-5){\line(0,1){20}}
        \put(235,5){\oval(40,20)[r]}
        \put(243,5){\makebox(0,0){$\overline{Z}$}}
        \put(256,10){\makebox(0,0)[bl]{$\scriptstyle s_{Z}$}}
        \put(235,-30){\framebox(5,20)}
        \put(240,-20){\line(1,0){75}}
        \put(242,-15){\makebox(0,0)[bl]{$\scriptstyle 2s_0 + s_H + s_{\oplus}+s_I$}}
        \put(302,-18){\makebox(0,0)[bl]{$\scriptstyle s_I$}}
        \put(315,-30){\framebox(5,20)}
        \put(320,-20){\line(1,0){20}}
        \put(322,-15){\makebox(0,0)[bl]{$\scriptstyle 2s_0 + s_H + s_{\oplus}+2s_I$}}
    \end{picture}\label{eq:before_final_CNOT}\\
    &\nonumber \rule{0pt}{30pt}
    \end{align}
Since $2s_0+s_H+s_{\oplus}+2s_I\leq s$ always holds with $s$ given in Eq.~\eqref{eq:s_in_EC}, we have the condition EC A.
Furthermore, when the SSS $r$-filter is inserted to the input, the ideal decoder is inserted to the output, $r+2s_0+s_H+2s_{\oplus} + \max\{s_X,s_Z\} <\frac{c}{2}$, and $2s_0+s_H+s_{\oplus} + 2s_I <\frac{c}{2}$, we have 
\begin{align}
    &\begin{picture}(260,40)
    \thicklines
        \put(80,30){\line(1,0){15}}
        \put(95,20){\framebox(5,20)}
        \put(102,35){\makebox(0,0)[bl]{$\scriptstyle r$}}
        \put(100,30){\line(1,0){40}}
        \put(30,5){\oval(40,20)[l]}
        \put(30,-5){\line(0,1){20}}
        \put(15,5){\makebox(0,0)[l]{$\ket{\overline{0}}$}}
        \put(32,10){\makebox(0,0)[bl]{$\scriptstyle s_0$}}
        \put(70,-20){\oval(40,20)[l]}
        \put(70,-30){\line(0,1){20}}
        \put(55,-20){\makebox(0,0)[l]{$\ket{\overline{0}}$}}
        \put(72,-15){\makebox(0,0)[bl]{$\scriptstyle s_0$}}
        \put(30,5){\line(1,0){20}}
        \put(70,-20){\line(1,0){100}}
        \put(60,5){\circle{20}}
        \put(60,5){\makebox(0,0){$\overline{H}$}}
        \put(71,10){\makebox(0,0)[bl]{$\scriptstyle s_H$}}
        \put(70,5){\line(1,0){70}}
        \put(90,-25){\line(0,1){30}}
        \put(90,5){\circle*{5}}
        \put(93,7){\makebox(0,0)[bl]{$\scriptstyle s_{\oplus}$}}
        \put(90,-20){\circle{10}}
        \put(118,32){\makebox(0,0)[bl]{$\scriptstyle s_{\oplus}$}}
        \put(115,30){\circle*{5}}
        \put(115,5){\circle{10}}
        \put(115,0){\line(0,1){30}}
        \put(117,-18){\makebox(0,0)[bl]{$\scriptstyle s_{I}$}}
        \put(140,20){\line(0,1){20}}
        \put(140,30){\oval(40,20)[r]}
        \put(148,30){\makebox(0,0){$\overline{X}$}}
        \put(161,35){\makebox(0,0)[bl]{$\scriptstyle s_{X}$}}
        \put(140,-5){\line(0,1){20}}
        \put(140,5){\oval(40,20)[r]}
        \put(148,5){\makebox(0,0){$\overline{Z}$}}
        \put(161,10){\makebox(0,0)[bl]{$\scriptstyle s_{Z}$}}
        \put(152,-18){\makebox(0,0)[bl]{$\scriptstyle s_{I}$}}
        \put(170,-30){\line(0,1){20}}
        \put(170,-10){\line(1,-1){10}}
        \put(180,-20){\line(-1,-1){10}}
        \thinlines
        \put(180,-20){\line(1,0){10}}
    \end{picture}\\
    &=
    \begin{picture}(400,80)
    \thicklines
        \put(140,30){\line(1,0){15}}
        \put(30,5){\oval(40,20)[l]}
        \put(30,-5){\line(0,1){20}}
        \put(15,5){\makebox(0,0)[l]{$\ket{\overline{0}}$}}
        \put(32,10){\makebox(0,0)[bl]{$\scriptstyle s_0$}}
        \put(85,-20){\oval(40,20)[l]}
        \put(85,-30){\line(0,1){20}}
        \put(70,-20){\makebox(0,0)[l]{$\ket{\overline{0}}$}}
        \put(87,-15){\makebox(0,0)[bl]{$\scriptstyle s_0$}}
        \put(30,5){\line(1,0){15}}
        \put(85,-20){\line(1,0){15}}
        \put(45,-5){\framebox(5,20)}
        \put(50,5){\line(1,0){15}}
        \put(52,10){\makebox(0,0)[bl]{$\scriptstyle s_0$}}
        \put(100,-30){\framebox(5,20)}
        \put(107,-15){\makebox(0,0)[bl]{$\scriptstyle s_0$}}
        \put(105,-20){\line(1,0){50}}
        \put(75,5){\circle{20}}
        \put(75,5){\makebox(0,0){$\overline{H}$}}
        \put(86,10){\makebox(0,0)[bl]{$\scriptstyle s_H$}}
        \put(85,5){\line(1,0){15}}
        \put(100,-5){\framebox(5,20)}
        \put(107,10){\makebox(0,0)[bl]{$\scriptstyle s_0+s_H$}}
        \put(105,5){\line(1,0){50}}
        \put(135,-25){\line(0,1){30}}
        \put(135,5){\circle*{5}}
        \put(138,7){\makebox(0,0)[bl]{$\scriptstyle s_{\oplus}$}}
        \put(135,-20){\circle{10}}
        \put(155,20){\framebox(5,20)}
        \put(162,35){\makebox(0,0)[bl]{$\scriptstyle r$}}
        \put(160,30){\line(1,0){75}}
        \put(155,-30){\framebox(5,20)}
        \put(162,-15){\makebox(0,0)[bl]{$\scriptstyle 2s_0 + s_H + s_{\oplus}$}}
        \put(155,-5){\framebox(5,20)}
        \put(160,5){\line(1,0){75}}
        \put(160,-20){\line(1,0){70}}
        \put(162,10){\makebox(0,0)[bl]{$\scriptstyle 2s_0 + s_H + s_{\oplus}$}}
        \put(213,32){\makebox(0,0)[bl]{$\scriptstyle s_{\oplus}$}}
        \put(210,30){\circle*{5}}
        \put(210,5){\circle{10}}
        \put(210,0){\line(0,1){30}}
        \put(212,-18){\makebox(0,0)[bl]{$\scriptstyle s_{I}$}}
        \put(235,20){\line(0,1){20}}
        \put(235,30){\oval(40,20)[r]}
        \put(243,30){\makebox(0,0){$\overline{X}$}}
        \put(256,35){\makebox(0,0)[bl]{$\scriptstyle s_{X}$}}
        \put(235,-5){\line(0,1){20}}
        \put(235,5){\oval(40,20)[r]}
        \put(243,5){\makebox(0,0){$\overline{Z}$}}
        \put(256,10){\makebox(0,0)[bl]{$\scriptstyle s_{Z}$}}
        \put(230,-30){\framebox(5,20)}
        \put(235,-20){\line(1,0){75}}
        \put(237,-15){\makebox(0,0)[bl]{$\scriptstyle 2s_0 + s_H + s_{\oplus}+s_I$}}
        \put(297,-18){\makebox(0,0)[bl]{$\scriptstyle s_I$}}
        \put(310,-30){\framebox(5,20)}
        \put(315,-20){\line(1,0){70}}
        \put(317,-15){\makebox(0,0)[bl]{$\scriptstyle 2s_0 + s_H + s_{\oplus}+2s_I$}}
        \put(385,-30){\line(0,1){20}}
        \put(385,-10){\line(1,-1){10}}
        \put(395,-20){\line(-1,-1){10}}
        \thinlines
        \put(395,-20){\line(1,0){10}}
    \end{picture}\\
    &=
        \begin{picture}(340,80)
    \thicklines
        \put(140,30){\line(1,0){15}}
        \put(30,5){\oval(40,20)[l]}
        \put(30,-5){\line(0,1){20}}
        \put(15,5){\makebox(0,0)[l]{$\ket{\overline{0}}$}}
        \put(32,10){\makebox(0,0)[bl]{$\scriptstyle s_0$}}
        \put(85,-20){\oval(40,20)[l]}
        \put(85,-30){\line(0,1){20}}
        \put(70,-20){\makebox(0,0)[l]{$\ket{\overline{0}}$}}
        \put(87,-15){\makebox(0,0)[bl]{$\scriptstyle s_0$}}
        \put(30,5){\line(1,0){15}}
        \put(85,-20){\line(1,0){15}}
        \put(45,-5){\framebox(5,20)}
        \put(50,5){\line(1,0){15}}
        \put(52,10){\makebox(0,0)[bl]{$\scriptstyle s_0$}}
        \put(100,-30){\framebox(5,20)}
        \put(107,-15){\makebox(0,0)[bl]{$\scriptstyle s_0$}}
        \put(105,-20){\line(1,0){50}}
        \put(75,5){\circle{20}}
        \put(75,5){\makebox(0,0){$\overline{H}$}}
        \put(86,10){\makebox(0,0)[bl]{$\scriptstyle s_H$}}
        \put(85,5){\line(1,0){15}}
        \put(100,-5){\framebox(5,20)}
        \put(107,10){\makebox(0,0)[bl]{$\scriptstyle s_0+s_H$}}
        \put(105,5){\line(1,0){50}}
        \put(135,-25){\line(0,1){30}}
        \put(135,5){\circle*{5}}
        \put(138,7){\makebox(0,0)[bl]{$\scriptstyle s_{\oplus}$}}
        \put(135,-20){\circle{10}}
        \put(155,20){\framebox(5,20)}
        \put(162,35){\makebox(0,0)[bl]{$\scriptstyle r$}}
        \put(160,30){\line(1,0){70}}
        \put(155,-5){\framebox(5,20)}
        \put(160,5){\line(1,0){70}}
        \put(162,10){\makebox(0,0)[bl]{$\scriptstyle 2s_0 + s_H + s_{\oplus}$}}
        \put(155,-30){\framebox(5,20)}
        \put(160,-20){\line(1,0){70}}
        \put(162,-15){\makebox(0,0)[bl]{$\scriptstyle 2s_0 + s_H + s_{\oplus}$}}
        \put(213,32){\makebox(0,0)[bl]{$\scriptstyle s_{\oplus}$}}
        \put(210,30){\circle*{5}}
        \put(210,5){\circle{10}}
        \put(210,0){\line(0,1){30}}
        \put(212,-18){\makebox(0,0)[bl]{$\scriptstyle s_{I}$}}
        \put(230,-30){\framebox(5,20)}
        \put(237,-15){\makebox(0,0)[bl]{$\scriptstyle 2s_0 + s_H + s_{\oplus}+s_I$}}
        \put(230,-5){\framebox(5,20)}
        \put(235,5){\line(1,0){65}}
        \put(235,-20){\line(1,0){80}}
        \put(237,10){\makebox(0,0)[bl]{$\scriptstyle r+2s_0 + s_H + 2s_{\oplus}$}}
        \put(230,20){\framebox(5,20)}
        \put(237,35){\makebox(0,0)[bl]{$\scriptstyle r+2s_0 + s_H + 2s_{\oplus}$}}
        \put(235,30){\line(1,0){65}}
        \put(297,-18){\makebox(0,0)[bl]{$\scriptstyle s_I$}}
        \put(300,20){\line(0,1){20}}
        \put(300,30){\oval(40,20)[r]}
        \put(308,30){\makebox(0,0){$\overline{X}$}}
        \put(321,35){\makebox(0,0)[bl]{$\scriptstyle s_{X}$}}
        \put(300,-5){\line(0,1){20}}
        \put(300,5){\oval(40,20)[r]}
        \put(308,5){\makebox(0,0){$\overline{Z}$}}
        \put(321,10){\makebox(0,0)[bl]{$\scriptstyle s_{Z}$}}
        \put(315,-30){\line(0,1){20}}
        \put(315,-10){\line(1,-1){10}}
        \put(325,-20){\line(-1,-1){10}}
        \thinlines
        \put(325,-20){\line(1,0){10}}
    \end{picture}\\
    &=
        \begin{picture}(260,80)
    \thicklines
        \put(140,30){\line(1,0){15}}
        \put(30,5){\oval(40,20)[l]}
        \put(30,-5){\line(0,1){20}}
        \put(15,5){\makebox(0,0)[l]{$\ket{\overline{0}}$}}
        \put(32,10){\makebox(0,0)[bl]{$\scriptstyle s_0$}}
        \put(85,-20){\oval(40,20)[l]}
        \put(85,-30){\line(0,1){20}}
        \put(70,-20){\makebox(0,0)[l]{$\ket{\overline{0}}$}}
        \put(87,-15){\makebox(0,0)[bl]{$\scriptstyle s_0$}}
        \put(30,5){\line(1,0){15}}
        \put(85,-20){\line(1,0){15}}
        \put(45,-5){\framebox(5,20)}
        \put(50,5){\line(1,0){15}}
        \put(52,10){\makebox(0,0)[bl]{$\scriptstyle s_0$}}
        \put(100,-30){\framebox(5,20)}
        \put(107,-15){\makebox(0,0)[bl]{$\scriptstyle s_0$}}
        \put(105,-20){\line(1,0){50}}
        \put(75,5){\circle{20}}
        \put(75,5){\makebox(0,0){$\overline{H}$}}
        \put(86,10){\makebox(0,0)[bl]{$\scriptstyle s_H$}}
        \put(85,5){\line(1,0){15}}
        \put(100,-5){\framebox(5,20)}
        \put(107,10){\makebox(0,0)[bl]{$\scriptstyle s_0+s_H$}}
        \put(105,5){\line(1,0){50}}
        \put(135,-25){\line(0,1){30}}
        \put(135,5){\circle*{5}}
        \put(138,7){\makebox(0,0)[bl]{$\scriptstyle s_{\oplus}$}}
        \put(135,-20){\circle{10}}
        \put(155,20){\framebox(5,20)}
        \put(162,35){\makebox(0,0)[bl]{$\scriptstyle r$}}
        \put(160,30){\line(1,0){70}}
        \put(155,-5){\framebox(5,20)}
        \put(160,5){\line(1,0){70}}
        \put(162,10){\makebox(0,0)[bl]{$\scriptstyle 2s_0 + s_H + s_{\oplus}$}}
        \put(155,-30){\framebox(5,20)}
        \put(160,-20){\line(1,0){70}}
        \put(162,-15){\makebox(0,0)[bl]{$\scriptstyle 2s_0 + s_H + s_{\oplus}$}}
        \put(213,32){\makebox(0,0)[bl]{$\scriptstyle s_{\oplus}$}}
        \put(210,30){\circle*{5}}
        \put(210,5){\circle{10}}
        \put(210,0){\line(0,1){30}}
        \put(212,-18){\makebox(0,0)[bl]{$\scriptstyle s_I$}}
        \put(230,-30){\framebox(5,20)}
        \put(237,-15){\makebox(0,0)[bl]{$\scriptstyle 2s_0 + s_H + s_{\oplus} + s_I$}}
        \put(230,-5){\framebox(5,20)}
        \put(235,5){\line(1,0){65}}
        \put(235,-20){\line(1,0){65}}
        \put(237,10){\makebox(0,0)[bl]{$\scriptstyle r+2s_0 + s_H + 2s_{\oplus}$}}
        \put(230,20){\framebox(5,20)}
        \put(237,35){\makebox(0,0)[bl]{$\scriptstyle r+2s_0 + s_H + 2s_{\oplus}$}}
        \put(235,30){\line(1,0){65}}
        \put(300,20){\line(0,1){20}}
        \put(300,40){\line(1,-1){10}}
        \put(310,30){\line(-1,-1){10}}
        \put(300,-5){\line(0,1){20}}
        \put(300,15){\line(1,-1){10}}
        \put(310,5){\line(-1,-1){10}}
        \put(300,-30){\line(0,1){20}}
        \put(300,-10){\line(1,-1){10}}
        \put(310,-20){\line(-1,-1){10}}
        \thinlines
        \put(310,30){\line(1,0){10}}
        \put(320,20){\line(0,1){20}}
        \put(320,30){\oval(20,20)[r]}
        \put(324,30){\makebox(0,0){$X$}}
        \put(310,5){\line(1,0){10}}
        \put(320,-5){\line(0,1){20}}
        \put(320,5){\oval(20,20)[r]}
        \put(324,5){\makebox(0,0){$Z$}}
        \put(310,-20){\line(1,0){35}}
    \end{picture}\\
    &=
    \begin{picture}(260,80)
    \thicklines
        \put(90,30){\line(1,0){15}}
        \put(105,20){\framebox(5,20)}
        \put(112,35){\makebox(0,0)[bl]{$\scriptstyle r$}}
        \put(110,30){\line(1,0){15}}
        \put(30,5){\oval(40,20)[l]}
        \put(30,-5){\line(0,1){20}}
        \put(15,5){\makebox(0,0)[l]{$\ket{\overline{0}}$}}
        \put(32,10){\makebox(0,0)[bl]{$\scriptstyle s_0$}}
        \put(65,-20){\oval(40,20)[l]}
        \put(65,-30){\line(0,1){20}}
        \put(50,-20){\makebox(0,0)[l]{$\ket{\overline{0}}$}}
        \put(67,-15){\makebox(0,0)[bl]{$\scriptstyle s_0$}}
        \put(30,5){\line(1,0){15}}
        \put(45,-5){\framebox(5,20)}
        \put(50,5){\line(1,0){15}}
        \put(52,10){\makebox(0,0)[bl]{$\scriptstyle s_0$}}
        \put(65,-20){\line(1,0){15}}
        \put(80,-30){\framebox(5,20)}
        \put(87,-15){\makebox(0,0)[bl]{$\scriptstyle s_0$}}
        \put(85,-20){\line(1,0){15}}
        \put(125,20){\line(0,1){20}}
        \put(125,40){\line(1,-1){10}}
        \put(135,30){\line(-1,-1){10}}
        \put(65,-5){\line(0,1){20}}
        \put(65,15){\line(1,-1){10}}
        \put(75,5){\line(-1,-1){10}}
        \put(100,-30){\line(0,1){20}}
        \put(100,-10){\line(1,-1){10}}
        \put(110,-20){\line(-1,-1){10}}
        \thinlines
        \put(135,30){\line(1,0){45}}
        \put(75,5){\line(1,0){15}}
        \put(110,-20){\line(1,0){95}}
        \put(100,5){\circle{20}}
        \put(100,5){\makebox(0,0){$H$}}
        \put(110,5){\line(1,0){70}}
        \put(130,-25){\line(0,1){30}}
        \put(130,5){\circle*{5}}
        \put(130,-20){\circle{10}}
        \put(155,30){\circle*{5}}
        \put(155,5){\circle{10}}
        \put(155,0){\line(0,1){30}}
        \put(180,20){\line(0,1){20}}
        \put(180,30){\oval(20,20)[r]}
        \put(184,30){\makebox(0,0){$X$}}
        \put(180,-5){\line(0,1){20}}
        \put(180,5){\oval(20,20)[r]}
        \put(184,5){\makebox(0,0){$Z$}}
    \end{picture}\\
    &=
    \begin{picture}(260,80)
    \thicklines
        \put(10,30){\line(1,0){10}}
        \put(20,20){\framebox(5,20)}
        \put(27,35){\makebox(0,0)[bl]{$\scriptstyle r$}}
        \put(25,30){\line(1,0){15}}
        \put(40,20){\line(0,1){20}}
        \put(40,40){\line(1,-1){10}}
        \put(50,30){\line(-1,-1){10}}
        \thinlines
        \put(70,5){\oval(20,20)[l]}
        \put(60,5){\makebox(0,0)[l]{$\ket{0}$}}
        \put(70,-5){\line(0,1){20}}
        \put(70,-20){\oval(20,20)[l]}
        \put(60,-20){\makebox(0,0)[l]{$\ket{0}$}}
        \put(70,-30){\line(0,1){20}}
        \put(50,30){\line(1,0){130}}
        \put(70,5){\line(1,0){20}}
        \put(70,-20){\line(1,0){130}}
        \put(100,5){\circle{20}}
        \put(100,5){\makebox(0,0){$H$}}
        \put(110,5){\line(1,0){70}}
        \put(130,-25){\line(0,1){30}}
        \put(130,5){\circle*{5}}
        \put(130,-20){\circle{10}}
        \put(155,30){\circle*{5}}
        \put(155,5){\circle{10}}
        \put(155,0){\line(0,1){30}}
        \put(180,20){\line(0,1){20}}
        \put(180,30){\oval(20,20)[r]}
        \put(184,30){\makebox(0,0){$X$}}
        \put(180,-5){\line(0,1){20}}
        \put(180,5){\oval(20,20)[r]}
        \put(184,5){\makebox(0,0){$Z$}}
    \end{picture}\label{eq:qubit_teleportation}\\
    &=
    \begin{picture}(100,80)
    \thicklines
        \put(10,5){\line(1,0){10}}
        \put(20,-5){\framebox(5,20)}
        \put(27,10){\makebox(0,0)[bl]{$\scriptstyle r$}}
        \put(25,5){\line(1,0){15}}
        \put(40,-5){\line(0,1){20}}
        \put(40,15){\line(1,-1){10}}
        \put(50,5){\line(-1,-1){10}}
    \thinlines
        \put(50,5){\line(1,0){15}}
        \put(75,5){\circle{20}}
        \put(75,5){\makebox(0,0){$P$}}
        \put(85,5){\line(1,0){20}}
    \end{picture}\\
    &\nonumber 
\end{align}
where $P\in\{I,X,Y,Z\}$ denotes a Pauli correction depending on the measurement outcomes.  Due to the definition of $s$ in Eq.~\eqref{eq:s_in_EC}, we have the condition EC B up to a qubit Pauli correction.  As we stated previously, this correction operation can be dealt with by the change of the Pauli frame or the change of the successive gate or measurement.
\end{proof}
One thing to note is that when $r+s\geq \frac{c}{2}$, the Pauli error correction $P$ may not be inferred correctly.  This causes the qubit-level logical error, which is passed on to the higher level and to be corrected by a qubit quantum error-correcting code.  The detail will be described in Sec.~\ref{sec:threshold_theorem}.

\subsection{Energy-constraint conditions for GKP gadgets}\label{sec:energy_constraint_conditions}

As has already been discussed in Sec.~\ref{sec:energy_constrained_diamond_norm}, a quantum state on a mode $Q$ during the computation should be an element of $\mathfrak{S}_{E}({\cal H}_Q)$ with a finite $E$.  This is necessary for eliminating the unrealistic quantum states that have arbitrarily high susceptibility to the given noise, and thus for applying the appropriate distance measure (i.e., the energy-constrained diamond-norm distance) to noise.
Thus, our noisy quantum circuit needs to maintain a quantum state during the computation in $\mathfrak{S}_{E}({\cal H}_Q)$.
Furthermore, we need to compare our noisy quantum circuit with the \quoted{ideal} quantum circuit, which always gives us the correct result.  Thus, the \quoted{ideal} circuit should also keep the energy of a quantum state during the computation finite.  Otherwise, the noise model we can treat will be severely restricted.  For these reasons, we make an additional requirement on the actual noisy gadgets as well as $s$-parameterized gadgets in Defs.~\ref{def:s-preparation} and \ref{def:s-gate}.

\begin{definition}[$E_{\rm prep}$-energy constraint on preparation gadgets] \label{def:energy_constrained_prep}
    For a positive constant $E_{\rm prep}$, a preparation gadget is said to satisfy the $E_{\rm prep}$-energy constraint if the output state of the gadget is an element of $\mathfrak{S}_{E_{\rm prep}}({\cal H}_Q)$.  
\end{definition}
\begin{definition}[$g_{\rm sup}$-energy constraint on gate gadgets]\label{def:energy_constrained_gate}
    For a positive, monotonically increasing, locally bounded function $g_{\rm sup}(E)$ of $E\in[0,\infty)$, a single-mode gate gadget is said to satisfy the $g_{\rm sup}$-energy constraint tputs a quantum state that is contained in $\mathfrak{S}_{g_{\rm sup}(E)}({\cal H}_Q)$ when an element of $\mathfrak{S}_{E}({\cal H}_Q)$ is input for any $E\in[0,\infty)$.  A two-mode gate gadget is said to satisfy the $g_{\rm sup}$-energy constraint if for any input state whose reduced states on respective modes $Q_1$ and $Q_2$ are elements of $\mathfrak{S}_{g_{\rm sup}(E_1)}({\cal H}_{Q_1})$ and $\mathfrak{S}_{g_{\rm sup}(E_2)}({\cal H}_{Q_2})$, respectively,  it outputs a state whose reduced states are contained in $\mathfrak{S}_{g(E_1,E_2)}({\cal H}_{Q_1})$ and $\mathfrak{S}_{g(E_1,E_2)}({\cal H}_{Q_2})$, respectively, where a function $g:[0,\infty)\times[0,\infty)\to[0,\infty)$ is monotonically increasing and locally bounded in both arguments and satisfies $g(E, E)\leq g_{\rm sup}(E)$ for any $E\in[0,\infty)$.
\end{definition}

For the actual noisy gadgets that we can implement in a lab, the above requirements may be testable.
However, we need to check beforehand whether the $s$-preparation in Def.~\ref{def:s-preparation} can in principle satisfy the $E_{\rm prep}$-energy constraint and whether the $s$-parameterized gate in Def.~\ref{def:s-gate} can in principle satisfy the $g_{\rm sup}$-energy constraint.
For the $s$-preparation gadget, there exists a (pure) state in the form of Eq.~$\eqref{eq:s-parameterized_pure}$ with finite energy, which is given as follows.
Let $s\leq \frac{c}{2}$ and let $f^s(x)$ be a bump function defined as
\begin{align}
    f^s(x) &\coloneqq \begin{cases}\exp\bigl(-\frac{1}{1-(x/s)^2}\bigr) & |x|<s\\
    0 & \text{otherwise}
    \end{cases},\label{eq:bump_func}
\intertext{and let}
\tilde{f}_s(k) &\coloneqq \frac{1}{\sqrt{2\pi}}\int f^s(x) e^{ikx} dx    
\end{align}
be its Fourier transform.
Then, the state $\ket{\rho^s[\psi]}$ that approximates the GKP state $\ket{\overline{\psi}}=\alpha\ket{\overline{0}}+\beta\ket{\overline{1}}$ in the sense of Eq.~\eqref{eq:s-parameterized_pure} is given by
\begin{align}
    \ket{\rho^s[\psi]} &\coloneqq \frac{1}{\sqrt{N_s}}\iint_{-\infty}^{\infty}dz_1 dz_2\, f^s(z_1)f^s(z_2)(\alpha \ket{0;z_1,z_2}_{LS} + \beta\ket{1;z_1,z_2}_{LS}) \label{eq:def_in_terms_of_subsystem} \\
    &=\frac{1}{\sqrt{c N_s}}\int_{-\infty}^{\infty} dx \iint_{-s}^{s}dz_1 dz_2\,e^{iz_1z_2/2}
\nonumber \\
    &\qquad \times \sum_{m\in\mathbb{Z}}  e^{2ciz_2 m} f^s(z_1)f^s(z_2) (\alpha \delta(z_1+2mc-x) + \beta e^{ciz_2}\delta(z_1+(2m+1)c-x)  \ket{x}_q, \label{eq:expansion_bump}
\end{align}
where $N_s$ is a normalization constant, and the second equality follows from Eqs.~\eqref{eq:Zak_in_position} and \eqref{eq:shifted_one}.  Here, we can decompose any $x\in\mathbb{R}$ into the nearest multiple of~$c$ plus a remainder~\cite{Pantaleoni2020}:
\begin{align}
    x = \lfloor x \rceil_c + \{ x \}_c,
\end{align}
and we now define the unique integer
\begin{align}
    n(x) \coloneqq \frac {\lfloor x \rceil_c} {c}
    = \left\lfloor \frac x c \right\rceil
\end{align}
that satisfies $-\frac{c}{2}\leq x-cn(x)<\frac{c}{2}$.  Then, from Eq.~\eqref{eq:expansion_bump}, we have
\begin{equation}
    \ket{\rho^s[\psi]} = \frac{1}{\sqrt{c N_s}}\int_{-\infty}^{\infty} dx \int_{-s}^{s} dz_2\,\gamma(n(x)) e^{i(x+cn(x))z_2/2}f_s(x-cn(x)) f^s(z_2)\ket{x}_q
\end{equation}
where $\gamma(n)$ with $n\in\mathbb{Z}$ is defined as
\begin{equation}
    \gamma(n)\coloneqq
    \begin{cases}
        \alpha & (n \text{ is even}),\\
        \beta & (n \text{ is odd}).
    \end{cases}
\end{equation}
Since $f^s$ is supported only on $[-s,s) \subseteq \bigl[-\tfrac c 2, \tfrac c 2\bigr)$, we have, for any function $g(x,n(x))$ of $x$ and $n(x)$,
\begin{equation}
    \int dx\, f^s(x-cn(x)) g(x,n(x)) = \int dx \sum_{n\in\mathbb{Z}} f^s(x-cn) g(x,n),
\end{equation}
and 
\begin{equation}
    \int_{-s}^{s}dz_2\, f^s(z_2)e^{iz_2(x+a)/2} = \int_{-\infty}^{\infty}dz_2\, f^s(z_2)e^{iz_2(x+a)/2} = \sqrt{2\pi}\, \tilde{f}^s\Bigl(\frac{x+a}{2}\Bigr).
\end{equation}
Thus, we have
\begin{align}
    \ket{\rho^s[\psi]}
    &=\sqrt{\frac{2c}{N_s}}\int dx\sum_{n\in{\mathbb{Z}}}\gamma(n)\tilde{f}^{s}\Bigl(\frac{x+cn}{2}\Bigr)f^s(x-cn)\ket{x}_q \\ 
    &=\sqrt{\frac{2c}{N_s}}\int dx\sum_{m\in\mathbb{Z}} \left(\alpha\tilde{f}^{s}\Bigl(\frac{x+2mc}{2}\Bigr) f^s(x-2mc)  + \beta\tilde{f}^{s}\Bigl(\frac{x+(2m+1)c}{2}\Bigr) f^s(x-(2m+1)c)\right)\ket{x}_q.\label{eq:expression_s-parameterized_GKP}
\end{align}

The momentum wave function has a similar form as the above position wave function due to the symmetry in the subsystem decomposition in Eq.~\eqref{eq:def_in_terms_of_subsystem}.
The large-scale behavior of the above wave function is determined by the function $\tilde{f}^s$.  Since the asymptotic behavior of $\tilde{f}^s$ is known to be $\sim 
|k|^{-\frac{3}{4}}e^{-\sqrt{s|k|}}$ \cite{Johnson2015} and thus decays faster than any polynomial, this $s$-parameterized approximate GKP state $\ket{\rho^{s}[\psi]}$ has a finite energy.
(Recall that the energy of a quantum state is given by the sum of the second moments of the position and momentum probability distribution.)  
Therefore, an $s$-preparation gadget that prepares the quantum state illustrated above---or any mixtures thereof---can satisfy the $E_{\rm prep}$-energy constraint for some sufficiently large yet constant $E_{\rm prep}$.
Note that, with the same reasoning, a state prepared as above but with $f^s$ replaced by
\begin{equation}
    f^s_{c_1, c_2}(x) \coloneqq \begin{cases}\exp\bigl(-\frac{c_2}{(1-(x/s)^2)^{c_1}}\bigr) & |x|<s,
    \\
    0 & \text{otherwise},
    \end{cases} \label{eq:f_s^alpha_beta}
\end{equation}
for any $c_1,c_2>0$,
also has finite energy \cite{Johnson2015}.

Below, we will also prove that the ideal gates given in Sec.~\ref{sec:gadget_construction} satisfy the $g_{\rm sup}$-energy constraint with a positive, monotonically increasing, locally bounded function $g_{\rm sup}$.  First, since phase rotation does not change the mode energy, the gate $\hat{F}$ satisfies the $g_{\rm sup}$-energy constraint as long as $E \leq g_{\rm sup}(E)$.  Next, the displacement operator $\hat{V}(a,b)$ acts on the number operator $\hat{n}$ in the Heisenberg picture as
\begin{align}
    \hat{V}(a,b)^{\dagger}\hat{n}\hat{V}(a,b) & = \hat{V}(a,b)^{\dagger}\frac{\hat{q}^2+\hat{p}^2-1}{2}\hat{V}(a,b) \\
    & = \frac{(\hat{q}+a)^2+(\hat{p}+b)^2-1}{2} \\
    & = \hat{n} + \frac{a^2 + b^2}{2} + a\hat{q} + b\hat{p}.
\end{align}
Thus, for any state $\hat{\rho}\in\mathfrak{S}_{E}({\cal H}_Q)$, we have
\begin{align}
    \braket{\hat{V}(a,b)^{\dagger}\hat{n}\hat{V}(a,b)}_{\hat{\rho}} &= E + \frac{a^2 + b^2}{2} + a\braket{\hat{q}}_{\hat{\rho}} + b\braket{\hat{p}}_{\hat{\rho}} \\
    &\leq E + \frac{a^2 + b^2}{2} + \abs a \sqrt{\braket{\hat{q}^2}_{\hat{\rho}}} + \abs b \sqrt{\braket{\hat{p}^2}_{\hat{\rho}}} \\
    &\leq E + \frac{a^2 + b^2}{2} + (\abs a + \abs 
    b)\sqrt{2E+1} \label{eq:change_energy_displacement}
\end{align}
Thus, the displacements $\hat{V}(c,0)$ and $\hat{V}(0,c)$ satisfy the $g_{\rm sup}$-energy constraint as long as
\begin{equation}
    g_{\rm sup}(E) \geq E + c\sqrt{2E + 1} + \frac{c^2}{2}.
\end{equation}
Finally, the SUM gate transforms the number operator $\hat{n}_1$ on the controlled system as 
\begin{align}
    \hat{\text{SUM}}^{\dagger}\hat{n}_1\hat{\text{SUM}} &=\hat{\text{SUM}}^{\dagger}\frac{\hat{q}^2+\hat{p}^2-1}{2}\hat{\text{SUM}}\\  &= \frac{\hat{q}_1^2 + (\hat{p}_1 + \hat{p}_2)^2 -1}{2} \\
    & = \hat{n}_1 + \hat{p}_1\hat{p}_2 +\frac{\hat{p}_2^2}{2}.
\end{align}
We thus have, for any $\hat{\rho}\in\mathfrak{S}({\cal H}_{Q_1Q_2})$ with $\Tr _{Q_2}[\hat{\rho}]\in\mathfrak{S}_{E_1}({\cal H}_{Q_1})$ and $\Tr _{Q_1}[\hat{\rho}]\in\mathfrak{S}_{E_2}({\cal H}_{Q_2})$, that
\begin{align}
    \braket{\hat{\text{SUM}}^{\dagger}\hat{n}_1\hat{\text{SUM}}}_{\hat{\rho}} 
    &\leq E_1 + \sqrt{\braket{\hat{p}_1^2}_{\hat{\rho}}\braket{\hat{p}_2^2}_{\hat{\rho}}} + \frac{\braket{\hat{p}_2^2}_{\hat{\rho}}}{2} \\
    &\leq E_1 + E_2 + \sqrt{(2E_1+1)(2E_2+1)} + \frac{1}{2},
\end{align}
where the first inequality comes from the positivity of the covariance matrix.
The same inequality holds for $\hat{n}_2$, and thus the SUM gate satisfies $g_{\rm sup}$-energy constraint as long as
\begin{equation}
    g_{\rm sup}(E) \geq 4E + \frac{3}{2}.
\end{equation}
Thus, we conclude that each ideal gate operation given in Sec.~\ref{sec:physical_operations} actually satisfies the $g_{\rm sup}$-energy constraint defined above for an appropriate function $g_{\rm sup}$.  (In the case of the gate set given above, the function $g_{\rm sup}(E)=4E+c\sqrt{2E+1}+\frac{c^2}{2}$, for example, satisfies all conditions.)
Notice that $g_{\rm sup}(E) \geq E$ should always hold for any choice of a gate set since the wait does not change the energy.

For the $s$-gate to satisfy the energy constraint, the noise channel ${\cal N}^s$ also needs to obey the energy constraint.  As illustrated in Eq.~\eqref{eq:change_energy_displacement}, an up-to-$s$ random displacement channel obeys the $g_{\rm sup}$-energy constraint with an appropriate function $g_{\rm sup}$.  Thus, there exists a family of channels of the forms in Eq.~\eqref{eq:def_N_s} that obey the $g_{\rm sup}$-energy constraints.
From these, we find that all the $s$-gate gadgets can satisfy the $g_{\rm sup}$-energy constraint condition with an appropriate function $g_{\rm sup}$.

Now, we show that the energy of a quantum state during the computation is kept finite regardless of the depth $D$ of the circuit.  Since we use Knill-type EC gadgets, quantum information is kept teleported to a newly prepared state at every EC gadget.  More precisely, we have the following for the output state of the EC gadget.
\begin{proposition}[Energy reset of the EC gadget]\label{prop:energy_reset}
    let $M$ denote the classical register to keep the GKP Pauli measurement outcomes $\{0,1\}^2$ in the GKP EC gadget given in Eq.~\eqref{eq:knill_EC}, and let ${\cal P}$ denote the set of probability distributions over $\{0,1\}^2$.
    For the EC gadget defined in Eq.~\eqref{eq:knill_EC} with the input and output systems labeled by $Q_{\rm in}$ and $Q_{\rm out}$, respectively, let $\Phi^{\rm EC}:\mathfrak{S}({\cal H}_{Q_{\rm in}})\to {\cal P}_M\otimes \mathfrak{S}({\cal H}_{Q_{\rm out}})$ be a CPTP map of the EC gadget with all the preparation gadgets inside satisfying the $E_{\rm prep}$-energy constraint and all the gate gadgets inside satisfying the $g_{\rm sup}$-energy constraint.  Then, for any state $\hat{\rho}\in\mathfrak{S}({\cal H}_{Q_{\rm in}R})$ between the input of the EC gadget and a reference system, $\Phi^{\rm EC}\otimes \mathrm{Id}_R(\hat{\rho})$ is a classical-quantum-quantum state between the systems $M$, $Q_{\rm out}$, and $R$ with its reduced density operator on the system $Q_{\rm out}$ contained in $\mathfrak{S}_{g_{\rm sup}^4(E_{\rm prep})}({\cal H}_{Q_{\rm out}})$, where $g_{\rm sup}^{m}$ for $m\in\mathbb{N}$ is defined as
    \begin{equation}
        g_{\rm sup}^{m} \coloneqq \underbrace{g_{\rm sup}\circ g_{\rm sup} \circ \cdots\circ g_{\rm sup}}_m.\label{eq:undergoes_m}
    \end{equation}
\end{proposition}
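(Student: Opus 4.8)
The plan is to establish the two assertions of the proposition separately: first that $\Phi^{\rm EC}\otimes\mathrm{Id}_R(\hat\rho)$ has the claimed classical--quantum--quantum (cqq) structure, and second that its reduced state on $Q_{\rm out}$ lies in $\mathfrak{S}_{g_{\rm sup}^4(E_{\rm prep})}({\cal H}_{Q_{\rm out}})$. For the cqq structure I would note that the only non-unitary steps in the gadget of Eq.~\eqref{eq:knill_EC} are the two homodyne-based GKP measurements (the $\overline{X}$ measurement on the input mode and the $\overline{Z}$ measurement on the first ancilla). Writing each as an instrument that records its binned outcome in the register $M$ and discards the measured mode, the overall channel is a mixture over outcomes $(m_1,m_2)\in\{0,1\}^2$ of CPTP maps acting on $Q_{\rm out}R$; this is by definition a cqq state on $M\otimes Q_{\rm out}\otimes R$. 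The feed-forward Pauli corrections only relabel the quantum part conditioned on $M$ and do not disturb this structure.

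For the energy bound I would trace the ``worldline'' of the output mode $Q_{\rm out}$, which is the second freshly prepared $\ket{\overline 0}$ ancilla, and apply the energy constraints of Defs.~\ref{def:energy_constrained_prep} and~\ref{def:energy_constrained_gate} step by step. The preparation places it in $\mathfrak{S}_{E_{\rm prep}}({\cal H}_{Q_{\rm out}})$. It then enters the first SUM gate $\mathrm{CNOT}_1$ as the target; the control of $\mathrm{CNOT}_1$ is the first ancilla, which has passed through the Hadamard gadget and hence lies in $\mathfrak{S}_{g_{\rm sup}(E_{\rm prep})}$ by the single-mode $g_{\rm sup}$-constraint. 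Applying the two-mode $g_{\rm sup}$-constraint with both input energies bounded by $g_{\rm sup}(E_{\rm prep})$ yields a reduced state on $Q_{\rm out}$ in $\mathfrak{S}_{g_{\rm sup}^2(E_{\rm prep})}$. The two subsequent wait gates $\overline I$ each apply one further $g_{\rm sup}$, giving $\mathfrak{S}_{g_{\rm sup}^3(E_{\rm prep})}$ and then $\mathfrak{S}_{g_{\rm sup}^4(E_{\rm prep})}$, matching Eq.~\eqref{eq:undergoes_m}.

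The crucial point---the ``energy reset''---is that this bound is independent of the input energy $\Tr[\hat n_{Q_{\rm in}}\hat\rho]$. I would justify this by observing that after $\mathrm{CNOT}_1$ the output mode is entangled only with the first ancilla; the input mode couples to the first ancilla solely through $\mathrm{CNOT}_2$ and is then measured, so the input never acts on $Q_{\rm out}$ directly. Averaging over the measurement outcomes stored in $M$ (that is, tracing out $M$ together with the two measured modes), the reduced state on $Q_{\rm out}$ equals the partial trace of the post-$\mathrm{CNOT}_1$ state, whose energy I have already bounded using only the fresh-ancilla energies. Monotonicity of the partial trace under the energy functional and the local boundedness of $g_{\rm sup}$ then close the argument.

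I expect the main obstacle to be the careful handling of the two-mode SUM gate, where the energy of the output reduced state is controlled by both inputs: one must verify that the input mode's (possibly large) energy does not leak into $Q_{\rm out}$ through the entanglement created by $\mathrm{CNOT}_1$ and the later measurement. Making precise that post-selection on a measurement outcome does not inflate the \emph{averaged} reduced energy---so that the bound holds for the trace-over-$M$ reduced density operator and not merely in expectation over outcomes---is the delicate step, and it is exactly what distinguishes Knill-type EC, which resets the energy, from Steane-type EC, which need not.
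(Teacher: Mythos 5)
Your proposal is correct and follows essentially the same route as the paper: track the worldline of the freshly prepared output ancilla, count one preparation at $E_{\rm prep}$ followed by four applications of $g_{\rm sup}$ (the SUM gate counted twice because its control has already passed through the Hadamard, plus the two waits), and observe that no later operation or the input state affects the reduced density operator on $Q_{\rm out}$. Your treatment is in fact more explicit than the paper's two-sentence proof, and your worry about post-selection is moot since the proposition concerns the outcome-averaged reduced state, exactly as you note.
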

\begin{proof}
    The output quantum system of $\Phi^{\rm EC}$ is generated by the $E_{\rm prep}$-energy-constrained preparation and goes under four $g_{\rm sup}$-energy-constrained gate gadgets as shown in Eq.~\eqref{eq:knill_EC}.  (The number of gates that the prepared state experiences is three, but the first SUM gate takes the maximum energy of the two input states when computing the energy increase.)  Thus, the reduced density operator on the system $Q_{\rm out}$ is contained in $\mathfrak{S}_{g_{\rm sup}^4(E_{\rm prep})}({\cal H}_{Q_{\rm out}})$.  Since the reduced density operator on $Q_{\rm out}$ is not affected by any other operations in $\Phi^{\rm EC}$, nor by the input state $\hat{\rho}$, the statement holds.
\end{proof}

\noindent Therefore, in an FT-GKP circuit with Knill EC gadgets, a quantum state prepared at some point in the circuit experiences only a finite number of gate gadgets before it goes into the measurement gadget.  Thus, we arrive at the following conclusion, which is one of the main reasons why we adopt Knill-type error correction in the fault-tolerant GKP EC gadget (although analogous arguments may hold for Steane-type, as well).
\begin{proposition} \label{prop:maximum_energy_ExRec}
    Let $\ell$ be the maximum number of gate gadgets (including the wait) that a quantum state prepared during the computation undergoes before it is measured.  Assume that all the preparation gadgets and all the gate gadgets in an FT-GKP circuit satisfy the $E_{\rm prep}$-energy constraint and the $g_{\rm sup}$-energy constraint, respectively. Then, a quantum state in the FT-GKP circuit reduced to an arbitrary single mode is always contained in $\mathfrak{S}_{g_{\rm sup}^{\ell}(E_{\rm prep})}$ (except inside a measurement gadget, which we need not specify).
    Furthermore, a reduced density operator of an input of a GKP EC gadget in the FT-GKP circuit is contained in $\mathfrak{S}_{g_{\rm sup}^{\ell-1}(E_{\rm prep})}$.
\end{proposition}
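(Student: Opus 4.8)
The plan is to argue by induction on the number of gate gadgets that a given mode's state has passed through since it was most recently (re)prepared, and then observe that this number never exceeds $\ell$. First I would set up the bookkeeping: every single-mode reduced state appearing in the circuit (outside a measurement gadget) can be traced back along its worldline to a preparation gadget, either a logical-input preparation or a fresh ancilla created inside a Knill EC gadget. By the $E_{\rm prep}$-energy constraint (Def.~\ref{def:energy_constrained_prep}), the freshly prepared reduced state lies in $\mathfrak{S}_{E_{\rm prep}}({\cal H}_Q)=\mathfrak{S}_{g_{\rm sup}^0(E_{\rm prep})}({\cal H}_Q)$, which is the base case $m=0$.

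For the inductive step I would invoke Def.~\ref{def:energy_constrained_gate}. If a single-mode gate gadget acts on a mode whose reduced state is in $\mathfrak{S}_{g_{\rm sup}^{m}(E_{\rm prep})}$, the output reduced state is in $\mathfrak{S}_{g_{\rm sup}(g_{\rm sup}^{m}(E_{\rm prep}))}=\mathfrak{S}_{g_{\rm sup}^{m+1}(E_{\rm prep})}$, so the count increments from $m$ to $m+1$ in step with the applied power of $g_{\rm sup}$. The two-mode case is the one requiring care: if the two input modes carry reduced energies bounded by $g_{\rm sup}^{m_1}(E_{\rm prep})$ and $g_{\rm sup}^{m_2}(E_{\rm prep})$, then using the property $g(E_1,E_2)\le g_{\rm sup}(\max\{E_1,E_2\})$, which follows from $g(E,E)\le g_{\rm sup}(E)$ together with the monotonicity of $g$, each output reduced state lies in $\mathfrak{S}_{g_{\rm sup}^{\max\{m_1,m_2\}+1}(E_{\rm prep})}$; the new count is $\max\{m_1,m_2\}+1$, matching the fact that along the surviving worldline this gate is the $(\max\{m_1,m_2\}+1)$-th gate undergone.

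With the invariant established, the first claim follows at once: any prepared state undergoes at most $\ell$ gate gadgets before measurement, so its running count never exceeds $\ell$; combining monotonicity of $g_{\rm sup}$ with $g_{\rm sup}(E)\ge E$ gives $g_{\rm sup}^{m}(E_{\rm prep})\le g_{\rm sup}^{\ell}(E_{\rm prep})$ for every $m\le\ell$, so every single-mode reduced state lies in $\mathfrak{S}_{g_{\rm sup}^{\ell}(E_{\rm prep})}$. For the second claim I would inspect the Knill EC circuit in Eq.~\eqref{eq:knill_EC}: the incoming data mode is coupled to the freshly prepared Bell pair through a single two-mode (SUM) gate and is then immediately measured. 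Hence a state entering a GKP EC gadget undergoes exactly one further gate gadget inside that gadget before it is measured, so its count at the EC input is at most $\ell-1$, placing its reduced density operator in $\mathfrak{S}_{g_{\rm sup}^{\ell-1}(E_{\rm prep})}$. This dovetails with the mode-wise energy reset already recorded for the EC output in Proposition~\ref{prop:energy_reset}.

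I expect the main obstacle to be the two-mode bookkeeping: one must check that the worldline-merging at a SUM gate is correctly captured by the $\max$ rule and that the definition of $\ell$ is read so that the merged worldline's gate count is what $\ell$ bounds, rather than naively summing the two incoming counts. The remaining steps are routine consequences of the definitions and of $g_{\rm sup}$ being monotone with $g_{\rm sup}(E)\ge E$.
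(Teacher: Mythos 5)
Your proof is correct and follows essentially the same route as the paper's, which simply asserts that the first claim is ``trivial from the definition of $\ell$'' and the energy-constraint conditions, and derives the second claim from the observation that the input of a Knill EC gadget undergoes exactly one further gate (the SUM) before being measured. Your explicit induction, and in particular your $\max$-rule for the two-mode worldline merging (with the attendant caveat on how $\ell$ must be read), spells out precisely the bookkeeping that the paper leaves implicit --- the same point the paper only gestures at in the parenthetical remark inside the proof of Proposition~\ref{prop:energy_reset}.
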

\begin{proof}
    The first half of the statement is trivial from the definition of $\ell$ and the energy constraint condition in Defs.~\ref{def:energy_constrained_prep} and \ref{def:energy_constrained_gate}.  The latter statement follows from the fact that an input state of a GKP EC gadget experiences exactly one $g_{\rm sup}$-energy-constrained gate as shown in Eq.~\eqref{eq:knill_EC},
\end{proof}

\subsection{Threshold theorem for CV fault-tolerant quantum computation}\label{sec:threshold_theorem}

What we will show in this section is that the noise model defined for the CV circuit in Sec.~\ref{sec:ftqc_noise_model} will be translated into the local Markovian noise model on the logical qubit circuit, which will be defined below. 
To describe this noise model, let us consider a quantum circuit consisting of qubits for now.
Let $\{C_i\}_{i\in{\cal I}}$ be the chronologically ordered set of locations (preparations, gates, measurements, and waits) within a $W$-qubit $D$-depth quantum circuit $C$. (Among the locations at the same time step in $C$, their order is arbitrary.)  Then, $|{\cal I}|\leq WD$ holds by definition.
Let ${\cal O}_i$ be a CPTP map that we want to implement at location $C_i$.  Let $\tilde{\cal O}_i$ be its noisy version that we can implement in the actual quantum circuit, and assume that it is independent of noise at the other locations.
Then, the noisy state $\hat{\rho}^{\rm noisy}$ implemented by the actual quantum circuit can be written as
\begin{align}
    \hat{\rho}^{\rm noisy} &= \tilde{\cal O}_{|{\cal I}|} \circ \cdots  \circ \tilde{\cal O}_{1},\\
    &= ({\cal O}_{|{\cal I}|} + {\cal F}_{|{\cal I}|})\circ\cdots\circ({\cal O}_{1} + {\cal F}_{1}),\label{eq:fault_paths}
\end{align}
where ${\cal F}_i\coloneqq \tilde{\cal O}_i - {\cal O}_i$ is regarded as a fault. There are several important things to note about this formula. First, notice that there is no input state because some of the $\tilde{\cal O}_i$ are expected to be state preparations. Second, the output $\hat{\rho}^{\rm noisy}$ may be a quantum state or, if all qubits are measured, a probability distribution. Since the latter can be written as a diagonal density operator, we can use one symbol for both cases (or for a combination thereof). Finally, note that ${\cal F}_i$ is not a quantum operation; it is merely a linear map.

When we expand all the parentheses in the above, each term (except for $\mathcal{O}_{|\mathcal{I}|}\circ\cdots\circ\mathcal{O}_{1}$) is called a \emph{fault path}. In each fault path, the faults $\{{\cal F}_i\}$ are applied at certain locations, and the ideal operations $\{{\cal O}_i\}$ are applied at all the remaining locations.
Let us define the \emph{noise strength} $\epsilon_i$ for a particular fault~${\cal F}_i$ at location~$C_i$ as
\begin{equation}
    \epsilon_i \coloneqq \|{\cal F}_i\|_{\diamond}= \|\tilde{\cal O}_i - {\cal O}_i\|_{\diamond}.
\end{equation}
The above describes the case of \emph{independent Markovian} noise, where the fault at a given location does not depend on the faults in other locations. \emph{Local Markovian} noise is more general than this. For local Markovian noise, the fault at a given location may depend on faults in other locations as long as the correlations between faults are restricted so that the following definition holds~\cite{Aliferis2013}.

\begin{definition}[Local Markovian noise model]\label{def:local_Markovian}
    A quantum circuit on qubits and environmental system is said to experience a local Markovian noise model if the noisy evolution can be expanded as a sum of fault paths, where faults are described as the difference between a noisy map (acting as a correlated map on qubits and environmental system) and an ideal map (acting as a tensor-product map on qubits and environmental system), and the norm of a fault path with faults in a set~$R$ of specific locations is upper-bounded by $\epsilon_{\rm qubit}^{|R|}$ for a constant upper bound $\epsilon_{\rm qubit}$ on the noise strength at any location. 
\end{definition}
This type of noise is only approximately Markovian (despite the name). In fact, the definition shows that each fault~${\cal F}_i$ can depend on the noisy operations in other locations as long as the norm of any fault path falls off exponentially with the number of faults in the path.
If the noise model is local Markovian in this sense, then we can bound the total variation distance between a noisy outcome and an ideal one.  Let $\hat{\rho}^{\rm ideal}$ be a density operator representing a classical probability distribution of the measurement outcome at the end of the circuit given by
\begin{align}
    \hat{\rho}^{\rm ideal} \coloneqq {\cal O}_{|{\cal I}|}\circ\cdots\circ {\cal O}_{1}, 
\end{align}
and let $\hat{\rho}^{\rm fault}$ be
\begin{equation}
    \hat{\rho}^{\rm fault} \coloneqq \hat{\rho}^{\rm noisy} - \hat{\rho}^{\rm ideal}.
\end{equation}
(Note that $\hat{\rho}^{\rm fault}$ is not a density operator.) Then, the trace distance (or the total variation distance for classical probabilities) between the noisy and the ideal outcome is given by the trace norm of~$\hat{\rho}^{\rm fault}$:
\begin{align}
    \frac{1}{2}\|\hat{\rho}^{\rm noisy} - \hat{\rho}^{\rm ideal}\|_1 = \frac{1}{2}\|\hat{\rho}^{\rm fault}\|_1.
\end{align}
The trace norm of $\hat{\rho}^{\rm fault}$ can be bounded from above when $\epsilon_{\rm qubit} \leq 1/|{\cal I}|$ by the inclusion-exclusion formula~\cite{Aliferis2013}. The basic idea is that there are $\binom{|{\cal I}|}{r}$ ways of choosing exactly $r$ faulty locations among the $|{\cal I}|$ possibilities.  Since an upper bound on the norm of a fault path in local Markovian noise model depends only on the number of faulty locations, we have
\begin{equation}
    \|\hat{\rho}^{\rm fault}\|_1 \leq \left\|\sum_{r=1}^{|{\cal I}|} [\text{fault path with }r\text{ faulty locations}]\right\|_1 \leq \sum_{r=1}^{|{\cal I}|}{|{\cal I}| \choose r}\epsilon_{\rm qubit}^r \leq (1+\epsilon_{\rm qubit})^{|{\cal I}|} - 1 \leq (e-1)|{\cal I}|\,\epsilon_{\rm qubit},
\end{equation}
where we used in the last inequality that $(1+x)^a-1 \leq \frac{(1+x_{\max})^a-1}{x_{\max}}x$ holds for $a\geq 1$ and $x\in[0,x_{\max}]$, and that $(1+1/x)^x$ for $x>0$ is monotonically increasing and approaches $e$.

\begin{figure}[t]
\begin{centering}
\begin{picture}(280,70)(0,-5)

\thicklines

\put(20,45){\oval(40,20)[l]}
\put(20,35){\line(0,1){20}}
\put(20,45){\line(1,0){20}}
\put(22,50){\makebox(0,0)[bl]{$\scriptstyle s_1$}}

\put(20,15){\oval(40,20)[l]}
\put(20,5){\line(0,1){20}}
\put(20,15){\line(1,0){20}}
\put(22,20){\makebox(0,0)[bl]{$\scriptstyle s_2$}}

\put(40,35){\framebox(30,20){\text{EC}}}
\put(70,45){\line(1,0){25}}
\put(72,50){\makebox(0,0)[bl]{$\scriptstyle s_3$}}

\put(-4,2){\dashbox{2}(86,26){}}
\put(-4,32){\dashbox{2}(86,26){}}

\put(40,5){\framebox(30,20){\text{EC}}}
\put(70,15){\line(1,0){26}}
\put(72,20){\makebox(0,0)[bl]{$\scriptstyle s_4$}}

\put(95,45){\circle*{5}}
\put(95,15){\circle{10}}
\put(95,10){\line(0,1){35}}
\put(95,45){\line(1,0){20}}
\put(95,15){\line(1,0){20}}
\put(96,50){\makebox(0,0)[bl]{$\scriptstyle s_5$}}

\put(115,35){\framebox(30,20){\text{EC}}}
\put(145,45){\line(1,0){20}}
\put(147,50){\makebox(0,0)[bl]{$\scriptstyle s_6$}}

\put(115,5){\framebox(30,20){\text{EC}}}
\put(145,15){\line(1,0){20}}
\put(147,20){\makebox(0,0)[bl]{$\scriptstyle s_7$}}

\put(36,-2){\dashbox{2}(121,64){}}

\put(175,15){\circle{20}}
\put(175,15){\makebox(0,0){$\overline{H}$}}
\put(185,15){\line(1,0){20}}
\put(186,20){\makebox(0,0)[bl]{$\scriptstyle s_8$}}

\put(165,45){\oval(40,20)[r]}
\put(165,35){\line(0,1){20}}
\put(186,50){\makebox(0,0)[bl]{$\scriptstyle s_9$}}

\put(205,5){\framebox(30,20){\text{EC}}}
\put(235,15){\line(1,0){20}}
\put(237,20){\makebox(0,0)[bl]{$\scriptstyle s_{10}$}}

\put(111,32){\dashbox{2}(85,26){}}
\put(111,2){\dashbox{2}(139,26){}}

\put(255,15){\oval(40,20)[r]}
\put(255,5){\line(0,1){20}}
\put(276,20){\makebox(0,0)[bl]{$\scriptstyle s_{11}$}}

\put(201,-2){\dashbox{2}(88,34){}}

\end{picture}
\caption{Fault-tolerant circuit with extended rectangles (ExRecs) surrounded by broken lines \cite{Gottesman2009}.}
\label{fig:ExRec_example}
\end{centering}
\end{figure}

We aim to show the level reduction from a CV quantum circuit with a noise model defined in Sec.~\ref{sec:ftqc_noise_model} to a qubit quantum circuit with a local Markovian noise model.  For this, we define the following.
\begin{definition}
    A GKP extended rectangle (GKP ExRec) in a fault-tolerant circuit $C'$ consists of an FT-GKP gadget $C'_i$ replacing the location~$C_i$ of the original qubit circuit $C$ plus all the FT-GKP EC gadgets between $C'_i$ and the gadgets adjacent to it in the circuit $C'$. The FT-GKP EC steps before $C'_i$ are called leading FT-GKP EC steps, and those after $C'_i$ are called trailing FT-GKP EC steps.
\end{definition}
The GKP ExRecs in the circuit are illustrated in Fig.~\ref{fig:ExRec_example}.
We want to reduce the question of whether a higher-level (qubit-level) circuit is perfect or erroneous to an equivalent question of whether its constituent GKP ExRecs are correct or not, where the correctness of a GKP ExRec is defined as show below.
Our conditions for correctness are formally the same as those in Ref.~\cite{Gottesman2009} while the meanings of gadgets and the ideal decoder are different. Importantly, we have introduced the appropriate equivalence class of noise with the gadgets, filter, and the ideal decoder for the GKP code so that we can repurpose the techniques developed for proving the threshold theorem with the qubit concatenated code in Ref.~\cite{Gottesman2009} for the rest of our analysis.  
\begin{definition}[Correctness for the ideal GKP decoder]\label{def:correctness_ideal_decoder}
    A GKP gate or wait ExRec is correct if 
    \begin{equation}
        \begin{picture}(180,20)
    \thicklines
    \put(10,5){\line(1,0){15}}
    \put(25,-5){\framebox(30,20){\text{EC}}}
    \put(55,5){\line(1,0){15}}
    \put(57,10){\makebox(0,0)[bl]{$\scriptstyle s_l$}}
    \put(80,5){\circle{20}}
    \put(80,5){\makebox(0,0){$\overline{U}$}}
    \put(90,5){\line(1,0){15}}
    \put(91,10){\makebox(0,0)[bl]{$\scriptstyle s$}}
    \put(105,-5){\framebox(30,20){\text{EC}}}
    \put(135,5){\line(1,0){15}}
    \put(137,10){\makebox(0,0)[bl]{$\scriptstyle s_t$}}
    \put(150,-5){\line(0,1){20}}
    \put(150,15){\line(1,-1){10}}
    \put(160,5){\line(-1,-1){10}}
    \thinlines
    \put(160,5){\line(1,0){10}}
    \end{picture}
    =
    \begin{picture}(120, 20)
        \thicklines
    \put(10,5){\line(1,0){15}}
    \put(25,-5){\framebox(30,20){\text{EC}}}
    \put(55,5){\line(1,0){15}}
    \put(57,10){\makebox(0,0)[bl]{$\scriptstyle s_l$}}
    \put(70,-5){\line(0,1){20}}
    \put(70,15){\line(1,-1){10}}
    \put(80,5){\line(-1,-1){10}}
    \thinlines
    \put(80,5){\line(1,0){10}}
    \put(100,5){\circle{20}}
    \put(100,5){\makebox(0,0){$U$}}
    \put(110,5){\line(1,0){15}}
    \end{picture}
    \label{eq:gate_correct}
    \end{equation}
    \\
    A GKP-preparation ExRec is correct if
    \begin{equation}
    \begin{picture}(110,20)
    \thicklines
    \put(20,5){\oval(40,20)[l]}
    \put(20,-5){\line(0,1){20}}
    \put(22,10){\makebox(0,0)[bl]{$\scriptstyle s$}}
    \put(20,5){\line(1,0){15}}
    \put(35,-5){\framebox(30,20){\text{EC}}}
    \put(65,5){\line(1,0){15}}
    \put(67,10){\makebox(0,0)[bl]{$\scriptstyle s_t$}}
    \put(80,-5){\line(0,1){20}}
    \put(80,15){\line(1,-1){10}}
    \put(90,5){\line(-1,-1){10}}
    \thinlines
    \put(90,5){\line(1,0){10}}
    \end{picture}
    =
    \begin{picture}(50,20)
    \thinlines
    \put(20,5){\oval(20,20)[l]}
    \put(20,-5){\line(0,1){20}}
    \put(20,5){\line(1,0){15}}
    \end{picture}
    \end{equation}
    \\
    A GKP-measurement ExRec is correct if
    \begin{equation}
        \begin{picture}(105,20)
    \thicklines
    \put(10,5){\line(1,0){15}}
    \put(25,-5){\framebox(30,20){\text{EC}}}
    \put(55,5){\line(1,0){15}}
    \put(57,10){\makebox(0,0)[bl]{$\scriptstyle s_l$}}
    \put(70,5){\oval(40,20)[r]}
    \put(70,-5){\line(0,1){20}}
    \put(91,10){\makebox(0,0)[bl]{$\scriptstyle s$}}
    \end{picture}
    =
       \begin{picture}(130,20)
    \thicklines
    \put(10,5){\line(1,0){15}}
    \put(25,-5){\framebox(30,20){\text{EC}}}
    \put(55,5){\line(1,0){15}}
    \put(57,10){\makebox(0,0)[bl]{$\scriptstyle s_l$}}
    \put(70,-5){\line(0,1){20}}
    \put(70,15){\line(1,-1){10}}
    \put(80,5){\line(-1,-1){10}}
    \thinlines
    \put(80,5){\line(1,0){10}}
    \put(90,5){\oval(20,20)[r]}
    \put(90,-5){\line(0,1){20}}
    \end{picture}
    \label{eq:measurement_correct}
    \end{equation}
\end{definition}
\medskip

We also define the goodness/badness of the GKP ExRec as has been done in Ref.~\cite{Gottesman2009}.
\begin{definition}[Good or bad GKP ExRec] \label{def:good_gkp_exrec}
    A GKP preparation ExRec is good if it consists of an $s$-preparation gadget followed by an $s_t$-EC gadget satisfying $s+s_t<\frac{c}{2}$.  A GKP single-qubit gate ExRec is good if it consists of a leading $s_l$-EC gadget, an $s$-gate, and a trailing $s_t$-EC gadget satisfying $s_l+s+s_t<\frac{c}{2}$.  A GKP two-qubit gate ExRec is good if it consists of leading $s_l$-EC gadgets, an $s$-gate, and trailing $s_t$-EC gadgets satisfying $2s_l+s+s_t<\frac{c}{2}$.  A GKP measurement ExRec is good if it consists of an $s_l$-EC gadget followed by an $s$-measurement gadget satisfying $s_l + s < \frac{c}{2}$.  A GKP ExRec is called bad if it is not good.
\end{definition}
Using the same argument as the qubit case in Ref.~\cite{Gottesman2009}, we see that the goodness of the GKP ExRec is a sufficient condition for its correctness, as shown in the following lemma.
\begin{lemma}[Good implies correct for the ideal GKP decoder]\label{lem:good_implies_correct}
    If a GKP ExRec is good, then it is correct for the ideal GKP decoder.
\end{lemma}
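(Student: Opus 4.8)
The plan is to prove each of the three correctness identities in Definition~\ref{def:correctness_ideal_decoder} by a purely diagrammatic chain that inserts and then removes SSS $r$-filters, invoking the fault-tolerance conditions of Propositions~\ref{prop:s-meas}--\ref{prop:s-EC} one at a time. The only role of the goodness hypothesis (Definition~\ref{def:good_gkp_exrec}) is to guarantee that every intermediate filter parameter appearing in this chain stays strictly below $c/2=\sqrt{\pi}/2$, which is exactly the precondition that each of Prep~A/B, Gate~A/B, Meas, and EC~A/B requires. Since our conditions are formally the same as those in Ref.~\cite{Gottesman2009}, the bookkeeping mirrors the qubit argument there.

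Concretely, for the representative case of a good single-mode gate ExRec with leading $s_l$-EC, $s$-gate, and trailing $s_t$-EC (so $s_l+s+s_t<c/2$), I would read the left-hand side of Eq.~\eqref{eq:gate_correct} from left to right. First I would apply EC~A to the leading EC to insert an $s_l$-filter on its output at no cost. With this $s_l$-filtered state feeding the $s$-gate, Gate~A lets me insert an $(s_l+s)$-filter after the gate (legal since $s_l+s<c/2$), and it simultaneously tells me this filter is redundant, so I may keep or drop it at will. Keeping it, I feed the $(s_l+s)$-filtered state into the trailing EC followed by the ideal decoder; EC~B (applicable because $(s_l+s)+s_t<c/2$) then collapses the trailing EC, leaving only the $(s_l+s)$-filter and the decoder. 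I drop that filter via Gate~A, apply Gate~B to turn the bare $s$-gate-then-decoder into decoder-then-qubit-gate $U$, and finally use EC~A once more to erase the $s_l$-filter sitting on the leading EC output. The result is precisely the right-hand side of Eq.~\eqref{eq:gate_correct}.

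The preparation and measurement cases follow the same template but are shorter: for preparation I would use Prep~A to filter the prepared state, EC~B (with $s+s_t<c/2$) to remove the trailing EC, and Prep~B to decode to a bare qubit preparation; for measurement I would use EC~A to filter the leading EC output and then Meas (with $s_l+s<c/2$) directly. The only bookkeeping subtlety, and the step I expect to be the main obstacle, is the two-qubit gate: the leading ECs on both input modes each contribute an $s_l$-filter, and because the SUM gate spreads displacements across modes, Gate~A combines them into a filter of parameter $2s_l+s$ (the $\sum_i r_i$ in Proposition~\ref{prop:s-gate}), so correctness requires the stronger goodness condition $2s_l+s+s_t<c/2$ to keep EC~B applicable on each trailing mode. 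Verifying that the mode-wise filter parameters never exceed this bound, and that the order of filter insertions and removals is consistent across both wires, is where care is needed; everything else is a mechanical concatenation of the already-established gadget conditions.
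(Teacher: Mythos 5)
Your proposal is correct and follows essentially the same route as the paper's proof: the identical chain of EC~A (insert the $s_l$-filter), Gate~A (insert the $(s_l+s)$-filter), EC~B (collapse the trailing EC in front of the decoder, using $s_l+s+s_t<c/2$), Gate~B (commute the decoder past the gate), and EC~A again (erase the leading filter), with the preparation and measurement cases handled by the same template via Prep~A/B and Meas. Your remark about the two-mode gate requiring the stronger goodness bound $2s_l+s+s_t<c/2$ is exactly why Definition~\ref{def:good_gkp_exrec} states that condition separately, so no gap remains.
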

\begin{proof}
    Let us use a good GKP single-qubit gate ExRec as an example, and the other cases can be proved with the same reasoning.
    From Proposition~\ref{prop:s-gate} and \ref{prop:s-EC}, the following chain of equality holds.
    \begin{align*}
        &\begin{picture}(185,20)
    \thicklines
    \put(10,5){\line(1,0){15}}
    \put(25,-5){\framebox(30,20){\text{EC}}}
    \put(55,5){\line(1,0){15}}
    \put(57,10){\makebox(0,0)[bl]{$\scriptstyle s_l$}}
    \put(80,5){\circle{20}}
    \put(80,5){\makebox(0,0){$\overline{U}$}}
    \put(90,5){\line(1,0){20}}
    \put(91,10){\makebox(0,0)[bl]{$\scriptstyle s$}}
    \put(110,-5){\framebox(30,20){\text{EC}}}
    \put(140,5){\line(1,0){15}}
    \put(142,10){\makebox(0,0)[bl]{$\scriptstyle s_t$}}
    \put(155,-5){\line(0,1){20}}
    \put(155,15){\line(1,-1){10}}
    \put(165,5){\line(-1,-1){10}}
    \thinlines
    \put(165,5){\line(1,0){10}}
    \end{picture}\\
    &= 
    \begin{picture}(185,40)
    \thicklines
    \put(10,5){\line(1,0){15}}
    \put(25,-5){\framebox(30,20){\text{EC}}}
    \put(55,5){\line(1,0){15}}
    \put(57,10){\makebox(0,0)[bl]{$\scriptstyle s_l$}}
    \put(70,-5){\framebox(5,20)}
    \put(75,5){\line(1,0){15}}
    \put(77,10){\makebox(0,0)[bl]{$\scriptstyle s_l$}}
    \put(100,5){\circle{20}}
    \put(100,5){\makebox(0,0){$\overline{U}$}}
    \put(110,5){\line(1,0){20}}
    \put(111,10){\makebox(0,0)[bl]{$\scriptstyle s$}}
    \put(130,-5){\framebox(30,20){\text{EC}}}
    \put(160,5){\line(1,0){15}}
    \put(162,10){\makebox(0,0)[bl]{$\scriptstyle s_t$}}
    \put(175,-5){\line(0,1){20}}
    \put(175,15){\line(1,-1){10}}
    \put(185,5){\line(-1,-1){10}}
    \thinlines
    \put(185,5){\line(1,0){10}}
    \end{picture}\\
    &= 
    \begin{picture}(240,40)
    \thicklines
    \put(10,5){\line(1,0){15}}
    \put(25,-5){\framebox(30,20){\text{EC}}}
    \put(55,5){\line(1,0){15}}
    \put(57,10){\makebox(0,0)[bl]{$\scriptstyle s_l$}}
    \put(70,-5){\framebox(5,20)}
    \put(75,5){\line(1,0){15}}
    \put(77,10){\makebox(0,0)[bl]{$\scriptstyle s_l$}}
    \put(100,5){\circle{20}}
    \put(100,5){\makebox(0,0){$\overline{U}$}}
    \put(110,5){\line(1,0){15}}
    \put(111,10){\makebox(0,0)[bl]{$\scriptstyle s$}}
    \put(125,-5){\framebox(5,20)}
    \put(130,5){\line(1,0){30}}
    \put(132,10){\makebox(0,0)[bl]{$\scriptstyle s_l+s$}}
    \put(160,-5){\framebox(30,20){\text{EC}}}
    \put(190,5){\line(1,0){15}}
    \put(192,10){\makebox(0,0)[bl]{$\scriptstyle s_t$}}
    \put(205,-5){\line(0,1){20}}
    \put(205,15){\line(1,-1){10}}
    \put(215,5){\line(-1,-1){10}}
    \thinlines
    \put(215,5){\line(1,0){10}}
    \end{picture}\\
    &= 
    \begin{picture}(240,40)
    \thicklines
    \put(10,5){\line(1,0){15}}
    \put(25,-5){\framebox(30,20){\text{EC}}}
    \put(55,5){\line(1,0){15}}
    \put(57,10){\makebox(0,0)[bl]{$\scriptstyle s_l$}}
    \put(70,-5){\framebox(5,20)}
    \put(75,5){\line(1,0){15}}
    \put(77,10){\makebox(0,0)[bl]{$\scriptstyle s_l$}}
    \put(100,5){\circle{20}}
    \put(100,5){\makebox(0,0){$\overline{U}$}}
    \put(110,5){\line(1,0){15}}
    \put(111,10){\makebox(0,0)[bl]{$\scriptstyle s$}}
    \put(125,-5){\framebox(5,20)}
    \put(130,5){\line(1,0){25}}
    \put(132,10){\makebox(0,0)[bl]{$\scriptstyle s_l+s$}}
    \put(155,-5){\line(0,1){20}}
    \put(155,15){\line(1,-1){10}}
    \put(165,5){\line(-1,-1){10}}
    \thinlines
    \put(165,5){\line(1,0){10}}
    \end{picture}\\
    &=
    \begin{picture}(120, 40)
        \thicklines
    \put(10,5){\line(1,0){15}}
    \put(25,-5){\framebox(30,20){\text{EC}}}
    \put(55,5){\line(1,0){15}}
    \put(57,10){\makebox(0,0)[bl]{$\scriptstyle s_l$}}
    \put(70,-5){\framebox(5,20)}
    \put(75,5){\line(1,0){15}}
    \put(77,10){\makebox(0,0)[bl]{$\scriptstyle s_l$}}
    \put(90,-5){\line(0,1){20}}
    \put(90,15){\line(1,-1){10}}
    \put(100,5){\line(-1,-1){10}}
    \thinlines
    \put(100,5){\line(1,0){10}}
    \put(120,5){\circle{20}}
    \put(120,5){\makebox(0,0){$U$}}
    \put(130,5){\line(1,0){15}}
    \end{picture}\\
    &=
    \begin{picture}(120, 40)
        \thicklines
    \put(10,5){\line(1,0){15}}
    \put(25,-5){\framebox(30,20){\text{EC}}}
    \put(55,5){\line(1,0){15}}
    \put(57,10){\makebox(0,0)[bl]{$\scriptstyle s_l$}}
    \put(70,-5){\line(0,1){20}}
    \put(70,15){\line(1,-1){10}}
    \put(80,5){\line(-1,-1){10}}
    \thinlines
    \put(80,5){\line(1,0){10}}
    \put(100,5){\circle{20}}
    \put(100,5){\makebox(0,0){$U$}}
    \put(110,5){\line(1,0){15}}
    \end{picture}
    \end{align*}
    In the above, we used the fact that $s_l+s+s_t<\frac{c}{2}$, which is from the goodness of this ExRec.  Thus, we reach the correctness of the ExRec.
\end{proof}

It is thus obvious that if the FT circuit contains only good GKP ExRecs, the output distribution of the FT protocol $C'$ with all the measurement outcomes of EC gadgets ignored is the same as the output distribution of the initial qubit circuit $C$.  
Even though it is unlikely that all the GKP ExRecs are good, we can use the qubit concatenated code to suppress the error at the higher level of the concatenation so that the frequency of bad ExRecs at the concatenation level $k$ is arbitrarily small~\cite{Gottesman2009}.  To show this, however, we need to specify how a bad ExRec should be interpreted in a higher-level (qubit-level) circuit.
Unfortunately, the corresponding logical qubit picture for a bad GKP ExRec cannot be determined locally since it may depend on the left-over errors in the previous ExRecs.  We thus need to see larger contexts to determine the erroneous logical qubit operation. 
For this, we introduce the GKP $*$-decoder ${\cal D}^*_{\rm GKP}$ by generalizing the $*$-decoder in Ref.~\cite{Gottesman2009}.
The GKP $*$-decoder keeps the syndrome subsystem ${\cal H}_S$ instead of tracing it out as is done in the ideal decoder Eq.~\eqref{eq:ideal_decoder}, i.e.,
\begin{equation}
    {\cal D}^*_{\rm GKP}(\hat{\rho}) = \sum_{\mu,\nu\in\{0,1\}}\iint_{|z_1|,|z_2|<\frac{c}{2}}dz_1dz_2\iint_{|z'_1|,|z'_2|<\frac{c}{2}}dz'_1 dz'_2 \ket{\mu;z_1, z_2}\!\bra{\mu;z_1,z_2}_{LS} \hat{\rho}  \ket{\nu;z'_1, z'_2}\!\bra{\nu;z'_1,z'_2}_{LS}. \label{eq:star_decoder}
\end{equation}
One can check from Eq.~\eqref{eq:completeness_SSS} that ${\cal D}^*_{\rm GKP}$ is nothing but the identity channel, representing a state in the SSS Hilbert space.
Diagrammatically, the GKP $*$-decoder is denoted as follows.
\begin{equation*}
\begin{picture}(100,30)
\thicklines
        \put(0,20){\line(1,0){10}}
\put(10,10){\line(0,1){20}}
\put(10,30){\line(1,-1){10}}
\put(20,20){\line(-1,-1){10}}
\thinlines
\put(20,20){\line(1,0){10}}
\put(15,5){\line(0,1){10}}
\put(15,5){\line(1,0){15}}
\put(45,10){\makebox(55,16)[l]{\text{GKP $*$-decoder}}}
\end{picture}
\end{equation*}
In this diagram, the line at the top denotes the logical qubit $L$ and the line at the bottom denotes the syndrome subsystem $S$.  In this sense, the stabilizer subsystem decomposition~\cite{Mackenzie2022} is nothing but the $*$-decoder in our analysis.
For a bad GKP-gate ExRec, we thus have the following.
\begin{equation}
    \begin{picture}(180,20)
    \thicklines
    \put(10,5){\line(1,0){15}}
    \put(25,-5){\framebox(30,20){\text{EC}}}
    \put(55,5){\line(1,0){15}}
    \put(57,10){\makebox(0,0)[bl]{$\scriptstyle s_l$}}
    \put(80,5){\circle{20}}
    \put(80,5){\makebox(0,0){$\overline{U}$}}
    \put(90,5){\line(1,0){15}}
    \put(91,10){\makebox(0,0)[bl]{$\scriptstyle s$}}
    \put(105,-5){\framebox(30,20){\text{EC}}}
    \put(135,5){\line(1,0){15}}
    \put(137,10){\makebox(0,0)[bl]{$\scriptstyle s_t$}}
    \put(150,-5){\line(0,1){20}}
    \put(150,15){\line(1,-1){10}}
    \put(160,5){\line(-1,-1){10}}
    \thinlines
    \put(155,-10){\line(0,1){10}}
    \put(155,-10){\line(1,0){15}}
    \put(160,5){\line(1,0){10}}
    \end{picture}
    =
    \begin{picture}(100, 20)
        \thicklines
    \put(10,5){\line(1,0){15}}
    \put(25,-5){\line(0,1){20}}
    \put(25,15){\line(1,-1){10}}
    \put(35,5){\line(-1,-1){10}}
    \thinlines
    \put(30,-10){\line(0,1){10}}
    \put(30,-10){\line(1,0){15}}
    \put(35,5){\line(1,0){10}}
    \put(55,5){\oval(20,20)[t]}
    \put(55,-10){\oval(20,20)[b]}
    \put(45,-10){\line(0,1){15}}
    \put(65,-10){\line(0,1){15}}
    \put(55,-2.5){\makebox(0,0){${\cal U}'$}}
    \put(65,5){\line(1,0){10}}
    \put(65,-10){\line(1,0){10}}
    \end{picture}
    \label{eq:bad_gate_exrec}
\end{equation}
\\
The noisy interaction (channel) ${\cal U}'$ depends on the gate $\overline{U}$ and noise during the ExRec.
This time, the GKP $*$-decoder absorbs the entire GKP ExRec (including the leading EC gadgets, unlike the good GKP ExRecs where the decoder leaves the leading EC gadgets as shown in Lemma~\ref{lem:good_implies_correct}) to avoid the correlated errors between consecutive gates in $C$.
Similar relations hold also for bad GKP-measurement ExRecs and preparation ExRecs.
For a bad GKP-measurement ExRec, a qubit Pauli-$Z$/$X$ measurement is performed on the logical qubit $L$ while the syndrome subsystem $S$ is traced out just after the noise operator entangles two subsystems $L$ and $S$.
For a bad GKP-preparation ExRec, the operator entangles the logical qubit $L$ and the syndrome subsystem $S$ immediately after the preparations of a logical qubit state in ${\cal H}_L$ and an arbitrary state in ${\cal H}_S$, or equivalently, an entangled state in ${\cal H}_L\otimes {\cal H}_S$ is prepared.  Note that the form of ${\cal U}'$ implicitly assumes the independence and the Markovianity of the noise.  Otherwise, ${\cal U}'$ may act on several locations at the same time (if not independent) and may depend on an environmental state (if not Markovian).  The same applies to the following arguments.

Now, we need to redefine the correctness of the GKP ExRec for the GKP $*$-decoder as follows.
\begin{definition}[Correctness for the GKP $*$-decoder]\label{def:correctness_star_decoder}
    The GKP gate ExRec is correct if it satisfies the following.
    \begin{align}
        \begin{picture}(180,20)
    \thicklines
    \put(10,5){\line(1,0){15}}
    \put(25,-5){\framebox(30,20){\text{EC}}}
    \put(55,5){\line(1,0){15}}
    \put(57,10){\makebox(0,0)[bl]{$\scriptstyle s_l$}}
    \put(80,5){\circle{20}}
    \put(80,5){\makebox(0,0){$\overline{U}$}}
    \put(90,5){\line(1,0){15}}
    \put(91,10){\makebox(0,0)[bl]{$\scriptstyle s$}}
    \put(105,-5){\framebox(30,20){\text{EC}}}
    \put(135,5){\line(1,0){15}}
    \put(137,10){\makebox(0,0)[bl]{$\scriptstyle s_t$}}
    \put(150,-5){\line(0,1){20}}
    \put(150,15){\line(1,-1){10}}
    \put(160,5){\line(-1,-1){10}}
    \thinlines
    \put(160,5){\line(1,0){10}}
    \put(155,-10){\line(0,1){10}}
    \put(155,-10){\line(1,0){15}}
    \end{picture}
    &=
    \begin{picture}(120, 20)
        \thicklines
    \put(10,5){\line(1,0){15}}
    \put(25,-5){\framebox(30,20){\text{EC}}}
    \put(55,5){\line(1,0){15}}
    \put(57,10){\makebox(0,0)[bl]{$\scriptstyle s_l$}}
    \put(70,-5){\line(0,1){20}}
    \put(70,15){\line(1,-1){10}}
    \put(80,5){\line(-1,-1){10}}
    \thinlines
    \put(80,5){\line(1,0){10}}
    \put(100,5){\circle{20}}
    \put(100,5){\makebox(0,0){$U$}}
    \put(75,-15){\line(0,1){15}}
    \put(75,-15){\line(1,0){15}}
    \put(90,-20){\framebox(20,10)}
    \put(100,-15){\makebox(0,0){${\cal V}$}}
    \put(110,-15){\line(1,0){15}}
    \put(110,5){\line(1,0){15}}
    \end{picture}
    \label{eq:good_star_decoder}
    \\
    &\nonumber 
    \end{align}
    In the above, ${\cal V}$ is a CPTP map acting only on the syndrome subsystem $S$, which depends on the gate $\overline{U}$ and the noise during the ExRec\@.
    Based on this principle, analogous definitions extending Definitions~\ref{def:correctness_ideal_decoder} exist for a correct GKP-measurement ExRec and a correct GKP-preparation ExRec, as well.
\end{definition}
\noindent The important thing in the above definition is that the correct logical gate~$\hat{U}$ is applied and no correlations are introduced between the logical qubit and the syndrome subsystem.
For a good GKP ExRec, we have the following.
\begin{lemma}[Good implies correct for the $*$-decoder]
    If the GKP ExRec is good, then it is also correct for the $*$-decoder.
\end{lemma}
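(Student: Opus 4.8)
The plan is to mirror the proof of Lemma~\ref{lem:good_implies_correct} (good implies correct for the ideal GKP decoder) but to retain the syndrome subsystem ${\cal H}_S$ throughout instead of tracing it out at the final step. The conceptual starting point is the observation, noted below Eq.~\eqref{eq:star_decoder}, that ${\cal D}^*_{\rm GKP}$ is literally the identity channel written in the stabilizer-subsystem coordinates of Eq.~\eqref{eq:completeness_SSS}; hence establishing correctness in the sense of Eq.~\eqref{eq:good_star_decoder} amounts to showing that, in the SSS representation, the output of a good ExRec factorizes as the ideal qubit gate $U$ acting on $L$ tensored with some CPTP map ${\cal V}$ acting on $S$ alone, with no residual $L$--$S$ correlation and no dependence of ${\cal V}$ on the logical input. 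I would treat the gate ExRec in detail and note that the preparation and measurement ExRecs follow by the same reasoning.

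First I would reuse the goodness hypothesis $s_l + s + s_t < c/2$ (Definition~\ref{def:good_gkp_exrec}) together with the filter-propagation conditions: the leading EC produces an $s_l$-filtered state (EC~A of Proposition~\ref{prop:s-EC}), the $s$-gate carries an $s_l$-filter to an $(s_l+s)$-filter while implementing $U$ on the logical qubit (Gate~A and Gate~B of Proposition~\ref{prop:s-gate}), so the input presented to the trailing Knill-type EC is $(s_l+s)$-filtered with $(s_l+s)+s_t<c/2$. The key new work is to analyze this trailing EC in SSS coordinates \emph{without} discarding $S$. Here I would re-run the explicit circuit manipulation carried out in the proof of Proposition~\ref{prop:s-EC}, using the SSS forms of the SUM gate in Eq.~\eqref{eq:CX}, the Hadamard in Eq.~\eqref{eq:fourier_subsystem}, and the homodyne POVM (Proposition~\ref{prop:s-meas}), stopping at the qubit-teleportation identity analogous to Eq.~\eqref{eq:qubit_teleportation} but keeping the output syndrome register explicit.

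The crucial structural fact I would extract is that Knill-type EC~\eqref{eq:knill_EC} measures out the entire incoming mode (both its logical and its syndrome content) and teleports the logical information onto a \emph{freshly prepared} ancilla. Consequently, the output syndrome subsystem is determined solely by the $s_0$-parameterized ancilla preparations and the ideal Gaussian operations inside the EC, together with the feed-forward Pauli corrections, which act on $S$ only through deterministic phase factors given the classical outcomes. Averaging over outcomes, this yields a fixed CPTP map ${\cal V}$ on $S$ that is independent of the input logical state and input syndrome; simultaneously, the teleported logical qubit carries $U\ket{\psi}$ up to a classically tracked Pauli absorbed into the Pauli frame. Factorizing these two contributions gives exactly Eq.~\eqref{eq:good_star_decoder}.

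I expect the main obstacle to be precisely this decoupling step. Although the SUM and Hadamard gates entangle $L$ and $S$ in the SSS picture (Eqs.~\eqref{eq:CX} and~\eqref{eq:fourier_subsystem}), one must verify that after the full teleportation circuit and the homodyne measurements these entangling actions cancel, leaving $L$ and $S$ in a product and confining all logical dependence to $L$. This is the content that the ideal-decoder computation in Proposition~\ref{prop:s-EC} already secured after tracing out $S$; the task is to reorganize that same chain of equalities so that the trace over $S$ is never taken, and to check that the leftover $S$-action is genuinely input-independent---which is ultimately guaranteed by the fact that the teleportation destroys the input mode and regenerates the output on a constant-energy ancilla, the same property exploited for the energy reset in Proposition~\ref{prop:energy_reset}.
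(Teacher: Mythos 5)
Your proposal is correct, but it takes a genuinely different route from the paper. The paper's proof is a three-sentence abstract argument: since the ideal GKP decoder is just the GKP $*$-decoder followed by tracing out $S$, and goodness already implies correctness for the ideal decoder (Lemma~\ref{lem:good_implies_correct}), the reduced dynamics on the logical qubit $L$ is \emph{exactly} the unitary $U$ for every input; a channel whose marginal on a subsystem is exactly unitary admits no information leakage into the complement, so the full dynamics must factorize, and Eq.~\eqref{eq:good_star_decoder} is the most general form of such decoupled dynamics. You instead propose to re-run the explicit chain of circuit identities from Proposition~\ref{prop:s-EC} in SSS coordinates while never tracing out $S$, and to read off the factorization from the structure of Knill-type teleportation: the input mode is entirely consumed by the Bell-type measurement, the output lives on a freshly prepared ancilla, the logical outcome bit of teleportation is uniformly random and uncorrelated with the logical input, and the feed-forward Paulis act on $S$ only through phases tracked by the classical outcome. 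This is sound and would go through, and it buys you something the paper's argument does not: an explicit form of ${\cal V}$ and a transparent link to the energy-reset property of Proposition~\ref{prop:energy_reset}. The cost is that it is considerably more laborious, and it is specific to the Knill-type EC construction, whereas the paper's decoupling argument works for any gadget already known to be correct for the ideal decoder. One small caution on your wording: you claim ${\cal V}$ is ``independent of the input syndrome,'' which is stronger than what Definition~\ref{def:correctness_star_decoder} requires (${\cal V}$ is permitted to be a nontrivial channel \emph{on} the input syndrome subsystem); for Knill EC your stronger claim happens to be essentially true since the output syndrome is regenerated from the ancillas, but only the decoupling from $L$ is actually needed.
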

\begin{proof}
    Notice that the ideal GKP decoder corresponds to the GKP $*$-decoder with the syndrome subsystem $S$ ignored. Since the goodness of a GKP ExRec implies the correctness for the ideal GKP decoder, the logical subsystem $L$ must be decoupled from the syndrome subsystem $S$.  Then, it can be seen that Eq.~\eqref{eq:good_star_decoder} is the most general form of such decoupled dynamics.
\end{proof}

So far, we have given how good and bad ExRecs can be interpreted in the qubit picture.  However, since we forced the GKP $*$-decoder to erase all the gadgets within a bad GKP ExRec, the leading consecutive ExRec lacks the trailing GKP EC gadget.
Thus, we need the following definition.
\begin{definition}[Truncated GKP ExRec]
    A GKP ExRec missing one or more trailing ECs is called a truncated GKP ExRec (except for GKP measurement ExRecs).  A truncated GKP gate ExRec is correct for the GKP $*$-decoder (and thus for the ideal GKP decoder) if it satisfies the following.
    \begin{align}
        \begin{picture}(140,20)
    \thicklines
    \put(10,5){\line(1,0){15}}
    \put(25,-5){\framebox(30,20){\text{EC}}}
    \put(55,5){\line(1,0){15}}
    \put(57,10){\makebox(0,0)[bl]{$\scriptstyle s_l$}}
    \put(80,5){\circle{20}}
    \put(80,5){\makebox(0,0){$\overline{U}$}}
    \put(90,5){\line(1,0){20}}
    \put(91,10){\makebox(0,0)[bl]{$\scriptstyle s$}}
    \put(110,-5){\line(0,1){20}}
    \put(110,15){\line(1,-1){10}}
    \put(120,5){\line(-1,-1){10}}
    \thinlines
    \put(120,5){\line(1,0){10}}
    \put(115,-10){\line(0,1){10}}
    \put(115,-10){\line(1,0){15}}
    \end{picture}
    &=
    \begin{picture}(120, 20)
        \thicklines
    \put(10,5){\line(1,0){15}}
    \put(25,-5){\framebox(30,20){\text{EC}}}
    \put(55,5){\line(1,0){15}}
    \put(57,10){\makebox(0,0)[bl]{$\scriptstyle s_l$}}
    \put(70,-5){\line(0,1){20}}
    \put(70,15){\line(1,-1){10}}
    \put(80,5){\line(-1,-1){10}}
    \thinlines
    \put(80,5){\line(1,0){10}}
    \put(100,5){\circle{20}}
    \put(100,5){\makebox(0,0){$U$}}
    \put(75,-15){\line(0,1){15}}
    \put(75,-15){\line(1,0){15}}
    \put(90,-20){\framebox(20,10)}
    \put(100,-15){\makebox(0,0){${\cal V}$}}
    \put(110,-15){\line(1,0){15}}
    \put(110,5){\line(1,0){15}}
    \end{picture}
    \label{eq:truncated_star_decoder}
    \\
    &\nonumber 
    \end{align}
    We can similarly define the correctness for the truncated GKP preparation ExRec.
\end{definition}

In the same way as the full GKP ExRec, we define the goodness and badness of the truncated GKP ExRecs\@.
\begin{definition}
    A truncated GKP preparation ExRec is good if it consists of an $s$-preparation gadget with $s<\frac{c}{2}$.  A truncated GKP gate ExRec is good if it consists of a leading $s_l$-EC gadget and an $s$-gate with $s_l+s<\frac{c}{2}$.  A truncated GKP ExRec is bad if it is not good.
\end{definition}
One can notice that the good GKP ExRec is also good when truncated, which will be used later.  One can derive the following with the same procedures as the previous discussions (and thus we omit the proof).
\begin{lemma}
    A good truncated GKP ExRec is correct for the GKP $*$-decoder (and thus for the ideal GKP decoder).
\end{lemma}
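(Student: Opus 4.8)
The plan is to prove the statement in the two cases permitted for truncated GKP ExRecs---the truncated preparation ExRec and the truncated gate ExRec (measurement ExRecs are excluded by definition)---by truncating the chains of equalities used in Lemma~\ref{lem:good_implies_correct} and in the subsequent ``Good implies correct for the $*$-decoder'' lemma, simply dropping the trailing EC gadget that has been removed. In both cases I would first establish correctness for the ideal GKP decoder and then lift this to the GKP $*$-decoder by the same decoupling argument already used above.

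For the truncated preparation ExRec, which by definition is a bare $s$-preparation with $s<\frac{c}{2}$, the argument is immediate. By Prep~A of Proposition~\ref{prop:s-preparation} the output is invariant under $\hat{\Pi}_s$, and since $s<\frac{c}{2}$ each $\ket{\rho_i^s[\psi]}$ in Definition~\ref{def:s-parameterized_state} is the product vector $\ket{\psi}_L\otimes\int_{-c/2}^{c/2}\!dz_1\,dz_2\,f_i^s(z_1,z_2)\ket{z_1,z_2}_S$. Applying the GKP $*$-decoder therefore exposes exactly this product structure, yielding the truncated-preparation analogue of Eq.~\eqref{eq:truncated_star_decoder}: an ideal logical preparation of $\ket{\psi}$ on $L$ together with a (noise-dependent) state preparation on $S$. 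Tracing out $S$ recovers correctness for the ideal GKP decoder.

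For the truncated gate ExRec, which consists of a leading $s_l$-EC followed by an $s$-gate with $s_l+s<\frac{c}{2}$, I would run the same first two steps as in Lemma~\ref{lem:good_implies_correct}. First, by EC~A of Proposition~\ref{prop:s-EC} the leading EC output is invariant under $\hat{\Pi}_{s_l}$, so I may insert $\hat{\Pi}_{s_l}$ after it without altering the circuit. Next, by Gate~A of Proposition~\ref{prop:s-gate} the $s$-gate propagates this to an $(s_l+s)$-filter on its output; since $s_l+s<\frac{c}{2}$, Gate~B of Proposition~\ref{prop:s-gate} gives that decoding the output with the ideal GKP decoder applies precisely the logical unitary $U$ after the decoder, establishing correctness for the ideal GKP decoder. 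To pass to the $*$-decoder I would invoke the decoupling argument verbatim: because the ideal-decoder computation shows the logical qubit undergoes exactly $U$ independently of the syndrome value, the most general form of the dynamics that keeps $S$ is Eq.~\eqref{eq:truncated_star_decoder}, with a CPTP map $\mathcal{V}$ on $S$ depending on $\overline{U}$ and the noise during the ExRec.

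The only point requiring care---the analogue of the single nontrivial step in the earlier $*$-decoder lemma---is the lift from the ideal decoder to the $*$-decoder, i.e.\ arguing that no $L$--$S$ correlation survives. This is resolved exactly as before: the FT conditions force the logical content to equal $U$ for every syndrome configuration, so there is no residual logical--syndrome correlation, and complete positivity then fixes the remaining action on $S$ to a single CPTP map $\mathcal{V}$. Crucially, every filter and gate manipulation stays below the threshold $\frac{c}{2}$ thanks to $s_l+s<\frac{c}{2}$, so no Pauli misidentification occurs and, notably, no trailing EC is needed for any of these steps; the proof is thus a strict truncation of the one given for the full good GKP ExRec, which is why it is omitted in the main text.
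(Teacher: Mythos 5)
Your proof is correct and follows exactly the route the paper intends: the paper omits this proof precisely because it is, as you say, a strict truncation of the chains of equalities in Lemma~\ref{lem:good_implies_correct} (via Prep~A, EC~A, Gate~A, and Gate~B under the condition $s_l+s<\frac{c}{2}$), followed by the same decoupling argument used to lift correctness from the ideal GKP decoder to the GKP $*$-decoder. No gaps; your handling of the two cases (preparation and gate) and the identification of the lift to the $*$-decoder as the only step requiring care match the paper's reasoning.
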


We can now reduce our noisy CV quantum circuit consisting of multiple GKP ExRecs to a qubit-level quantum circuit.
\begin{definition}[Noisy quantum circuit defined through the GKP code]\label{def:noisy_reduction}
    Given an FT-GKP CV circuit $C'$ for implementing an ideal qubit circuit $C$, the circuit $\tilde{C}$ on qubits and environmental systems is defined by moving the GKP $*$-decoder from the end to the start of the circuit $C'$ as in Eqs.~\eqref{eq:bad_gate_exrec},~\eqref{eq:good_star_decoder}, and~\eqref{eq:truncated_star_decoder} and regard the syndrome subsystems as environmental systems.
\end{definition}

We finally come to the point of stating our level-reduction theorem.
Recall that we have defined our noise model using three parameters, i.e., $s$, $\epsilon$, and $E$ in Defs.~\ref{def:s_eps_Markovian_prep}--\ref{def:E_s_eps_Markovian_meas}.
The essence of our analysis is to take sufficiently small constant $s$ so that the FT-GKP CV circuit $C'$ should be able to simulate the original qubit circuit $C$ correctly in the ideal case (i.e., in the case of $\epsilon=0$), and then take into account the effect of actual nonzero $\epsilon$ in a non-ideal case by perturbatively evaluating the norm of each fault path for $\tilde{C}$.
In particular, we choose the parameters $s_{\rm p}$, $s_{\rm g}$, $s_{\rm m}$, and $s_{\rm e}$ to satisfy the following conditions:
\begin{align}
    2s_{\rm p} + 2s_{\rm g} + \max\{s_{\rm g} + s_{\rm m}, 2s_{\rm g}\} &= s_{\rm e}, \label{eq:EC_from_pgm} \\
    s_{\rm p} + s_{\rm e} &< \frac{c}{2}, \label{eq:prep+EC} \\
    3s_{\rm e} + s_{\rm g} &< \frac{c}{2}, \\
    s_{\rm e} + s_{\rm m} &< \frac{c}{2}. \label{eq:EC+meas} 
\end{align}
These conditions ensure that all the GKP preparation ExRecs, gate ExRecs (including the two-mode gate), and measurement ExRecs composed of $s_{\rm p}$-preparations, $s_{\rm g}$-gates, and $s_{\rm m}$-measurements in the circuit $C'$ are good.
Then, the evaluation of errors in the qubit circuit $\tilde{C}$ reduces to the evaluation of the deviation from this \quoted{idealized} situation in the CV circuit $C'$ due to the effect of nonzero $\epsilon$.

To evaluate this effect, let $E_{\max}^\ell(\epsilon)$ be the function of $\epsilon\in(0,1)$ defined as
\begin{align}
        E_{\max}^\ell(\epsilon) \coloneqq g_{\rm sup}\left(\frac{g_{\rm sup}^{\ell-1}(E_{\rm prep})}{\epsilon^2}\right),
        \label{eq:defining_eq_E_max}
\end{align}
where $E_{\rm prep}$ and $g_{\rm sup}$ are a positive constant and a positive, monotonically increasing, locally bounded function, respectively, which define the energy-constraint conditions in Defs.~\ref{def:energy_constrained_prep} and \ref{def:energy_constrained_gate}, and $\ell$ is the maximum number of gate gadgets that a quantum state prepared during the computation undergoes before it is measured as defined in Prop.~\ref{prop:maximum_energy_ExRec}. 
Notice that $E_{\max}^\ell(\epsilon)$ increases when $\epsilon$ decreases. 
Then, we prove the following theorem.

\begin{theorem}[Level reduction]\label{theo:level_reduction}
    Let $C'$ be an FT-GKP circuit on CV systems for implementing a circuit $C$ on qubits. Suppose that all preparation gadgets and gate gadgets in $C'$ satisfy an $E_{\rm prep}$-energy constraint and a $g_{\rm sup}$-energy constraint, respectively (Defs.~\ref{def:energy_constrained_prep} and~\ref{def:energy_constrained_gate}).  Also, for $\epsilon$ satisfying $0<\epsilon<1$ and $s_{\rm p}$, $s_{\rm g}$, and $s_{\rm m}$ satisfying Eqs.~\eqref{eq:EC_from_pgm}--\eqref{eq:EC+meas}, suppose that the CV circuit $C'$ experiences $(s_{\rm p},\epsilon)$-independent Markovian noise for state preparations, $(E_{\max}^{\ell}(\epsilon),s_{\rm g},\epsilon)$-independent Markovian noise for gates (including waits), and $(E_{\max}^{\ell}(\epsilon),s_{\rm m},\epsilon)$-independent Markovian noise for measurements (Defs.~\ref{def:s_eps_Markovian_prep}--\ref{def:E_s_eps_Markovian_meas}), where an $s_{\rm p}$-preparation and an $s_{\rm g}$-gate inside the definitions of these noises also satisfy an $E_{\rm prep}$-energy constraint and a $g_{\rm sup}$-energy constraint, respectively.
    For a nonempty subset $R\subseteq{\cal I}$ of the indices of the locations in $C$, let $\hat{\rho}_R$ be the sum of fault paths of the quantum computation with faults applied to at least one location in the $i^{\rm th}$ truncated GKP ExRec in $C'$ (i.e., corresponding to the location $C_i$ in $C$) for each $i\in R$, where faults are described in the sense of Eq.~\eqref{eq:fault_paths}.  For any such sum of fault paths $\hat{\rho}_R$ in $C'$, define a fault path in a circuit $\tilde{C}$ on qubits and environmental systems as follows: replace the location $C_i$ of $C$ for each $i\in R$ with the corresponding fault map acting both on the qubits and the environmental subsystems in $C'$, which can, for example, be described as the difference between ${\cal U}'$ in Eq.~\eqref{eq:bad_gate_exrec} and ${\cal U}\otimes{\cal V}$ in Eq.~\eqref{eq:good_star_decoder}, and keep the others (i.e., $C_i$ for $i\in{\cal I}\setminus R$) unchanged on the qubits while acting appropriate channels on the environmental systems as in Eq.~\eqref{eq:good_star_decoder}.  Then, the circuit $\tilde{C}$ experiences a local Markovian noise model (Def.~\ref{def:local_Markovian}).  Furthermore, an upper bound $\epsilon_\mathrm{qubit}$ on the noise strength at any location in the circuit $\tilde{C}$ is given by
    \begin{equation}
        \epsilon_\mathrm{qubit} = 10 \epsilon L_{\max},
    \end{equation}
    where $L_{\max}$ denotes the maximum number of GKP preparation, gate, and measurement gadgets in any truncated GKP ExRec in $C'$.
\end{theorem}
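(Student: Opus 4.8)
The plan is to compare the noisy CV circuit $C'$ against the idealized circuit in which every gadget realizes \emph{exactly} its $s$-parameterized version, and to read off the logical-level behavior through the GKP $*$-decoder. First I would fix $s_{\rm p},s_{\rm g},s_{\rm m},s_{\rm e}$ to satisfy Eqs.~\eqref{eq:EC_from_pgm}--\eqref{eq:EC+meas}, so that every (truncated) GKP ExRec assembled from these gadgets is good and hence correct for the $*$-decoder by Lemma~\ref{lem:good_implies_correct} and its $*$-decoder analogue; moving the $*$-decoder to the front as in Def.~\ref{def:noisy_reduction} then turns the $\epsilon=0$ circuit into $C$ acting on the logical qubits, tensored with channels ${\cal V}$ on the syndrome (environmental) subsystems. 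Writing each noisy gadget as $\tilde{\cal O}_j={\cal O}_j+{\cal F}_j$, and using that the truncated ExRecs \emph{partition} the gadgets, I expand the full noisy evolution as $\sum_{R\subseteq{\cal I}}\hat\rho_R$, where $\hat\rho_R$ gathers the fault paths in which at least one gadget inside the $i$-th truncated ExRec is faulty for every $i\in R$ while all other ExRecs act ideally. Verifying the requirements of Def.~\ref{def:local_Markovian} then reduces to the single estimate $\|\hat\rho_R\|_1\leq(10\epsilon L_{\max})^{|R|}$.

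The core step is a per-ExRec bound: after $*$-decoding, the deviation of one truncated ExRec is ${\cal E}_i:=\Lambda_i^{\rm actual}-\Lambda_i^{\rm ideal}$, a difference of CPTP maps with $\Lambda_i^{\rm ideal}={\cal U}_i\otimes{\cal V}_i$, and I claim $\|{\cal E}_i\|_\diamond^{E}\leq 10\epsilon L_{\max}$ for inputs of physical energy $E\leq g_{\rm sup}^{\ell-1}(E_{\rm prep})$. I would prove this by expanding $\Lambda_i^{\rm actual}=\prod_{j}({\cal O}_j+{\cal F}_j)$, cancelling the all-ideal term against $\Lambda_i^{\rm ideal}$, and bounding the terms carrying internal faults. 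For a single internal fault at gadget $j$ I write the term as $\Phi_{\rm after}\circ(\tilde{\cal O}_j\circ\Phi_{\rm before}-{\cal O}_j\circ\Phi_{\rm before})$ and apply Corollary~\ref{cor:composition}: since $\Phi_{\rm after}$ is CPTP its diamond norm equals $1$ at every energy, so the term is bounded by $10\epsilon$ as soon as $\tfrac12\|\tilde{\cal O}_j-{\cal O}_j\|_\diamond^{E_{\rm in,j}}\leq\epsilon$. This last inequality holds by the noise model precisely because the renormalized energy $g_{\rm sup}(E_{\rm in,j}/\epsilon^2)$ equals $E_{\max}^\ell(\epsilon)$ of Eq.~\eqref{eq:defining_eq_E_max}, while $E_{\rm in,j}\leq g_{\rm sup}^{\ell-1}(E_{\rm prep})$ by Prop.~\ref{prop:maximum_energy_ExRec} and monotonicity of the energy-constrained norm. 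Summing over the $\leq L_{\max}$ internal gadgets gives the first-order bound $10\epsilon L_{\max}$, and the multi-fault terms are higher order in $\epsilon$ and absorbed into the same bound.

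To assemble $\hat\rho_R$ I would peel the fault-ExRecs one at a time: the CPTP operations preceding the first fault produce a genuine state, so ${\cal E}_{i_1}$ outputs $\hat\rho^{\rm act}-\hat\rho^{\rm id}$ with half trace distance $\leq 5\epsilon L_{\max}$ between two energy-bounded states, and the Hermitian-preserving tail is then estimated with Prop.~\ref{prop:difference_energy_constrained}, which pays one factor of order $10\epsilon L_{\max}$, renormalizes the energy by $1/\epsilon^2$, and—via achievability of the energy-constrained diamond norm—returns the tail to the action on a genuine state. The step I expect to be the main obstacle is keeping this energy renormalization from \emph{compounding} across the $|R|$ faults, which would otherwise force an energy growing like $1/\epsilon^{2|R|}$ and render $E_{\max}^\ell(\epsilon)$ useless. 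This is resolved by the energy-reset property of Knill EC: Prop.~\ref{prop:energy_reset} guarantees that the reduced state on each mode \emph{leaving} every EC gadget has energy at most $g_{\rm sup}^4(E_{\rm prep})$ independently of the (possibly non-positive) operator fed in, because the output mode is a freshly teleported ancilla. Hence the input to each successive fault-ExRec is reset to the physical bound $g_{\rm sup}^{\ell-1}(E_{\rm prep})$, the single renormalization per fault always lands at $E_{\max}^\ell(\epsilon)$, and iterating $|R|$ times—using that the interspersed ideal ExRecs are CPTP with diamond norm $1$—yields $\|\hat\rho_R\|_1\leq(10\epsilon L_{\max})^{|R|}$ with careful bookkeeping of the constants. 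This is also exactly why Knill-type rather than Steane-type EC is required.

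Finally I would confirm that the faults defined above fit Def.~\ref{def:local_Markovian}: under the $*$-decoder a good ExRec reduces to the correct logical operation tensored with a syndrome-subsystem channel (the ideal tensor-product map on qubits and environment of Eq.~\eqref{eq:good_star_decoder}), a bad ExRec to the correlated map ${\cal U}'$ of Eq.~\eqref{eq:bad_gate_exrec}, and the fault at the location $C_i$ is the difference of the two acting jointly on the qubit and its environment—precisely the form demanded by the local Markovian model. The estimate just derived is then the claimed noise strength $\epsilon_{\rm qubit}=10\epsilon L_{\max}$, so that $\tilde C$ experiences local Markovian noise, and composing with the qubit threshold theorem for such noise closes the argument.
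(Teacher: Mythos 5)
Your overall strategy matches the paper's: fix $s_{\rm p},s_{\rm g},s_{\rm m},s_{\rm e}$ so that every (truncated) GKP ExRec is good and hence correct for the $*$-decoder, pull the $*$-decoder to the front, bound each fault-carrying truncated ExRec by $10\epsilon L_{\max}$ using Corollary~\ref{cor:composition}, and iterate over $R$ using the Knill-EC energy reset (Props.~\ref{prop:energy_reset} and~\ref{prop:maximum_energy_ExRec}) to prevent the $1/\epsilon^2$ energy renormalization from compounding. Your identification of that compounding as the main obstacle, and of the energy reset as its resolution, is exactly the paper's reasoning.

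There is, however, one concrete gap in your per-ExRec bound. You propose to expand $\Lambda_i^{\rm actual}=\prod_j({\cal O}_j+{\cal F}_j)$, cancel the all-ideal term, bound each single-fault term by $10\epsilon$ via Corollary~\ref{cor:composition}, and declare the multi-fault terms ``higher order in $\epsilon$ and absorbed into the same bound.'' This last step does not go through as stated. A term containing $k\geq 2$ fault maps is a composition ${\cal F}_{j_k}\circ\cdots\circ{\cal F}_{j_1}$ interleaved with ideal operations; after the first ${\cal F}_{j_1}$ is applied, the intermediate operator is no longer a difference of two energy-constrained \emph{states} (nor a difference of two CPTP maps applied to a state), so neither Prop.~\ref{prop:difference_energy_constrained} nor Corollary~\ref{cor:composition} applies to the next fault map without a fresh energy renormalization by $1/\epsilon^2$ --- precisely the compounding you correctly flag at the ExRec level, now reappearing \emph{inside} a single ExRec. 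Even granting the bound, summing $\binom{L_{\max}}{k}(10\epsilon)^k$ over $k\geq 1$ gives $(1+10\epsilon)^{L_{\max}}-1$, not $10\epsilon L_{\max}$. The paper avoids both problems by never expanding into $2^{L}-1$ fault configurations: it groups the fault paths by the location $j$ of the \emph{first} fault, rewriting ${\cal O}^{\rm noisy}_{j_\final:j_\initial}-{\cal O}^{\rm ideal}_{j_\final:j_\initial}=\sum_{j}{\cal O}^{\rm noisy}_{j_\final:j+1}\circ{\cal F}_j\circ{\cal O}^{\rm ideal}_{j-1:j_\initial}$ as in Eq.~\eqref{eq:quantity_to_bound}, so that each of the exactly $L_{\max}$ terms contains a single fault map sandwiched between a fully noisy CPTP suffix (diamond norm $1$) and a fully ideal CPTP prefix, and Corollary~\ref{cor:composition} applies once per term. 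You already use the analogous peeling at the level of the set $R$; applying the same telescoping within each truncated ExRec is the missing ingredient that turns your sketch into the paper's proof.
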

\begin{proof}
    Our proof is based on evaluating the effect of nonzero $\epsilon$, applying the techniques in the analysis of threshold theorem under independent Markovian noises on qubits~\cite{Aliferis2013} to the GKP code.
    Since $E^{\ell}_{\max}(\epsilon)$ is a function  of $\epsilon$, the noise model here has two types of parameters; i.e., one is $s_{\rm p}$, $s_{\rm g}$, $s_{\rm m}$, and $s_{\rm e}$, and the other is $\epsilon$.
    As is clear from the correctness conditions of the GKP ExRecs, the GKP code can tolerate and correct the effect of nonzero $s_{\rm p}$, $s_{\rm g}$, $s_{\rm m}$, and $s_{\rm e}$; in particular, with $s_{\rm p}$, $s_{\rm g}$, $s_{\rm m}$, and $s_{\rm e}$ satisfying Eqs.~\eqref{eq:EC_from_pgm}--\eqref{eq:EC+meas}, if $\epsilon$ were zero, then the FT-GKP circuit $C'$ for $C$ that consists only of $s_{\rm p}$-preparations, $s_{\rm g}$-gates, $s_{\rm m}$-measurements, and $s_{\rm e}$-ECs would output the same probability distribution as the ideal qubit circuit $C$ since all the GKP ExRecs become good (see Def.~\ref{def:good_gkp_exrec}). 
    By contrast, it turns out that nonzero $\epsilon$ may directly lead to an error on the logical qubit of the GKP code.
    In this sense, our analysis of the effect of $\epsilon$ on the GKP code is analogous to considering the distance-1 code on qubits in the level reduction under the independent Markovian noise model~\cite{Aliferis2013}; however, the difference between Ref.~\cite{Aliferis2013} and our analysis of GKP code arises since we also need to use bounds on the energy-constrained diamond norm introduced in Sec.~\ref{sec:energy_constrained_diamond_norm}.
    
    In our case, each fault in the fault path of $\tilde{C}$ can be regarded as the difference between an ideal preparation, gate, or measurement in $C$ and the corresponding erroneous ones, such as ${\cal U}'$ in Eq.~\eqref{eq:bad_gate_exrec}, which are given by moving the GKP $*$-decoder through the whole circuit as in Def.~\ref{def:noisy_reduction}.
    Thus, one fault in a fault path of the CV circuit $C'$, which is the difference between the actual noisy gate (resp.~preparation, measurement) and a corresponding $s_{\rm g}$-gate (resp.~$s_{\rm p}$-preparation, $s_{\rm m}$-measurement), may immediately cause a fault in $\tilde{C}$ (which is akin to the distance-1 code on qubits in the sense that no fault is tolerated).
    In this case, given a fault path, if the trailing EC gadget of a full GKP ExRec includes faults, then these faults are always taken into account as those in the leading EC gadget of the successive GKP ExRec due to the rule in Def.~\ref{def:noisy_reduction}, and thus the given GKP ExRec becomes a truncated GKP ExRec; therefore, in our analysis (similar to the analysis of the distance-1 code on qubits), it suffices to count faults in truncated GKP ExRecs.
    In other words, a fault in $\tilde{C}$ may occur if at least one of the CV preparations, gates, and measurements in the corresponding truncated GKP ExRec (i.e., the leading EC gadgets or the location $C'_i$ corresponding to $C_i$ in $C$) is not an $E_{\rm prep}$-energy-constrained $s_{\rm p}$-preparation, a $g_{\rm sup}$-energy-constrained $s_{\rm g}$-gate, and an $s_{\rm m}$-measurement, respectively.
    We will thus bound, for a nonempty subset $R\subseteq {\cal I}$, the norm of the sum $\hat{\rho}_R$ of fault paths in $C'$ in which faults apply to at least one location in the $i^{\rm th}$ truncated GKP ExRec in $C'$ for each $i\in R$, which then corresponds to the fault path in $\tilde{C}$ where faults apply to $\{\tilde{C}_i\}_{i\in R}$.

    Let $\{C'_j\}_{j\in{\cal I}'}$ be a chronologically ordered set of locations in $C'$. (Recall that each of these may be a preparation, gate, or measurement.)
    For brevity in what follows, we define a composite map realized at locations from $C'_a$ to $C'_b$ by
    \begin{align}
        {\cal O}_{b:a}
    \coloneqq
        \prod_{j=a}^b {\cal O}_j
    =
        {\cal O}_{b}
    \circ
        {\cal O}_{b-1}
    \circ
        \dotsm
    \circ
        {\cal O}_{a+1}
    \circ
        {\cal O}_{a}
    \end{align}
    with the composition product taken in the appropriate order, starting with~$a$ (on the right) and ending with~$b$ (on the left). We will use two version of this definition, ${\cal O}^{\rm noisy}_{b:a}$ and ${\cal O}^{\rm ideal}_{b:a}$, where an individual CPTP map ${\cal O}^{\rm noisy}_j$ (resp.~${\cal O}^{\rm ideal}_j$) with $j\in{\cal I}'$ corresponds to either an $E_{\rm prep}$-energy-constrained actual preparation (resp.~$s_{\rm p}$-preparation), a $g_{\rm sup}$-energy-constrained actual gate (resp.~$s_{\rm g}$-gate), or an actual measurement (resp.~$s_{\rm m}$-measurement) at the $j^{\text{th}}$ location in $C'$.  Let us further define the fault map ${\cal F}_j$ as
    \begin{equation}
        {\cal F}_j\coloneqq {\cal O}^{\rm noisy}_j - {\cal O}^{\rm ideal}_j. \label{eq:def_fault_map}
    \end{equation}
    From the definition of the $(s,\epsilon)$- and $(E,s,\epsilon)$-independent Markovian noise in Defs.~\ref{def:s_eps_Markovian_prep}--\ref{def:E_s_eps_Markovian_meas}, we have
    \begin{equation}
        \frac{1}{2}\|{\cal F}_j\|_{\diamond}^{E_{\max}^{\ell}(\epsilon)} \leq \epsilon,\label{eq:fault_diamond_norm}
    \end{equation}
    where we slightly abuse the notation when ${\cal O}_j^{\rm noisy}$ and ${\cal O}_j^{\rm ideal}$ are state-preparation maps by defining $\mathfrak{S}_{E}(\mathbb{C})\coloneqq\{1\}\eqqcolon\mathfrak{S}(\mathbb{C})$.
    Then, the sum $\hat{\rho}_R$ of fault paths defined above is given by
    \begin{equation}
        \hat{\rho}_R = \sum_{i\in R}\sum_{T_i\subseteq R'_i:|T_i|\geq 1}\bigl[\text{Fault }{\cal F}_j\text{ replaces } {\cal O}^{\rm ideal}_j \text{ in }{\cal O}^{\rm ideal}_{|{\cal I}'|:1} \text{ for all }j\in T_i\bigr],
    \end{equation}
    where $R'_i\subseteq{\cal I}'$ for $i\in R$ denotes the set of indices that correspond to the locations of $i^{\rm th}$ truncated GKP ExRec.

    Let $i_{\initial}$ be the initial (smallest) element of the ordered set $R$ and let $j_{\initial}$ and $j_{\final}$ be the initial and the final elements of $R'_{i_\initial}$, i.e., $R'_{i_\initial}=\{j_\initial,\ldots,j_\final\}$.  Then, by using the definition of ${\cal F}_j$ in Eq.~\eqref{eq:def_fault_map}, we have 
    \begin{align}
        \hat{\rho}_R 
        &={\cal O}'_{|{\cal I}'|:j_\final+1}\circ \left(\sum_{T\subseteq\{j_\initial,\ldots,j_\final\}:|T|\geq 1} \bigl[\text{Fault }{\cal F}_j\text{ replaces } {\cal O}^{\rm ideal}_j \text{ in }{\cal O}^{\rm ideal}_{j_\final:j_\initial} \text{ for all }j\in T \bigr] \right) \circ {\cal O}^{\rm ideal}_{j_\initial-1:1}
        \label{eq:quantity_to_bound_middle_as_sum}\\
        & = {\cal O}'_{|{\cal I}'|:j_\final+1}\circ \left({\cal O}^{\rm noisy}_{j_\final:j_\initial} - {\cal O}^{\rm ideal}_{j_\final:j_\initial} \right) \circ {\cal O}^{\rm ideal}_{j_\initial-1:1} , \label{eq:quantity_to_bound_middle}
    \end{align}
    where
    \begin{equation}
        {\cal O}'_{|{\cal I}'|:j_\final+1}\coloneqq \sum_{i\in R\setminus \{i_{\initial}\}}\sum_{T_i\subseteq R'_i:|T_i|\geq 1} \bigl[\text{Fault }{\cal F}_j \text{ replaces } {\cal O}_j^{\rm ideal} \text{ in }{\cal O}_{|{\cal I}|':j_\final+1}^{\rm ideal} \text{ for all }j\in T_i\bigr],
    \end{equation}
    and ${\cal O}_{j:j+1}\coloneqq\mathrm{Id}\eqqcolon{\cal O}_{j-1:j}$ for notational convenience (since, in both cases, the initial index is after the final index).
    By adding
    \begin{equation}
        0={\cal O}'_{|{\cal I}'|:j_\final+1} \circ \left(\sum_{j=j_\initial}^{j_\final-1} -{\cal O}^{\rm noisy}_{j_\final:j+1}\circ {\cal O}^{\rm ideal}_{j:j_\initial} + {\cal O}^{\rm noisy}_{j_\final:j+1}\circ {\cal O}^{\rm ideal}_{j:j_\initial}\right)\circ {\cal O}^{\rm ideal}_{j_\initial-1:1} 
    \end{equation}
    to Eq.~\eqref{eq:quantity_to_bound_middle}, we have\footnote{Another way to see this is to reconsider the difference ${{\cal O}^{\rm noisy}_{j_\final:j_\initial} - {\cal O}^{\rm ideal}_{j_\final:j_\initial}}$ in Eq.~\eqref{eq:quantity_to_bound_middle}, which is the sum in parentheses in Eq.~\eqref{eq:quantity_to_bound_middle_as_sum}. The first term, ${\cal O}^{\rm noisy}_{j_\final:j_\initial}$, is the sum over all~$2^{\abs{T}}$ fault paths, including the no-fault path. The only thing that subtracting the second term, ${\cal O}^{\rm ideal}_{j_\final:j_\initial}$, is doing is guaranteeing at least one fault. And this means there must be a location of the \emph{first} fault, which we call~$j$. Now, group the terms in this sum according to this location. By definition, (1)~there are no faults before this location (${\cal O}^{\rm ideal}_{j-1:j_\initial}$), (2)~${\cal F}_j$ has replaced the ideal operation at this location, and (3)~the rest of the locations may or may not have any faults, and we must sum over these possibilities (${\cal O}^{\rm noisy}_{j_\final:j+1}$). Composing these and summing over all first-fault locations~$j$ gives the result in Eq.~\eqref{eq:quantity_to_bound}.}
    \begin{equation}
        \hat{\rho}_R = {\cal O}'_{|{\cal I}'|:j_\final+1}\circ \left(\sum_{j=j_\initial}^{j_\final} {\cal O}^{\rm noisy}_{j_\final:j+1} \circ {\cal F}_j \circ {\cal O}^{\rm ideal}_{j-1:j_\initial} \right) \circ {\cal O}^{\rm ideal}_{j_\initial-1:1}.\label{eq:quantity_to_bound}
    \end{equation}
    Now, we upper-bound the trace norm of $\hat{\rho}_R$.  (Recall that $\{{\cal O}_j\}_{j\in{\cal I}'}$ includes a state-preparation map.)  Using $\|{\cal U}(\hat{\rho})\|_1\leq \|{\cal U}\|^E_{\diamond}\|\hat{\rho}\|_1$ for any linear map ${\cal U}$ and any state $\hat{\rho}\in\mathfrak{S}_E({\cal H}_Q)$, we have
    \begin{align}
        \|\hat{\rho}_R\|_1&=\left\|{\cal O}'_{|{\cal I}'|:j_\final+1}\circ \left(\sum_{j=j_\initial}^{j_\final} {\cal O}^{\rm noisy}_{j_\final:j+1} \circ {\cal F}_j \circ {\cal O}^{\rm ideal}_{j-1:j_\initial} \right) \circ {\cal O}^{\rm ideal}_{j_\initial-1:1}\right\|_1 \\
        &\leq \left\|{\cal O}'_{|{\cal I}'|:j_\final+1}\circ \left(\sum_{j=j_\initial}^{j_\final} {\cal O}^{\rm noisy}_{j_\final:j+1} \circ {\cal F}_j \circ {\cal O}^{\rm ideal}_{j-1:j_\initial} \right) \right\|_{\diamond}^{g_{\rm sup}^{\ell-1}(E_{\rm prep})}\|{\cal O}^{\rm ideal}_{j_\initial-1:1}\|_1 \label{eq:output_of_ideal_sequence} \\
        &\leq\sum_{j=j_\initial}^{j_\final} \left\|{\cal O}'_{|{\cal I}'|:j_\final+1}\circ \left( {\cal O}^{\rm noisy}_{j_\final:j+1} \circ {\cal F}_j \circ {\cal O}^{\rm ideal}_{j-1:j_\initial} \right) \right\|_{\diamond}^{g_{\rm sup}^{\ell-1}(E_{\rm prep})} \\
        &\leq \sum_{j=j_\initial}^{j_\final}\left\|{\cal O}'_{|{\cal I}'|:j_\final+1}\circ \left( {\cal O}^{\rm noisy}_{j_\final:j+1} \circ {\cal F}_j \circ {\cal O}^{\rm ideal}_{j-1:j_\initial} \right) \right\|_{\diamond}^{g_{\rm sup}^{\ell-1}(E_{\rm prep})/\epsilon^2}, \label{eq:putting_out_E-diamond}
    \end{align}
    where the first inequality follows from the fact that an input of a GKP EC gadget (i.e., an output of the leading truncated GKP ExRec) is contained in $g_{\rm sup}^{\ell-1}(E_{\rm prep})$ as shown in Prop.~\ref{prop:maximum_energy_ExRec}, the second inequality from the triangle inequality and $\|{\cal O}^{\rm ideal}_{j_\initial-1:1}\|_1=1$, and the last inequality from the monotonicity of the energy-constrained diamond norm under energy increase in Eq.~\eqref{eq:monotonicity}.  
    Notice that
    \begin{align}
        {\cal O}^{\rm noisy}_{j_\final:j+1}  \circ {\cal O}^{\rm ideal}_{j:j_\initial}
        : \mathfrak{S}({\cal H}_Q)\rightarrow \mathfrak{S}_{g_{\rm sup}^{\ell-1}(E_{\rm prep})}({\cal H}_Q) \label{eq:actual_reset}
    \end{align}
    holds for any $j\in\{j_\initial-1,\ldots,j_\final\}$ from Props.~\ref{prop:energy_reset} and~\ref{prop:maximum_energy_ExRec}.  Furthermore, we have that
    \begin{equation}
        {\cal O}^{\rm ideal}_{j-1:j_\initial}:\mathfrak{S}_{g_{\rm sup}^{\ell-1}(E_{\rm prep})/\epsilon^2}({\cal H}_Q) \rightarrow \mathfrak{S}_{E_{\max}^{\ell}(\epsilon)}({\cal H}_Q) \label{eq:max_may_be_achieved}
    \end{equation}
    holds from Eq.~\eqref{eq:defining_eq_E_max} and the fact that the input of a GKP EC gadget undergoes only one gate before being measured and that energy of states newly prepared at the GKP EC gadget is bounded from above by $g_{\rm sup}^{\ell-1}(E_{\rm prep})\leq E_{\max}^{\ell}(\epsilon)$ for any $j\in\{j_\initial,\ldots,j_\final\}$. We thus have
    \begin{equation}
        \left\|{\cal O}^{\rm noisy}_{j_\final:j+1} \circ {\cal F}_j \circ {\cal O}^{\rm ideal}_{j-1:j_\initial} \right\|_{\diamond}^{g_{\rm sup}^{\ell-1}(E_{\rm prep})/\epsilon^2} \leq \|{\cal O}^{\rm noisy}_{j_\final:j+1}\circ {\cal F}_j\|_{\diamond}^{E_{\max}^{\ell}(\epsilon)} \|{\cal O}^{\rm ideal}_{j-1:j_\initial}\|_{\diamond}^{g_{\rm sup}^{\ell-1}(E_{\rm prep})/\epsilon^2} \leq \|{\cal O}^{\rm noisy}_{j_\final:j+1}\|_{\diamond} \|{\cal F}_j\|_{\diamond}^{E_{\max}^{\ell}(\epsilon)} \leq 2\epsilon,  \label{eq:2epsilon_bound}
    \end{equation}
    where we used Eqs.~\eqref{eq:submultiplicativity_1} and~\eqref{eq:max_may_be_achieved} in the first inequality, Eq.~\eqref{eq:submultiplicativity_2} in the second inequality, and Eq.~\eqref{eq:fault_diamond_norm} in the last inequality.
    Note also that $\|{\cal O}\|_\diamond=1=\|{\cal O}\|_\diamond^{E}$ holds for any CPTP map ${\cal O}$.
    Then, from Eqs.~\eqref{eq:def_fault_map}, \eqref{eq:actual_reset}, and~\eqref{eq:2epsilon_bound}, we can apply Corol.~\ref{cor:composition} to the pair ${\cal O}^{\rm noisy}_{j_\final:j}\circ{\cal O}^{\rm ideal}_{j-1:j_\initial}$ and ${\cal O}^{\rm noisy}_{j_\final:j+1}\circ{\cal O}^{\rm ideal}_{j:j_\initial}$ of CPTP maps in
    Eq.~\eqref{eq:putting_out_E-diamond} for any $j\in\{j_\initial,\ldots,j_\final\}$ to have
    \begin{align}
        \|\hat{\rho}_R\|_1
        &\leq  10\epsilon(j_\final-j_\initial+1)\|{\cal O}'_{|{\cal I}'|:j_\final+1}\|_{\diamond}^{g_{\rm sup}^{\ell-1}(E_{\rm prep})/\epsilon^2} \label{eq:reason_for_multiple_fault_case} \\
        &\leq 10\epsilon L_{\max}\|{\cal O}'_{|{\cal I}'|:j_\final+1}\|_{\diamond}^{g_{\rm sup}^{\ell-1}(E_{\rm prep})/\epsilon^2} , \label{eq:end_no_consecutive}
    \end{align}
    where we used $j_\final-j_\initial+1=|R'_{i_\initial}|\leq L_{\max}$ from the definition of $L_{\max}$.
    Here, we notice from Eqs.~\eqref{eq:putting_out_E-diamond} and Eq.~\eqref{eq:end_no_consecutive} that the same line of argument to obtain an upper bound applies to $i_{\initial + 1}\in R$ and so on.  Thus, we inductively prove 
    \begin{equation}
        \|\hat{\rho}_R\|_1 \leq (10\epsilon L_{\max})^{|R|}.
    \end{equation}
    Therefore, the circuit $\tilde{C}$ experiences the local Markovian noise model with the noise strength upper-bounded by $10\epsilon L_{\max}\eqqcolon \epsilon_{\rm qubit}$.
\end{proof}

One may notice that the analysis above does not fully exploit the advantage of using ExRecs since all the ExRecs just become good if $\epsilon$ is zero.  It is also possible to consider a more general setting where parameters $s_{\rm p}$, $s_{\rm g}$, and $s_{\rm m}$ can be random variables~\cite{Gottesman2009}, and the fault paths that violate the conditions~\eqref{eq:EC_from_pgm}--\eqref{eq:EC+meas} in a given GKP ExRec are regarded as bad.  Extending the techniques developed in this paper to such a generalized analysis is left to future work, but our analysis clarifies that taking nonzero $s_{\rm p}$, $s_{\rm g}$, and $s_{\rm m}$ satisfying these conditions suffices to show that $\tilde{C}$ experiences the local Markovian noise model.

By combining the threshold theorem of a qubit concatenated code with the local Markovian noise in Ref.~\cite{Aliferis2013}, we reach the following conclusion.
\begin{corollary}[FT threshold under an $(s,\epsilon)$- and $(E,s,\epsilon)$-independent Markovian noise model]\label{cor:threshold_theorem}
Consider implementing an original circuit $C$ on qubits using an FT-GKP circuit $C'$ on CV systems.
Suppose all preparations satisfy the $E_{\rm prep}$-energy constraint and all gates satisfy the $g_{\rm sup}$-energy constraint (Defs.~\ref{def:energy_constrained_prep} and~\ref{def:energy_constrained_gate}). 
For $0<\epsilon<1$, consider a family of noise models on $C'$ parameterized by $(s_{\rm p},s_{\rm g}, s_{\rm m}; \epsilon)$ in which $(s_{\rm p},\epsilon)$-independent Markovian noise is applied to the GKP preparation gadgets, $(E_{\max}^{\ell}(\epsilon),s_{\rm g},\epsilon)$-independent Markovian noise is applied to the GKP gate gadgets (including waits), and $(E_{\max}^{\ell}(\epsilon),s_{\rm m},\epsilon)$-independent Markovian noise is applied to the GKP measurement gadgets (Defs.~\ref{def:s_eps_Markovian_prep}--\ref{def:E_s_eps_Markovian_meas}).  Then, there exists a fault-tolerant threshold $\epsilon_{\rm th}>0$ such that for any $\epsilon\leq\epsilon_{\rm th}$ and any choice of the parameters $(s_{\rm p},s_{\rm g}, s_{\rm m})$ satisfying Eqs.~\eqref{eq:EC_from_pgm}--\eqref{eq:EC+meas}, we can achieve fault-tolerant quantum computation using a (noisy) CV circuit $C'$ under the independent Markovian noise model parameterized by $(s_{\rm p},s_{\rm g}, s_{\rm m}; \epsilon)$.  In particular, there exists a threshold $s_\mathrm{th}>0$ such that if $\epsilon<\epsilon_{\rm th}$ and $s_{\rm p},s_{\rm g}, s_{\rm m}<s_{\rm th}$, then we can achieve the fault-tolerant quantum computation.
\end{corollary}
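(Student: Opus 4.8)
The plan is to obtain the corollary as a direct composition of two results that are already in hand: the level-reduction theorem (Thm.~\ref{theo:level_reduction}), which converts the CV-level noise model into a qubit-level local Markovian noise model, and the established threshold theorem for qubit concatenated codes against local Markovian noise \cite{Aliferis2013}. Essentially all of the analytic labor has been discharged in proving Thm.~\ref{theo:level_reduction}, so the remaining task is to package the two statements correctly and to check that the relevant blow-up constant is circuit-independent and that the parameter constraints are feasible.

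First I would invoke Thm.~\ref{theo:level_reduction} verbatim. Under the stated hypotheses---the $E_{\rm prep}$-energy constraint on preparations, the $g_{\rm sup}$-energy constraint on gates, and the $(s_{\rm p},\epsilon)$-, $(E_{\max}^{\ell}(\epsilon),s_{\rm g},\epsilon)$-, and $(E_{\max}^{\ell}(\epsilon),s_{\rm m},\epsilon)$-independent Markovian noise on the respective gadgets with $(s_{\rm p},s_{\rm g},s_{\rm m})$ chosen to satisfy Eqs.~\eqref{eq:EC_from_pgm}--\eqref{eq:EC+meas}---the induced qubit circuit $\tilde{C}$ (Def.~\ref{def:noisy_reduction}) experiences a local Markovian noise model (Def.~\ref{def:local_Markovian}) with per-location noise strength $\epsilon_{\rm qubit}=10\epsilon L_{\max}$. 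The key observation is that $L_{\max}$, the maximum number of GKP preparation, gate, and measurement gadgets in any truncated GKP ExRec, is a fixed constant fixed by our gadget construction (the Knill-type EC gadget of Eq.~\eqref{eq:knill_EC} together with the gate gadgets); it is independent of $C$, of its width and depth, and of the concatenation level of the qubit code sitting above the GKP layer. Hence $\epsilon_{\rm qubit}$ is a circuit-independent constant multiple of $\epsilon$. Letting $\epsilon^\star_{\rm qubit}>0$ denote the fault-tolerance threshold against local Markovian noise guaranteed by \cite{Aliferis2013}, I would set $\epsilon_{\rm th}\coloneqq \epsilon^\star_{\rm qubit}/(10 L_{\max})$, so that any $\epsilon\le\epsilon_{\rm th}$ forces $\epsilon_{\rm qubit}\le\epsilon^\star_{\rm qubit}$. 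Recalling that the qubit circuit $C$ is itself the concatenated-code fault-tolerant implementation of the intended computation, the qubit threshold theorem then suppresses the logical error below any target value by raising the concatenation level, so $C'$ reproduces the target output distribution within arbitrary total-variation error.

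For the ``in particular'' clause I would exhibit an explicit $s_{\rm th}$ making Eqs.~\eqref{eq:prep+EC}--\eqref{eq:EC+meas} automatic. If $s_{\rm p},s_{\rm g},s_{\rm m}<s_{\rm th}$, then Eq.~\eqref{eq:EC_from_pgm} gives $s_{\rm e}<6 s_{\rm th}$, whence $s_{\rm p}+s_{\rm e}<7 s_{\rm th}$, $\,3 s_{\rm e}+s_{\rm g}<19 s_{\rm th}$, and $s_{\rm e}+s_{\rm m}<7 s_{\rm th}$. Choosing $s_{\rm th}\coloneqq c/38$ with $c=\sqrt{\pi}$ as in Eq.~\eqref{eq:C} makes the binding bound $3 s_{\rm e}+s_{\rm g}<19 s_{\rm th}=c/2$ and likewise the others strictly below $c/2$, so all constraints hold. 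Thus $\epsilon<\epsilon_{\rm th}$ together with $s_{\rm p},s_{\rm g},s_{\rm m}<s_{\rm th}$ suffices.

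The only genuine subtlety---more a consistency check than an obstacle---is the energy bookkeeping feeding the noise model. I would confirm that fixing any $\epsilon$ with $0<\epsilon\le\epsilon_{\rm th}$ yields a finite energy ceiling $E_{\max}^{\ell}(\epsilon)$ at which the energy-constrained diamond-norm noise model is actually defined; this hinges on $\ell$ being a finite constant, which in turn follows from the Knill-type EC gadget continually resetting the mode energy (Props.~\ref{prop:energy_reset} and~\ref{prop:maximum_energy_ExRec}) so that every prepared state meets a measurement within a bounded number of gates. Once the finiteness and circuit-independence of $\ell$, and hence of $E_{\max}^{\ell}(\epsilon)$ and $L_{\max}$, are secured, the corollary follows from the two-step reduction described above.
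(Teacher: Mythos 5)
Your proposal is correct and follows essentially the same route as the paper's own proof: invoke Theorem~\ref{theo:level_reduction} to obtain local Markovian noise with strength $10\epsilon L_{\max}$, set $\epsilon_{\rm th}\coloneqq\epsilon_{\rm qubit}^{\star}/(10L_{\max})$ using the qubit threshold of Ref.~\cite{Aliferis2013}, and verify the ``in particular'' clause with $s_{\rm th}=c/38$ via the constraints of Eqs.~\eqref{eq:EC_from_pgm}--\eqref{eq:EC+meas}. Your additional remarks on the circuit-independence of $L_{\max}$ and $\ell$ are sound consistency checks but do not change the argument.
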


\begin{proof}
    Since the qubit FT circuit has a fault-tolerant threshold $\epsilon_{\rm qubit}^{\star}$ against local Markovian noise~\cite{Aliferis2013}, by applying it to $\tilde{C}$ in Theorem~\ref{theo:level_reduction}, the circuit $C'$ can be made fault tolerant for any choice $(s_{\rm p},s_{\rm g}, s_{\rm m}; \epsilon)$ of parameters as long as the triple $(s_{\rm p},s_{\rm g}, s_{\rm m})$ satisfies Eqs.~\eqref{eq:EC_from_pgm}--\eqref{eq:EC+meas} and $10\epsilon L_{\max} < \epsilon_{\rm qubit}^{\star}$, i.e., $\epsilon < \epsilon_{\rm th}$ with $\epsilon_{\rm th}\coloneqq \epsilon_{\rm qubit}^{\star}/(10 L_{\max})$.  The last statement directly follows by substituting $s_{\rm p}=s_{\rm g}=s_{\rm m}=s_{\rm th}\coloneqq c/38$ to Eqs.~\eqref{eq:EC_from_pgm}--\eqref{eq:EC+meas}, where $c$ is in Eq.~\eqref{eq:C}.
\end{proof}

\bibliography{reference}

%apsrev4-2.bst 2019-01-14 (MD) hand-edited version of apsrev4-1.bst
%Control: key (0)
%Control: author (8) initials jnrlst
%Control: editor formatted (1) identically to author
%Control: production of article title (0) allowed
%Control: page (0) single
%Control: year (1) truncated
%Control: production of eprint (0) enabled
\begin{thebibliography}{75}%
\makeatletter
\providecommand \@ifxundefined [1]{%
 \@ifx{#1\undefined}
}%
\providecommand \@ifnum [1]{%
 \ifnum #1\expandafter \@firstoftwo
 \else \expandafter \@secondoftwo
 \fi
}%
\providecommand \@ifx [1]{%
 \ifx #1\expandafter \@firstoftwo
 \else \expandafter \@secondoftwo
 \fi
}%
\providecommand \natexlab [1]{#1}%
\providecommand \enquote  [1]{``#1''}%
\providecommand \bibnamefont  [1]{#1}%
\providecommand \bibfnamefont [1]{#1}%
\providecommand \citenamefont [1]{#1}%
\providecommand \href@noop [0]{\@secondoftwo}%
\providecommand \href [0]{\begingroup \@sanitize@url \@href}%
\providecommand \@href[1]{\@@startlink{#1}\@@href}%
\providecommand \@@href[1]{\endgroup#1\@@endlink}%
\providecommand \@sanitize@url [0]{\catcode `\\12\catcode `\$12\catcode
  `\&12\catcode `\#12\catcode `\^12\catcode `\_12\catcode `\%12\relax}%
\providecommand \@@startlink[1]{}%
\providecommand \@@endlink[0]{}%
\providecommand \url  [0]{\begingroup\@sanitize@url \@url }%
\providecommand \@url [1]{\endgroup\@href {#1}{\urlprefix }}%
\providecommand \urlprefix  [0]{URL }%
\providecommand \Eprint [0]{\href }%
\providecommand \doibase [0]{https://doi.org/}%
\providecommand \selectlanguage [0]{\@gobble}%
\providecommand \bibinfo  [0]{\@secondoftwo}%
\providecommand \bibfield  [0]{\@secondoftwo}%
\providecommand \translation [1]{[#1]}%
\providecommand \BibitemOpen [0]{}%
\providecommand \bibitemStop [0]{}%
\providecommand \bibitemNoStop [0]{.\EOS\space}%
\providecommand \EOS [0]{\spacefactor3000\relax}%
\providecommand \BibitemShut  [1]{\csname bibitem#1\endcsname}%
\let\auto@bib@innerbib\@empty
%</preamble>
\bibitem [{\citenamefont {Asavanant}\ \emph {et~al.}(2019)\citenamefont
  {Asavanant}, \citenamefont {Shiozawa}, \citenamefont {Yokoyama},
  \citenamefont {Charoensombutamon}, \citenamefont {Emura}, \citenamefont
  {Alexander}, \citenamefont {Takeda}, \citenamefont {ichi Yoshikawa},
  \citenamefont {Menicucci}, \citenamefont {Yonezawa},\ and\ \citenamefont
  {Furusawa}}]{Asavanant2019}%
  \BibitemOpen
  \bibfield  {author} {\bibinfo {author} {\bibfnamefont {W.}~\bibnamefont
  {Asavanant}}, \bibinfo {author} {\bibfnamefont {Y.}~\bibnamefont {Shiozawa}},
  \bibinfo {author} {\bibfnamefont {S.}~\bibnamefont {Yokoyama}}, \bibinfo
  {author} {\bibfnamefont {B.}~\bibnamefont {Charoensombutamon}}, \bibinfo
  {author} {\bibfnamefont {H.}~\bibnamefont {Emura}}, \bibinfo {author}
  {\bibfnamefont {R.~N.}\ \bibnamefont {Alexander}}, \bibinfo {author}
  {\bibfnamefont {S.}~\bibnamefont {Takeda}}, \bibinfo {author} {\bibfnamefont
  {J.}~\bibnamefont {ichi Yoshikawa}}, \bibinfo {author} {\bibfnamefont
  {N.~C.}\ \bibnamefont {Menicucci}}, \bibinfo {author} {\bibfnamefont
  {H.}~\bibnamefont {Yonezawa}},\ and\ \bibinfo {author} {\bibfnamefont
  {A.}~\bibnamefont {Furusawa}},\ }\bibfield  {title} {\bibinfo {title}
  {Generation of time-domain-multiplexed two-dimensional cluster state},\
  }\href {https://doi.org/10.1126/science.aay2645} {\bibfield  {journal}
  {\bibinfo  {journal} {Science}\ }\textbf {\bibinfo {volume} {366}},\ \bibinfo
  {pages} {373} (\bibinfo {year} {2019})}\BibitemShut {NoStop}%
\bibitem [{\citenamefont {Larsen}\ \emph {et~al.}(2019)\citenamefont {Larsen},
  \citenamefont {Guo}, \citenamefont {Breum}, \citenamefont
  {Neergaard-Nielsen},\ and\ \citenamefont {Andersen}}]{Larsen2019}%
  \BibitemOpen
  \bibfield  {author} {\bibinfo {author} {\bibfnamefont {M.~V.}\ \bibnamefont
  {Larsen}}, \bibinfo {author} {\bibfnamefont {X.}~\bibnamefont {Guo}},
  \bibinfo {author} {\bibfnamefont {C.~R.}\ \bibnamefont {Breum}}, \bibinfo
  {author} {\bibfnamefont {J.~S.}\ \bibnamefont {Neergaard-Nielsen}},\ and\
  \bibinfo {author} {\bibfnamefont {U.~L.}\ \bibnamefont {Andersen}},\
  }\bibfield  {title} {\bibinfo {title} {Deterministic generation of a
  two-dimensional cluster state},\ }\href
  {https://doi.org/10.1126/science.aay4354} {\bibfield  {journal} {\bibinfo
  {journal} {Science}\ }\textbf {\bibinfo {volume} {366}},\ \bibinfo {pages}
  {369} (\bibinfo {year} {2019})}\BibitemShut {NoStop}%
\bibitem [{\citenamefont {Shor}(1995)}]{Shor1995}%
  \BibitemOpen
  \bibfield  {author} {\bibinfo {author} {\bibfnamefont {P.~W.}\ \bibnamefont
  {Shor}},\ }\bibfield  {title} {\bibinfo {title} {Scheme for reducing
  decoherence in quantum computer memory},\ }\href
  {https://doi.org/10.1103/PhysRevA.52.R2493} {\bibfield  {journal} {\bibinfo
  {journal} {Phys. Rev. A}\ }\textbf {\bibinfo {volume} {52}},\ \bibinfo
  {pages} {R2493} (\bibinfo {year} {1995})}\BibitemShut {NoStop}%
\bibitem [{\citenamefont {Chuang}\ \emph {et~al.}(1997)\citenamefont {Chuang},
  \citenamefont {Leung},\ and\ \citenamefont {Yamamoto}}]{Chuang1997}%
  \BibitemOpen
  \bibfield  {author} {\bibinfo {author} {\bibfnamefont {I.~L.}\ \bibnamefont
  {Chuang}}, \bibinfo {author} {\bibfnamefont {D.~W.}\ \bibnamefont {Leung}},\
  and\ \bibinfo {author} {\bibfnamefont {Y.}~\bibnamefont {Yamamoto}},\
  }\bibfield  {title} {\bibinfo {title} {Bosonic quantum codes for amplitude
  damping},\ }\href {https://doi.org/10.1103/PhysRevA.56.1114} {\bibfield
  {journal} {\bibinfo  {journal} {Phys. Rev. A}\ }\textbf {\bibinfo {volume}
  {56}},\ \bibinfo {pages} {1114} (\bibinfo {year} {1997})}\BibitemShut
  {NoStop}%
\bibitem [{\citenamefont {Lloyd}\ and\ \citenamefont
  {Slotine}(1998)}]{Lloyd1998}%
  \BibitemOpen
  \bibfield  {author} {\bibinfo {author} {\bibfnamefont {S.}~\bibnamefont
  {Lloyd}}\ and\ \bibinfo {author} {\bibfnamefont {J.-J.~E.}\ \bibnamefont
  {Slotine}},\ }\bibfield  {title} {\bibinfo {title} {Analog quantum error
  correction},\ }\href {https://doi.org/10.1103/PhysRevLett.80.4088} {\bibfield
   {journal} {\bibinfo  {journal} {Phys. Rev. Lett.}\ }\textbf {\bibinfo
  {volume} {80}},\ \bibinfo {pages} {4088} (\bibinfo {year}
  {1998})}\BibitemShut {NoStop}%
\bibitem [{\citenamefont {Braunstein}(1998)}]{Braunstein1998}%
  \BibitemOpen
  \bibfield  {author} {\bibinfo {author} {\bibfnamefont {S.~L.}\ \bibnamefont
  {Braunstein}},\ }\bibfield  {title} {\bibinfo {title} {{Error Correction for
  Continuous Quantum Variables}},\ }\href
  {https://doi.org/10.1103/PhysRevLett.80.4084} {\bibfield  {journal} {\bibinfo
   {journal} {Phys. Rev. Lett.}\ }\textbf {\bibinfo {volume} {80}},\ \bibinfo
  {pages} {4084} (\bibinfo {year} {1998})}\BibitemShut {NoStop}%
\bibitem [{\citenamefont {Cochrane}\ \emph {et~al.}(1999)\citenamefont
  {Cochrane}, \citenamefont {Milburn},\ and\ \citenamefont
  {Munro}}]{Cochrane1999}%
  \BibitemOpen
  \bibfield  {author} {\bibinfo {author} {\bibfnamefont {P.~T.}\ \bibnamefont
  {Cochrane}}, \bibinfo {author} {\bibfnamefont {G.~J.}\ \bibnamefont
  {Milburn}},\ and\ \bibinfo {author} {\bibfnamefont {W.~J.}\ \bibnamefont
  {Munro}},\ }\bibfield  {title} {\bibinfo {title} {Macroscopically distinct
  quantum-superposition states as a bosonic code for amplitude damping},\
  }\href {https://doi.org/10.1103/PhysRevA.59.2631} {\bibfield  {journal}
  {\bibinfo  {journal} {Phys. Rev. A}\ }\textbf {\bibinfo {volume} {59}},\
  \bibinfo {pages} {2631} (\bibinfo {year} {1999})}\BibitemShut {NoStop}%
\bibitem [{\citenamefont {Knill}\ \emph {et~al.}(2001)\citenamefont {Knill},
  \citenamefont {Laflamme},\ and\ \citenamefont {Milburn}}]{Knill2001}%
  \BibitemOpen
  \bibfield  {author} {\bibinfo {author} {\bibfnamefont {E.}~\bibnamefont
  {Knill}}, \bibinfo {author} {\bibfnamefont {R.}~\bibnamefont {Laflamme}},\
  and\ \bibinfo {author} {\bibfnamefont {G.~J.}\ \bibnamefont {Milburn}},\
  }\bibfield  {title} {\bibinfo {title} {A scheme for efficient quantum
  computation with linear optics},\ }\href {https://doi.org/10.1038/35051009}
  {\bibfield  {journal} {\bibinfo  {journal} {Nature}\ }\textbf {\bibinfo
  {volume} {409}},\ \bibinfo {pages} {46} (\bibinfo {year} {2001})}\BibitemShut
  {NoStop}%
\bibitem [{\citenamefont {Gottesman}\ \emph {et~al.}(2001)\citenamefont
  {Gottesman}, \citenamefont {Kitaev},\ and\ \citenamefont
  {Preskill}}]{Gottesman2001}%
  \BibitemOpen
  \bibfield  {author} {\bibinfo {author} {\bibfnamefont {D.}~\bibnamefont
  {Gottesman}}, \bibinfo {author} {\bibfnamefont {A.}~\bibnamefont {Kitaev}},\
  and\ \bibinfo {author} {\bibfnamefont {J.}~\bibnamefont {Preskill}},\
  }\bibfield  {title} {\bibinfo {title} {Encoding a qubit in an oscillator},\
  }\href {https://doi.org/10.1103/PhysRevA.64.012310} {\bibfield  {journal}
  {\bibinfo  {journal} {Phys. Rev. A}\ }\textbf {\bibinfo {volume} {64}},\
  \bibinfo {pages} {012310} (\bibinfo {year} {2001})}\BibitemShut {NoStop}%
\bibitem [{\citenamefont {Niset}\ \emph {et~al.}(2008)\citenamefont {Niset},
  \citenamefont {Andersen},\ and\ \citenamefont {Cerf}}]{Niset2008}%
  \BibitemOpen
  \bibfield  {author} {\bibinfo {author} {\bibfnamefont {J.}~\bibnamefont
  {Niset}}, \bibinfo {author} {\bibfnamefont {U.~L.}\ \bibnamefont
  {Andersen}},\ and\ \bibinfo {author} {\bibfnamefont {N.~J.}\ \bibnamefont
  {Cerf}},\ }\bibfield  {title} {\bibinfo {title} {{Experimentally Feasible
  Quantum Erasure-Correcting Code for Continuous Variables}},\ }\href
  {https://doi.org/10.1103/PhysRevLett.101.130503} {\bibfield  {journal}
  {\bibinfo  {journal} {Phys. Rev. Lett.}\ }\textbf {\bibinfo {volume} {101}},\
  \bibinfo {pages} {130503} (\bibinfo {year} {2008})}\BibitemShut {NoStop}%
\bibitem [{\citenamefont {Ralph}\ \emph {et~al.}(2005)\citenamefont {Ralph},
  \citenamefont {Hayes},\ and\ \citenamefont {Gilchrist}}]{Ralph2005}%
  \BibitemOpen
  \bibfield  {author} {\bibinfo {author} {\bibfnamefont {T.~C.}\ \bibnamefont
  {Ralph}}, \bibinfo {author} {\bibfnamefont {A.~J.~F.}\ \bibnamefont
  {Hayes}},\ and\ \bibinfo {author} {\bibfnamefont {A.}~\bibnamefont
  {Gilchrist}},\ }\bibfield  {title} {\bibinfo {title} {{Loss-Tolerant Optical
  Qubits}},\ }\href {https://doi.org/10.1103/PhysRevLett.95.100501} {\bibfield
  {journal} {\bibinfo  {journal} {Phys. Rev. Lett.}\ }\textbf {\bibinfo
  {volume} {95}},\ \bibinfo {pages} {100501} (\bibinfo {year}
  {2005})}\BibitemShut {NoStop}%
\bibitem [{\citenamefont {Bergmann}\ and\ \citenamefont {van
  Loock}(2016)}]{Bergmann2016}%
  \BibitemOpen
  \bibfield  {author} {\bibinfo {author} {\bibfnamefont {M.}~\bibnamefont
  {Bergmann}}\ and\ \bibinfo {author} {\bibfnamefont {P.}~\bibnamefont {van
  Loock}},\ }\bibfield  {title} {\bibinfo {title} {{Quantum error correction
  against photon loss using NOON states}},\ }\href
  {https://doi.org/10.1103/PhysRevA.94.012311} {\bibfield  {journal} {\bibinfo
  {journal} {Phys. Rev. A}\ }\textbf {\bibinfo {volume} {94}},\ \bibinfo
  {pages} {012311} (\bibinfo {year} {2016})}\BibitemShut {NoStop}%
\bibitem [{\citenamefont {Michael}\ \emph {et~al.}(2016)\citenamefont
  {Michael}, \citenamefont {Silveri}, \citenamefont {Brierley}, \citenamefont
  {Albert}, \citenamefont {Salmilehto}, \citenamefont {Jiang},\ and\
  \citenamefont {Girvin}}]{Michael2016}%
  \BibitemOpen
  \bibfield  {author} {\bibinfo {author} {\bibfnamefont {M.~H.}\ \bibnamefont
  {Michael}}, \bibinfo {author} {\bibfnamefont {M.}~\bibnamefont {Silveri}},
  \bibinfo {author} {\bibfnamefont {R.~T.}\ \bibnamefont {Brierley}}, \bibinfo
  {author} {\bibfnamefont {V.~V.}\ \bibnamefont {Albert}}, \bibinfo {author}
  {\bibfnamefont {J.}~\bibnamefont {Salmilehto}}, \bibinfo {author}
  {\bibfnamefont {L.}~\bibnamefont {Jiang}},\ and\ \bibinfo {author}
  {\bibfnamefont {S.~M.}\ \bibnamefont {Girvin}},\ }\bibfield  {title}
  {\bibinfo {title} {{New Class of Quantum Error-Correcting Codes for a Bosonic
  Mode}},\ }\href {https://doi.org/10.1103/PhysRevX.6.031006} {\bibfield
  {journal} {\bibinfo  {journal} {Phys. Rev. X}\ }\textbf {\bibinfo {volume}
  {6}},\ \bibinfo {pages} {031006} (\bibinfo {year} {2016})}\BibitemShut
  {NoStop}%
\bibitem [{\citenamefont {Grimsmo}\ \emph {et~al.}(2020)\citenamefont
  {Grimsmo}, \citenamefont {Combes},\ and\ \citenamefont
  {Baragiola}}]{Grimsmo2020}%
  \BibitemOpen
  \bibfield  {author} {\bibinfo {author} {\bibfnamefont {A.~L.}\ \bibnamefont
  {Grimsmo}}, \bibinfo {author} {\bibfnamefont {J.}~\bibnamefont {Combes}},\
  and\ \bibinfo {author} {\bibfnamefont {B.~Q.}\ \bibnamefont {Baragiola}},\
  }\bibfield  {title} {\bibinfo {title} {{Quantum Computing with
  Rotation-Symmetric Bosonic Codes}},\ }\href
  {https://doi.org/10.1103/PhysRevX.10.011058} {\bibfield  {journal} {\bibinfo
  {journal} {Phys. Rev. X}\ }\textbf {\bibinfo {volume} {10}},\ \bibinfo
  {pages} {011058} (\bibinfo {year} {2020})}\BibitemShut {NoStop}%
\bibitem [{\citenamefont {Albert}\ \emph {et~al.}(2018)\citenamefont {Albert},
  \citenamefont {Noh}, \citenamefont {Duivenvoorden}, \citenamefont {Young},
  \citenamefont {Brierley}, \citenamefont {Reinhold}, \citenamefont {Vuillot},
  \citenamefont {Li}, \citenamefont {Shen}, \citenamefont {Girvin},
  \citenamefont {Terhal},\ and\ \citenamefont {Jiang}}]{Albert2018}%
  \BibitemOpen
  \bibfield  {author} {\bibinfo {author} {\bibfnamefont {V.~V.}\ \bibnamefont
  {Albert}}, \bibinfo {author} {\bibfnamefont {K.}~\bibnamefont {Noh}},
  \bibinfo {author} {\bibfnamefont {K.}~\bibnamefont {Duivenvoorden}}, \bibinfo
  {author} {\bibfnamefont {D.~J.}\ \bibnamefont {Young}}, \bibinfo {author}
  {\bibfnamefont {R.~T.}\ \bibnamefont {Brierley}}, \bibinfo {author}
  {\bibfnamefont {P.}~\bibnamefont {Reinhold}}, \bibinfo {author}
  {\bibfnamefont {C.}~\bibnamefont {Vuillot}}, \bibinfo {author} {\bibfnamefont
  {L.}~\bibnamefont {Li}}, \bibinfo {author} {\bibfnamefont {C.}~\bibnamefont
  {Shen}}, \bibinfo {author} {\bibfnamefont {S.~M.}\ \bibnamefont {Girvin}},
  \bibinfo {author} {\bibfnamefont {B.~M.}\ \bibnamefont {Terhal}},\ and\
  \bibinfo {author} {\bibfnamefont {L.}~\bibnamefont {Jiang}},\ }\bibfield
  {title} {\bibinfo {title} {Performance and structure of single-mode bosonic
  codes},\ }\href {https://doi.org/10.1103/PhysRevA.97.032346} {\bibfield
  {journal} {\bibinfo  {journal} {Phys. Rev. A}\ }\textbf {\bibinfo {volume}
  {97}},\ \bibinfo {pages} {032346} (\bibinfo {year} {2018})}\BibitemShut
  {NoStop}%
\bibitem [{\citenamefont {Baragiola}\ \emph {et~al.}(2019)\citenamefont
  {Baragiola}, \citenamefont {Pantaleoni}, \citenamefont {Alexander},
  \citenamefont {Karanjai},\ and\ \citenamefont {Menicucci}}]{Baragiola2019}%
  \BibitemOpen
  \bibfield  {author} {\bibinfo {author} {\bibfnamefont {B.~Q.}\ \bibnamefont
  {Baragiola}}, \bibinfo {author} {\bibfnamefont {G.}~\bibnamefont
  {Pantaleoni}}, \bibinfo {author} {\bibfnamefont {R.~N.}\ \bibnamefont
  {Alexander}}, \bibinfo {author} {\bibfnamefont {A.}~\bibnamefont
  {Karanjai}},\ and\ \bibinfo {author} {\bibfnamefont {N.~C.}\ \bibnamefont
  {Menicucci}},\ }\bibfield  {title} {\bibinfo {title} {All-{G}aussian
  {U}niversality and {F}ault {T}olerance with the
  {G}ottesman-{K}itaev-{P}reskill {C}ode},\ }\href
  {https://doi.org/10.1103/PhysRevLett.123.200502} {\bibfield  {journal}
  {\bibinfo  {journal} {Phys. Rev. Lett.}\ }\textbf {\bibinfo {volume} {123}},\
  \bibinfo {pages} {200502} (\bibinfo {year} {2019})}\BibitemShut {NoStop}%
\bibitem [{\citenamefont {Yamasaki}\ \emph {et~al.}(2020)\citenamefont
  {Yamasaki}, \citenamefont {Matsuura},\ and\ \citenamefont
  {Koashi}}]{Yamasaki2020}%
  \BibitemOpen
  \bibfield  {author} {\bibinfo {author} {\bibfnamefont {H.}~\bibnamefont
  {Yamasaki}}, \bibinfo {author} {\bibfnamefont {T.}~\bibnamefont {Matsuura}},\
  and\ \bibinfo {author} {\bibfnamefont {M.}~\bibnamefont {Koashi}},\
  }\bibfield  {title} {\bibinfo {title} {Cost-reduced all-{G}aussian
  universality with the {G}ottesman-{K}itaev-{P}reskill code:
  {R}esource-theoretic approach to cost analysis},\ }\href
  {https://doi.org/10.1103/PhysRevResearch.2.023270} {\bibfield  {journal}
  {\bibinfo  {journal} {Phys. Rev. Res.}\ }\textbf {\bibinfo {volume} {2}},\
  \bibinfo {pages} {023270} (\bibinfo {year} {2020})}\BibitemShut {NoStop}%
\bibitem [{\citenamefont {Eaton}\ \emph {et~al.}(2019)\citenamefont {Eaton},
  \citenamefont {Nehra},\ and\ \citenamefont {Pfister}}]{Eaton2019}%
  \BibitemOpen
  \bibfield  {author} {\bibinfo {author} {\bibfnamefont {M.}~\bibnamefont
  {Eaton}}, \bibinfo {author} {\bibfnamefont {R.}~\bibnamefont {Nehra}},\ and\
  \bibinfo {author} {\bibfnamefont {O.}~\bibnamefont {Pfister}},\ }\bibfield
  {title} {\bibinfo {title} {{Non-Gaussian and Gottesman–Kitaev–Preskill
  state preparation by photon catalysis}},\ }\href
  {https://doi.org/10.1088/1367-2630/ab5330} {\bibfield  {journal} {\bibinfo
  {journal} {New Journal of Physics}\ }\textbf {\bibinfo {volume} {21}},\
  \bibinfo {pages} {113034} (\bibinfo {year} {2019})}\BibitemShut {NoStop}%
\bibitem [{\citenamefont {Tzitrin}\ \emph {et~al.}(2020)\citenamefont
  {Tzitrin}, \citenamefont {Bourassa}, \citenamefont {Menicucci},\ and\
  \citenamefont {Sabapathy}}]{Tzitrin2019}%
  \BibitemOpen
  \bibfield  {author} {\bibinfo {author} {\bibfnamefont {I.}~\bibnamefont
  {Tzitrin}}, \bibinfo {author} {\bibfnamefont {J.~E.}\ \bibnamefont
  {Bourassa}}, \bibinfo {author} {\bibfnamefont {N.~C.}\ \bibnamefont
  {Menicucci}},\ and\ \bibinfo {author} {\bibfnamefont {K.~K.}\ \bibnamefont
  {Sabapathy}},\ }\bibfield  {title} {\bibinfo {title} {{Progress towards
  practical qubit computation using approximate Gottesman-Kitaev-Preskill
  codes}},\ }\href {https://doi.org/10.1103/PhysRevA.101.032315} {\bibfield
  {journal} {\bibinfo  {journal} {Phys. Rev. A}\ }\textbf {\bibinfo {volume}
  {101}},\ \bibinfo {pages} {032315} (\bibinfo {year} {2020})}\BibitemShut
  {NoStop}%
\bibitem [{\citenamefont {Takase}\ \emph {et~al.}(2022)\citenamefont {Takase},
  \citenamefont {Fukui}, \citenamefont {Kawasaki}, \citenamefont {Asavanant},
  \citenamefont {Endo}, \citenamefont {ichi Yoshikawa}, \citenamefont {van
  Loock},\ and\ \citenamefont {Furusawa}}]{Takase2022}%
  \BibitemOpen
  \bibfield  {author} {\bibinfo {author} {\bibfnamefont {K.}~\bibnamefont
  {Takase}}, \bibinfo {author} {\bibfnamefont {K.}~\bibnamefont {Fukui}},
  \bibinfo {author} {\bibfnamefont {A.}~\bibnamefont {Kawasaki}}, \bibinfo
  {author} {\bibfnamefont {W.}~\bibnamefont {Asavanant}}, \bibinfo {author}
  {\bibfnamefont {M.}~\bibnamefont {Endo}}, \bibinfo {author} {\bibfnamefont
  {J.}~\bibnamefont {ichi Yoshikawa}}, \bibinfo {author} {\bibfnamefont
  {P.}~\bibnamefont {van Loock}},\ and\ \bibinfo {author} {\bibfnamefont
  {A.}~\bibnamefont {Furusawa}},\ }\href@noop {} {\bibinfo {title} {Gaussian
  breeding for encoding a qubit in propagating light}} (\bibinfo {year}
  {2022}),\ \Eprint {https://arxiv.org/abs/2212.05436} {arXiv:2212.05436
  [quant-ph]} \BibitemShut {NoStop}%
\bibitem [{\citenamefont {Takase}\ \emph {et~al.}(2024)\citenamefont {Takase},
  \citenamefont {Hanamura}, \citenamefont {Nagayoshi}, \citenamefont
  {Bourassa}, \citenamefont {Alexander}, \citenamefont {Kawasaki},
  \citenamefont {Asavanant}, \citenamefont {Endo},\ and\ \citenamefont
  {Furusawa}}]{Takase2024}%
  \BibitemOpen
  \bibfield  {author} {\bibinfo {author} {\bibfnamefont {K.}~\bibnamefont
  {Takase}}, \bibinfo {author} {\bibfnamefont {F.}~\bibnamefont {Hanamura}},
  \bibinfo {author} {\bibfnamefont {H.}~\bibnamefont {Nagayoshi}}, \bibinfo
  {author} {\bibfnamefont {J.~E.}\ \bibnamefont {Bourassa}}, \bibinfo {author}
  {\bibfnamefont {R.~N.}\ \bibnamefont {Alexander}}, \bibinfo {author}
  {\bibfnamefont {A.}~\bibnamefont {Kawasaki}}, \bibinfo {author}
  {\bibfnamefont {W.}~\bibnamefont {Asavanant}}, \bibinfo {author}
  {\bibfnamefont {M.}~\bibnamefont {Endo}},\ and\ \bibinfo {author}
  {\bibfnamefont {A.}~\bibnamefont {Furusawa}},\ }\href@noop {} {\bibinfo
  {title} {{Generation of Flying Logical Qubits using Generalized Photon
  Subtraction with Adaptive Gaussian Operations}}} (\bibinfo {year} {2024}),\
  \Eprint {https://arxiv.org/abs/2401.07287} {arXiv:2401.07287 [quant-ph]}
  \BibitemShut {NoStop}%
\bibitem [{\citenamefont {Konno}\ \emph {et~al.}(2024)\citenamefont {Konno},
  \citenamefont {Asavanant}, \citenamefont {Hanamura}, \citenamefont
  {Nagayoshi}, \citenamefont {Fukui}, \citenamefont {Sakaguchi}, \citenamefont
  {Ide}, \citenamefont {China}, \citenamefont {Yabuno}, \citenamefont {Miki},
  \citenamefont {Terai}, \citenamefont {Takase}, \citenamefont {Endo},
  \citenamefont {Marek}, \citenamefont {Filip}, \citenamefont {van Loock},\
  and\ \citenamefont {Furusawa}}]{Konno2024}%
  \BibitemOpen
  \bibfield  {author} {\bibinfo {author} {\bibfnamefont {S.}~\bibnamefont
  {Konno}}, \bibinfo {author} {\bibfnamefont {W.}~\bibnamefont {Asavanant}},
  \bibinfo {author} {\bibfnamefont {F.}~\bibnamefont {Hanamura}}, \bibinfo
  {author} {\bibfnamefont {H.}~\bibnamefont {Nagayoshi}}, \bibinfo {author}
  {\bibfnamefont {K.}~\bibnamefont {Fukui}}, \bibinfo {author} {\bibfnamefont
  {A.}~\bibnamefont {Sakaguchi}}, \bibinfo {author} {\bibfnamefont
  {R.}~\bibnamefont {Ide}}, \bibinfo {author} {\bibfnamefont {F.}~\bibnamefont
  {China}}, \bibinfo {author} {\bibfnamefont {M.}~\bibnamefont {Yabuno}},
  \bibinfo {author} {\bibfnamefont {S.}~\bibnamefont {Miki}}, \bibinfo {author}
  {\bibfnamefont {H.}~\bibnamefont {Terai}}, \bibinfo {author} {\bibfnamefont
  {K.}~\bibnamefont {Takase}}, \bibinfo {author} {\bibfnamefont
  {M.}~\bibnamefont {Endo}}, \bibinfo {author} {\bibfnamefont {P.}~\bibnamefont
  {Marek}}, \bibinfo {author} {\bibfnamefont {R.}~\bibnamefont {Filip}},
  \bibinfo {author} {\bibfnamefont {P.}~\bibnamefont {van Loock}},\ and\
  \bibinfo {author} {\bibfnamefont {A.}~\bibnamefont {Furusawa}},\ }\bibfield
  {title} {\bibinfo {title} {Logical states for fault-tolerant quantum
  computation with propagating light},\ }\href
  {https://doi.org/10.1126/science.adk7560} {\bibfield  {journal} {\bibinfo
  {journal} {Science}\ }\textbf {\bibinfo {volume} {383}},\ \bibinfo {pages}
  {289} (\bibinfo {year} {2024})},\ \Eprint
  {https://arxiv.org/abs/https://www.science.org/doi/pdf/10.1126/science.adk7560}
  {https://www.science.org/doi/pdf/10.1126/science.adk7560} \BibitemShut
  {NoStop}%
\bibitem [{\citenamefont {Knill}\ \emph {et~al.}(1996)\citenamefont {Knill},
  \citenamefont {Laflamme},\ and\ \citenamefont {Zurek}}]{Knill1996}%
  \BibitemOpen
  \bibfield  {author} {\bibinfo {author} {\bibfnamefont {E.}~\bibnamefont
  {Knill}}, \bibinfo {author} {\bibfnamefont {R.}~\bibnamefont {Laflamme}},\
  and\ \bibinfo {author} {\bibfnamefont {W.}~\bibnamefont {Zurek}},\
  }\href@noop {} {\bibinfo {title} {{Threshold Accuracy for Quantum
  Computation}}} (\bibinfo {year} {1996}),\ \Eprint
  {https://arxiv.org/abs/quant-ph/9610011} {arXiv:quant-ph/9610011 [quant-ph]}
  \BibitemShut {NoStop}%
\bibitem [{\citenamefont {Shor}(1997)}]{Shor1997}%
  \BibitemOpen
  \bibfield  {author} {\bibinfo {author} {\bibfnamefont {P.~W.}\ \bibnamefont
  {Shor}},\ }\href@noop {} {\bibinfo {title} {Fault-tolerant quantum
  computation}} (\bibinfo {year} {1997}),\ \Eprint
  {https://arxiv.org/abs/quant-ph/9605011} {arXiv:quant-ph/9605011 [quant-ph]}
  \BibitemShut {NoStop}%
\bibitem [{\citenamefont {Kitaev}(1997{\natexlab{a}})}]{Kitaev1997}%
  \BibitemOpen
  \bibfield  {author} {\bibinfo {author} {\bibfnamefont {A.~Y.}\ \bibnamefont
  {Kitaev}},\ }\bibinfo {title} {{Quantum Error Correction with Imperfect
  Gates}},\ in\ \href {https://doi.org/10.1007/978-1-4615-5923-8_19} {\emph
  {\bibinfo {booktitle} {Quantum Communication, Computing, and Measurement}}},\
  \bibinfo {editor} {edited by\ \bibinfo {editor} {\bibfnamefont
  {O.}~\bibnamefont {Hirota}}, \bibinfo {editor} {\bibfnamefont {A.~S.}\
  \bibnamefont {Holevo}},\ and\ \bibinfo {editor} {\bibfnamefont {C.~M.}\
  \bibnamefont {Caves}}}\ (\bibinfo  {publisher} {Springer US},\ \bibinfo
  {address} {Boston, MA},\ \bibinfo {year} {1997})\ pp.\ \bibinfo {pages}
  {181--188}\BibitemShut {NoStop}%
\bibitem [{\citenamefont {Aharonov}\ and\ \citenamefont
  {Ben-Or}(1997)}]{Aharonov1997}%
  \BibitemOpen
  \bibfield  {author} {\bibinfo {author} {\bibfnamefont {D.}~\bibnamefont
  {Aharonov}}\ and\ \bibinfo {author} {\bibfnamefont {M.}~\bibnamefont
  {Ben-Or}},\ }\bibfield  {title} {\bibinfo {title} {Fault-tolerant quantum
  computation with constant error},\ }in\ \href
  {https://doi.org/10.1145/258533.258579} {\emph {\bibinfo {booktitle}
  {Proceedings of the Twenty-Ninth Annual ACM Symposium on Theory of
  Computing}}},\ \bibinfo {series and number} {STOC '97}\ (\bibinfo
  {publisher} {Association for Computing Machinery},\ \bibinfo {address} {New
  York, NY, USA},\ \bibinfo {year} {1997})\ p.\ \bibinfo {pages}
  {176–188}\BibitemShut {NoStop}%
\bibitem [{\citenamefont {Knill}\ \emph
  {et~al.}(1998{\natexlab{a}})\citenamefont {Knill}, \citenamefont {Laflamme},\
  and\ \citenamefont {Zurek}}]{Knill1998}%
  \BibitemOpen
  \bibfield  {author} {\bibinfo {author} {\bibfnamefont {E.}~\bibnamefont
  {Knill}}, \bibinfo {author} {\bibfnamefont {R.}~\bibnamefont {Laflamme}},\
  and\ \bibinfo {author} {\bibfnamefont {W.~H.}\ \bibnamefont {Zurek}},\
  }\bibfield  {title} {\bibinfo {title} {Resilient quantum computation: error
  models and thresholds},\ }\href {https://doi.org/10.1098/rspa.1998.0166}
  {\bibfield  {journal} {\bibinfo  {journal} {Proceedings of the Royal Society
  of London. Series A: Mathematical, Physical and Engineering Sciences}\
  }\textbf {\bibinfo {volume} {454}},\ \bibinfo {pages} {365} (\bibinfo {year}
  {1998}{\natexlab{a}})},\ \Eprint
  {https://arxiv.org/abs/https://royalsocietypublishing.org/doi/pdf/10.1098/rspa.1998.0166}
  {https://royalsocietypublishing.org/doi/pdf/10.1098/rspa.1998.0166}
  \BibitemShut {NoStop}%
\bibitem [{\citenamefont {Knill}\ \emph
  {et~al.}(1998{\natexlab{b}})\citenamefont {Knill}, \citenamefont {Laflamme},\
  and\ \citenamefont {Zurek}}]{Knill1998Science}%
  \BibitemOpen
  \bibfield  {author} {\bibinfo {author} {\bibfnamefont {E.}~\bibnamefont
  {Knill}}, \bibinfo {author} {\bibfnamefont {R.}~\bibnamefont {Laflamme}},\
  and\ \bibinfo {author} {\bibfnamefont {W.~H.}\ \bibnamefont {Zurek}},\
  }\bibfield  {title} {\bibinfo {title} {{Resilient Quantum Computation}},\
  }\href {https://doi.org/10.1126/science.279.5349.342} {\bibfield  {journal}
  {\bibinfo  {journal} {Science}\ }\textbf {\bibinfo {volume} {279}},\ \bibinfo
  {pages} {342} (\bibinfo {year} {1998}{\natexlab{b}})},\ \Eprint
  {https://arxiv.org/abs/https://www.science.org/doi/pdf/10.1126/science.279.5349.342}
  {https://www.science.org/doi/pdf/10.1126/science.279.5349.342} \BibitemShut
  {NoStop}%
\bibitem [{\citenamefont {Aharonov}\ and\ \citenamefont
  {Ben-Or}(1999)}]{Aharonov1999}%
  \BibitemOpen
  \bibfield  {author} {\bibinfo {author} {\bibfnamefont {D.}~\bibnamefont
  {Aharonov}}\ and\ \bibinfo {author} {\bibfnamefont {M.}~\bibnamefont
  {Ben-Or}},\ }\href@noop {} {\bibinfo {title} {{Fault-Tolerant Quantum
  Computation With Constant Error Rate}}} (\bibinfo {year} {1999}),\ \Eprint
  {https://arxiv.org/abs/quant-ph/9906129} {arXiv:quant-ph/9906129 [quant-ph]}
  \BibitemShut {NoStop}%
\bibitem [{\citenamefont {Gottesman}(2009)}]{Gottesman2009}%
  \BibitemOpen
  \bibfield  {author} {\bibinfo {author} {\bibfnamefont {D.}~\bibnamefont
  {Gottesman}},\ }\href@noop {} {\bibinfo {title} {An {I}ntroduction to
  {Q}uantum {E}rror {C}orrection and {F}ault-{T}olerant {Q}uantum
  {C}omputation}} (\bibinfo {year} {2009}),\ \Eprint
  {https://arxiv.org/abs/0904.2557} {arXiv:0904.2557 [quant-ph]} \BibitemShut
  {NoStop}%
\bibitem [{\citenamefont {Aliferis}(2011)}]{Aliferis2011}%
  \BibitemOpen
  \bibfield  {author} {\bibinfo {author} {\bibfnamefont {P.}~\bibnamefont
  {Aliferis}},\ }\href@noop {} {\bibinfo {title} {Level reduction and the
  quantum threshold theorem}} (\bibinfo {year} {2011}),\ \Eprint
  {https://arxiv.org/abs/quant-ph/0703230} {arXiv:quant-ph/0703230 [quant-ph]}
  \BibitemShut {NoStop}%
\bibitem [{\citenamefont {Aliferis}(2013)}]{Aliferis2013}%
  \BibitemOpen
  \bibfield  {author} {\bibinfo {author} {\bibfnamefont {P.}~\bibnamefont
  {Aliferis}},\ }\href@noop {} {\bibinfo {title} {An introduction to reliable
  quantum computation}} (\bibinfo {year} {2013}),\ \Eprint
  {https://arxiv.org/abs/1107.2148} {arXiv:1107.2148 [quant-ph]} \BibitemShut
  {NoStop}%
\bibitem [{\citenamefont {Yamasaki}\ and\ \citenamefont
  {Koashi}(2024)}]{yamasaki2022timeefficient}%
  \BibitemOpen
  \bibfield  {author} {\bibinfo {author} {\bibfnamefont {H.}~\bibnamefont
  {Yamasaki}}\ and\ \bibinfo {author} {\bibfnamefont {M.}~\bibnamefont
  {Koashi}},\ }\bibfield  {title} {\bibinfo {title} {Time-efficient
  constant-space-overhead fault-tolerant quantum computation},\ }\href
  {https://link.springer.com/article/10.1038/s41567-023-02325-8} {\bibfield
  {journal} {\bibinfo  {journal} {Nature Physics}\ }\textbf {\bibinfo {volume}
  {20}},\ \bibinfo {pages} {247} (\bibinfo {year} {2024})}\BibitemShut
  {NoStop}%
\bibitem [{\citenamefont {Menicucci}(2014)}]{Menicucci2014}%
  \BibitemOpen
  \bibfield  {author} {\bibinfo {author} {\bibfnamefont {N.~C.}\ \bibnamefont
  {Menicucci}},\ }\bibfield  {title} {\bibinfo {title} {{Fault-Tolerant
  Measurement-Based Quantum Computing with Continuous-Variable Cluster
  States}},\ }\href {https://doi.org/10.1103/PhysRevLett.112.120504} {\bibfield
   {journal} {\bibinfo  {journal} {Phys. Rev. Lett.}\ }\textbf {\bibinfo
  {volume} {112}},\ \bibinfo {pages} {120504} (\bibinfo {year}
  {2014})}\BibitemShut {NoStop}%
\bibitem [{\citenamefont {Fukui}\ \emph {et~al.}(2018)\citenamefont {Fukui},
  \citenamefont {Tomita}, \citenamefont {Okamoto},\ and\ \citenamefont
  {Fujii}}]{Fukui2018}%
  \BibitemOpen
  \bibfield  {author} {\bibinfo {author} {\bibfnamefont {K.}~\bibnamefont
  {Fukui}}, \bibinfo {author} {\bibfnamefont {A.}~\bibnamefont {Tomita}},
  \bibinfo {author} {\bibfnamefont {A.}~\bibnamefont {Okamoto}},\ and\ \bibinfo
  {author} {\bibfnamefont {K.}~\bibnamefont {Fujii}},\ }\bibfield  {title}
  {\bibinfo {title} {{High-Threshold Fault-Tolerant Quantum Computation with
  Analog Quantum Error Correction}},\ }\href
  {https://doi.org/10.1103/PhysRevX.8.021054} {\bibfield  {journal} {\bibinfo
  {journal} {Phys. Rev. X}\ }\textbf {\bibinfo {volume} {8}},\ \bibinfo {pages}
  {021054} (\bibinfo {year} {2018})}\BibitemShut {NoStop}%
\bibitem [{\citenamefont {Vuillot}\ \emph {et~al.}(2019)\citenamefont
  {Vuillot}, \citenamefont {Asasi}, \citenamefont {Wang}, \citenamefont
  {Pryadko},\ and\ \citenamefont {Terhal}}]{Vuillot2019}%
  \BibitemOpen
  \bibfield  {author} {\bibinfo {author} {\bibfnamefont {C.}~\bibnamefont
  {Vuillot}}, \bibinfo {author} {\bibfnamefont {H.}~\bibnamefont {Asasi}},
  \bibinfo {author} {\bibfnamefont {Y.}~\bibnamefont {Wang}}, \bibinfo {author}
  {\bibfnamefont {L.~P.}\ \bibnamefont {Pryadko}},\ and\ \bibinfo {author}
  {\bibfnamefont {B.~M.}\ \bibnamefont {Terhal}},\ }\bibfield  {title}
  {\bibinfo {title} {{Quantum error correction with the toric
  Gottesman-Kitaev-Preskill code}},\ }\href
  {https://doi.org/10.1103/PhysRevA.99.032344} {\bibfield  {journal} {\bibinfo
  {journal} {Phys. Rev. A}\ }\textbf {\bibinfo {volume} {99}},\ \bibinfo
  {pages} {032344} (\bibinfo {year} {2019})}\BibitemShut {NoStop}%
\bibitem [{\citenamefont {Noh}\ and\ \citenamefont
  {Chamberland}(2020)}]{Noh2019}%
  \BibitemOpen
  \bibfield  {author} {\bibinfo {author} {\bibfnamefont {K.}~\bibnamefont
  {Noh}}\ and\ \bibinfo {author} {\bibfnamefont {C.}~\bibnamefont
  {Chamberland}},\ }\bibfield  {title} {\bibinfo {title} {{Fault-tolerant
  bosonic quantum error correction with the surface--Gottesman-Kitaev-Preskill
  code}},\ }\href {https://doi.org/10.1103/PhysRevA.101.012316} {\bibfield
  {journal} {\bibinfo  {journal} {Phys. Rev. A}\ }\textbf {\bibinfo {volume}
  {101}},\ \bibinfo {pages} {012316} (\bibinfo {year} {2020})}\BibitemShut
  {NoStop}%
\bibitem [{\citenamefont {Larsen}\ \emph {et~al.}(2021)\citenamefont {Larsen},
  \citenamefont {Chamberland}, \citenamefont {Noh}, \citenamefont
  {Neergaard-Nielsen},\ and\ \citenamefont {Andersen}}]{Larsen2021}%
  \BibitemOpen
  \bibfield  {author} {\bibinfo {author} {\bibfnamefont {M.~V.}\ \bibnamefont
  {Larsen}}, \bibinfo {author} {\bibfnamefont {C.}~\bibnamefont {Chamberland}},
  \bibinfo {author} {\bibfnamefont {K.}~\bibnamefont {Noh}}, \bibinfo {author}
  {\bibfnamefont {J.~S.}\ \bibnamefont {Neergaard-Nielsen}},\ and\ \bibinfo
  {author} {\bibfnamefont {U.~L.}\ \bibnamefont {Andersen}},\ }\bibfield
  {title} {\bibinfo {title} {Fault-tolerant continuous-variable
  measurement-based quantum computation architecture},\ }\href
  {https://doi.org/10.1103/PRXQuantum.2.030325} {\bibfield  {journal} {\bibinfo
   {journal} {PRX Quantum}\ }\textbf {\bibinfo {volume} {2}},\ \bibinfo {pages}
  {030325} (\bibinfo {year} {2021})}\BibitemShut {NoStop}%
\bibitem [{\citenamefont {Bourassa}\ \emph {et~al.}(2021)\citenamefont
  {Bourassa}, \citenamefont {Alexander}, \citenamefont {Vasmer}, \citenamefont
  {Patil}, \citenamefont {Tzitrin}, \citenamefont {Matsuura}, \citenamefont
  {Su}, \citenamefont {Baragiola}, \citenamefont {Guha}, \citenamefont
  {Dauphinais}, \citenamefont {Sabapathy}, \citenamefont {Menicucci},\ and\
  \citenamefont {Dhand}}]{Bourassa2021}%
  \BibitemOpen
  \bibfield  {author} {\bibinfo {author} {\bibfnamefont {J.~E.}\ \bibnamefont
  {Bourassa}}, \bibinfo {author} {\bibfnamefont {R.~N.}\ \bibnamefont
  {Alexander}}, \bibinfo {author} {\bibfnamefont {M.}~\bibnamefont {Vasmer}},
  \bibinfo {author} {\bibfnamefont {A.}~\bibnamefont {Patil}}, \bibinfo
  {author} {\bibfnamefont {I.}~\bibnamefont {Tzitrin}}, \bibinfo {author}
  {\bibfnamefont {T.}~\bibnamefont {Matsuura}}, \bibinfo {author}
  {\bibfnamefont {D.}~\bibnamefont {Su}}, \bibinfo {author} {\bibfnamefont
  {B.~Q.}\ \bibnamefont {Baragiola}}, \bibinfo {author} {\bibfnamefont
  {S.}~\bibnamefont {Guha}}, \bibinfo {author} {\bibfnamefont {G.}~\bibnamefont
  {Dauphinais}}, \bibinfo {author} {\bibfnamefont {K.~K.}\ \bibnamefont
  {Sabapathy}}, \bibinfo {author} {\bibfnamefont {N.~C.}\ \bibnamefont
  {Menicucci}},\ and\ \bibinfo {author} {\bibfnamefont {I.}~\bibnamefont
  {Dhand}},\ }\bibfield  {title} {\bibinfo {title} {Blueprint for a {S}calable
  {P}hotonic {F}ault-{T}olerant {Q}uantum {C}omputer},\ }\href
  {https://doi.org/10.22331/q-2021-02-04-392} {\bibfield  {journal} {\bibinfo
  {journal} {{Quantum}}\ }\textbf {\bibinfo {volume} {5}},\ \bibinfo {pages}
  {392} (\bibinfo {year} {2021})}\BibitemShut {NoStop}%
\bibitem [{\citenamefont {Tzitrin}\ \emph {et~al.}(2021)\citenamefont
  {Tzitrin}, \citenamefont {Matsuura}, \citenamefont {Alexander}, \citenamefont
  {Dauphinais}, \citenamefont {Bourassa}, \citenamefont {Sabapathy},
  \citenamefont {Menicucci},\ and\ \citenamefont {Dhand}}]{Tzitrin2021}%
  \BibitemOpen
  \bibfield  {author} {\bibinfo {author} {\bibfnamefont {I.}~\bibnamefont
  {Tzitrin}}, \bibinfo {author} {\bibfnamefont {T.}~\bibnamefont {Matsuura}},
  \bibinfo {author} {\bibfnamefont {R.~N.}\ \bibnamefont {Alexander}}, \bibinfo
  {author} {\bibfnamefont {G.}~\bibnamefont {Dauphinais}}, \bibinfo {author}
  {\bibfnamefont {J.~E.}\ \bibnamefont {Bourassa}}, \bibinfo {author}
  {\bibfnamefont {K.~K.}\ \bibnamefont {Sabapathy}}, \bibinfo {author}
  {\bibfnamefont {N.~C.}\ \bibnamefont {Menicucci}},\ and\ \bibinfo {author}
  {\bibfnamefont {I.}~\bibnamefont {Dhand}},\ }\bibfield  {title} {\bibinfo
  {title} {{Fault-Tolerant Quantum Computation with Static Linear Optics}},\
  }\href {https://doi.org/10.1103/PRXQuantum.2.040353} {\bibfield  {journal}
  {\bibinfo  {journal} {PRX Quantum}\ }\textbf {\bibinfo {volume} {2}},\
  \bibinfo {pages} {040353} (\bibinfo {year} {2021})}\BibitemShut {NoStop}%
\bibitem [{\citenamefont {Weedbrook}\ \emph {et~al.}(2012)\citenamefont
  {Weedbrook}, \citenamefont {Pirandola}, \citenamefont {Garc\'{\i}a-Patr\'on},
  \citenamefont {Cerf}, \citenamefont {Ralph}, \citenamefont {Shapiro},\ and\
  \citenamefont {Lloyd}}]{Weedbrook2012}%
  \BibitemOpen
  \bibfield  {author} {\bibinfo {author} {\bibfnamefont {C.}~\bibnamefont
  {Weedbrook}}, \bibinfo {author} {\bibfnamefont {S.}~\bibnamefont
  {Pirandola}}, \bibinfo {author} {\bibfnamefont {R.}~\bibnamefont
  {Garc\'{\i}a-Patr\'on}}, \bibinfo {author} {\bibfnamefont {N.~J.}\
  \bibnamefont {Cerf}}, \bibinfo {author} {\bibfnamefont {T.~C.}\ \bibnamefont
  {Ralph}}, \bibinfo {author} {\bibfnamefont {J.~H.}\ \bibnamefont {Shapiro}},\
  and\ \bibinfo {author} {\bibfnamefont {S.}~\bibnamefont {Lloyd}},\ }\bibfield
   {title} {\bibinfo {title} {Gaussian quantum information},\ }\href
  {https://doi.org/10.1103/RevModPhys.84.621} {\bibfield  {journal} {\bibinfo
  {journal} {Rev. Mod. Phys.}\ }\textbf {\bibinfo {volume} {84}},\ \bibinfo
  {pages} {621} (\bibinfo {year} {2012})}\BibitemShut {NoStop}%
\bibitem [{\citenamefont {Matsuura}\ \emph {et~al.}(2020)\citenamefont
  {Matsuura}, \citenamefont {Yamasaki},\ and\ \citenamefont
  {Koashi}}]{Matsuura2020}%
  \BibitemOpen
  \bibfield  {author} {\bibinfo {author} {\bibfnamefont {T.}~\bibnamefont
  {Matsuura}}, \bibinfo {author} {\bibfnamefont {H.}~\bibnamefont {Yamasaki}},\
  and\ \bibinfo {author} {\bibfnamefont {M.}~\bibnamefont {Koashi}},\
  }\bibfield  {title} {\bibinfo {title} {{Equivalence of approximate
  Gottesman-Kitaev-Preskill codes}},\ }\href
  {https://doi.org/10.1103/PhysRevA.102.032408} {\bibfield  {journal} {\bibinfo
   {journal} {Phys. Rev. A}\ }\textbf {\bibinfo {volume} {102}},\ \bibinfo
  {pages} {032408} (\bibinfo {year} {2020})}\BibitemShut {NoStop}%
\bibitem [{\citenamefont {Aliferis}\ \emph {et~al.}(2006)\citenamefont
  {Aliferis}, \citenamefont {Gottesman},\ and\ \citenamefont
  {Preskill}}]{Aliferis2006}%
  \BibitemOpen
  \bibfield  {author} {\bibinfo {author} {\bibfnamefont {P.}~\bibnamefont
  {Aliferis}}, \bibinfo {author} {\bibfnamefont {D.}~\bibnamefont
  {Gottesman}},\ and\ \bibinfo {author} {\bibfnamefont {J.}~\bibnamefont
  {Preskill}},\ }\bibfield  {title} {\bibinfo {title} {Quantum accuracy
  threshold for concatenated distance-3 codes},\ }\href@noop {} {\bibfield
  {journal} {\bibinfo  {journal} {Quantum Info. Comput.}\ }\textbf {\bibinfo
  {volume} {6}},\ \bibinfo {pages} {97–165} (\bibinfo {year}
  {2006})}\BibitemShut {NoStop}%
\bibitem [{\citenamefont {Kitaev}(1997{\natexlab{b}})}]{Kitaev1997Quantum}%
  \BibitemOpen
  \bibfield  {author} {\bibinfo {author} {\bibfnamefont {A.~Y.}\ \bibnamefont
  {Kitaev}},\ }\bibfield  {title} {\bibinfo {title} {Quantum computations:
  algorithms and error correction},\ }\href
  {https://doi.org/10.1070/RM1997v052n06ABEH002155} {\bibfield  {journal}
  {\bibinfo  {journal} {Russian Mathematical Surveys}\ }\textbf {\bibinfo
  {volume} {52}},\ \bibinfo {pages} {1191} (\bibinfo {year}
  {1997}{\natexlab{b}})}\BibitemShut {NoStop}%
\bibitem [{\citenamefont {Winter}(2017)}]{Winter2017}%
  \BibitemOpen
  \bibfield  {author} {\bibinfo {author} {\bibfnamefont {A.}~\bibnamefont
  {Winter}},\ }\href@noop {} {\bibinfo {title} {Energy-constrained diamond norm
  with applications to the uniform continuity of continuous variable channel
  capacities}} (\bibinfo {year} {2017}),\ \Eprint
  {https://arxiv.org/abs/1712.10267} {arXiv:1712.10267 [quant-ph]} \BibitemShut
  {NoStop}%
\bibitem [{\citenamefont {Shirokov}(2018{\natexlab{a}})}]{Shirokov2018}%
  \BibitemOpen
  \bibfield  {author} {\bibinfo {author} {\bibfnamefont {M.~E.}\ \bibnamefont
  {Shirokov}},\ }\bibfield  {title} {\bibinfo {title} {{On the
  Energy-Constrained Diamond Norm and Its Application in Quantum Information
  Theory}},\ }\href {https://doi.org/10.1134/S0032946018010027} {\bibfield
  {journal} {\bibinfo  {journal} {Problems of Information Transmission}\
  }\textbf {\bibinfo {volume} {54}},\ \bibinfo {pages} {20} (\bibinfo {year}
  {2018}{\natexlab{a}})}\BibitemShut {NoStop}%
\bibitem [{\citenamefont {Pantaleoni}\ \emph {et~al.}(2020)\citenamefont
  {Pantaleoni}, \citenamefont {Baragiola},\ and\ \citenamefont
  {Menicucci}}]{Pantaleoni2020}%
  \BibitemOpen
  \bibfield  {author} {\bibinfo {author} {\bibfnamefont {G.}~\bibnamefont
  {Pantaleoni}}, \bibinfo {author} {\bibfnamefont {B.~Q.}\ \bibnamefont
  {Baragiola}},\ and\ \bibinfo {author} {\bibfnamefont {N.~C.}\ \bibnamefont
  {Menicucci}},\ }\bibfield  {title} {\bibinfo {title} {Modular {B}osonic
  {S}ubsystem {C}odes},\ }\href
  {https://doi.org/10.1103/PhysRevLett.125.040501} {\bibfield  {journal}
  {\bibinfo  {journal} {Phys. Rev. Lett.}\ }\textbf {\bibinfo {volume} {125}},\
  \bibinfo {pages} {040501} (\bibinfo {year} {2020})}\BibitemShut {NoStop}%
\bibitem [{\citenamefont {Pantaleoni}\ \emph
  {et~al.}(2021{\natexlab{a}})\citenamefont {Pantaleoni}, \citenamefont
  {Baragiola},\ and\ \citenamefont {Menicucci}}]{Pantaleoni2021Hidden}%
  \BibitemOpen
  \bibfield  {author} {\bibinfo {author} {\bibfnamefont {G.}~\bibnamefont
  {Pantaleoni}}, \bibinfo {author} {\bibfnamefont {B.~Q.}\ \bibnamefont
  {Baragiola}},\ and\ \bibinfo {author} {\bibfnamefont {N.~C.}\ \bibnamefont
  {Menicucci}},\ }\bibfield  {title} {\bibinfo {title} {Hidden qubit cluster
  states},\ }\href {https://doi.org/10.1103/PhysRevA.104.012431} {\bibfield
  {journal} {\bibinfo  {journal} {Phys. Rev. A}\ }\textbf {\bibinfo {volume}
  {104}},\ \bibinfo {pages} {012431} (\bibinfo {year}
  {2021}{\natexlab{a}})}\BibitemShut {NoStop}%
\bibitem [{\citenamefont {Pantaleoni}\ \emph
  {et~al.}(2021{\natexlab{b}})\citenamefont {Pantaleoni}, \citenamefont
  {Baragiola},\ and\ \citenamefont {Menicucci}}]{Pantaleoni2021Subsystem}%
  \BibitemOpen
  \bibfield  {author} {\bibinfo {author} {\bibfnamefont {G.}~\bibnamefont
  {Pantaleoni}}, \bibinfo {author} {\bibfnamefont {B.~Q.}\ \bibnamefont
  {Baragiola}},\ and\ \bibinfo {author} {\bibfnamefont {N.~C.}\ \bibnamefont
  {Menicucci}},\ }\bibfield  {title} {\bibinfo {title} {Subsystem analysis of
  continuous-variable resource states},\ }\href
  {https://doi.org/10.1103/PhysRevA.104.012430} {\bibfield  {journal} {\bibinfo
   {journal} {Phys. Rev. A}\ }\textbf {\bibinfo {volume} {104}},\ \bibinfo
  {pages} {012430} (\bibinfo {year} {2021}{\natexlab{b}})}\BibitemShut
  {NoStop}%
\bibitem [{\citenamefont {Pantaleoni}\ \emph {et~al.}(2023)\citenamefont
  {Pantaleoni}, \citenamefont {Baragiola},\ and\ \citenamefont
  {Menicucci}}]{Pantaleoni2023}%
  \BibitemOpen
  \bibfield  {author} {\bibinfo {author} {\bibfnamefont {G.}~\bibnamefont
  {Pantaleoni}}, \bibinfo {author} {\bibfnamefont {B.~Q.}\ \bibnamefont
  {Baragiola}},\ and\ \bibinfo {author} {\bibfnamefont {N.~C.}\ \bibnamefont
  {Menicucci}},\ }\bibfield  {title} {\bibinfo {title} {{Zak transform as a
  framework for quantum computation with the Gottesman-Kitaev-Preskill code}},\
  }\href {https://doi.org/10.1103/PhysRevA.107.062611} {\bibfield  {journal}
  {\bibinfo  {journal} {Phys. Rev. A}\ }\textbf {\bibinfo {volume} {107}},\
  \bibinfo {pages} {062611} (\bibinfo {year} {2023})}\BibitemShut {NoStop}%
\bibitem [{\citenamefont {Shaw}\ \emph {et~al.}(2024)\citenamefont {Shaw},
  \citenamefont {Doherty},\ and\ \citenamefont {Grimsmo}}]{Mackenzie2022}%
  \BibitemOpen
  \bibfield  {author} {\bibinfo {author} {\bibfnamefont {M.~H.}\ \bibnamefont
  {Shaw}}, \bibinfo {author} {\bibfnamefont {A.~C.}\ \bibnamefont {Doherty}},\
  and\ \bibinfo {author} {\bibfnamefont {A.~L.}\ \bibnamefont {Grimsmo}},\
  }\bibfield  {title} {\bibinfo {title} {{Stabilizer Subsystem Decompositions
  for Single- and Multimode Gottesman-Kitaev-Preskill Codes}},\ }\href
  {https://doi.org/10.1103/PRXQuantum.5.010331} {\bibfield  {journal} {\bibinfo
   {journal} {PRX Quantum}\ }\textbf {\bibinfo {volume} {5}},\ \bibinfo {pages}
  {010331} (\bibinfo {year} {2024})}\BibitemShut {NoStop}%
\bibitem [{\citenamefont {Holevo}(2003)}]{Holevo2003}%
  \BibitemOpen
  \bibfield  {author} {\bibinfo {author} {\bibfnamefont {A.~S.}\ \bibnamefont
  {Holevo}},\ }\bibfield  {title} {\bibinfo {title} {{Entanglement-assisted
  capacity of constrained channels}},\ }in\ \href
  {https://doi.org/10.1117/12.517873} {\emph {\bibinfo {booktitle} {First
  International Symposium on Quantum Informatics}}},\ Vol.\ \bibinfo {volume}
  {5128},\ \bibinfo {editor} {edited by\ \bibinfo {editor} {\bibfnamefont
  {Y.~I.}\ \bibnamefont {Ozhigov}}},\ \bibinfo {organization} {International
  Society for Optics and Photonics}\ (\bibinfo  {publisher} {SPIE},\ \bibinfo
  {year} {2003})\ pp.\ \bibinfo {pages} {62 -- 69}\BibitemShut {NoStop}%
\bibitem [{\citenamefont {Shirokov}\ and\ \citenamefont
  {Holevo}(2008)}]{Shirokov2008}%
  \BibitemOpen
  \bibfield  {author} {\bibinfo {author} {\bibfnamefont {M.~E.}\ \bibnamefont
  {Shirokov}}\ and\ \bibinfo {author} {\bibfnamefont {A.~S.}\ \bibnamefont
  {Holevo}},\ }\bibfield  {title} {\bibinfo {title} {On approximation of
  infinite-dimensional quantum channels},\ }\href
  {https://doi.org/10.1134/S0032946008020014} {\bibfield  {journal} {\bibinfo
  {journal} {Problems of Information Transmission}\ }\textbf {\bibinfo {volume}
  {44}},\ \bibinfo {pages} {73} (\bibinfo {year} {2008})}\BibitemShut {NoStop}%
\bibitem [{\citenamefont {Winter}(2016)}]{Winter2016}%
  \BibitemOpen
  \bibfield  {author} {\bibinfo {author} {\bibfnamefont {A.}~\bibnamefont
  {Winter}},\ }\bibfield  {title} {\bibinfo {title} {{Tight Uniform Continuity
  Bounds for Quantum Entropies: Conditional Entropy, Relative Entropy Distance
  and Energy Constraints}},\ }\href {https://doi.org/10.1007/s00220-016-2609-8}
  {\bibfield  {journal} {\bibinfo  {journal} {Communications in Mathematical
  Physics}\ }\textbf {\bibinfo {volume} {347}},\ \bibinfo {pages} {291}
  (\bibinfo {year} {2016})}\BibitemShut {NoStop}%
\bibitem [{\citenamefont
  {Shirokov}(2018{\natexlab{b}})}]{Shirokov2018adaptation}%
  \BibitemOpen
  \bibfield  {author} {\bibinfo {author} {\bibfnamefont {M.}~\bibnamefont
  {Shirokov}},\ }\bibfield  {title} {\bibinfo {title} {{Adaptation of the
  Alicki—Fannes—Winter method for the set of states with bounded energy and
  its use}},\ }\href
  {https://doi.org/https://doi.org/10.1016/S0034-4877(18)30021-1} {\bibfield
  {journal} {\bibinfo  {journal} {Reports on Mathematical Physics}\ }\textbf
  {\bibinfo {volume} {81}},\ \bibinfo {pages} {81} (\bibinfo {year}
  {2018}{\natexlab{b}})}\BibitemShut {NoStop}%
\bibitem [{\citenamefont {Shirokov}(2018{\natexlab{c}})}]{Shirokov2019}%
  \BibitemOpen
  \bibfield  {author} {\bibinfo {author} {\bibfnamefont {M.~E.}\ \bibnamefont
  {Shirokov}},\ }\bibfield  {title} {\bibinfo {title} {Uniform continuity
  bounds for information characteristics of quantum channels depending on input
  dimension and on input energy},\ }\href
  {https://doi.org/10.1088/1751-8121/aaebac} {\bibfield  {journal} {\bibinfo
  {journal} {Journal of Physics A: Mathematical and Theoretical}\ }\textbf
  {\bibinfo {volume} {52}},\ \bibinfo {pages} {014001} (\bibinfo {year}
  {2018}{\natexlab{c}})}\BibitemShut {NoStop}%
\bibitem [{\citenamefont {Knill}(2005)}]{Knill2005}%
  \BibitemOpen
  \bibfield  {author} {\bibinfo {author} {\bibfnamefont {E.}~\bibnamefont
  {Knill}},\ }\bibfield  {title} {\bibinfo {title} {Quantum computing with
  realistically noisy devices},\ }\href {https://doi.org/10.1038/nature03350}
  {\bibfield  {journal} {\bibinfo  {journal} {Nature}\ }\textbf {\bibinfo
  {volume} {434}},\ \bibinfo {pages} {39} (\bibinfo {year} {2005})}\BibitemShut
  {NoStop}%
\bibitem [{\citenamefont {Wan}\ \emph {et~al.}(2020)\citenamefont {Wan},
  \citenamefont {Neville},\ and\ \citenamefont {Kolthammer}}]{Wan2020}%
  \BibitemOpen
  \bibfield  {author} {\bibinfo {author} {\bibfnamefont {K.~H.}\ \bibnamefont
  {Wan}}, \bibinfo {author} {\bibfnamefont {A.}~\bibnamefont {Neville}},\ and\
  \bibinfo {author} {\bibfnamefont {S.}~\bibnamefont {Kolthammer}},\ }\bibfield
   {title} {\bibinfo {title} {{Memory-assisted decoder for approximate
  Gottesman-Kitaev-Preskill codes}},\ }\href
  {https://doi.org/10.1103/PhysRevResearch.2.043280} {\bibfield  {journal}
  {\bibinfo  {journal} {Phys. Rev. Res.}\ }\textbf {\bibinfo {volume} {2}},\
  \bibinfo {pages} {043280} (\bibinfo {year} {2020})}\BibitemShut {NoStop}%
\bibitem [{\citenamefont {Terhal}\ \emph {et~al.}(2020)\citenamefont {Terhal},
  \citenamefont {Conrad},\ and\ \citenamefont {Vuillot}}]{Terhal2020}%
  \BibitemOpen
  \bibfield  {author} {\bibinfo {author} {\bibfnamefont {B.~M.}\ \bibnamefont
  {Terhal}}, \bibinfo {author} {\bibfnamefont {J.}~\bibnamefont {Conrad}},\
  and\ \bibinfo {author} {\bibfnamefont {C.}~\bibnamefont {Vuillot}},\
  }\bibfield  {title} {\bibinfo {title} {Towards scalable bosonic quantum error
  correction},\ }\href {https://doi.org/10.1088/2058-9565/ab98a5} {\bibfield
  {journal} {\bibinfo  {journal} {Quantum Science and Technology}\ }\textbf
  {\bibinfo {volume} {5}},\ \bibinfo {pages} {043001} (\bibinfo {year}
  {2020})}\BibitemShut {NoStop}%
\bibitem [{\citenamefont {Holevo}(2011)}]{Holevo2011}%
  \BibitemOpen
  \bibfield  {author} {\bibinfo {author} {\bibfnamefont {A.~S.}\ \bibnamefont
  {Holevo}},\ }\href
  {https://doi.org/https://doi.org/10.1007/978-88-7642-378-9} {\emph {\bibinfo
  {title} {{Probabilistic and Statistical Aspects of Quantum Theory}}}},\
  Vol.~\bibinfo {volume} {1}\ (\bibinfo  {publisher} {Edizioni della Normale
  Pisa},\ \bibinfo {year} {2011})\BibitemShut {NoStop}%
\bibitem [{\citenamefont {Zhou}\ \emph {et~al.}(2000)\citenamefont {Zhou},
  \citenamefont {Leung},\ and\ \citenamefont {Chuang}}]{Zhou2000}%
  \BibitemOpen
  \bibfield  {author} {\bibinfo {author} {\bibfnamefont {X.}~\bibnamefont
  {Zhou}}, \bibinfo {author} {\bibfnamefont {D.~W.}\ \bibnamefont {Leung}},\
  and\ \bibinfo {author} {\bibfnamefont {I.~L.}\ \bibnamefont {Chuang}},\
  }\bibfield  {title} {\bibinfo {title} {Methodology for quantum logic gate
  construction},\ }\href {https://doi.org/10.1103/PhysRevA.62.052316}
  {\bibfield  {journal} {\bibinfo  {journal} {Phys. Rev. A}\ }\textbf {\bibinfo
  {volume} {62}},\ \bibinfo {pages} {052316} (\bibinfo {year}
  {2000})}\BibitemShut {NoStop}%
\bibitem [{\citenamefont {Steane}(1996)}]{Steane1996Multiple}%
  \BibitemOpen
  \bibfield  {author} {\bibinfo {author} {\bibfnamefont {A.}~\bibnamefont
  {Steane}},\ }\bibfield  {title} {\bibinfo {title} {Multiple-particle
  interference and quantum error correction},\ }\href
  {https://doi.org/10.1098/rspa.1996.0136} {\bibfield  {journal} {\bibinfo
  {journal} {Proceedings of the Royal Society of London. Series A:
  Mathematical, Physical and Engineering Sciences}\ }\textbf {\bibinfo {volume}
  {452}},\ \bibinfo {pages} {2551} (\bibinfo {year} {1996})},\ \Eprint
  {https://arxiv.org/abs/https://royalsocietypublishing.org/doi/pdf/10.1098/rspa.1996.0136}
  {https://royalsocietypublishing.org/doi/pdf/10.1098/rspa.1996.0136}
  \BibitemShut {NoStop}%
\bibitem [{\citenamefont {Knill}\ and\ \citenamefont
  {Laflamme}(1997)}]{Knill1997}%
  \BibitemOpen
  \bibfield  {author} {\bibinfo {author} {\bibfnamefont {E.}~\bibnamefont
  {Knill}}\ and\ \bibinfo {author} {\bibfnamefont {R.}~\bibnamefont
  {Laflamme}},\ }\bibfield  {title} {\bibinfo {title} {Theory of quantum
  error-correcting codes},\ }\href {https://doi.org/10.1103/PhysRevA.55.900}
  {\bibfield  {journal} {\bibinfo  {journal} {Phys. Rev. A}\ }\textbf {\bibinfo
  {volume} {55}},\ \bibinfo {pages} {900} (\bibinfo {year} {1997})}\BibitemShut
  {NoStop}%
\bibitem [{\citenamefont {Knill}\ \emph {et~al.}(2000)\citenamefont {Knill},
  \citenamefont {Laflamme},\ and\ \citenamefont {Viola}}]{Knill2000}%
  \BibitemOpen
  \bibfield  {author} {\bibinfo {author} {\bibfnamefont {E.}~\bibnamefont
  {Knill}}, \bibinfo {author} {\bibfnamefont {R.}~\bibnamefont {Laflamme}},\
  and\ \bibinfo {author} {\bibfnamefont {L.}~\bibnamefont {Viola}},\ }\bibfield
   {title} {\bibinfo {title} {{Theory of Quantum Error Correction for General
  Noise}},\ }\href {https://doi.org/10.1103/PhysRevLett.84.2525} {\bibfield
  {journal} {\bibinfo  {journal} {Phys. Rev. Lett.}\ }\textbf {\bibinfo
  {volume} {84}},\ \bibinfo {pages} {2525} (\bibinfo {year}
  {2000})}\BibitemShut {NoStop}%
\bibitem [{\citenamefont {Zak}(1967)}]{Zak1967}%
  \BibitemOpen
  \bibfield  {author} {\bibinfo {author} {\bibfnamefont {J.}~\bibnamefont
  {Zak}},\ }\bibfield  {title} {\bibinfo {title} {Finite {T}ranslations in
  {S}olid-{S}tate {P}hysics},\ }\href
  {https://doi.org/10.1103/PhysRevLett.19.1385} {\bibfield  {journal} {\bibinfo
   {journal} {Phys. Rev. Lett.}\ }\textbf {\bibinfo {volume} {19}},\ \bibinfo
  {pages} {1385} (\bibinfo {year} {1967})}\BibitemShut {NoStop}%
\bibitem [{\citenamefont {Zak}(1968)}]{Zak1968}%
  \BibitemOpen
  \bibfield  {author} {\bibinfo {author} {\bibfnamefont {J.}~\bibnamefont
  {Zak}},\ }\bibfield  {title} {\bibinfo {title} {Dynamics of {E}lectrons in
  {S}olids in {E}xternal {F}ields},\ }\href
  {https://doi.org/10.1103/PhysRev.168.686} {\bibfield  {journal} {\bibinfo
  {journal} {Phys. Rev.}\ }\textbf {\bibinfo {volume} {168}},\ \bibinfo {pages}
  {686} (\bibinfo {year} {1968})}\BibitemShut {NoStop}%
\bibitem [{\citenamefont {Zak}(1972)}]{Zak1972}%
  \BibitemOpen
  \bibfield  {author} {\bibinfo {author} {\bibfnamefont {J.}~\bibnamefont
  {Zak}},\ }\bibfield  {title} {\bibinfo {title} {The kq-representation in the
  dynamics of electrons in solids},\ }in\ \href
  {https://www.sciencedirect.com/science/article/pii/S0081194708602362} {\emph
  {\bibinfo {booktitle} {Solid State Physics}}},\ Vol.~\bibinfo {volume} {27}\
  (\bibinfo  {publisher} {Elsevier},\ \bibinfo {year} {1972})\ pp.\ \bibinfo
  {pages} {1--62}\BibitemShut {NoStop}%
\bibitem [{\citenamefont {Glancy}\ and\ \citenamefont
  {Knill}(2006)}]{Glancy2006}%
  \BibitemOpen
  \bibfield  {author} {\bibinfo {author} {\bibfnamefont {S.}~\bibnamefont
  {Glancy}}\ and\ \bibinfo {author} {\bibfnamefont {E.}~\bibnamefont {Knill}},\
  }\bibfield  {title} {\bibinfo {title} {Error analysis for encoding a qubit in
  an oscillator},\ }\href {https://doi.org/10.1103/PhysRevA.73.012325}
  {\bibfield  {journal} {\bibinfo  {journal} {Phys. Rev. A}\ }\textbf {\bibinfo
  {volume} {73}},\ \bibinfo {pages} {012325} (\bibinfo {year}
  {2006})}\BibitemShut {NoStop}%
\bibitem [{\citenamefont {Shirokov}(2020)}]{Shirokov2020}%
  \BibitemOpen
  \bibfield  {author} {\bibinfo {author} {\bibfnamefont {M.~E.}\ \bibnamefont
  {Shirokov}},\ }\bibfield  {title} {\bibinfo {title} {{Advanced
  Alicki--Fannes--Winter method for energy-constrained quantum systems and its
  use}},\ }\href {https://doi.org/10.1007/s11128-020-2581-2} {\bibfield
  {journal} {\bibinfo  {journal} {Quantum Information Processing}\ }\textbf
  {\bibinfo {volume} {19}},\ \bibinfo {pages} {164} (\bibinfo {year}
  {2020})}\BibitemShut {NoStop}%
\bibitem [{\citenamefont {Sharma}\ and\ \citenamefont
  {Wilde}(2020)}]{Sharma2020}%
  \BibitemOpen
  \bibfield  {author} {\bibinfo {author} {\bibfnamefont {K.}~\bibnamefont
  {Sharma}}\ and\ \bibinfo {author} {\bibfnamefont {M.~M.}\ \bibnamefont
  {Wilde}},\ }\bibfield  {title} {\bibinfo {title} {Characterizing the
  performance of continuous-variable gaussian quantum gates},\ }\href
  {https://doi.org/10.1103/PhysRevResearch.2.013126} {\bibfield  {journal}
  {\bibinfo  {journal} {Phys. Rev. Res.}\ }\textbf {\bibinfo {volume} {2}},\
  \bibinfo {pages} {013126} (\bibinfo {year} {2020})}\BibitemShut {NoStop}%
\bibitem [{\citenamefont {Nair}(2018)}]{Nair2018}%
  \BibitemOpen
  \bibfield  {author} {\bibinfo {author} {\bibfnamefont {R.}~\bibnamefont
  {Nair}},\ }\bibfield  {title} {\bibinfo {title} {{Quantum-Limited Loss
  Sensing: Multiparameter Estimation and Bures Distance between Loss
  Channels}},\ }\href {https://doi.org/10.1103/PhysRevLett.121.230801}
  {\bibfield  {journal} {\bibinfo  {journal} {Phys. Rev. Lett.}\ }\textbf
  {\bibinfo {volume} {121}},\ \bibinfo {pages} {230801} (\bibinfo {year}
  {2018})}\BibitemShut {NoStop}%
\bibitem [{\citenamefont {Takagi}\ \emph {et~al.}(2017)\citenamefont {Takagi},
  \citenamefont {Yoder},\ and\ \citenamefont {Chuang}}]{Takagi2017}%
  \BibitemOpen
  \bibfield  {author} {\bibinfo {author} {\bibfnamefont {R.}~\bibnamefont
  {Takagi}}, \bibinfo {author} {\bibfnamefont {T.~J.}\ \bibnamefont {Yoder}},\
  and\ \bibinfo {author} {\bibfnamefont {I.~L.}\ \bibnamefont {Chuang}},\
  }\bibfield  {title} {\bibinfo {title} {Error rates and resource overheads of
  encoded three-qubit gates},\ }\href
  {https://doi.org/10.1103/PhysRevA.96.042302} {\bibfield  {journal} {\bibinfo
  {journal} {Phys. Rev. A}\ }\textbf {\bibinfo {volume} {96}},\ \bibinfo
  {pages} {042302} (\bibinfo {year} {2017})}\BibitemShut {NoStop}%
\bibitem [{\citenamefont {Walshe}\ \emph {et~al.}(2020)\citenamefont {Walshe},
  \citenamefont {Baragiola}, \citenamefont {Alexander},\ and\ \citenamefont
  {Menicucci}}]{Walshe2020}%
  \BibitemOpen
  \bibfield  {author} {\bibinfo {author} {\bibfnamefont {B.~W.}\ \bibnamefont
  {Walshe}}, \bibinfo {author} {\bibfnamefont {B.~Q.}\ \bibnamefont
  {Baragiola}}, \bibinfo {author} {\bibfnamefont {R.~N.}\ \bibnamefont
  {Alexander}},\ and\ \bibinfo {author} {\bibfnamefont {N.~C.}\ \bibnamefont
  {Menicucci}},\ }\bibfield  {title} {\bibinfo {title} {Continuous-variable
  gate teleportation and bosonic-code error correction},\ }\href
  {https://doi.org/10.1103/PhysRevA.102.062411} {\bibfield  {journal} {\bibinfo
   {journal} {Phys. Rev. A}\ }\textbf {\bibinfo {volume} {102}},\ \bibinfo
  {pages} {062411} (\bibinfo {year} {2020})}\BibitemShut {NoStop}%
\bibitem [{\citenamefont {Chamberland}\ \emph {et~al.}(2018)\citenamefont
  {Chamberland}, \citenamefont {Iyer},\ and\ \citenamefont
  {Poulin}}]{Chamberland2018}%
  \BibitemOpen
  \bibfield  {author} {\bibinfo {author} {\bibfnamefont {C.}~\bibnamefont
  {Chamberland}}, \bibinfo {author} {\bibfnamefont {P.}~\bibnamefont {Iyer}},\
  and\ \bibinfo {author} {\bibfnamefont {D.}~\bibnamefont {Poulin}},\
  }\bibfield  {title} {\bibinfo {title} {Fault-tolerant quantum computing in
  the {P}auli or {C}lifford frame with slow error diagnostics},\ }\href
  {https://doi.org/10.22331/q-2018-01-04-43} {\bibfield  {journal} {\bibinfo
  {journal} {{Quantum}}\ }\textbf {\bibinfo {volume} {2}},\ \bibinfo {pages}
  {43} (\bibinfo {year} {2018})}\BibitemShut {NoStop}%
\bibitem [{\citenamefont {Johnson}(2015)}]{Johnson2015}%
  \BibitemOpen
  \bibfield  {author} {\bibinfo {author} {\bibfnamefont {S.~G.}\ \bibnamefont
  {Johnson}},\ }\href@noop {} {\bibinfo {title} {Saddle-point integration of
  $c_\infty$ ``bump'' functions}} (\bibinfo {year} {2015}),\ \Eprint
  {https://arxiv.org/abs/1508.04376} {arXiv:1508.04376 [math.CV]} \BibitemShut
  {NoStop}%
\end{thebibliography}%

\end{document}